\newif\ifcomments   
\newif\ifanon       
\newif\ifcrypto     
\newif\ifllncs      
\newif\ifpublish    
\newcommand{\cblue}[1]{{\color{blue} #1}}
\newtheorem{theorem}{Theorem}
\newtheorem{proposition}[theorem]{Proposition}
\newtheorem{definition}[theorem]{Definition}
\newtheorem{lemma}[theorem]{Lemma}
\newtheorem{claim}[theorem]{Claim}
\newtheorem{remark}[theorem]{Remark}
\newtheorem{corollary}[theorem]{Corollary}
\newtheorem{conjecture}[theorem]{Conjecture}
\crefname{definition}{Definition}{Definitions}
\crefname{lemma}{Lemma}{Lemmas}
\crefname{proposition}{Proposition}{Propositions}
\crefname{corollary}{Corollary}{Corollaries}
\crefname{conjecture}{Conjecture}{Conjectures}
\newcommand{\bfu}{\mathbf{u}}
\newcommand{\cktclass}{\mathfrak{C}}
\newcommand{\enc}{\mathsf{Enc}}
\newcommand{\dec}{\mathsf{Dec}}
\newcommand{\sign}{\mathsf{Sign}}
\newcommand{\verify}{\mathsf{Verify}}
\newcommand{\sk}{\mathsf{sk}}
\newcommand{\pk}{\mathsf{pk}}
\newcommand{\vk}{\mathsf{vk}}
\newcommand{\ct}{\mathsf{ct}}
\newcommand{\sig}{\mathsf{sig}}
\newcommand{\id}{\mathrm{id}}
\newcommand{\tr}{\text{Tr}}
\newcommand{\cH}{\mathcal{H}}
\newcommand{\cC}{\mathcal{C}}
\newcommand{\threshold}{{threshold}}
\newcommand{\distrind}{\distr_{\mathsf{ind}}}
\newcommand{\distriden}{\distr_{\mathsf{identical}}}
\newcommand{\distrcor}{\distr_{\mathsf{iden\text{-}bit\text{,}ind\text{-}msg}}}
\newcommand{\distrprod}{{\mathcal{U}}}
\newcommand{\distrid}{{\sf Id}_{{\cal U}}}
\newcommand{\distrcipherind}{\distr_{\mathsf{ind}\text{-}\mathsf{msg}}}
\newcommand{\distrcipherid}{\distr_{\mathsf{identical}\text{-}\mathsf{cipher}}}
\newcommand{\ordistr}{\mathsf{Oracle}\text{-}\distr}
\newcommand{\ordistrind}{\mathsf{Oracle}\text{-}\distr_{\mathsf{ind}}}
\newcommand{\ordistrcor}{\mathsf{Oracle}\text{-}\distr_{\mathsf{correlated}}}
\newcommand{\fclass}{\mathcal{F}}
\newcommand{\gclass}{\mathcal{G}}
\newcommand{\secparam}{\lambda}
\newcommand{\keygen}{\mathsf{KeyGen}}
\newcommand{\qkeygen}{\mathsf{QKeyGen}}
\newcommand{\hybrid}{\mathsf{Hybrid}}
\newcommand{\cA}{\mathcal{A}}
\newcommand{\calU}{\mathcal{U}}
\newcommand{\negl}{\mathsf{negl}}
\newcommand{\poly}{\mathrm{poly}}
\newcommand{\distr}{\mathcal{D}}
\newcommand{\bra}[1]{\langle #1|}
\newcommand{\ket}[1]{|#1\rangle}
\newcommand{\braket}[2]{\langle #1|#2\rangle}
\newcommand{\ketbra}[2]{|#1\rangle\langle #2|}
\newcommand{\adversary}{\mathcal{A}}
\newcommand{\prob}{\mathsf{Pr}}
\newcommand{\Zq}{\mathbb{Z}_q}
\newcommand{\ZQ}{\mathbb{Z}_Q}
\newcommand{\embed}{\mathsf{Embed}}
\newcommand{\obf}{\mathsf{Obf}}
\newcommand{\eval}{\mathsf{Eval}}
\newcommand{\qdec}{\mathrm{Dec}}
\newcommand{\postproc}{\cllz\text{ }{post}\text{-}{processing}}
\newcommand{\PPROC}{\cllz\text{ }\mathsf{Post}\text{-}\mathsf{Process}}
\newcommand{\encproc}{\mathsf{Enc}\mathsf{Post}\mathsf{Process}}
\newcommand{\decproc}{\mathsf{Dec}\mathsf{Post}\mathsf{Process}}
\newcommand{\prg}{\mathsf{PRG}}
\newcommand{\gen}{\mathsf{Gen}}
\newcommand{\gentrig}{\mathsf{Gen}\text{-}\mathsf{Trigger}}
\newcommand{\trig}{\mathsf{trigger}}
\newcommand{\distrc}{\mathcal{D}_{\cktclass}}
\newcommand{\cktclassf}{\cktclass^\fclass}
\newcommand{\keyspacef}{{\keyspace^\fclass}}
\newcommand{\cktclassw}{\mathcal{W}}
\newcommand{\cpdigitalsig}{\mathsf{CP}\text{-}\mathsf{DS}}
\newcommand{\digitalsig}{\mathsf{DS}}
\newcommand{\distrunr}{\mathcal{U}_{\fclass^r}}
\newcommand{\distrunrminus}{\mathcal{U}_{\fclass^{r-1}}}
\newcommand{\distrsf}{\distr^r\text{-}S_f}
\newcommand{\distrsg}{{\distr^{r-1}}\text{-}S_g}
\newcommand{\inductdistr}{\mathsf{Induced}\text{-}\distr^r}
\newcommand{\inductdistrdash}{\mathsf{Induced}\text{-}{\distr^{r-1}}}
\newcommand{\aux}{\mathsf{aux}}
\newcommand{\inpclass}{{\cal X}}
\newcommand{\imgclass}{{\cal I}}
\newcommand{\cllz}{\mathsf{CLLZ}}
\newcommand{\tokengen}{\mathsf{Token}\text{-}{Gen}}
\newcommand{\prf}{\mathsf{PRF}}
\newcommand{\puncture}{\mathsf{Puncture}}
\newcommand{\io}{\mathsf{iO}}
\newcommand{\compile}{\mathsf{Compile}}
\newcommand{\cpppiracy}{{preponed\ security}}
\newcommand{\sde}{\mathrm{single\ }\allowbreak\mathrm{decryptor\ }\allowbreak\mathrm{encryption}}
\newcommand{\SDE}{\mathsf{SDE}}
\newcommand{\cpa}{\mathsf{CPA}}
\newcommand{\reduct}{{\cal R}}
\newcommand{\genpuncture}{\mathsf{GenPuncture}}
\newcommand{\POVM}{\mathsf{POVM}}
\newcommand{\ppoly}{\sf{P/poly}}
\newcommand{\presamp}{\ensuremath{\mathsf{preimage}\text{-}\mathsf{samplable}}}
\newcommand{\timp}{\mathsf{TI}}
\newcommand{\atimp}{\mathsf{ATI}}
\newcommand{\satimp}{small\text{-}{range}\text{-}\mathsf{ATI}}
\newcommand{\reduc}{\mathcal{R}}
\newcommand{\upo}{\text{unclonable}\allowbreak\text{ puncturable}\allowbreak\text{  obfuscation}}
\newcommand{\UPO}{\mathsf{UPO}}
\newcommand{\upoexpt}{\mathsf{UPO.Expt}}
\newcommand{\glexpt}{\mathsf{GL.Expt}}
\newif\ifCommentsON
\newcommand{\EE}{\mathbb{E}}
\newcommand{\pnote}[1]{}
\newcommand{\anote}[1]{}
\newcommand{\pnote}[1]{{\color{red} P: #1}}
\newcommand{\anote}[1]{{\color{violet} A: #1}}
\newcommand{\cpexpt}{\mathsf{CP.Expt}}
\newcommand{\indrsdeexpt}{{\sf Ind\text{-}random.SDE.Expt}}
\newcommand{\srchsdeexpt}{{\sf Search.SDE.Expt}}
\newcommand{\selcpasdeexpt}{{\sf SelCPA.SDE.Expt}}
\newcommand{\cpasdeexpt}{{\sf CPA.SDE.Expt}}
\newcommand{\ueexpt}{{\sf UE.Expt}}
\newcommand{\density}[1]{{\cal D}(#1)}
\newcommand{\distrclass}{\mathfrak{D}_X}
\newcommand{\abc}{({\alice,\bob,\charlie})}
\newcommand{\adv}{\mathsf{Adv}}
\newcommand{\UE}{\mathsf{UE}}
\newcommand{\calX}{\mathcal{X}}
\newcommand{\NN}{\mathbb{N}}
\newcommand{\copyprotect}{\mathsf{CopyProtect}}
\newcommand{\expt}{\mathsf{Expt}}
\newcommand{\ch}{\mathsf{Ch}}
\newcommand{\alice}{\mathcal{A}}
\newcommand{\bob}{\mathcal{B}}
\newcommand{\charlie}{\mathcal{C}}
\newcommand{\bfB}{\mathbf{B}}
\newcommand{\bfC}{\mathbf{C}}
\newcommand{\keyspace}{\ensuremath{\mathcal{K}}}
\newcommand{\bitspace}{\ensuremath{\{0,1\}}}
\newcommand{\cp}{\mathsf{CP}}
\newcommand{\iO}{\mathsf{iO}}
\newcommand{\cS}{\mathcal{S}}
\newcommand{\dmx}{{\cal D}}
\newcommand{\TD}{\mathsf{TD}}
\newcommand{\norm}[1]{\| #1 \|}
\newcommand{\Tr}{\mathsf{Tr}}
\newcommand{\N}{\mathbb{N}}
\title{A Modular Approach to Unclonable Cryptography\ifllncs\thanks{{\bf Full version attached under "Optional Supplementary Materials".}}\fi}
    \author{}
    \institute{} 
    \author{Prabhanjan Ananth\thanks{\texttt{prabhanjan@cs.ucsb.edu}} \\{\small UCSB} \and Amit Behera\thanks{\texttt{behera@post.bgu.ac.il}} \\{\small Ben-Gurion University}} 
\date{}
\begin{document}

\maketitle




\begin{abstract}
\noindent We explore a new pathway to designing unclonable cryptographic primitives. We propose a new notion called unclonable puncturable obfuscation (UPO) and study its implications for unclonable cryptography. Using UPO, we present modular (and in some cases, arguably, simple) constructions of many primitives in unclonable cryptography, including, public-key quantum money, quantum copy-protection for many classes of functionalities, unclonable encryption, and single-decryption encryption. \par Notably, we obtain the following new results assuming the existence of UPO:  
\begin{itemize}
    \item We show that any cryptographic functionality can be copy-protected as long as this functionality satisfies a notion of security, which we term puncturable security. Prior feasibility results focused on copy-protecting specific cryptographic functionalities. 
    \item We show that copy-protection exists for any class of evasive functions as long as the associated distribution satisfies a preimage-sampleability condition. Prior works demonstrated copy-protection for point functions, which follows as a special case of our result.  
\end{itemize}
We put forward a candidate construction of UPO and prove two notions of security, each based on the existence of (post-quantum) sub-exponentially secure indistinguishability obfuscation and one-way functions, the quantum hardness of learning with errors, and a new conjecture called simultaneous inner product conjecture. 
\end{abstract}

\ifllncs

\else
\newpage 

\begin{spacing}{0.95}
\tableofcontents
\end{spacing} 

\newpage 
\fi

\anote{
To add in the contributions section of the introduction,
\begin{enumerate}
\item Highlight the connection of the primitives to the fundamental conjectures
\item Mention the new results that are based on the conjecture and are not trivial.
\item Finally, we should compare our CP construction for PRFs compared to CLLZ. The fact that we generally copy protects any PRF.
\item The side result about making a CPA secure sde from selectively secure sde.
\end{enumerate}
}
\anote{ To do in general
\begin{enumerate}
    \item Add the concrete result statement for the applications without the UPO such as which conjecture implies the unclonable encryption, public key quantum money, copy protection of point function, etc.
    \item Try to get a simpler construction of unclonable encryption from UPO rather than going via SDE.
    \item Try to construct UPO from $qsio$ and unclonable encryption.
    \item Try to prove the conjectures.
\end{enumerate}
\newpage 
}

\section{Introduction}
\noindent Unclonable cryptography leverages the no-cloning principle of quantum mechanics~\cite{WZ82,Dieks82} to build many novel cryptographic notions that are otherwise impossible to achieve classically. This has been an active area of interest since the 1980s~\cite{Wiesner83}. In the past few years, researchers have investigated a dizzying variety of unclonable primitives such as quantum money~\cite{AC12,Zha17,Shm22,LMZ23} and its variants~\cite{RS19,BS20,RZ21}, quantum one-time programs~\cite{BGS13}, copy-protection~\cite{Aar09,CLLZ21}, tokenized signatures~\cite{BDS16,CLLZ21}, unclonable encryption~\cite{Got02,BL20} and its variants~\cite{KN23}, secure software leasing~\cite{AL20}, single-decryptor encryption~\cite{GZ20,CLLZ21}, and many more~\cite{BKL23,GMR23,JK23}.
\par Establishing the feasibility of unclonable primitives has been quite challenging. The adversarial structure considered in the unclonability setting (i.e., spatially separated and entangled) is quite different from what we typically encounter in the traditional cryptographic setting. This makes it difficult to leverage traditional classical techniques, commonly used in cryptographic proofs, to argue the security of unclonable primitives. As a result, there are two major gaping holes in the area. 
\begin{itemize}
    \item \underline{\textsc{Unsolved Foundational Questions}}: Despite the explosion of results in the past few years, many fundamental questions in this area remain to be solved. One particular research direction relevant to our work is the design of quantum copy-protection schemes. Quantum copy-protection, first invented by~\cite{Aar05}, is arguably one of the most fundamental primitives of unclonable cryptography besides quantum money. 
    
    \item \underline{\textsc{Lack of Abstractions}}: Due to the lack of good abstractions, proofs in the area of unclonable cryptography tend to be complex and use sophisticated tools, making the literature less accessible to the broader research community. This makes not only verification of proofs difficult but also makes it harder to use the techniques to obtain new feasibility results. 
\end{itemize}


\paragraph{Overarching goal of our work.} We advocate for a modular approach to designing unclonable cryptography. Our goal is to identify an important unclonable cryptographic primitive that would serve as a useful abstraction leading to the design of other unclonable primitives. Ideally, we would like to abstract away all the complex details in the instantiation of this primitive, and it should be relatively easy, even to classical cryptographers, to use this primitive to study unclonability in the context of other cryptographic primitives. We believe that the identification and instantiation of such a  primitive will speed up the progress in the design of unclonable primitives. 
\par Indeed, similar explorations in other contexts, such as classical cryptography, have been fruitful. For instance, the discovery of indistinguishability obfuscation~\cite{BGIRSVY01,GGHRSW13} (iO) revolutionized cryptography and led to the resolution of many open problems (for instance:~\cite{SW14,GGHR14,BZ17,BPR15}). Hence, there is merit to exploring the possibility of such a primitive in unclonable cryptography, as well.
\par Thus, we ask the following question: 
\begin{quote} 
\begin{center}
\textit{Is there an ``iO-like" primitive for unclonable cryptography?}
\end{center}
\end{quote}
We seek the pursuit of identifying unclonable primitives that would have a similar impact on unclonable cryptography as iO did on classical cryptography.

\subsection{Our Contributions in a Nutshell} 
In our search for an \textit{``iO-like"} primitive for unclonable cryptography, we propose a new notion called \textit{unclonable puncturable obfuscation} ($\UPO$) and explore its impact on unclonable cryptography. \\

\noindent \underline{\textsc{New Feasibility Results.}} Specifically, using UPO and other well-studied cryptographic tools, we demonstrate the following new results. 
\begin{itemize}
    \item We show any class of functionalities can be copy-protected as long as they are puncturable (more details in~\Cref{sec:ourresults}). 
    \item We show that a large class of evasive functionalities can be copy-protected. 
\end{itemize}
\noindent \textit{The above two results not only subsume all the copy-protectable functionalities studied in prior works but also capture new functionalities}.
\par Even for functionalities that have been studied before our work, we get qualitatively new results. For instance, our result shows that {\bf any} puncturable digital signature can be copy-protected whereas the work of~\cite{LLQ+22} shows a weaker result that the digital signature of~\cite{SW14} can be copy-protected. We get similar conclusions for copy-protection for pseudorandom functions. \\

\noindent \underline{\textsc{Implication to Unclonable Cryptography.}} Apart from quantum copy-protection, UPO implies many of the foundational unclonable primitives such as public-key quantum money, unclonable encryption, and single-decryptor encryption. \textit{The resulting constructions from UPO are conceptually different compared to the prior works}. Since building unclonable primitives is a daunting task even when relying on exotic computational assumptions, it becomes crucial to venture into alternative approaches. Moreover, this endeavor could potentially yield fresh perspectives on unclonable cryptography.\\

\noindent \underline{\textsc{Simpler Constructions.}} We believe that some of our constructions are simpler than the prior works, albeit the underlying assumptions are incomparable\footnote{We assume UPO whereas the previous works assume post-quantum iO and other well-studied assumptions.}. The construction of copy-protection for puncturable functionalities yields simpler constructions of copy-protection for pseudorandom functions, studied in~\cite{CLLZ21}, and copy-protection for signatures, studied in~\cite{LLQ+22}. \\

\noindent One potential criticism of our work is that our construction of UPO is based on a new conjecture. Specifically, we show that UPO can be based on the existence of post-quantum secure iO, learning with errors and a new conjecture. 
\par However, it is essential to keep in mind the following facts: 
\begin{itemize}
    \item \textsc{Assumptions}: If our conjectures are true, then this would mean that we can construct UPO from indistinguishability obfuscation and other standard assumptions. On the other hand, we currently do not know whether the other direction is true, i.e., whether UPO implies post-quantum indistinguishability obfuscation. As a result, it is plausible that UPO could be a weaker assumption than post-quantum iO! One consequence of this is the construction of public-key quantum money from generic assumptions weaker than post-quantum iO.

    If our conjectures are false, by itself, this does not refute the existence of UPO. \textit{We would like to emphasize that there is no reason to believe these conjectures are necessary for the existence of UPO.} Instead, it merely suggests that we need a different approach to investigate the feasibility of UPO. 
    
    \item \textsc{Pushing the Feasibility Landscape}: Time and time again, in cryptography, we have been forced to invent new assumptions. In numerous instances, these assumptions have unveiled a previously uncharted realm of cryptographic primitives, expanding our understanding beyond what we once deemed feasible. While not all of the computational assumptions have survived the test of time, in some cases\footnote{Several candidates of post-quantum indistinguishability obfuscation had to be broken before a secure candidate was proposed~\cite{JLS21}.}, the insights gained from their cryptanalysis have helped us to come up with more secure instantiations in the future. In a similar vein, being aggressive with exploring new assumptions could push the boundaries of unclonable cryptography.
    
    \anote{We need to complete the sentence here right?}
\end{itemize}

\noindent We also present another construction of UPO from quantum state iO and unclonable encryption. We discuss this more at the end of~\Cref{sec:ourresults}.

\subsection{Our Contributions}
\label{sec:ourresults}
\paragraph{{\bf Definition}.} We discuss our results in more detail. Roughly speaking, unclonable puncturable obfuscation ($\UPO$) defined for a class of circuits $\cktclass$ in ${\sf P}/{\sf Poly}$, consists of two QPT algorithms $(\obf,\eval)$ defined as follows:
\begin{itemize}
    \item \textsc{Obfuscation algorithm}: $\obf$ takes as input a classical circuit $C \in \cktclass$ and outputs a quantum state $\rho_C$. 
    \item \textsc{Evaluation algorithm}: $\eval$ takes as input a quantum state $\rho_C$, an input $x$, and outputs a value $y$.  
\end{itemize}
In terms of correctness, we require $y=C(x)$. To define security, as is typically the case for unclonable primitives, we consider non-local adversaries of the form $(\alice,\bob,\charlie)$. The security experiment, parameterized by a distribution $\distr_{{\cal X}}$, is defined as follows: 
\begin{itemize}
    \item $\alice$ (Alice) receives as input a quantum state $\rho^*$ that is generated as follows. $\alice$ sends a circuit $C$ to the challenger, who then samples a bit $b$ uniformly at random and samples $\left( x^{\bob},x^{\charlie} \right)$ from $ \distr_{{\cal X}}$. If $b=0$, it sets $\rho^*$ to be the output of $\obf$ on input $C$, or if $b=1$, it sets $\rho^*$ to be the output of $\obf$ on $G$, where $G$ is a punctured circuit that has the same functionality as $C$ on all the points except $x^{\bob}$ and $x^{\charlie}$. It is important to note that $\alice$ only receives $\rho^*$ and in particular, $x^{\bob}$ and $x^{\charlie}$ are hidden from $\alice$.  
    \item $\alice$ then creates a bipartite state and shares one part with $\bob$ (Bob) and the other part with $\charlie$ (Charlie). 
    \item $\bob$ and $\charlie$ cannot communicate with each other. In the challenge phase, $\bob$ receives $x^{\bob}$ and $\charlie$ receives $x^{\charlie}$. Then, they each output bits $b_{\bob}$ and $b_{\charlie}$. 
\end{itemize}
$(\alice,\bob,\charlie)$ win if $b_{\bob}=b_{\charlie}=b$. The scheme is secure if they can only win with probability at most $0.5$ (ignoring negligible additive factors).\\

\noindent \textsc{Keyed Circuits.} Towards formalizing the notion of puncturing circuits in a way that will be useful for applications, we consider keyed circuit classes in the above definition. Every circuit in a keyed circuit class is of the form $C_k(\cdot)$ for some key $k$. Any circuit class can be implemented as a keyed circuit class using universal circuits and thus, by considering keyed circuits, we are not compromising on the generality of the above definition. \\

\noindent \textsc{Challenge Distributions.} We could consider different settings of $\distr_{{\cal X}}$. In this work, we focus on two settings. In the first setting (referred to as \textit{independent} challenge distribution), sampling $(x^{\bob},x^{\charlie})$ from $\distr_{{\cal X}}$ is the same as sampling $x^{\bob}$ and $x^{\charlie}$ uniformly at random (from the input space of $C$). In the second setting (referred to as \textit{identical} challenge distribution), sampling $(x^{\bob},x^{\charlie})$ from $\distr_{{\cal X}}$ is the same as sampling $x$ uniformly at random and setting $x=x^{\bob}=x^{\charlie}$. \\

\noindent \textsc{Generalized UPO.} In the above security experiment, we did not quite specify the behavior of the punctured circuit on the points $x^{\bob}$ and $x^{\charlie}$. There are two ways to formalize and this results in two different definitions; we consider both of them in~\Cref{def:upo-definition}. In the first (basic) version, the output of the punctured circuit $G$ on the punctured points is set to be $\bot$. This version would be the regular UPO definition. In the second (generalized) version, we allow $\alice$ to control the output of the punctured circuit on inputs $x^{\bob}$ and $x^{\charlie}$. For instance, $\alice$ can choose and send the circuits $\mu_{\bob}$ and $\mu_{\charlie}$ to the challenger. On input $x^{\bob}$ (resp., $x^{\charlie}$), the challenger programs the punctured circuit $G$ to output $\mu_{\bob}(x^{\bob})$ (resp., $\mu_{\charlie}(x^{\charlie})$). We refer to this version as \textit{generalized UPO}. 

\paragraph{{\bf Applications}.} We demonstrate several applications of $\UPO$ to unclonable cryptography. 
\par We summarise the applications\footnote{We refer the reader unfamiliar with copy-protection, single-decryptor encryption, or unclonable encryption to the introduction section of~\cite{AKL23} for an informal explanation of these primitives.} in~\Cref{fig:intro:implications}. For a broader context of these results, we refer the reader to~\Cref{sec:relatedwork} (Related Work). \\ 

\begin{figure}[!htb]
\hspace*{-3em}
\scalebox{1}{
\begin{tikzpicture}[%
    scale=1,auto,
    block/.style={
      rectangle,
      draw=gray,
      thick,
      fill=blue!5,
      align=center,
      rounded corners,
      minimum height=2em
    },
    block1/.style={
      rectangle,
      draw=gray,
      thick,
      fill=red!5,
      align=center,
      rounded corners,
      minimum height=2em
    },
    line/.style={
      draw,thick,
      -latex',
      shorten >=2pt
    },
    cloud/.style={
      draw=red,
      thick,
      ellipse,
      fill=red!20,
      minimum height=1em
    }
  ]
\coordinate (o) at (0,0) node[above] {{\bf Unclonable Puncturable Obfuscation}};
\draw[->] (0,0) -- node[left] {{\color{blue} \Cref{sec:cp:punc:cryptoscheme}\ \ \ }} (-5,-2) node[block,below] {Copy-Protection for ${\cal S}_{{\sf punc}}$};

\draw[dashed,->] (-5,-2.76) -- (-2,-4) node[block1,below] {Copy-Protection for Signatures};

\draw[->,dashed] (-1.5,-4.76) -- node[left] {} (-1,-6) node[block1,below] {Quantum Money};

\draw[->] (-5,-2.76) -- node[left] {{\color{blue} \Cref{sec:qcp:punc}}} (-6,-5) node[block,below] {Copy-Protection for ${\cal F}_{{\sf punc}}$};

\draw[->,dashed] (-6,-5.76) -- (-6,-7) node[block1,below] {Copy-Protection for PRFs};

\draw[->] (0,0) -- node[left] {{\color{blue} \Cref{sec:qcp:evasive}}} (1.5,-2.5) node[block,below] {Copy-Protection for ${\cal F}_{{\sf evasive}}$};

\draw[dashed,->] (1.5,-3.26) -- (3.5,-7) node[block1,below] {Copy-Protection for Point Functions};

\draw[->] (0,0) -- node[right] {{\color{blue}\ \ \ \ \Cref{sec:pksde:app}}} (7,-2) node[block1,below] {Single-Decryptor Encryption};


\draw[->] (0,0) -- node[xshift=3cm, yshift=-2cm] {{\color{blue} \Cref{sec:simple-construct-uenc-upo}}} (8,-5) node[block,below] {Unclonable Encryption};
\end{tikzpicture}

}
\caption{Applications of Unclonable Puncturable Obfuscation. ${\cal S}_{{\sf punc}}$ denotes cryptographic schemes satisfying puncturable property. ${\cal F}_{{\sf punc}}$ denotes cryptographic functionalities satisfying functionalities satisfying puncturable property. ${\cal F}_{{\sf evasive}}$ denotes functionalities that are evasive with respect to a distribution $\distr$ satisfying preimage-sampleability property. The dashed lines denote corollaries of our main results. The {\color{blue} blue-filled} boxes represent primitives whose feasibility was unknown prior to our work. The {\color{red} red-filled} boxes represent primitives for which we get qualitatively different results or from incomparable assumptions when compared to previous works. \pnote{should we remove copy-protection for ${\cal F}_{{\sf punc}}$?}}
\label{fig:intro:implications}
\end{figure} 

\noindent {\textsc{Copy-Protection for Puncturable Cryptographic Schemes (\Cref{sec:qcp:punc} and~\Cref{sec:cp:punc:cryptoscheme}).} We consider cryptographic schemes satisfying a property called puncturable security. Informally speaking, puncturable security says the following: given a secret key $\sk$, generated using the scheme, it is possible to puncture the key at a couple of points $x^{\bob}$ and $x^{\charlie}$ such that it is computationally infeasible to use the punctured secret key on $x^{\bob}$ and $x^{\charlie}$. We formally define this in~\Cref{sec:cp:punc:cryptoscheme}. 
\par We show the following: 

\begin{theorem}
\label{thm:intro:punccrypt}
Assuming UPO for ${\sf P/poly}$, there exists copy-protection for any puncturable cryptographic scheme. 
\end{theorem}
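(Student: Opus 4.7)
The plan is to set $\copyprotect(\sk) := \obf(C_{\sk})$, where $C_{\sk}$ is the classical circuit implementing the scheme's functionality (e.g.\ $\sign(\sk,\cdot)$ or $\dec(\sk,\cdot)$), and to reduce anti-piracy to UPO security composed with puncturable security of the underlying scheme. Correctness follows directly from correctness of $\obf$. Security is established by contradiction in two stages: first a UPO hybrid moves the adversary from $\obf(C_{\sk})$ to $\obf(G)$ for the punctured circuit $G$, and then puncturable security bounds what the adversary can do given $\obf(G)$.

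Suppose a non-local adversary $(\alice,\bob,\charlie)$ wins the anti-piracy experiment with probability $\tfrac{1}{2}+\epsilon$ for non-negligible $\epsilon$, where winning means $\bob$ and $\charlie$ each produce outputs that pass the scheme's public verification on their respective challenges $x^{\bob},x^{\charlie}$, drawn from whichever of $\distriden$ or $\distrind$ the CP game specifies. Construct a UPO adversary $(\alice',\bob',\charlie')$ as follows: $\alice'$ samples $\sk \gets \KeyGen$, sends $C_{\sk}$ (and, if needed, generalized-UPO auxiliary circuits $\mu_{\bob},\mu_{\charlie}$ that reproduce the honest output distribution on punctured inputs) to the UPO challenger, then runs $\alice$ on the received state $\rho^*$ and splits as $\alice$ does. $\bob'$ runs $\bob$ on $x^{\bob}$, obtains a candidate output $y_{\bob}$, and outputs $b_{\bob}=0$ iff $y_{\bob}$ passes verification on $x^{\bob}$; define $\charlie'$ symmetrically.

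When $b=0$ the state $\rho^*$ is exactly $\copyprotect(\sk)$, so $\Pr[b_{\bob}=b_{\charlie}=0\mid b=0]\ge\tfrac{1}{2}+\epsilon$. UPO security caps the total winning probability $\tfrac{1}{2}\Pr[b_{\bob}=b_{\charlie}=0\mid b=0]+\tfrac{1}{2}\Pr[b_{\bob}=b_{\charlie}=1\mid b=1]$ at $\tfrac{1}{2}+\negl$, forcing $\Pr[b_{\bob}=b_{\charlie}=1\mid b=1]\le\tfrac{1}{2}-\epsilon+\negl$; equivalently, the probability that \emph{at least one} of $\bob,\charlie$ produces a valid output when Alice is fed $\obf(G)$ is non-negligible. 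Now reduce to puncturable security: given a classical punctured key $\sk_{\text{punc}}$ from the puncturable-security challenger, a QPT reduction constructs $G$ from $\sk_{\text{punc}}$, runs $\obf(G)$ locally, executes $(\alice,\bob,\charlie)$, and outputs $(y_{\bob},y_{\charlie})$ as its candidate valid outputs on $(x^{\bob},x^{\charlie})$. By the previous bound this reduction breaks puncturable security with non-negligible probability, contradicting the hypothesis on the scheme.

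\textbf{Main obstacle.} The delicate point is aligning the puncturable-security definition of the underlying scheme with what the reduction consumes: its challenge distribution must coincide with $\distr_{{\cal X}}$ used by UPO and by the CP game; it must permit a QPT reduction that first runs $\obf$ before invoking the adversary; and it must penalize the adversary for producing a valid output on \emph{either} punctured point, not just both jointly (otherwise the bound obtained from the UPO hybrid, which controls the \emph{at-least-one-valid} event, is not enough to derive a contradiction). A secondary subtlety is that for functionalities whose honest output on a punctured input is not $\bot$ (e.g.\ decryption returning random-looking plaintexts), the generalized UPO variant is needed so that the $b=0$ branch faithfully simulates the real $\copyprotect(\sk)$ view; once these definitional alignments are in place, the reduction itself is a direct plug-in.
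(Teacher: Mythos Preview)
Your construction and reduction are exactly the paper's: set $\copyprotect(\sk)=\UPO.\obf(C_{\sk})$ and have $\bob'$ output $0$ iff $\bob$'s answer verifies. But there is a quantitative slip that leaves your conclusion too weak.

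The anti-piracy game here is a \emph{search} game: $\bob$ and $\charlie$ must each produce an output that passes $\verify$, and the required bound is $\Pr[\text{win}]\le\negl(\secparam)$, not $\tfrac{1}{2}+\negl$. By starting from ``suppose the adversary wins with probability $\tfrac{1}{2}+\epsilon$,'' your contradiction only rules out $p^{\cp}>\tfrac{1}{2}$; it says nothing against, e.g., $p^{\cp}=\tfrac{1}{4}$. The fix is immediate: assume instead $p^{\cp}=\delta$ for some non-negligible $\delta$. Then your own arithmetic gives $p_1\le 1-\delta+\negl$, hence $q_{\bob}+q_{\charlie}\ge 1-p_1\ge \delta-\negl$, so one of $q_{\bob},q_{\charlie}$ is at least $\delta/2-\negl$, and \emph{that} party (alone) yields the single-adversary reduction to puncturable security. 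Your ``Main obstacle'' paragraph already anticipates that the puncturable-security definition must penalize success on either punctured point individually, which is precisely what the paper's definition does (it is stated for $x_1$, and by symmetry of the uniform sampling the same reduction works for $x_2$).

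The paper runs the argument in the reverse order, which sidesteps this pitfall: it first uses puncturable security to show each of $q_{\bob},q_{\charlie}$ is negligible (via two separate single-adversary reductions), hence $p_1\ge 1-q_{\bob}-q_{\charlie}\ge 1-\negl$; then UPO security $\tfrac{1}{2}(p_0+p_1)\le\tfrac{1}{2}+\negl$ immediately forces $p_0=p^{\cp}\le\negl$. This ordering makes the negligible target fall out without any case analysis on which party succeeds. Finally, for this theorem the paper uses only the basic UPO notion: the punctured circuit simply outputs $\bot$ on the punctured points, so your remark about needing generalized UPO (with $\mu_{\bob},\mu_{\charlie}$) is not required here---it becomes relevant only in the single-decryptor and evasive-function applications.
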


\noindent Prior works~\cite{CLLZ21,LLQ+22} aimed at copy-protecting specific cryptographic functionalities whereas we, for the first time, characterize a broad class of cryptographic functionalities that can be copy-protected. \\

\noindent As a corollary, we obtain the following results assuming UPO. 
\begin{itemize}

    \item We show that {\bf any} class of puncturable pseudorandom functions that can be punctured at two points~\cite{BW13,BGI14} can be copy-protected. The feasibility result of copy-protecting pseudorandom functions was first established in~\cite{CLLZ21}. A point to note here is that in~\cite{CLLZ21}, given a class of puncturable pseudorandom functions, they transform this into a different class of pseudorandom functions\footnote{Specifically, they add a transformation to generically make the pseudorandom function extractable.} that is still puncturable and then copy-protect the resulting class. On the other hand, we show that \textit{any} class of puncturable pseudorandom functions, which allows for the puncturing of two points, can be copy-protected. Hence, our result is qualitatively different than~\cite{CLLZ21}. 
    
    \item We show that {\bf any} digital signature scheme, where the signing key can be punctured at two points, can be copy-protected. Roughly speaking, a digital signature scheme is puncturable if the signing key can be punctured on two messages $m^{\bob}$ and $m^{\charlie}$ such that given the punctured signing key, it is computationally infeasible to produce a signature on one of the punctured messages. Our result rederives and generalizes a recent result by~\cite{LLQ+22} who showed how to copy-protect the digital signature scheme of~\cite{SW14}.
    
\end{itemize}

\noindent In the technical sections, we first present a simpler result where we copy-protect puncturable functionalities (\Cref{sec:qcp:punc}) and we then extend this result to achieve copy-protection for puncturable cryptographic schemes (\Cref{sec:cp:punc:cryptoscheme}). \\

\noindent \textsc{Copy-Protection for Evasive Functions (\Cref{sec:qcp:evasive}).} We consider a class of evasive functions associated with a distribution $\distr$ satisfying a property referred to as preimage-sampleability which is informally defined as follows: there exists a distribution $\distr'$ such that sampling an evasive function from $\distr$ along with an accepting point (i.e., the output of the function on this point is 1) is computationally indistinguishable from sampling a function from $\distr'$ and then modifying this function by injecting a uniformly random point as the accepting point. We show the following. 

\begin{theorem}
Assuming generalized UPO for ${\sf P/poly}$, there exists copy-protection for any class of functions that is evasive with respect to a distribution $\distr$ satisfying preimage-sampleability property.  
\end{theorem}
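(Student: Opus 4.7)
The construction is the natural one: given an evasive function $f$, define $\copyprotect(f) := \obf_{\UPO}(f)$, treating $f$ as a classical circuit in $\mathsf{P}/\mathsf{poly}$, and let evaluation simply invoke $\eval_{\UPO}$. Correctness is immediate from UPO correctness; the real work is in the security reduction.

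My plan is to reduce the copy-protection security game for evasive functions (sampling $f \sim \distr$, then independently serving each of $\bob,\charlie$ either a uniformly random input or an accepting point of $f$) to the generalized UPO game under the independent challenge distribution. Given a copy-protection adversary $\abc$ that wins with non-negligible advantage over the trivial $1/2$ threshold, I construct a UPO adversary $(\alice',\bob',\charlie')$ as follows. $\alice'$ samples $f' \sim \distr'$ from the companion distribution guaranteed by preimage-sampleability and sends the circuit $C := f'$ to the challenger together with the puncturing programs $\mu_\bob = \mu_\charlie = \text{``constant-}1\text{''}$. In the UPO $b=0$ branch the challenger returns $\obf(f')$, while in the $b=1$ branch it returns $\obf(G)$, where $G$ agrees with $f'$ everywhere except at the (uniformly random) challenge points $x^\bob, x^\charlie$, where it is reprogrammed to output $1$. $\alice'$ passes this state to $\alice$, receives the bipartite split, and forwards the two halves to $\bob'$ and $\charlie'$, who upon receiving their UPO challenges run $\bob$ and $\charlie$ on those inputs and echo the resulting bits.

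The heart of the analysis is showing that each UPO branch simulates the corresponding copy-protection world, so that UPO $b$ corresponds to the copy-protection ``accepting vs.\ random'' bit. In the $b=1$ branch, I would apply preimage-sampleability in a two-step hybrid---first reprogramming at $x^\bob$, then at $x^\charlie$---to conclude that $(\obf(G), x^\bob, x^\charlie)$ is computationally indistinguishable from $(\obf(f), x^\bob, x^\charlie)$ for $f \sim \distr$ together with accepting points $x^\bob, x^\charlie$. In the $b=0$ branch, the adversary sees $\obf(f')$ evaluated at uniformly random $x^\bob, x^\charlie$; the evasiveness of $\distr'$ (inherited from $\distr$ via preimage-sampleability) makes this indistinguishable from $\obf(f)$ queried at uniformly random points. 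Combining the two, the UPO adversary's winning probability inherits the copy-protection adversary's advantage, contradicting generalized UPO security.

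The main obstacle is the two-point nature of puncturing versus the one-point phrasing of preimage-sampleability: extending the indistinguishability to the jointly-reprogrammed setting requires a careful hybrid in which, after the first reprogramming step, the function and one of its accepting points have already become correlated, and this correlation must be maintained when invoking the assumption a second time at the other point. A secondary subtlety is the $b=0$ branch, where I must argue that the raw distribution $f' \sim \distr'$---without any reprogramming---is essentially indistinguishable from $f \sim \distr$ when queried only at random inputs; this should follow from preimage-sampleability by marginalizing over the accepting point and using evasiveness to conclude that reprogramming a random (hence already rejected) point to $1$ is undetectable so long as that specific point is not queried.
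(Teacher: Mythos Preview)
Your construction and the skeleton of the reduction match the paper's: copy-protect via $\UPO.\obf$, and in the reduction have $\alice'$ sample from the auxiliary distribution $\distr'$ with $\mu_\bob=\mu_\charlie=\vec{1}$. The gap is the challenge distribution. The paper uses the \emph{identical} distribution $\distrid$ for both UPO and the copy-protection game---both $\bob$ and $\charlie$ receive the same $x$---not the independent distribution you work with. This is not an arbitrary choice: the preimage-sampleability condition (Definition~\ref{def:preimage:sampleable:evasive}) is stated for a \emph{single} point, with $\evaspuncture(k',y,y,\vec{1},\vec{1})$ puncturing at the same $y$ twice, and the identical challenge distribution matches this exactly so that only one point ever needs to be reprogrammed.

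The two-point obstacle you flag is therefore real, and your proposed two-step hybrid does not close it. After the first step you hold $(f, x^\bob)$ with $f\sim\distr$ and $x^\bob$ an accepting point; to take the second step you would need the reprogrammed pair $(f_{x^\charlie\to 1}, x^\charlie)$ to sit on the $\distr'$ side of the preimage-sampleability relation. But here the base circuit is drawn from $\distr$ conditioned on $x^\bob$ being accepting, not from $\distr'$, so the hypothesis of the assumption is not met the second time. There is also a circuit-representation mismatch: the first step hands you a circuit in $\cktclassf$, while the puncturing algorithm $\evaspuncture$ is defined only on the auxiliary class $\cktclass$, so ``reprogram $f$ at $x^\charlie$'' has no canonical meaning in this setup.

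A smaller point on the $b=0$ branch: the paper does not argue that $\distr'$ is itself evasive via marginalization. Instead it observes that $\alice$ alone cannot distinguish $\UPO.\obf(C_{k'})$ from $\UPO.\obf(C_{k',y,\vec{1}})$---a direct consequence of generalized UPO security---and uses this to swap between the punctured and unpunctured $\distr'$ sample in the $b=0$ world (see $\hybrid_3$ in the proof of Theorem~\ref{thm:evasive-functions-from-upo}), after which $y$ is uniform and independent in both branches and can serve as the common challenge.
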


\noindent Unlike~\Cref{thm:intro:punccrypt}, we assume generalized UPO in the above theorem. 
\par As a special case, we obtain copy-protection for point functions. A recent work~\cite{CHV23} presented construction of copy-protection for point functions from post-quantum iO and other standard assumptions. Qualitatively, our results are different in the following ways: 
\begin{itemize}
    \item The challenge distribution considered in the security definition of~\cite{CHV23} is arguably not a natural one: with probability $\frac{1}{3}$, $\bob$ and $\charlie$ get as input the actual point, with probability $\frac{1}{3}$, $\bob$ gets the actual point while $\charlie$ gets a random value and finally, with probability $\frac{1}{3}$, $\bob$ gets a random value while $\charlie$ gets the actual point. On the other hand, we consider identical challenge distribution; that is, $\bob$ and $\charlie$ both receive the actual point with probability $\frac{1}{2}$ or they both receive a value picked uniformly at random. 
    \item While the result of~\cite{CHV23} is restricted to point functions, we show how to copy-protect functions where the number of accepting points is a fixed polynomial. 
\end{itemize}

\noindent We clarify that none of the above results on copy-protection contradicts the impossibility result by~\cite{AL20} who present a conditional result ruling out the possibility of copy-protecting contrived functionalities. \\

\noindent \textsc{Unclonable Encryption (\Cref{sec:simple-construct-uenc-upo,sec:pksde:app}).} Finally, we show, for the first time, an approach to construct unclonable encryption in the plain model. We give a direct and simple construction of unclonable encryption for bits, see \Cref{sec:simple-construct-uenc-upo}.
\begin{theorem}
Assuming generalized UPO for ${\sf P/poly}$, there exists a one-time unclonable bit-encryption scheme in the plain model. 
\end{theorem}

We also obtain a construction of unclonable encryption for arbitrary fixed length messages by first constructing public-key single-decryptor encryption (SDE) with an identical challenge distribution.

\begin{theorem}
Assuming generalized UPO for ${\sf P/poly}$, post-quantum indistinguishability obfuscation (iO), and post-quantum one-way functions, there exists a public-key single-decryptor encryption scheme with security against identical challenge distribution, see \Cref{sec:pksde:app}.
\end{theorem}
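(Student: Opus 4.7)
I plan to construct the SDE scheme by lifting a post-quantum puncturable public-key encryption (from $\io$ and one-way functions, Sahai--Waters style) to the quantum setting by obfuscating its decryption circuit via generalized UPO. Concretely, let $(\keygen, \enc, \dec)$ be the puncturable PKE; SDE setup samples $(\pk, \sk) \gets \keygen(1^\secparam)$, publishes $\pk$, and sets the quantum secret key to $\mathrm{sk}_Q := \UPO.\obf(\io(\dec_{\sk}))$. Encryption uses $\enc$ on $\pk$, and decryption runs $\UPO.\eval$; correctness is immediate.

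The anti-piracy proof proceeds by a direct reduction to generalized UPO, combined with a Sahai--Waters style bridge. Given an SDE pirate $(\alice, \bob, \charlie)$, I build a UPO adversary that queries its challenger with $C = \io(\dec_{\sk})$, the constant circuit $\mu$ returning $m_1$, and identical-challenge distribution $\distr_{\mathcal{X}} = \enc(\pk, m_0)$; it then passes the resulting quantum state to $\alice$ as $\mathrm{sk}_Q$ and forwards the bipartite output to $\bob, \charlie$ along with the UPO challenge ciphertext. The $b_{\UPO} = 0$ branch is distributed exactly as the real SDE game with $b' = 0$, so the UPO event ``both output $0$'' coincides with the SDE success event on the $b' = 0$ branch; the $b_{\UPO} = 1$ branch gives a hybrid $H^\star$ in which $\mathrm{sk}_Q = \UPO.\obf(\io(G_\mu))$, where $G_\mu$ is the $\ct^*$-puncturing of $\dec_{\sk}$ reprogrammed to output $m_1$. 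Applying UPO security with the identity mapping $b_{\bob}^{\UPO} = b_{\bob}^{\text{SDE}}$ yields $\Pr[(0,0) \mid \text{SDE}, b'{=}0] + \Pr[(1,1) \mid H^\star] \le 1 + \negl$.

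Next, I would bridge $H^\star$ to the real SDE game with $b' = 1$ in two sub-steps: (i) switch the challenge ciphertext from $\enc(\pk, m_0)$ to $\enc(\pk, m_1)$, which is indistinguishable by the puncturable-PKE guarantee holding even when the pirate holds the obfuscated punctured decryption circuit $G_\mu$ (a standard $\io$ + puncturable-PRF argument combined with the QPT-stability of $\UPO.\obf$ under iO-equivalent inputs); and (ii) restore $\mathrm{sk}_Q$ to $\UPO.\obf(\io(\dec_{\sk}))$ by a second application of generalized UPO with $\distr_{\mathcal{X}} = \enc(\pk, m_1)$. Combining these gives $\Pr[(0,0) \mid b'{=}0] + \Pr[(1,1) \mid b'{=}1] \le 1 + \negl$, so the SDE winning probability $\tfrac{1}{2}\bigl(\Pr[(0,0) \mid b'{=}0] + \Pr[(1,1) \mid b'{=}1]\bigr)$ is at most $\tfrac{1}{2} + \negl$; the promised unclonable encryption in the plain model then follows from \cite{GZ20}'s compiler. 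The main technical obstacle I foresee is sub-step (i): switching the plaintext while the obfuscated single-point-punctured decryption circuit is in the pirate's hands requires that this circuit leak no usable information about the plaintext on any input other than $\ct^*$, which pushes the construction toward an authenticated-ciphertext PKE whose tag is derived from a second puncturable PRF so that every non-$\ct^*$ ciphertext sharing $\ct^*$'s pseudo-image component decrypts to $\bot$ inside $\dec_{\sk}$, localizing all plaintext-dependent information in the obfuscated circuit to the single point $\ct^*$.
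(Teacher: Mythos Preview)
Your overall shape—UPO applied to (iO of) the decryption functionality of a Sahai--Waters-style PKE, one UPO reduction plus bridging hybrids—matches the paper's. The concrete gap is in how you invoke UPO. The paper's generalized UPO is parametrized by a \emph{fixed} challenge distribution $\distr_\inpclass$, and it is only constructed (and only assumed) for the identical-uniform distribution $\distrid$. You set $\distr_\inpclass = \enc(\pk, m_0)$: this is not uniform, and moreover it depends on $\pk$ and on the adversary's message $m_0$, so it is not even a fixed distribution the UPO scheme can be instantiated against. The paper's technical overview flags exactly this obstacle (``we required that the marginals of the challenge distribution for a UPO scheme have to be uniform''), and your step~(ii) hits the same wall with $\enc(\pk, m_1)$.

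The paper's fix, which also dissolves your problematic step~(i), has two ingredients. First, obfuscate only the PRF $\prf.\eval(k,\cdot)$ rather than the full decryption circuit. A Sahai--Waters ciphertext is $(x,z)$ with $x=\prg(r)$ and $z=\prf(k,x)\oplus m$; the reduction hands only $x$ to $\bob,\charlie$ as the UPO challenge (after one PRG hybrid, $x$ is genuinely uniform, so $\distrid$ applies), and since the reduction itself holds $k$ it can compute the second component $z=\prf(k,x)\oplus m_0$ locally. Second, choose the puncturing map $\mu(x)=\prf(k,x)\oplus m_0\oplus m_1$ rather than the constant $m_1$. With this choice both UPO branches use the \emph{same} ciphertext $(x,\prf(k,x)\oplus m_0)$, and the $b{=}1$ branch differs from $b{=}0$ only in the obfuscated circuit. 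A sequence of puncturable-PRF and $\io$ hybrids then shows the $b{=}1$ branch is indistinguishable from the real SDE game on message $m_1$ without ever switching the challenge ciphertext. Thus a single UPO invocation suffices, and the authenticated-ciphertext workaround you anticipated is not needed.
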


\noindent ~\cite{GZ20} showed that SDE with such a challenge distribution implies unclonable encryption. Prior work by~\cite{CLLZ21} demonstrated the construction of public-key single-decryptor encryption with security against independent challenge distribution, which is not known to imply unclonable encryption. 
We, thus, obtain the following corollary.

\begin{corollary}
Assuming generalized UPO, post-quantum iO, and post-quantum one-way functions, there exists a one-time unclonable encryption scheme in the plain model. 
\end{corollary}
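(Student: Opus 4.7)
The plan is to obtain the corollary as a direct composition of the preceding theorem with the reduction of Georgiou--Zhandry~\cite{GZ20} from single-decryptor encryption to unclonable encryption. The first step is to invoke the theorem immediately above this corollary: from generalized $\UPO$ for $\ppoly$, post-quantum $\iO$, and post-quantum one-way functions, we instantiate a public-key single-decryptor encryption scheme $\SDE=(\Gen,\qkeygen,\Enc,\dec)$ whose anti-piracy holds in the \emph{identical} challenge distribution, i.e.\ where the non-local parties $\bob$ and $\charlie$ are challenged on the same SDE ciphertext.

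The second step is to plug this $\SDE$ into the $\cite{GZ20}$ transformation, which yields a one-time unclonable encryption scheme $\UE$ in the plain model. Concretely, $\UE.\Enc$ on a message $m$ samples a fresh SDE key pair $(\pk_{\SDE},\sk_{\SDE})$, derives a quantum decryption key $\rho_{\sk_{\SDE}}\gets\SDE.\qkeygen(\sk_{\SDE})$, computes $\ct\gets\SDE.\Enc(\pk_{\SDE},m)$, and outputs the unclonable ciphertext $(\rho_{\sk_{\SDE}},\pk_{\SDE},\ct)$; decryption runs $\SDE.\dec(\rho_{\sk_{\SDE}},\ct)$. Correctness is inherited from SDE correctness, and the plain-model property is immediate since no setup beyond $\SDE.\Gen$ is invoked. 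Security against a splitting adversary $(\alice,\bob,\charlie)$ reduces to the identical-challenge SDE anti-piracy game: a successful cloner would hand $\bob$ and $\charlie$ parts of $\rho_{\sk_{\SDE}}$ that each decrypt the \emph{same} challenge ciphertext $\ct$, which is exactly the winning condition in the identical-challenge SDE experiment and hence bounded by $1/2+\negl(\secparam)$ by the previous theorem.

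The main obstacle, and the reason this is only a corollary, is bookkeeping rather than anything conceptually new: one must verify that the SDE security notion produced by the preceding theorem is exactly the one the $\cite{GZ20}$ reduction consumes. In particular, we need identical (not independent) challenges, which is precisely what our theorem furnishes; and we need the reduction to treat $\pk_{\SDE}$ as part of the ciphertext (not of a long-lived public key), so that the one-time plain-model syntax is respected. Both requirements are met by the above construction, so the corollary follows by combining the theorem with~\cite{GZ20}.
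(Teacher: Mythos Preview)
Your high-level plan matches the paper exactly: invoke the preceding theorem to obtain public-key SDE secure under the identical challenge distribution, then apply the Georgiou--Zhandry reduction from~\cite{GZ20}. The paper proves the corollary in one line by citing~\cite{GZ20} for this last step, so at the level of ``which ingredients and in what order'' you are spot on.

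Where your proposal goes wrong is in the concrete description of the~\cite{GZ20} transformation. You set the unclonable-encryption ciphertext to be the triple $(\rho_{\sk_{\SDE}},\pk_{\SDE},\ct)$. But then $\alice$, who receives the full UE ciphertext, can simply run $\SDE.\dec(\rho_{\sk_{\SDE}},\ct)$, recover $m_b$, read off $b$, and hand the classical bit $b$ to both $\bob$ and $\charlie$ inside $\sigma_{\bob,\charlie}$; they both output $b$ and win with probability $1$. In other words, your scheme fails even basic semantic security for $\alice$, let alone unclonability. The reduction you sketch (``a successful cloner would hand $\bob$ and $\charlie$ parts of $\rho_{\sk_{\SDE}}$ that each decrypt the same $\ct$'') also breaks down formally, because in the SDE game the ciphertext $\ct$ is only produced \emph{after} $\alice$ splits, whereas in your construction $\alice$ already holds $\ct$ before splitting.

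The actual~\cite{GZ20} transformation goes the other way around: the UE ciphertext is the SDE quantum decryption key $\rho_{\sk_{\SDE}}$ alone, and the SDE ciphertext $\ct$ (together with $\pk_{\SDE}$) plays the role of the classical key that is revealed to $\bob$ and $\charlie$ in the key-reveal phase. With this arrangement $\alice$ cannot decrypt (she has the quantum key but no ciphertext), and the UE experiment becomes syntactically identical to the SDE identical-challenge experiment: $\alice$ gets $\rho_{\sk_{\SDE}}$, splits, and then $\bob$ and $\charlie$ each receive the same $\ct$. Once you swap the roles in this way, your bookkeeping paragraph is exactly right and the corollary follows.
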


\par Note that using the compiler of~\cite{AK21}, we can generically transform a one-time unclonable encryption into a public-key unclonable encryption in the plain model under the same assumptions as above. 

\noindent We note that this is the first construction of unclonable encryption in the plain model. All the previous works~\cite{BL20,AKLLZ22,AKL23} construct unclonable encryption in the quantum random oracle model. The disadvantage of our construction is that they leverage computational assumptions whereas the previous works~\cite{BL20,AKLLZ22,AKL23} are information-theoretically secure. 

\par Apart from unclonable encryption, single-decryptor encryption also implies public-key quantum money, thereby giving the following corollary.
\begin{corollary}
Assuming generalized UPO, post-quantum iO, and post-quantum one-way functions, there exists a public-key quantum money scheme. 
\end{corollary}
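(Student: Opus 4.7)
The plan is to instantiate the folklore reduction from public-key single-decryptor encryption ($\SDE$) to public-key quantum money, applied to the $\SDE$ scheme produced by the previous theorem. First, I would invoke the previous theorem under the stated assumptions (generalized $\UPO$, post-quantum $\io$, post-quantum one-way functions) to obtain a public-key $\SDE$ scheme $(\SDE.\setup,\SDE.\enc,\SDE.\dec)$ whose secret key $\sk$ is a quantum state and which satisfies anti-piracy security against the identical challenge distribution.

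Next, I would define the quantum money scheme as follows. The mint runs $\SDE.\setup(1^\secparam)$ to produce $(\pk,\sk)$, publishes $\pk$ as the serial number, and uses $\sk$ as the banknote. The verification procedure, on input $(\pk,\rho)$, samples a polynomial number of independent challenge messages $m_1,\dots,m_t$, encrypts each under $\pk$ to obtain $\ct_1,\dots,\ct_t$, coherently applies $\SDE.\dec$ with $\rho$, and accepts iff each recovered message matches the planted $m_i$. Correctness for honest banknotes, together with the near-nondisturbance of $\verify$ on them, follows from $\SDE$ correctness and a standard gentle-measurement / almost-as-good-as-new argument so that the same banknote can be reverified many times.

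For unforgeability, I would argue by contradiction: given a cloning adversary $\adversary$ that, on input a single honest banknote, outputs two states $\rho_\bob,\rho_\charlie$ each passing $\verify$ with inverse-polynomial probability, I build a reduction $\reduct$ breaking identical-challenge $\SDE$ anti-piracy. The reduction plays the role of $\alice$: it receives $\sk$ from the $\SDE$ challenger, runs $\adversary$ on $\sk$, and hands $\rho_\bob$ to $\bob$ and $\rho_\charlie$ to $\charlie$. When the challenger delivers the common challenge ciphertext $\ct$ to both parties, each simply runs $\SDE.\dec$ on its received state and outputs the result. Since passing $\verify$ certifies that the state correctly decrypts fresh ciphertexts with probability close to $1$, both $\bob$ and $\charlie$ simultaneously recover the encoded message with probability noticeably larger than the trivial bound allowed by the $\SDE$ security game, yielding the desired contradiction and establishing unforgeability.

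The main obstacle will be bridging the gap between the multi-shot coherent $\verify$ test (which accepts on $t$ independent ciphertexts and only with inverse-polynomial probability) and the single-shot experiment in the $\SDE$ security game (where $\bob$ and $\charlie$ must succeed on one freshly sampled challenge). I would close this gap using Zhandry-style projective / threshold implementations: approximate $\verify$ by the projective implementation of the binary measurement ``correctly decrypts a fresh encryption,'' apply it to $\rho_\bob$ and $\rho_\charlie$ to project each into a high-success eigenspace, and then argue that a single subsequent challenge $\ct$ is jointly decrypted correctly with probability close to $1$, comfortably beating the $\tfrac12+\negl$ identical-challenge threshold. A secondary technicality is choosing the $\SDE$ message space (or amplifying) so that the trivial guessing bound is non-trivially beaten, but this is routine.
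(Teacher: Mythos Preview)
Your approach is correct and matches the paper's: the corollary is obtained by combining the preceding theorem (generalized $\UPO$ + $\iO$ + OWFs $\Rightarrow$ public-key $\SDE$) with the known implication that public-key $\SDE$ yields public-key quantum money. The paper in fact gives no proof at all for this corollary---it simply asserts ``single-decryptor encryption also implies public-key quantum money'' and states the corollary---so your sketch is already considerably more detailed than what the paper provides; the threshold-implementation machinery you outline is the right way to make the folklore reduction rigorous, though the paper does not spell any of this out.
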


\noindent The construction of quantum money from UPO offers a conceptually different approach to constructing public-key quantum money in comparison with other quantum money schemes such as~\cite{Zha17,LMZ23,Zha23}. 
\par As an aside, we also present a lifting theorem that lifts a selectively secure single-decryptor encryption into an adaptively secure construction, assuming the existence of post-quantum iO. Such a lifting theorem was not known prior to our work.

\paragraph{Construction.} Finally we demonstrate a construction of generalized $\UPO$ for all classes of efficiently computable keyed circuits. We show that the same construction is secure with respect to both identical and independent challenge distributions. Specifically, we show the following: 

\begin{theorem}[Informal]
Suppose $\cktclass$ consists of polynomial-sized keyed circuits. Assuming the following:
\begin{itemize}
    \item Post-quantum sub-exponentially secure indistinguishability obfuscation for ${\sf P/poly}$,
    \item Post-quantum sub-exponentially secure one-way functions,
    \item Compute-and-compare obfuscation secure against QPT adversaries and, 
    \item Simultaneous inner product conjecture.
\end{itemize} 
there exists generalized $\UPO$ with respect to identical $\distr_{{\cal X}}$ for $\cktclass$. 
\end{theorem}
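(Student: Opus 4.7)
The plan is to follow the template established in CLLZ21 for unclonable primitives built from coset states, adapting it for UPO. The construction of $\obf(C_k)$ would sample a random subspace $A \subseteq \mathbb{F}_2^n$ of dimension $n/2$, together with uniformly random cosets $s,s' \in \mathbb{F}_2^n$, and prepare the coset state $|A_{s,s'}\rangle$. The obfuscated circuit consists of this coset state together with a classical component $\tilde{P} \leftarrow \io(P)$, where $P$ takes as input $(x,v)$, checks whether $v \in A + s$ (or $v \in A^\perp + s'$ depending on a branch selected by $x$), and if so outputs $C_k(x)$. Evaluation on $x$ is done by first applying Hadamards on a selected subset of qubits of the coset state to obtain a vector $v$ in the appropriate coset, then feeding $(x,v)$ to $\tilde{P}$. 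For the punctured variant (and its generalized form with programmed outputs $\mu_{\bob},\mu_{\charlie}$), I would replace $C_k(x)$ inside $P$ with a branching program that detects $x = x^{\bob}=x^{\charlie}=x^*$ and outputs $\mu(x^*)$ in that case.

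Next I would prove security via a sequence of iO-based hybrids. First, using the subspace-hiding obfuscation technique and puncturable PRFs (for the coset-state programs) one would modify $P$ so that the check on input $x^*$ is replaced by a hardcoded membership test against $A+s$ and $A^\perp + s'$ restricted to that input, and the outputs at $x \neq x^*$ are unchanged. In this intermediate hybrid, the only way for an evaluator to produce an accepting witness on input $x^*$ is to know a vector in $A+s$ together with a vector in $A^\perp + s'$. I would then show via standard iO arguments that the real UPO distribution (case $b=0$) and the punctured UPO distribution (case $b=1$) are indistinguishable except to an adversary who can, at both of the spatially separated locations, produce such pairs of cosets.

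The crux of the proof — and the step which I expect to be the main obstacle — is reducing the remaining game to the simultaneous inner product conjecture. After the hybrids, the task of $(\alice,\bob,\charlie)$ is reduced to a monogamy game: $\alice$ receives $|A_{s,s'}\rangle$ and an iO-obfuscated program that embeds membership testers for $A+s$ and $A^\perp+s'$, splits her state and sends halves to $\bob$ and $\charlie$; in the identical challenge setting, both $\bob$ and $\charlie$ receive the same $x^*$ and each must produce a vector-pair that passes the hidden coset test. The simultaneous inner product conjecture is precisely formulated to rule out that, given only one copy of $|A_{s,s'}\rangle$ and obfuscated coset-membership programs, two non-communicating parties can each simultaneously produce a vector in $A+s$ and a vector in $A^\perp+s'$. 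I would invoke this conjecture to bound the winning probability of $(\bob,\charlie)$ by $\frac{1}{2} + \negl(\lambda)$.

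The identical-challenge setting is qualitatively harder than the independent-challenge setting precisely because $\bob$ and $\charlie$ share the same input and can base their guesses on the same classical information; handling this is why the simultaneous inner product conjecture (as opposed to the weaker monogamy-of-entanglement property used previously) is needed. A secondary subtlety is the generalized UPO requirement: since $\alice$ adversarially chooses $\mu_{\bob},\mu_{\charlie}$, care must be taken in the hybrids to ensure that the programmed outputs at $x^*$ do not themselves leak information that helps the adversary; I would handle this by hardcoding the programmed outputs into the obfuscated circuit as constants and invoking sub-exponential iO security on inputs of size bounded by the punctured-circuit size. The sub-exponential security is used in the standard way to support a complexity-leveraging step that allows guessing the punctured point in the reductions even though it is sampled from a distribution depending on $\alice$'s messages.
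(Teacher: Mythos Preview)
Your proposal diverges substantially from the paper's approach, and has a fundamental misidentification at its core.

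\textbf{The construction.} The paper does \emph{not} obfuscate $C_k$ directly via a single coset state plus a membership-checking program as you describe. Instead it builds a layered scheme: it samples a fresh puncturable PRF key $K$, copy-protects $F_K$ using the CLLZ scheme (this is where the many coset states $\{|A_{i,s_i,s_i'}\rangle\}_i$ and the hidden-trigger program $\io(P)$ live), and then separately $\io$-obfuscates a circuit $D$ that on input $(x,v,y)$ re-runs the PRF-evaluation circuit to check that $y$ is the correct token for $(x,v)$, and only then outputs $C_k(x)$. The circuit being UPO-obfuscated appears only inside the outer classical program $D$; the coset states authenticate access to $D$. This decoupling is what lets the paper reduce UPO security not to a monogamy game but to a new notion for the CLLZ PRF copy-protection which they call \emph{preponed security}: Alice receives (in addition to the copy-protected state) either the true PRF value at the challenge point $x^*$ or a uniformly random string, before she splits, and Bob and Charlie later receive $x^*$ and must guess which. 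A long sequence of $\io$/$\prg$/puncturing hybrids on $D$ reduces the UPO game to exactly this preponed experiment.

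\textbf{The conjecture.} You have the simultaneous inner product conjecture wrong. It is \emph{not} a statement about two parties simultaneously producing vectors in $A+s$ and $A^\perp+s'$ --- that is the strong monogamy property already proven (from $\io$ and LWE) in CLLZ, and it underlies the \emph{search} anti-piracy of their SDE. The simultaneous inner product conjecture (\Cref{conj:goldreich-levin-identical}) is a non-local Goldreich--Levin statement: for any family of bipartite states $\{\rho_{\bfx}\}$ from which neither party can recover $\bfx$, the two parties receiving the \emph{same} sample $(\bfu, z_b)$ cannot jointly distinguish $z_0 = \langle \bfu, \bfx\rangle$ from $z_1 = \langle \bfu, \bfx\rangle + m$ with advantage beyond $\frac{1}{2}+\negl$. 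In the paper it enters not at the coset-state level but one layer up: it lifts the search anti-piracy of the CLLZ SDE (neither party can recover the encrypted random $r$) to \emph{indistinguishability-from-random} anti-piracy of a post-processed SDE whose ciphertext is $(u,\, m\oplus\langle u,r\rangle,\, \cllz.\enc(\pk,r))$. That indistinguishability, in turn, is exactly what is needed to establish preponed security of the PRF copy-protection in the identical-challenge setting.

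\textbf{The gap.} Setting aside the conjecture confusion, your direct approach does not close. Your hybrids need to argue that $\io(P_0)$ and $\io(P_1)$ --- which differ on inputs $(x^*, v)$ for $v$ in the coset selected by $x^*$ --- are indistinguishable to any party who cannot produce such a $v$. But in the identical-challenge setting both Bob and Charlie are targeting the \emph{same} coset (the one selected by $x^*$), and with a single subspace as you wrote it, Alice can simply measure $|A_{s,s'}\rangle$ in one basis, guess the branch bit, and hand the resulting classical vector to both parties, winning with constant probability; the standard monogamy property (which separates $A+s$ from $A^\perp+s'$) gives you nothing here. The paper sidesteps this entirely by routing through the PRF layer, so that the residual task becomes an indistinguishability question about PRF outputs --- precisely the shape the inner-product conjecture addresses. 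Finally, your explanation of sub-exponential security (complexity leveraging to guess $x^*$) is also not what happens: those assumptions are inherited from CLLZ's SDE security proof, not used to guess the punctured point.
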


\noindent \textsc{On the Simultaneous Inner Product Conjecture}: Technically we need two different versions of the simultaneous inner product conjecture (\Cref{conj:goldreich-levin-identical} and~\Cref{conj:goldreich-levin-correlated}) to prove the security of our construction with respect to identical and independent challenge distributions. At a high level, the simultaneous inner product conjecture states that two (possibly entangled) QPT adversaries (i.e., non-local adversaries) should be unsuccessful in distinguishing $({\bf r},\langle {\bf r}, {\bf x} \rangle + m)$ versus $({\bf r},\langle {\bf r}, {\bf x} \rangle)$, where ${\bf r} \xleftarrow{\$} \Zq^{n},{\bf x} \xleftarrow{\$} \Zq^n,m \xleftarrow{\$} \Zq$ for every prime $q \geq 1$. Moreover, the adversaries receive as input a bipartite state $\rho$ that could depend on ${\bf x}$ with the guarantee that it should be computationally infeasible to recover ${\bf x}$. As mentioned above, we consider two different versions of the conjecture. In the first version (\textit{identical}), both the adversaries get the same sample $({\bf r},\langle {\bf r},{\bf x} \rangle)$ or they both get $({\bf r},\langle {\bf r},{\bf x} \rangle + m)$. In the second version (\textit{independent}), the main difference is that ${\bf r}$ and ${\bf x}$ are sampled independently for both adversaries. Weaker versions of this conjecture have been investigated and proven to be unconditionally true~\cite{AKL23,KT22}. \\

\noindent \pnote{add the result about the other construction.}  


\noindent \textsc{Composition}: Another contribution of ours is a composition theorem (see \Cref{sec:composition-theorem}), where we show how to securely compose unclonable puncturable obfuscation with a functionality-preserving compiler. In more detail, we show the following. Suppose $\UPO$ is a secure unclonable puncturable obfuscation scheme and let ${\sf Compiler}$ be a functionality-preserving circuit compiler. We define another scheme $\UPO'$ such that the obfuscation algorithm of $\UPO'$, on input a circuit $C$, first runs the circuit compiler on $C$ to obtain $\widetilde{C}$ and then it runs the obfuscation of $\UPO$ on $\widetilde{C}$ and outputs the result. The evaluation process can be similarly defined. We show that the resulting scheme $\UPO'$ is secure as long as $\UPO$ is secure. Our composition result allows us to compose $\UPO$ with other primitives such as different forms of program obfuscation without compromising on security. We use our composition theorem in some of the applications discussed earlier. 

\ifllncs

\else 
\paragraph{Concurrent and Independent Work.} Concurrent to our work is a recent work by Coladangelo and Gunn~\cite{CG23} who also showed the feasibility of copy-protecting puncturable functionalities and point functions albeit using a completely different approach. At a high level, the themes of the two papers are quite different. Our goal is to identify a central primitive in unclonable cryptography whereas their work focuses on exploring applications of quantum state indistinguishability obfuscation, a notion of indistinguishability obfuscation for quantum computations, to unclonable cryptography. 
\par We discuss the other differences below. 
\begin{itemize}
    \item Unlike our work, which only focuses on \textit{search} puncturing security, their work considers both \textit{search} and \textit{decision} puncturing security. 
    \item The two notions of obfuscation considered in both works seem to be incomparable. While the problem of obfuscating quantum computations has been notoriously challenging, their work considers the (weaker) problem of obfuscating a subclass of quantum computations that are implementations of classical functionalities.
    \item They demonstrate the feasibility of quantum state indistinguishability obfuscation in the quantum oracle model. We demonstrate the feasibility of UPO based on well-studied cryptographic assumptions and a new conjecture. 
\end{itemize}

\paragraph{Subsequent Work.} Subsequent to~\cite{CG23}, we were able to show that the notion of quantum state iO introduced by Colandangelo and Gunn implies UPO, assuming unclonable encryption for bits and injective one-way functions. We discuss this in~\Cref{sec:cons:qsio}. 
\par Subsequent to~\cite{CG23}, Bartusek, Brakerski and Vaikuntanathan~\cite{BVV24} obtained a construction of quantum state iO in the classical oracle model. 
\fi



\subsection{Technical Overview} 
\noindent We give an overview of the techniques behind our construction of UPO and the applications of UPO. We start with applications. 

\subsubsection{Applications} 
\paragraph{Copy-Protecting Puncturable Cryptographic Schemes.} We begin by exploring methods to copy-protect puncturable pseudorandom functions. Subsequently, we generalize this approach to achieve copy-protection for a broader class of puncturable cryptographic schemes. \\

\noindent \textsc{Case Study: Puncturable Pseudorandom Functions.} Let $\fclass=\{f_k(\cdot):\{0,1\}^n \rightarrow \{0,1\}^m\ :\ k \in \keyspace_{\secparam}\}$ be a puncturable pseudorandom function (PRF) with $\secparam$ being the security parameter and $\keyspace_{\secparam}$ being the key space. To copy-protect $f_k(\cdot)$, we simply obfuscate $f_k(\cdot)$ using an unclonable puncturable obfuscation scheme $\UPO$. To evaluate the copy-protected circuit on an input $x$, run the evaluation procedure of $\UPO$. 
\par To argue security, let us look at two experiments: 
\begin{itemize}
\item The first experiment corresponds to the regular copy-protection security experiment. That is, $\alice$ receives as input a copy-protected state $\rho_{f_k}$, which is copy-protection of $f_k$ where $k$ is sampled uniformly at random from the key space. It then creates a bipartite state which is split between $\bob$ and $\charlie$, who are two non-communicating adversaries who can share some entanglement. Then, $\bob$ and $\charlie$ independently receive as input $x$, which is picked uniformly at random. $(\bob,\charlie)$ win if they simultaneously guess $f_k(x)$.

\item The second experiment is similar to the first experiment except $\alice$ receives as input copy-protection of $f_k$ punctured at the point $x$, where $x$ is the same input given to both $\bob$ and $\charlie$. 
\end{itemize}
Thanks to the puncturing security of $\fclass$, the probability that $(\bob,\charlie)$ succeeds in the second experiment is negligible in $\secparam$. We would like to argue that $(\bob,\charlie)$ succeed in the first experiment also with probability negligible in $\secparam$. Suppose not, we show that the security of $\UPO$ is violated. \\

\noindent \textit{Reduction to $\UPO$}: The reduction $\reduct_{\alice}$ samples a uniformly random $f_k$ and forwards it to the challenger of the $\UPO$ game. The challenger of the $\UPO$ game then generates either an obfuscation of $f_k$ or the punctured circuit $f_k$ punctured at $x$ which is then sent to $\reduct_{\alice}$, who then forwards this to $\alice$ who prepares the bipartite state. The reduction $\reduct_{\bob}$ (resp., $\reduct_{\charlie}$) then receives as input $x$ which it duly forwards to $\bob$ (resp., $\charlie$). Then, $\bob$ and $\charlie$ each output $y_{\bob}$ and $y_{\charlie}$. Then, $\reduct_{\bob}$ outputs the {\bf bit 0} if $f_k(x)=y_{\bob}$, otherwise it outputs 1. Similarly, $\reduct_{\charlie}$ outputs  {\bf bit 0} if $f_k(x)=y_{\charlie}$, otherwise it outputs 1. The reason behind boldifying ``bit 0" part will be discussed below. 
\par Let us see how $(\reduct_{\alice},\reduct_{\bob},\reduct_{\charlie})$ fares in the $\UPO$ game.
\begin{itemize}
    \item \textit{Case 1. Challenge bit is $b=0$.} In this case, $\reduct_{\alice}$ receives as input obfuscation of $f_k$ with respect to $\UPO$. Denote $p_0$ to be the probability that $(\reduct_{\bob},\reduct_{\charlie})$ output $(0,0)$.
    \item \textit{Case 2. Challenge bit is $b=1$.} Here, $\reduct_{\alice}$ receives as input obfuscation of the circuit $f_k$ punctured at $x$. Similarly, denote $p_1$ to be the probability that $(\reduct_{\bob},\reduct_{\charlie})$ output $(1,1)$. 
\end{itemize}
\noindent From the security of $\UPO$, we have the following: $\frac{p_0 + p_1}{2} \leq \frac{1}{2} + \mu(\secparam),$ for some negligible function $\mu(\cdot)$. From the puncturing security of $\fclass$, the probability that $(\reduct_{\bob},\reduct_{\charlie})$ outputs $(1,1)$ is at least $1-\nu(\secparam)$, for some negligible function $\nu$. In other words, $p_1 \geq 1 - \nu(\secparam)$. From this, we can conclude, $p_0$ is negligible which proves the security of the copy-protection scheme.  \\

\noindent Perhaps surprisingly (at least to the authors), we do not know how to make the above reduction work if $\reduct_{\bob}$ (resp., $\reduct_{\charlie}$) instead output bit 1 in the case when $f_k(x)=y_{\bob}$ (resp., $f_k(x)=y_{\charlie}$). This is because we only get an upper bound for $p_1$ which cannot be directly used to determine an upper bound for $p_0$. \\

\noindent \textsc{Generalizing to Puncturable Cryptographic Schemes.} We present two generalizations of the above approach. We first generalize the above approach to handle puncturable circuit classes in~\Cref{sec:pksde:app}. A circuit class $\cktclass$, equipped with an efficient puncturing algorithm $\puncture$, is said to be puncturable\footnote{We need a slightly more general version than this. Formally, in Definition~\ref{def:puncturable-security-function-classes}, we puncture the circuit at two points (and not one), and then we require the adversary to predict the output of the circuit on one of the points.} if given a circuit $C \in \cktclass$, we can puncture $C$ on a point $x$ to obtain a punctured circuit $G$ such that given punctured circuit $G$, it is computationally infeasible to predict $C(x)$. As we can see, puncturable pseudorandom functions are a special case of puncturable circuit classes. The template to copy-protect an arbitrary puncturable circuit class, say $\cktclass$, is essentially the same as the above template to copy-protect puncturable pseudorandom functions. To copy-protect $C$, obfuscate $C$ using the scheme $\UPO$. The evaluation process and the proof of security proceed along the same lines as above. 

We then generalize this further to handle puncturable\footnote{We again consider a more general version where the circuit is punctured at two points. } cryptographic schemes. We consider an abstraction of a cryptographic scheme consisting of efficient algorithms $(\gen,\eval,\puncture,\verify)$ with the following correctness guarantee: the verification algorithm on input $(\pk,x,y)$ outputs 1, where $\gen(1^{\secparam})$ produces the secret key-public key pair $(\sk,\pk)$ and the value $y$ is the output of $\eval$ on input $(\sk,x)$. The algorithm $\puncture$ on input $(\sk,x)$ outputs a punctured circuit that has the same functionality as $\eval(\sk,\cdot)$ on all the points except $x$. The security property roughly states that predicting the output $\eval(\sk,x)$ given the punctured circuit should be computationally infeasible. The above template of copy-protecting PRFs can similarly be adopted for copy-protecting puncturable cryptographic schemes. 

\paragraph{Copy-Protecting Evasive Functions.} Using UPO to construct copy-protection for evasive functions turns out to be more challenging. To understand the difficulty, let us compare both the notions below: 
\begin{itemize}
    \item In a UPO scheme, $\alice$ gets as input an obfuscation of a circuit $C$ (if the challenge bit is $b=0$) or a circuit $C$ (if $b=1$) punctured at two points $x^{\bob}$ and $x^{\charlie}$. In the challenge phase, $\bob$ gets $x^{\bob}$ and $\charlie$ gets $x^{\charlie}$. 
    \item In the copy-protection for evasive function scheme, $\alice$ gets as input copy-protection of $C$, where $C$ is a circuit implements an evasive function. In the challenge phase, $\bob$ gets $x^{\bob}$ and $\charlie$ gets $x^{\charlie}$, where $(x^{\bob},x^{\charlie})=(x,x)$ is sampled as follows: $x$ is sampled uniformly at random (if challenge bit is $b=0$), otherwise $x$ is sampled uniformly at random from the set of points on which $C$ outputs 1 (if challenge bit is $b=1$).  
\end{itemize}
\noindent In other words, the distribution from which $\alice$ gets its input from depends on the bit $b$ in UPO but the challenges given to $\bob$ and $\charlie$ are always sampled from the same distribution. The setting in the case of copy-protection is the opposite: the distribution from which $\alice$ gets its input is always fixed while the challenge distribution depends on the bit $b$.\\

\noindent \textsc{Preimage Sampling Property}: To handle this discrepancy, we consider a class of evasive functions called preimage sampleable evasive functions. The first condition we require is that there is a distribution $\distr$ from which we can efficiently sample a circuit $C$ (representing an evasive function) together with an input $x$ such that $C(x)=1$. The second condition states that there exists another distribution $\distr'$ from which we can sample $(C',x')$, where $x'$ is sampled uniformly at random and then a punctured circuit $C'$ is sampled conditioned on $C'(x')=1$, satisfying the following property: the distributions $\distr$ and $\distr'$ are computationally indistinguishable. The second condition is devised precisely to ensure that we can reduce the security of copy-protection to UPO. \\

\noindent \textsc{Construction and Proof Idea}:  But first let us discuss the construction of copy-protection: to copy-protect a circuit $C$, compute two layers of obfuscation of $C$. First, obfuscate $C$ using a post-quantum iO scheme and then obfuscate the resulting circuit using UPO. To argue security, we view the obfuscated state given to $\alice$ as follows: first sample $C$ from $\distr$ and then do the following: (a) give $\rho_C$ to $\alice$ if $b=0$ and, (b) $\rho_C$ to $\alice$ if $b=1$, where $\rho_C$ is the copy-protected state and $b$ is the challenge bit that is used in the challenge phase. So far, we have done nothing. Now, we will modify (b). We will leverage the above conditions to modify (b) as follows: we will instead sample from $\distr'$. Since $\distr$ and $\distr'$ are computationally indistinguishable, the adversary will not notice the change. Now, let us examine the modified experiment: if $b=0$, the adversary receives $\rho_C$ (defined above), where $(C,x)$ is sampled from $\distr$ and if $b=1$, the adversary receives $\rho_{C'}$, where $(C',x')$ is sampled from $\distr'$. We can show that this precisely corresponds to the UPO experiment and thus, we can successfully carry out the reduction. \anote{For next version: We also need to change a) and argue that the change in a) involves invoking the UPO security. I do not think we need to add this but I just wanted to point it out. We can say that after this we change a) to be based on $\distr'$ instead of $\distr$.}  

\paragraph{Single-Decryptor Encryption.} A natural attempt to construct single-decryptor encryption would be to leverage UPO for puncturable cryptographic schemes. After all, it would seem that finding a public-key encryption scheme where the decryption key can be punctured at the challenge ciphertexts would give us our desired result. Unfortunately, this does not quite work: the reason lies in the way we defined the challenge distribution of UPO. We required that the marginals of the challenge distribution for a UPO scheme have to be uniform. Any public-key encryption scheme where the decryption keys can be punctured would not necessarily satisfy this requirement and hence, we need to find schemes that do\footnote{Of course, we could try the aforementioned issue in a different way: we could instead relax the requirements on the challenge distribution of UPO. Unfortunately, we currently do not know how to design an UPO for challenge distributions that do not have uniform marginals.}. 
\par We start with the public-key encryption scheme due to Sahai and Waters~\cite{SW14}. The advantage of this scheme is that the ciphertexts are pseudorandom. First, we show that this public-key encryption scheme can be made puncturable. Once we show this, using UPO for puncturable cryptographic schemes (and standard iO tricks), we construct single-decryptor encryption schemes of two flavors: 
\begin{itemize}
    \item First, we consider search security (\Cref{fig:product-uniform-search-anti-piracy-puncturable-functions}). In this security definition, $\bob$ and $\charlie$ receive ciphertexts of random messages and they win if they are able to predict the messages. 
    \item Next, we consider selective security (\Cref{fig:correlated-sde-cpa-style-anti-piracy}). In this security definition, $\bob$ and $\charlie$ receive encryptions of one of two messages adversarially chosen and they are supposed to predict which of the two messages was used in the encryption. Moreover, the adversarially chosen messages need to be declared before the security experiment begins and hence, the term selective security. Once we achieve this, we propose a generic lifting theorem to lift SDE security satisfying selective security to full adaptive security (\Cref{fig:correlated-sde-full-blown-cpa-style-anti-piracy}) where the challenge messages can be chosen later in the experiment. 
\end{itemize}

\subsubsection{Construction of UPO}
We move on to the construction of UPO.\\

\noindent \textsc{Starting Point: Decoupling Unclonability and Computation.} We consider the following template to design $\UPO$. To obfuscate a circuit $C$, we build two components. The first component is an unclonable quantum state that serves the purpose of authentication. The second component is going to aid in computation once the authentication passes. In more detail, given an input $x$, we first use the unclonable quantum state to authenticate $x$ and then execute the second component on the authenticated tag along with $x$ to obtain the output $C(x)$. 
\par The purpose of designing the obfuscation scheme this way is two-fold. Firstly, the fact that the first component is an unclonable quantum state means that an adversary cannot create multiple copies of this. And by design, without this state, it is not possible to execute the second component. Secondly, decoupling the unclonability and the computation part allows us to put less burden on the unclonable state, and in particular, only require the first component for authentication. Moreover, this approach helps us leverage existing tools in a modular way to construct $\UPO$. 
\par To implement the above approach, we use a copy-protection scheme for pseudorandom functions~\cite{CLLZ21}, denoted by $\cp$, and a post-quantum indistinguishability obfuscation scheme, denoted by $\iO$. In the $\UPO$ scheme, to obfuscate $C$, we do the following: 
\begin{enumerate}
    \item Copy-protect a pseudorandom function $f_k(\cdot)$ and,
    \item Obfuscate a circuit, with the PRF key $k$ hardcoded in it, that takes as input $(x,y)$ and outputs $C(x)$ if and only if $f_k(x)=y$. 
\end{enumerate}

\noindent \textsc{First Issue.} While syntactically the above template makes sense, when proving security we run into an issue. To invoke the security of $\cp$, we need to argue that the obfuscated circuit does not reveal any information about the PRF key $k$. This suggests that we need a much stronger object like virtual black box obfuscation instead of $\iO$ which is in general known to be impossible~\cite{BGIRSVY01}. Taking a closer look, we realize that this issue arose because we wanted to completely decouple the $\cp$ part and the $\iO$ part. \\

\noindent \textsc{Second Issue.} Another issue that arises when attempting to work out the proof. At a high level, in the security proof, we reach a hybrid where we need to hardwire the outputs of the PRF on the challenge inputs $x^{\bob}$ and $x^{\charlie}$ in the obfuscated circuit (i.e., in bullet 2 above). This creates an obstacle when we need to invoke the security of copy-protection: the outputs of the PRF are only available in the challenge phase (i.e., \textit{after} $\alice$ splits) whereas we need to know these outputs in order to generate the input to $\alice$. 
\\

\noindent \textsc{Addressing the above issues.} We first address the second issue. We introduce a new security notion of copy-protection for PRFs, referred to as copy-protection with \textit{preponed security}. Roughly speaking, in the preponed security experiment, $\alice$ receives the outputs of the PRF on the challenge inputs instead of being delayed until the challenge phase. By design, this stronger security notion solves the second issue. 
\par In order to resolve the aforementioned problem, we pull back and only partially decouple the two components. In particular, we tie both the $\cp$ and $\iO$ parts together by making non-black-box use of the underlying copy-protection scheme. Specifically, we rely upon the scheme by Colandangelo et al.~\cite{CLLZ21}. Moreover, we show that Colandangelo et al. ~\cite{CLLZ21} scheme satisfies preponed security by reducing their security to the security of their single-decryptor encryption construction; our proof follows along the same lines as theirs. Unfortunately, we do not know how to go further. While they did show that their single-decryptor encryption construction can be based on well studied cryptographic assumptions, the type of single-decryptor encryption scheme we need has a different flavor. In more detail, in their scheme, they consider \textit{independent} challenge distribution (i.e., both $\bob$ and $\charlie$ receive ciphertexts where the challenge bit is picked independently), whereas we consider \textit{identical} challenge distribution (i.e., the challenge bit for both $\bob$ and $\charlie$ is identical). We show how to modify their construction to satisfy security with respect to identical challenge distribution based on the simultaneous inner product conjecture. 


\ \\
\noindent \textsc{Summary.} To summarise, we design UPO for keyed circuit classes in $\mathrm{P/poly}$ as follows: 
\begin{itemize}
    \item We show that if the copy-protection scheme of~\cite{CLLZ21} satisfies preponed security, UPO for $\mathrm{P/poly}$ exists. This step makes heavy use of iO techniques. 
    \item We reduce the task of proving that the copy-protection scheme of~\cite{CLLZ21} satisfies preponed security to the task of proving that the single-decryptor encryption construction of~\cite{CLLZ21} is secure in the identical challenge setting. 
\end{itemize}

\ifllncs

\else
\section{Preliminaries}
We refer the reader to~\cite{nielsen_chuang_2010} for a comprehensive reference on the basics of quantum information and quantum computation. We use $I$ to denote the identity operator. We use $\cS(\cH)$ to denote the set of unit vectors in the Hilbert space $\cH$. We use $\dmx(\cal{H})$ to denote the set of density matrices in the Hilbert space $\cal{H}$. Let $P,Q$ be distributions. We use $d_{TV}(P,Q)$ to denote the total variation distance between them. Let $\rho,\sigma \in \dmx(\cal{H})$ be density matrices. We write $\TD(\rho,\sigma)$ to denote the trace distance between them, i.e.,
\[
    \TD(\rho,\sigma) = \frac{1}{2} \| \rho - \sigma \|_1
\]
where $\norm{X}_1 = \Tr(\sqrt{X^\dagger X})$ denotes the trace norm.
We denote $\norm{X} := \sup_{\ket\psi}\{\braket{\psi|X|\psi}\}$ to be the operator norm where the supremum is taken over all unit vectors.
For a vector $\ket{x}$, we denote its Euclidean norm to be $\norm{\ket{x}}_2$.
We use the notation $M\ge 0$ to denote the fact that $M$ is positive semi-definite.

\subsection{Quantum Algorithms}
\label{sec:algorithms}

A quantum algorithm $A$ is a family of generalized quantum circuits $\{A_\lambda\}_{\lambda \in \N}$ over a discrete universal gate set (such as $\{ CNOT, H, T \}$). By generalized, we mean that such circuits can have a subset of input qubits that are designated to be initialized in the zero state and a subset of output qubits that are designated to be traced out at the end of the computation. Thus a generalized quantum circuit $A_\lambda$ corresponds to a \textit{quantum channel}, which is a completely positive trace-preserving (CPTP) map. When we write $A_\lambda(\rho)$ for some density matrix $\rho$, we mean the output of the generalized circuit $A_\lambda$ on input $\rho$. If we only take the quantum gates of $A_\lambda$ and ignore the subset of input/output qubits that are initialized to zeroes/traced out, then we get the \textit{unitary part} of $A_\lambda$, which corresponds to a unitary operator which we denote by $\hat{A}_\lambda$. The \textit{size} of a generalized quantum circuit is the number of gates in it, plus the number of input and output qubits.

We say that $A = \{A_\lambda\}_\lambda$ is a \textit{quantum polynomial-time (QPT) algorithm} if there exists a polynomial $p$ such that the size of each circuit $A_\lambda$ is at most $p(\lambda)$. We furthermore say that $A$ is \textit{uniform} if there exists a deterministic polynomial-time Turing machine $M$ that on input $1^\lambda$ outputs the description of $A_\lambda$. 

We also define the notion of a \textit{non-uniform} QPT algorithm $A$ that consists of a family $\{(A_\lambda,\rho_\lambda) \}_\lambda$ where $\{A_\lambda\}_\lambda$ is a polynomial-size family of circuits (not necessarily uniformly generated), and for each $\lambda$ there is additionally a subset of input qubits of $A_\lambda$ that are designated to be initialized with the density matrix $\rho_\lambda$ of polynomial length. This is intended to model nonuniform quantum adversaries who may receive quantum states as advice.
Nevertheless, the reductions we show in this work are all uniform.

The notation we use to describe the inputs/outputs of quantum algorithms will largely mimic what is used in the classical cryptography literature. For example, for a state generator algorithm $G$, we write $G_\lambda(k)$ to denote running the generalized quantum circuit $G_\lambda$ on input $\ketbra{k}{k}$, which outputs a state $\rho_k$.

Ultimately, all inputs to a quantum circuit are density matrices. However, we mix-and-match between classical, pure state, and density matrix notation; for example, we may write $A_\lambda(k,\ket{\theta},\rho)$ to denote running the circuit $A_\lambda$ on input $\ketbra{k}{k} \otimes \ketbra{\theta}{\theta} \otimes \rho$. In general, we will not explain all the input and output sizes of every quantum circuit in excruciating detail; we will implicitly assume that a quantum circuit in question has the appropriate number of input and output qubits as required by the context. 

\fi

\newcommand{\gcktclass}{\mathfrak{G}}

\section{Unclonable Puncturable Obfuscation: Definition}\label{def:upo-definition}
\noindent We present the definition of an unclonable puncturable obfuscation scheme in this section. 

\paragraph{Keyed Circuit Class.} A class of classical circuits of the form $\cktclass=\{\cktclass_{\secparam}\}_{\secparam \in \mathbb{N}}$ is said to be a keyed circuit class if the following holds: $\cktclass_{\secparam}=\{C_k :k \in \keyspace_{\secparam} \}$, where $C_k$ is a (classical) circuit with input length $n(\secparam)$, output length $m(\secparam)$ and $\keyspace=\{\keyspace_{\secparam}\}_{\secparam \in \mathbb{N}}$ is the key space. We refer to $C_k$ as a keyed circuit. We note that any circuit class can be represented as a keyed circuit class using universal circuits. We will be interested in the setting when $C_k$ is a polynomial-sized circuit; henceforth, unless specified otherwise, all keyed circuit classes considered in this work will consist only of polynomial-sized circuits. We will also make a simplifying assumption that $C_{k}$ and $C_{k'}$ have the same size, where $k,k' \in \keyspace_{\secparam}$. 

\paragraph{Syntax.} An unclonable puncturable obfuscation (UPO) scheme $(\obf,\eval)$ for a keyed circuit class $\cktclass=\{\cktclass_{\secparam}\}_{\secparam \in \mathbb{N}}$, consists of the following QPT algorithms:
\begin{itemize}
    \item $\obf(1^{\secparam},C)$: on input a security parameter $\secparam$ and a keyed circuit $C \in \cktclass_{\secparam}$ with input length $n(\secparam)$, it outputs a quantum state $\rho_C$. 
    \item $\eval(\rho_C,x)$: on input a quantum state $\rho_C$ and an input $x \in \{0,1\}^{n(\secparam)}$, it outputs $(\rho'_C,y)$.
\end{itemize}

\paragraph{Correctness.}\label{par:upo-correctness}
An unclonable puncturable obfuscation scheme  $(\obf,\eval)$ for a keyed circuit class $\cktclass=\{\cktclass_{\secparam}\}_{\secparam \in \mathbb{N}}$ is $\delta$-correct, if for every $C \in \cktclass_{\secparam}$ with input length $n(\secparam)$, and for every $x\in \{0,1\}^{n(\secparam)}$,
$$ \Pr \left[ C(x)=y\ \mid\ \substack{\rho_C \leftarrow \obf(1^{\secparam},C)\\ \ \\(\rho'_C,y) \gets \eval(\rho_C,x)} \right] \geq \delta$$
\noindent If $\delta$ is negligibly close to 1 then we say that the scheme is correct (i.e., we omit mentioning $\delta$). 
\begin{remark}
If $(1-\delta)$ is a negligible function in $\secparam$, by invoking the almost as good as new lemma~\cite{aaronson2016complexity}, we can evaluate $\rho'_C$ on another input $x'$ to get $C(x')$ with probability negligibly close to 1. We can repeat this process polynomially many times and each time, due to the quantum union bound~\cite{gao2015quantum}, we get the guarantee that the output is correct with probability negligibly close to 1.  
\end{remark}

\subsection{Security} 
\label{sec:upo:security}

\paragraph{Puncturable Keyed Circuit Class.} Consider a keyed circuit class $\cktclass=\{\cktclass_{\secparam}\}_{\secparam \in \mathbb{N}}$, where $\cktclass_{\secparam}$ consists of circuits of the form $C_k(\cdot)$, where $k \in \keyspace_{\secparam}$, the input length of $C_k(\cdot)$ is $n(\secparam)$ and the output length is $m(\secparam)$. We say that $\cktclass_{\secparam}$ is said to be puncturable if there exists a deterministic polynomial-time puncturing algorithm $\puncture$ such that the following holds: on input $k \in \{0,1\}^{\secparam}$, 
 strings $x^{\bob} \in \{0,1\}^{n(\secparam)}, x^{\charlie} \in \{0,1\}^{n(\secparam)}$, it outputs a circuit $G_{k^*}$. Moreover, the following holds: for every $x \in \{0,1\}^{n(\secparam)}$,
$$G_{k^*}(x) = \left\{ \begin{array}{cc} 
C_k(x), & x \neq x^{\bob},x \neq x^{\charlie}, \\
\bot, 
& x \in \{x^{\bob},x^\charlie\}. 
\end{array} \right. $$
Without loss of generality, we can assume that the size of $G_{k^*}$ is the same as the size of $C_k$. Note that for every keyed circuit class, there exists a trivial $\puncture$ algorithm. The trivial $\puncture$ algorithm on any input $k,x_1,x_2,\mu_1,\mu_2$, constructs the circuit $C_k$ and then outputs the circuit $G$ that on input $x$, if $x=x_0$ or $x_1$ outputs $\bot$, else if $x\not\in\{x_1,x_2\}$ outputs $C_k(x)$\footnote{The output circuit $G_{k^*}$ is not of the same size as $C_k$, but this issue can be resolved by sufficient padding of the circuit class.}. \anote{Prabhanjan, is this fine?}

\begin{definition}[$\UPO$ Security]
\label{def:newcpsecurity_bot}
We say that a pair of QPT algorithms $(\obf,\eval)$ for a puncturable keyed circuit class $\cktclass$, associated with puncturing procedure $\puncture$, satisfies {\bf UPO security} with respect to a distribution $\distr_{{\cal X}}$ on $\{0,1\}^{n(\secparam)} \times \{0,1\}^{n(\secparam)}$ if for every QPT $(\alice,\bob,\charlie)$ in $\upoexpt$ (see \Cref{fig:upoexpt}), there exists a negligible function $\negl(\secparam)$ such that 
\[\prob\left[ 1 \leftarrow \upoexpt^{(\alice,\bob,\charlie),\distr_{\inpclass},\cktclass}\left(1^{\secparam},b\right)\ :\ b \xleftarrow{\$} \{0,1\} \right] \leq \frac{1}{2} + \negl(\secparam).\]

\end{definition}

\begin{figure}[!htb]
   \begin{center} 
   \begin{tabular}{|p{12cm}|}
    \hline 
\begin{center}
\underline{$\upoexpt^{\left(\alice,\bob,\charlie \right),\distr_{\inpclass},\cktclass}\left( 1^{\secparam},b \right)$}: 
\end{center}
\begin{itemize}
\item $\alice$ sends $k$, where $k \in \keyspace_{\secparam}$, to the challenger $\ch$.

\item $\ch$ samples $(x^\bob,x^\charlie) \leftarrow \distr_{\inpclass}(1^{\secparam})$ and generates $G_{k^*} \gets\puncture(k,x^{\bob},x^{\charlie})$. 
\item $\ch$ generates $\rho_b$ as follows:
\begin{itemize}
    \item $\rho_0 \leftarrow \obf(1^{\secparam},C_k(\cdot))$, 
    \item $\rho_1 \leftarrow \obf(1^{\secparam},G_{k^*}(\cdot))$
\end{itemize}
It sends $\rho_b$ to $\alice$. 
\item Apply $(\bob(x^\bob,\cdot) \otimes \charlie(x^\charlie,\cdot))(\sigma_{\bob,\charlie})$ to obtain $(b_{\bfB},b_{\bfC})$. 
\item Output $1$ if $b=b_{\bfB} = b_\bfC$. 
\end{itemize}
\ \\ 
\hline
\end{tabular}
    \caption{Security Experiment}
    \label{fig:upoexpt}
    \end{center}
\end{figure}

\subsubsection{Generalized Security}
\label{sec:upo:gensecurity}
For most applications, the security definition discussed in~\Cref{sec:upo:security} suffices, but for a couple of applications, we need a \ifllncs generalized definition as follows. \else generalized definition. The new definition generalizes the definition in~\Cref{sec:upo:security} in terms of puncturability as follows. \fi We allow the adversary to choose the outputs of the circuit generated by $\puncture$ on the punctured points. Previously, the circuit generated by the puncturing algorithm was such that on the punctured points, it output $\bot$. Instead, we allow the adversary to decide the values that need to be output on the points that are punctured. We emphasize that the adversary still would not know the punctured points itself until the challenge phase. 
Formally, the (generalized) puncturing algorithm $\genpuncture$ now takes as input $k \in \keyspace_{\secparam}$, polynomial-sized circuits $\mu^{\bob}:\{0,1\}^{n(\secparam)} \rightarrow \{0,1\}^{m(\secparam)}$, $\mu^{\charlie}:\{0,1\}^{n(\secparam)} \rightarrow \{0,1\}^{m(\secparam)}$, strings $x^{\bob} \in \{0,1\}^{n(\secparam)}, x^{\charlie} \in \{0,1\}^{n(\secparam)}$, if $x^\bob\neq x^\charlie$, it outputs a circuit $G_{k^*}$ such that for every $x \in \{0,1\}^{n(\secparam)}$,
$$G_{k^*}(x) = \left\{ \begin{array}{cc} 
C_k(x), & x \neq x^{\bob},x \neq x^{\charlie} \\
\mu_{\bob}(x^{\bob}), & x = x^{\bob} \\
\mu_{\charlie}(x^{\charlie}), & x = x^{\charlie},
\end{array} \right. $$
else it outputs a circuit $G_{k^*}$ such that for every $x \in \{0,1\}^{n(\secparam)}$,
$$G_{k^*}(x) = \left\{ \begin{array}{cc} 
C_k(x), & x \neq x^{\bob} \\
\mu_{\bob}(x^{\bob}), & x = x^{\bob}.
\end{array} \right. $$
As before, we assume that without loss of generality, the size of $G_{k^*}$ is the same as the size of $C_k$. 
\ifllncs \else \par \fi A keyed circuit class $\cktclass$ associated with a generalized puncturing algorithm $\genpuncture$ is referred to as a \textit{generalized puncturable keyed circuit class}.
Note that for every keyed circuit class $\cktclass=\{C_k\}_k$, there exists a trivial $\genpuncture$ algorithm, which on any input $k,x_1,x_2,\mu_1,\mu_2$, constructs the circuit $C_k$ and then outputs the circuit $G_{k^*}$\footnote{As before, the output circuit $G_{k^*}$ may not have the same size as $C_k$, but this can be resolved by sufficient padding of the complexity class.} that on input $x$, if $x=x_i$ for any $i\in \{0,1\}$, outputs $\mu_i(x_i)$, else if $x\not\in\{x_1,x_2\}$ outputs $C_k(x)$. 

\newcommand{\genupoexpt}{\mathsf{GenUPO.Expt}}
\begin{definition}[Generalized $\UPO$ security]
\label{def:newcpsecurity}
We say that a pair of QPT algorithms $(\obf,\eval)$ for a generalized keyed circuit class $\cktclass=\{\cktclass_{\secparam}\}_{\secparam \in \mathbb{N}}$ equipped with a puncturing algorithm $\genpuncture$, satisfies {\bf generalized UPO security} with respect to a distribution $\distr_{{\cal X}}$ on $\{0,1\}^{n(\secparam)} \times \{0,1\}^{n(\secparam)}$ if the following holds for every QPT $(\alice,\bob,\charlie)$ in $\genupoexpt$ defined in~\Cref{fig:genupo:expt}: 
\[ \prob\left[ 1 \leftarrow \genupoexpt^{(\alice,\bob,\charlie),\distr_{\inpclass},\cktclass}\left(1^{\secparam},b\right)\ :\ b \xleftarrow{\$} \{0,1\} \right]  \leq \frac{1}{2} + \negl(\secparam).\]
\end{definition}

\begin{figure}[!htb]
   \begin{center} 
   \begin{tabular}{|p{12cm}|}
    \hline 
\begin{center}
\underline{$\genupoexpt^{\left(\alice,\bob,\charlie \right),\distr_{\inpclass},\cktclass}\left( 1^{\secparam},b \right)$}: 
\end{center}
\begin{itemize}
\item $\alice$ sends $(k,\mu_{\bob},\mu_{\charlie})$, where $k \in \keyspace_{\secparam},\mu_{\bob}: \{0,1\}^{n(\secparam)} \rightarrow \{0,1\}^{m(\secparam)},\mu_{\charlie}: \{0,1\}^{n(\secparam)} \rightarrow \{0,1\}^{m(\secparam)}$, to the challenger $\ch$.

\item $\ch$ samples $(x^{\bob},x^{\charlie}) \leftarrow \distr_{\inpclass}(1^{\secparam})$ and generates $G_{k^*} \gets\puncture(k,x^{\bob},x^{\charlie},\mu_{\bob},\mu_{\charlie})$. 
\item $\ch$ generates $\rho_b$ as follows:
\begin{itemize}
    \item $\rho_0 \leftarrow \obf(1^{\secparam},C_k)$, 
    \item $\rho_1 \leftarrow \obf(1^{\secparam},G_{k^*})$
\end{itemize}
It sends $\rho_b$ to $\alice$. 
\item Apply $(\bob(x^\bob,\cdot) \otimes \charlie(x^\charlie,\cdot))(\sigma_{\bob,\charlie})$ to obtain $(b_{\bfB},b_{\bfC})$. 
\item Output $1$ if $b=b_{\bfB} = b_\bfC$. 
\end{itemize}
\ \\ 
\hline
\end{tabular}
    \caption{Generalized Security Experiment}
    \label{fig:genupo:expt}
    \end{center}
\end{figure}
\paragraph{Instantiations of $\distr_\inpclass$.}
In the applications, we will be considering the following two distributions:
\begin{enumerate}
    \item $\calU_{\{0,1\}^{2n}}$: the uniform distribution on $\{0,1\}^{2n}$. When the context is clear, we simply refer to this distribution as $\calU$.\anote{I have changed $distrprod$ distribution to $\calU$, because this is where I have been refering to for the definition of $distrprod$. If you think $distrprod$ might have been used in other context, and hence this might be a problem let me know.}
    
    \item $\distrid{\{0,1\}^{n}}$: identical distribution on $\{0,1\}^{n} \times \{0,1\}^n$ with uniform marginals. That is, the sampler for $\distrid{\{0,1\}^{n}}$ is defined as follows: sample $x$ from $\calU_{\{0,1\}^n}$ and output $(x,x)$. When the context is clear, we simply refer to this distribution as $\distrid$.
\end{enumerate}
  \label{subsec:upo-definition}

\ifllncs
    
\else
\ifllncs

\section{Compostion theorem for UPO with other compilers}
\else
\subsection{Composition Theorem}
\fi
\label{sec:composition-theorem}
\noindent We state a useful theorem that states that we can compose a secure UPO scheme with any functionality-preserving compiler without compromising on security.\\

\noindent Let $\compile$ be a circuit compiler, i.e., $\compile$ is a probabilistic algorithm that takes as input a security parameter $\secparam$, classical circuit $C$ and outputs another classical circuit $\widetilde{C}$ such that $C$ and $\widetilde{C}$ have the same functionality. For instance, program obfuscation~\cite{BGIRSVY01} is an example of a circuit compiler. 
\par Let $\cktclass$ be a generalized puncturable keyed circuit class associated with keyspace $\keyspace$ defined as follows: $\cktclass=\{\cktclass_{\secparam}\}_{\secparam \in \mathbb{N}}$, where every circuit in $\cktclass_{\secparam}$ is of the form $C_k$, where $k \in \keyspace_{\secparam}$, with input length $n(\secparam)$ and the output length $m(\secparam)$. We denote $\genpuncture$ to be a generalized puncturing algorithm associated with $\cktclass$. 
\par Let $\UPO=(\UPO.\obf,\UPO.\eval)$ be an unclonable puncturable obfuscation scheme for a generalized puncturable keyed circuit class $\gcktclass$ (defined below) with respect to the input distribution $\distr_\inpclass$. 
\par \label{pg:compile-construct} We define $\gcktclass=\{\gcktclass_{\secparam}\}_{\secparam \in \mathbb{N}}$, where every circuit in $\gcktclass_{\secparam}$ is of the form $G_{k||r}(\cdot)$, with input length $n(\secparam)$, output length $m(\secparam)$, $k \in \keyspace_{\secparam}$, and $r \in \{0,1\}^{t(\secparam)}$. Here, $t(\secparam)$ denotes the number of bits of randomness consumed by $\compile(1^{\secparam},C_k;\cdot)$. Moreover, the circuit $G_{k || r}$ takes as input $x \in \{0,1\}^n$, applies $\compile(1^{\secparam},C_k;r)$ to obtain $\widetilde{C_k}$ and then it outputs $\widetilde{C_k}(x)$. The puncturing algorithm associated with $\gcktclass$ is $\genpuncture'$ which on input $k\|r$ and the set of inputs $x_1,x_2$ and circuits $\mu_1,\mu_2$, generates $D_{k^*} \leftarrow \genpuncture(k,x_1,x_2,\mu_1,\mu_2)$, and then outputs the circuit $G_{k^*,r}$, where $G_{k^*,r}$ is defined as follows: it takes as input $x \in \{0,1\}^n$, applies $\compile(1^{\secparam},D_{k^*};r)$ to obtain $\widetilde{D_{k^*}}$ and then it outputs $\widetilde{D_{k^*}}$. The keyspace associated with $\gcktclass$ is $\keyspace'=\{\keyspace'_{\secparam}\}_{\secparam \in \mathbb{N}}$, where $\keyspace'_{\secparam} = \keyspace_{\secparam} \times \{0,1\}^{t(\secparam)}$. \\

\noindent We define $\UPO'=(\UPO'.\obf,\UPO'.\eval)$ as follows:
    \begin{itemize}
        \item $\UPO'.\obf(1^\secparam,C)=\UPO.\obf(1^\secparam,\widetilde{C})$, where $\widetilde{C} \gets \compile(1^{\secparam},C)$.
        \item $\UPO'.\eval=\UPO.\eval$.
    \end{itemize}

\noindent \pnote{is the below only for the regular UPO security and not for the generalized puncturing security?} \anote{No it should be for general puncturing security. I think the text was not adopted after we adopted the term generalized upo.}

\begin{proposition}
\label{prop:UPO-compile}
    Assuming $\UPO$ satisfies $\distr_\inpclass$-generalized $\upo$ security for $\gcktclass$ and $\compile$ is a circuit compiler for $\cktclass$, $\UPO'$ satisfies $\distr_\inpclass$-generalized $\upo$ security for $\cktclass$.
\end{proposition}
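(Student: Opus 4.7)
The plan is a direct, tight reduction. Suppose for contradiction that some QPT $(\alice, \bob, \charlie)$ wins the $\distr_\inpclass$-generalized UPO experiment for $\cktclass$ against $\UPO'$ with advantage $\epsilon(\secparam)$ above $\tfrac{1}{2}$. I will construct a QPT $(\alice', \bob', \charlie')$ that wins the $\distr_\inpclass$-generalized UPO experiment for $\gcktclass$ against $\UPO$ with the same advantage, contradicting the hypothesized UPO security of $\UPO$ and forcing $\epsilon$ to be negligible.

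The reduction simulates $(\alice, \bob, \charlie)$ internally. When $\alice$ sends its first message $(k, \mu_\bob, \mu_\charlie)$, $\alice'$ samples $r \gets \{0,1\}^{t(\secparam)}$ uniformly and forwards $(k \| r, \mu_\bob, \mu_\charlie)$ to the external $\UPO$ challenger, who samples $(x^\bob, x^\charlie) \gets \distr_\inpclass(1^\secparam)$, computes $G_{k^*, r} \gets \genpuncture'(k\|r, x^\bob, x^\charlie, \mu_\bob, \mu_\charlie)$, and returns either $\UPO.\obf(1^\secparam, G_{k\|r})$ or $\UPO.\obf(1^\secparam, G_{k^*, r})$ according to its challenge bit $b$. $\alice'$ hands this state to $\alice$, intercepts the bipartite state $\alice$ emits, and routes the two registers to $\bob'$ and $\charlie'$. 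In the challenge phase $\bob'$ and $\charlie'$ forward the received $x^\bob, x^\charlie$ to $\bob, \charlie$ and echo their output bits.

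To see that this perfectly simulates the $\UPO'$ experiment on $\cktclass$ under the same bit $b$, observe that by the very definitions of $\gcktclass$ and $\genpuncture'$, the circuits $G_{k\|r}$ and $G_{k^*, r}$ are exactly $\compile(1^\secparam, C_k; r)$ and $\compile(1^\secparam, G_{k^*}; r)$, where $G_{k^*} = \genpuncture(k, x^\bob, x^\charlie, \mu_\bob, \mu_\charlie)$. Since $r$ is a fresh uniform string sampled by $\alice'$ independently of everything else, $\UPO.\obf(1^\secparam, G_{k\|r})$ is distributed identically to $\UPO.\obf(1^\secparam, \compile(1^\secparam, C_k)) = \UPO'.\obf(1^\secparam, C_k)$, and analogously in the $b=1$ branch $\UPO.\obf(1^\secparam, G_{k^*, r})$ matches $\UPO'.\obf(1^\secparam, G_{k^*})$. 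The sampling of $(x^\bob, x^\charlie)$, the bipartite split, and the challenge-phase interaction are defined identically in both experiments, so the joint distribution of $\alice$'s view, the pair $(x^\bob, x^\charlie)$, and the challenge bits $b_\bob, b_\charlie$ --- and hence the winning condition $b = b_\bob = b_\charlie$ --- is reproduced exactly.

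The only real conceptual step is recognizing that the internal randomness of $\compile$ inside $\UPO'.\obf$ can be pulled out and absorbed into the key of $\gcktclass$ via its $r$ component without altering any distribution; everything else is bookkeeping. Once this identification is made, the winning probability of $(\alice', \bob', \charlie')$ in the $\gcktclass$ experiment equals that of $(\alice, \bob, \charlie)$ in the $\cktclass$ experiment for each value of $b$, so $\distr_\inpclass$-generalized UPO security of $\UPO$ for $\gcktclass$ immediately yields the claim for $\UPO'$ on $\cktclass$.
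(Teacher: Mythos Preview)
Your proof is correct and takes essentially the same approach as the paper: both construct a reduction that runs $\alice$ to get $k$ (and, as you more carefully note, $\mu_\bob,\mu_\charlie$), samples fresh compiler randomness $r$, forwards $k\|r$ (together with $\mu_\bob,\mu_\charlie$) to the external $\UPO$ challenger for $\gcktclass$, and then passes all messages through transparently. Your write-up is in fact slightly more explicit than the paper's about why the resulting view is identically distributed to that of the $\UPO'$ experiment on $\cktclass$.
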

\begin{proof}
Suppose there is an adversary $(\alice,\bob,\charlie)$ that violates the security of $\UPO'$ with probability $p$. We construct a QPT reduction $({\cal R}_{\alice},{\cal R}_{\bob},{\cal R}_{\charlie})$ that violates the security of $\UPO$, also with probability $p$. From the security of $\UPO'$ it then follows that $p$ is at most $\frac{1}{2} + \varepsilon$, for some negligible function $\varepsilon$, which proves the theorem. 
\par ${\cal R}_{\alice}(1^{\secparam})$ first runs $\alice(1^{\secparam})$ to obtain $k \in \keyspace_{\secparam}$. It then samples $r \xleftarrow{\$} \{0,1\}^{t(\secparam)}$. Then, ${\cal R}_{\alice}$ forwards $k || r$ to the external challenger of $\UPO$. Then, ${\cal R}_{\alice}$ receives $\rho^*$ which it then duly forwards to $\alice$. Similarly, even in the challenge phase, ${\cal R}_{\bob}$ (resp., ${\cal R}_{\charlie}$) forwards the challenge from the challenger to $\bob$ (resp., $\charlie$).
\par It can be seen that the probability that $(\alice,\bob,\charlie)$ breaks the security of $\UPO'$ is the same as the probability that $({\cal R}_{\alice},{\cal R}_{\bob},{\cal R}_{\charlie})$ breaks the security of $\UPO$.
\end{proof}

\begin{theorem}[Composition theorem]
\label{thm:UPO-compile}
    Let $\compile$ be a circuit compiler, i.e., $\compile$ is a probabilistic algorithm that takes as input a classical circuit $C$ and outputs another classical circuit $\tilde{C}$ such that $C$ and $\tilde{C}$ have the same functionality. Let $\UPO=(\UPO.\obf,\UPO.\eval)$ be an unclonable puncturable obfuscation scheme that satisfies $\distr_\inpclass$-generalized $\upo$ security for any class of generalized puncturable keyed circuit classs in $\sf{P/poly}$, then the same holds for the $\upo$ scheme $\UPO'=(\UPO'.\obf,\UPO'.\eval)$ defined as follows:
    \begin{itemize}
        \item $\UPO'.\obf(1^\secparam,C)=\UPO.\obf(1^\secparam,\compile(C))$ for every circuit $C$.
        \item $\UPO'.\eval=\UPO.\eval$.
    \end{itemize}
\end{theorem}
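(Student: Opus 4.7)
The composition theorem is a direct generalization of \Cref{prop:UPO-compile} across all circuit classes in $\sf{P/poly}$, so my plan is to reduce it to that proposition. Fix an arbitrary generalized puncturable keyed circuit class $\cktclass$ in $\sf{P/poly}$ (with associated generalized puncturing algorithm $\genpuncture$) on which we wish to argue $\distr_\inpclass$-generalized UPO security of $\UPO'$. I would first instantiate the class $\gcktclass$ and its puncturing algorithm $\genpuncture'$ exactly as defined on page \pageref{pg:compile-construct}: the key $k$ is augmented with the random tape $r \in \{0,1\}^{t(\secparam)}$ consumed by $\compile$, and the circuit $G_{k\|r}$ runs $\compile(1^{\secparam}, C_k; r)$ on its input, while $\genpuncture'$ first applies $\genpuncture$ to $k$ and then re-compiles the punctured circuit with the same randomness $r$.

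The first substantive step is to check that $\gcktclass$ is itself in $\sf{P/poly}$ and is a generalized puncturable keyed circuit class in the sense of \Cref{sec:upo:gensecurity}. Since $C_k$ is polynomial sized and $\compile$ is a polynomial-time algorithm, $\widetilde{C_k} = \compile(1^\secparam, C_k; r)$ has polynomial size, and hence so does $G_{k\|r}$ and the punctured variant $G_{k^*, r}$. The size-uniformity condition $|G_{k\|r}| = |G_{k'\|r'}|$ follows from the size-uniformity of $\cktclass$ together with the fact that we can pad $\compile$ to produce outputs of a fixed polynomial size. Moreover, the input/output lengths of $\gcktclass$ coincide with those of $\cktclass$, so the distribution $\distr_\inpclass$ is a valid challenge distribution for both classes.

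Next, by the hypothesis of the theorem, $\UPO$ satisfies $\distr_\inpclass$-generalized UPO security for every generalized puncturable keyed circuit class in $\sf{P/poly}$; in particular, it is secure for this specific $\gcktclass$ with respect to $\genpuncture'$. At this point I can invoke \Cref{prop:UPO-compile} with the pair $(\cktclass, \gcktclass)$: since $\UPO'.\obf(1^\secparam, C_k)$ is defined to run $\UPO.\obf(1^\secparam, \widetilde{C_k})$ on the compiled circuit, the reduction there, which forwards $k\|r$ (for freshly sampled $r$) to the external $\UPO$ challenger and relays the obfuscated state and the challenge inputs $x^\bob, x^\charlie$ unchanged, perfectly simulates the $\UPO'$ experiment for $(\alice,\bob,\charlie)$. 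The only subtlety to verify is that $\genpuncture'(k\|r, x^\bob, x^\charlie, \mu_\bob, \mu_\charlie)$ produces a circuit functionally equivalent to first applying $\genpuncture$ and then obfuscating via $\UPO'$—which holds by the functionality-preserving property of $\compile$ together with the definitions.

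The main potential obstacle is ensuring that the auxiliary randomness $r$ introduced by absorbing $\compile$ into the circuit does not change the challenge distribution seen by $(\bob, \charlie)$, and that the adversary in the $\UPO'$ game cannot exploit the fact that $r$ is sampled by the reduction rather than being part of its declared key. Both concerns dissolve because $\distr_\inpclass$ is a distribution on inputs (not on keys), and the reduction is free to sample $r$ uniformly on its own; the adversary's view of the obfuscation and the challenge pair $(x^\bob, x^\charlie)$ is identically distributed in the simulated and real experiments. Concluding, the winning probability of $(\alice,\bob,\charlie)$ against $\UPO'$ on $\cktclass$ equals that of the reduction against $\UPO$ on $\gcktclass$, which is at most $\tfrac{1}{2} + \negl(\secparam)$, as required.
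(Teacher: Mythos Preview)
Your proposal is correct and follows essentially the same approach as the paper: fix an arbitrary $\cktclass$, construct the derived class $\gcktclass$ as on page~\pageref{pg:compile-construct}, observe that $\UPO$ is secure for $\gcktclass$ by hypothesis, and invoke \Cref{prop:UPO-compile} to conclude security of $\UPO'$ for $\cktclass$. The paper's proof is terser (three sentences), omitting the verifications that $\gcktclass \in \sf{P/poly}$ and the discussion of the randomness $r$, but your added detail is sound and does not deviate from the argument.
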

\begin{proof}
    Let $\cktclass$ be an arbitrary generalized puncturable keyed class in $\ppoly$. Let $\gcktclass$ be the generalized puncturable keyed class in $\ppoly$ derived from $\cktclass$ as defined on \Cpageref{pg:compile-construct}. Note that by the assumption in the theorem, $\UPO$ satisfies $\distr_\inpclass$-generalized $\upo$ security for $\gcktclass$. Therefore, by \Cref{prop:UPO-compile}, $\UPO'$ satisfies $\distr_\inpclass$-generalized $\upo$ security for $\cktclass$. Since $\cktclass$ was arbitrary, we conclude that $\UPO'$ satisfies $\distr_\inpclass$-generalized unclonable puncturable obfuscation security for any generalized puncturable keyed circuit class in $\ppoly$.
\end{proof}

\noindent Instantiating $\compile$ with an indistinguishability obfuscation $\io$ in \cref{thm:UPO-compile}, the following corollary is immediate. 


\begin{corollary}\label{cor:UPO-io}
    Consider a keyed circuit class $\cktclass$. Suppose $\iO$ be an indistinguishability obfuscation scheme for $\cktclass$. Suppose $\UPO$ is an unclonable puncturable obfuscation scheme for $\gcktclass$ (as defined above). Then $\UPO'$ is a secure unclonable puncturable obfuscation scheme for $\cktclass$ where $\UPO'$ is defined as follows:
    
    Assuming $\UPO$ is a $\upo$ scheme that satisfies $\distr_\inpclass$-generalized $\upo$ security for any $\distr_\inpclass$-generalized puncturable keyed circuit class in $\sf{P/poly}$, then the same holds for the $\upo$ scheme $\UPO'=(\UPO'.\obf,\allowbreak \UPO'.\eval)$ defined as follows:
   \begin{itemize}
     \item $\UPO'.\obf(1^\secparam,C)=\UPO.\obf(1^\secparam,\io(1^{\secparam},C))$, where $C \in \cktclass_{\secparam}$.
     
    \item $\UPO'.\eval=\UPO.\eval$.
    \end{itemize}
\end{corollary}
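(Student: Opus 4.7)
The plan is to observe that this corollary is essentially a direct instantiation of the composition theorem (\Cref{thm:UPO-compile}) where the circuit compiler $\compile$ is set to the indistinguishability obfuscator $\iO$. So the main task is to verify that $\iO$ qualifies as a circuit compiler in the sense required by \Cref{thm:UPO-compile}, and then invoke that theorem.

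First I would observe that by the correctness property of $\iO$, the algorithm $\iO(1^\secparam,\cdot)$ takes a classical circuit $C$ as input and outputs another classical circuit $\widetilde{C}$ such that $\widetilde{C}$ and $C$ compute the same function. This is precisely the definition of a circuit compiler as stated just before \Cref{thm:UPO-compile}. Note that we do not need to use any security property of $\iO$ for this step --- only functionality preservation is required, since the composition theorem does not rely on security of the inner compiler.

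Next I would apply \Cref{thm:UPO-compile} with $\compile := \iO(1^{\secparam},\cdot)$. By hypothesis, $\UPO$ satisfies $\distr_\inpclass$-generalized $\upo$ security for $\gcktclass$, which is the derived class associated to $\cktclass$ via $\compile$ as defined on \Cpageref{pg:compile-construct}. The conclusion of \Cref{thm:UPO-compile} then yields that the scheme $\UPO'$, defined by $\UPO'.\obf(1^{\secparam},C) = \UPO.\obf(1^{\secparam},\compile(1^{\secparam},C)) = \UPO.\obf(1^{\secparam},\iO(1^{\secparam},C))$ and $\UPO'.\eval = \UPO.\eval$, satisfies $\distr_\inpclass$-generalized $\upo$ security for $\cktclass$, which is exactly the statement of the corollary.

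Since the argument is purely a specialization of \Cref{thm:UPO-compile}, I do not anticipate any real obstacle --- the only thing to double-check is that the derived class $\gcktclass$ used when instantiating the composition theorem indeed matches the $\gcktclass$ in the hypothesis of the corollary, which follows tautologically from the construction on \Cpageref{pg:compile-construct} with $\compile = \iO$.
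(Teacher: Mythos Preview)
Your proposal is correct and matches the paper's approach exactly: the paper states that the corollary is immediate from \Cref{thm:UPO-compile} by instantiating $\compile$ with $\io$, which is precisely what you do. Your additional remark that only the functionality-preservation (correctness) of $\iO$ is needed here, not its security, is accurate and worth noting.
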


\noindent In the corollary above, we assume that the indistinguishability scheme does not have an explicit evaluation algorithm. In other words, the obfuscation algorithm on input a circuit $C$ outputs another circuit $\widetilde{C}$ 
 that is functionally equivalent to $C$. This is without loss of generality since we can combine any indistinguishability obfuscation scheme (that has an evaluation algorithm) with universal circuits to obtain an obfuscation scheme with the desired format. 

\fi
\newcommand{\simip}{\mathsf{simultIP}}
\newcommand{\bfx}{\mathbf{x}}
\newcommand{\bfr}{\mathbf{r}}
\newcommand{\bit}{\mathsf{bit}}
\vspace{-1 em}
\section{Conjectures}
\label{sec:conjectures} 
\noindent The security of our construction relies upon some novel conjectures. 
Towards understanding our conjectures, consider the following problem: suppose say an adversary $\bob$ is given a state $\rho_{\bfx}$ that is generated as a function of a secret string $\bfx \in \Zq^{n}$, where $q,n \in \mathbb{N}$ and $q$ is prime. We are given the guarantee that just given $\rho_{\bfx}$, it should be infeasible to compute $\bfx$ for most values of $\bfx$. Now, the goal of $\bob$ is to distinguish $(\bfu,\langle \bfu,\bfx \rangle)$, where $\bfu \xleftarrow{\$} \Zq^n$ versus $(\bfu,\langle \bfu,\bfx \rangle + m)$, where $m \xleftarrow{\$} \Zq$. The Goldreich-Levin precisely shows that $\bob$ cannot succeed; if $\bob$ did succeed then we can come up with an extractor that recovers $\bfx$. Our conjectures state that the problem should be hard even for two (possibly entangled) parties simultaneously distinguishing the above samples.  Depending on whether the samples are independently generated between these parties or they are correlated, we have two different conjectures. 
\ifllncs \else \par \fi Before we formally state these conjectures and prove them, we first define the following problem. 

\paragraph{$(\distr_{\calX},\distr_{\ch},\distr_{\bit})$-Simultaneous Inner Product Problem ($(\distr_{\calX},\distr_{\ch},\distr_{\bit})$-$\simip$).} Let $\distr_{\calX}$ be a distribution on $\Zq^n \times \Zq^n$, $\distr_{\ch}$ be a distribtion on $\Zq^{n+1} \times \Zq^{n+1}$ and finally, let $\distr_{\bit}$ be a distribution on $\{0,1\} \times \{0,1\}$, for prime $q \in \mathbb{N}$. Let $\bob'$ and $\charlie'$ be QPT algorithms.  Let $\rho=\{\rho_{\bfx^{\bob},\bfx^{\charlie}}\}_{\bfx^{\bob},\bfx^{\charlie} \in \mathbb{Z}_q^n}$ be a set of bipartite states. Consider the following game. 
\begin{itemize}
    \item Sample $(\bfx^{\bob},\bfx^{\charlie}) \leftarrow \distr_{\calX}$.
    \item Sample $\left( \left(\bfu^{\bob},m^{\bob} \right),\left(\bfu^{\charlie},m^{\charlie} \right) \right) \leftarrow \distr_{\ch}$ 
    \item Set $z_0^{\bob} = \langle \bfu^{\bob}, \bfx^{\bob} \rangle,z_0^{\charlie} = \langle \bfu^{\charlie}, \bfx^{\charlie} \rangle,z_1^{\bob} = m^{\bob} + \langle \bfu^{\bob}, \bfx^{\bob} \rangle,z_1^{\charlie} = m^{\charlie} + \langle \bfu^{\charlie}, \bfx^{\charlie} \rangle$ \anote{Prabhanjan, should this $+$ be $\oplus$ instead? Same also below. In the construction we use $\oplus$.}
    \item Sample $(b^{\bob},b^{\charlie}) \leftarrow \distr_{\bit}$ 
    \item $(\widehat{b}^{\bob},\widehat{b}^{\charlie}) \leftarrow (\bob'(\bfu^{\bob},z_{b^{\bob}}^{\bob},\cdot) \otimes \charlie'(\bfu^{\charlie},z_{b^{\charlie}}^{\charlie},\cdot))(\rho_{\bfx^{\bob},\bfx^{\charlie}})$
\end{itemize}
We say that $(\bob',\charlie')$ succeeds if $\widehat{b}^{\bob} = b^{\bob}$ and $\widehat{b}^{\charlie} = b^{\charlie}$.

\paragraph{Specific Settings.} Consider the following setting: (a) $q=2$, (b) $\distr_{\bit}$ is a uniform distribution on $\{0,1\}^2$, (c) $\distr_{\ch}$ is a uniform distribution on $\Zq^{2n+2}$ and $\distr_{\calX}$ is a uniform distribution on $\{(\bfx,\bfx)\ :\ \bfx \in \Zq^n\}$. In this setting, recent works~\cite{KT22,AKL23} showed, via a simultaneous version of quantum Goldreich-Levin theorem, that any non-local solver for the $(\distr_{\ch},\distr_{\bit})$-simultaneous inner product problem can succeed with probability at most $\frac{1}{2} + \varepsilon(n)$, for some negligible function $\varepsilon(n)$. Although not explicitly stated, the generic framework of upgrading classical reductions to non-local reductions, introduced in~\cite{AKL23}, can be leveraged to extend the above result to large values of $q$.  
\par In the case when $\distr_{\bit}$ is not a uniform distribution, showing the hardness of non-locally solving the above problem seems much harder.  \\

\noindent Specifically, we are interested in the following setting:  $\distr_{\bit}$ is a distribution on $\{0,1\} \times \{0,1\}$, where $(b,b)$ is sampled with probability $\frac{1}{2}$, for $b \in \{0,1\}$. In this case, we simply refer to the above problem as $(\distr_{\calX},\distr_{\ch})$-$\simip$ problem. 

\newcommand{\ind}{\mathsf{ind}}
\paragraph{Conjectures.} We state the following conjectures. In the conjectures, we assume that the order of the field $Q>2^\secparam$\footnote{We need this condition, since it is crucial for our applications that  the two distinguishing cases (i.e., $z^\bob_0,z^\charlie_0$ and $z^\bob_1,z^\charlie_1$) are disjoint distributions (up to negligible overlap), but for $Q\in \poly(\secparam)$, the two distinguishing cases of the challenge distributions have an inverse polynomial overlap}. We are interested in the following distributions: 
\begin{itemize}
    \item We define $\distr_{\ch}^{\ind}$ as follows: it samples $\left( \left(\bfu^{\bob},m^{\bob} \right),\left(\bfu^{\charlie},m^{\charlie} \right) \right)$, where $\bfu^{\bob} \xleftarrow{\$} \Zq^{n} ,\bfu^{\charlie} \xleftarrow{\$} \Zq^{n}, m^{\bob} \xleftarrow{\$} \Zq, m^{\charlie} \xleftarrow{\$} \Zq$.  We define $\distr_{\ch}^{\id}$ as follows: it samples $\left( \left(\bfu,m \right),\left(\bfu,m \right) \right)$, where $\bfu \xleftarrow{\$} \Zq^{n}, m \xleftarrow{\$} \Zq$. 
    
    \item Similarly, we define $\distr_{\calX}^{\ind}$ as follows: it samples $\left( \bfx^{\bob}, \bfx^{\charlie} \right)$, where $\bfx^{\bob} \xleftarrow{\$} \Zq^{n} ,\bfx^{\charlie} \xleftarrow{\$} \Zq^{n}$. We define $\distr_{\calX}^{\id}$ as follows: it samples $\left( \bfx, \bfx \right)$, where $\bfx \xleftarrow{\$} \Zq^{n}$.
\end{itemize}

\begin{conjecture}[$\left( \distr_{\calX}^{\id}, \distr_{\ch}^{\id} \right)$-$\simip$ Conjecture]
\label{conj:goldreich-levin-identical}
 Consider a set of bipartite states $\rho=\{\rho_{\bfx}\}_{\bfx \in \mathbb{Z}_q^n}$ satisfying the following property: for any QPT adversaries $\bob,\charlie$, 
$$\prob\left[ \left(\bfx,\bfx \right) \leftarrow \left( \bob \otimes \charlie \right)\left( \rho_{\bfx} \right)\ :\ \left(\bfx,\bfx \right) \leftarrow \distr_{\calX}^{\id} \right] \leq \nu(n)$$
for some negligible function $\nu(\secparam)$.

  Any QPT non-local solver for the $\left( \distr_{\calX}^{\id}, \distr_{\ch}^{\id} \right)$-$\simip$ problem succeeds with probability at most $\frac{1}{2} + \varepsilon(n)$, where $\varepsilon$ is a negligible function. \anote{Should we not say the success probability depends on the order of the field $q$? I think we should say that we assume $q> 2^\secparam$.}
\end{conjecture}

\begin{conjecture}[$(\distr_{\calX}^{\ind}, \distr_{\ch}^{\ind})$-$\simip$ Conjecture]
 \label{conj:goldreich-levin-correlated}
 Consider a set of bipartite states $\rho=\{\rho_{\bfx^{\bob},\bfx^{\charlie}}\}_{\bfx^{\bob},\bfx^{\charlie} \in \mathbb{Z}_q^n}$ satisfying the following property: for any QPT adversaries $\bob,\charlie$, 
$$\prob\left[ \left(\bfx^{\bob},\bfx^{\charlie} \right) \leftarrow \left( \bob \otimes \charlie \right)\left( \rho_{\bfx^{\bob},\bfx^{\charlie}} \right)\ :\ \left(\bfx^{\bob},\bfx^{\charlie} \right) \leftarrow \distr_{\calX}^{\ind} \right] \leq \nu(n)$$
for some negligible function $\nu(\secparam)$.

  Any QPT non-local solver for the $\distr_{\ch}^{\id}$-$\simip$ problem succeeds with probability at most $\frac{1}{2} + \varepsilon(n)$, where $\varepsilon$ is a negligible function. 
\end{conjecture}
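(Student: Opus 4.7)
The plan is to reduce any successful non-local solver for the $(\distr_{\calX}^{\ind}, \distr_{\ch}^{\ind})$-$\simip$ problem to a non-local extractor that recovers $(\bfx^{\bob}, \bfx^{\charlie})$ from $\rho_{\bfx^{\bob}, \bfx^{\charlie}}$ with non-negligible probability, which would directly contradict the unpredictability hypothesis in the statement. Concretely, suppose $(\bob', \charlie')$ wins the $\simip$ game with probability $\frac{1}{2} + \epsilon(n)$ for some non-negligible $\epsilon$. The first step is to pass from distinguishing to prediction on each side: by the standard trick of averaging over the independent mask randomness $m^{\bob}, m^{\charlie}$, each side's distinguishing advantage on $(\bfu, z_b) \mapsto b$ translates into an advantage over random guessing at predicting $\langle \bfu, \bfx\rangle$ given $\bfu$ and the auxiliary shared state. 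Since the challenge bit $b$ is identical on both sides but the rest of the challenge randomness is independent, one must carry out this averaging jointly so that the resulting pair $(\widetilde{\bob}, \widetilde{\charlie})$ still simultaneously predicts $(\langle \bfu^{\bob}, \bfx^{\bob}\rangle, \langle \bfu^{\charlie}, \bfx^{\charlie}\rangle)$ with advantage $\poly(\epsilon)$ over the baseline $1/q^2$.

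The second step is a parallel non-local Goldreich-Levin extraction. Since $\bfu^{\bob}$ and $\bfu^{\charlie}$ are sampled independently under $\distr_{\ch}^{\ind}$, the classical Goldreich-Levin reduction on Bob's side consumes fresh randomness unrelated to Charlie's queries, and similarly for Charlie. Following the ``classical-to-non-local'' lifting template of \cite{AKL23} and the simultaneous quantum Goldreich-Levin analysis of \cite{KT22}, I would compile each side's reduction into a coherent unitary procedure (purifying both the query register and the predictor's internal randomness), then execute the two compiled procedures in parallel on the shared resource $\rho_{\bfx^{\bob}, \bfx^{\charlie}}$, and finally measure Bob's and Charlie's output registers in the computational basis. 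The final step is to conclude: the parallel extractor outputs $(\bfx^{\bob}, \bfx^{\charlie})$ with probability $\poly(\epsilon)$, contradicting the unpredictability assumption and hence finishing the proof.

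The main obstacle, which is precisely what prevents us from upgrading this conjecture to a theorem, lies in justifying Step 2 in the presence of entanglement between Bob's and Charlie's halves of $\rho_{\bfx^{\bob}, \bfx^{\charlie}}$. Each Goldreich-Levin extraction is intrinsically a many-query protocol; performing many coherent queries on Bob's register can, via the shared entanglement, decohere Charlie's reduced state and degrade $\widetilde{\charlie}$'s predictive advantage, even though the challenge randomness is independent. In the special case $q=2$ with fully uniform $\distr_{\bit}$, the simultaneous Goldreich-Levin theorems of \cite{KT22,AKL23} sidestep this via a carefully balanced coherent packaging of the queries, but lifting their argument to arbitrary prime-power $q$ and the identical-bit distribution requires a genuinely stronger non-local state-repair / Marriott-Watrous style analysis. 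Once such a strengthening is available, the identical-challenge variant \Cref{conj:goldreich-levin-identical} should follow along similar lines but with the added difficulty that Bob and Charlie now receive perfectly correlated $\bfu^{\bob}=\bfu^{\charlie}$, forcing correlated queries and further amplifying the cross-side disturbance issue.
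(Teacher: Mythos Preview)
The statement you are attempting to prove is stated in the paper as a \emph{conjecture}, not a theorem; the paper provides no proof. In fact, the paper explicitly flags the identical-bit case of $\distr_{\bit}$ as the setting where ``showing the hardness of the non-locally solving the above problem seems much harder,'' and only points to prior work \cite{KT22,AKL23} for the strictly easier uniform-$\distr_{\bit}$ setting over $q=2$. So there is no paper proof to compare against.

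Your proposal is therefore not a proof of the statement but rather a plausibility argument together with an honest identification of the obstruction, and on that score it is accurate. You correctly isolate the key difficulty: the Goldreich--Levin extraction is many-query, and running two such extractions in parallel on an entangled bipartite state risks each side's queries decohering the other side's predictive advantage. You are also right that the independence of $\bfu^{\bob}$ and $\bfu^{\charlie}$ under $\distr_{\ch}^{\ind}$ helps but does not by itself resolve the issue, because the entanglement in $\rho_{\bfx^{\bob},\bfx^{\charlie}}$ is what couples the two sides, not the challenge randomness. This is consistent with the paper's stance: the authors leave the conjecture open precisely because the known simultaneous Goldreich--Levin machinery does not directly handle the identical-challenge-bit regime, and your write-up reflects that understanding.
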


\paragraph{Discussion.}  The independent version of the conjecture was proven using a simultaneous version of the quantum Goldreich-Levin theorem. To recall, at a high level, the quantum Goldreich-Levin extractor creates a superposition over all the challenge messages and then runs the adversary on it. In the simultaneous version, we need to extract from two parties, say, Bob and Charlie, simultaneously. Moreover, Bob’s extractor and Charlie’s extractor cannot communicate with each other. In the independent case, when both Bob and Charlie receive independent challenges, it is easy to come up with extractors for both Bob and Charlie. Specifically, Bob’s extractor and Charlie’s extractor can indeed work independently; this is due to the fact that Bob’s extractor can create a superposition over Bob’s challenges, and independently, Charlie’s extractor can create a superposition over Charlie’s challenges. However, in the identical and the correlated case, the superposition over Bob’s and Charlie’s challenge messages is an entangled state, and it is unclear how both Bob’s and Charlie’s extractors can create such entangled states using only local operations. We will definitely add more discussion about this in the paper.

\section*{Part I: Constructions} 
\addcontentsline{toc}{section}{\protect\numberline{}Part I: Constructions of Unclonable Puncturable Obfuscation}
\newcommand{\prepexpt}{\mathsf{PreponedExpt}}
\newcommand{\sdeexpt}{\mathsf{SDEncExpt}}
\newcommand{\cpscheme}{\mathsf{CP}}
\newcommand{\sdescheme}{\mathsf{SDE}}





\section{Direct Construction}

\noindent In this section, we construct unclonable puncturable obfuscation for all efficiently computable generalized puncturable keyed circuit classes, with respect to $\distrprod$ and $\distrid$ challenge distribution (see \Cref{subsec:upo-definition}). Henceforth, we assume that any keyed circuit class we consider will consist of circuits that are efficiently computable. 
\ifllncs \else \par \fi We present the construction in three steps. 
\begin{enumerate}
    \item In the first step (\Cref{subsec:postproc-sde}), we construct a $\sde$ ($\SDE$) scheme based on the $\cllz$ scheme~\cite{CLLZ21} (see \Cref{fig:CLLZ-SDE-construction}) and show that it satisfies $\distrcipherind$-indistinguishability from random anti-piracy (and $\distrcipherind$-indistinguishability from random anti-piracy respectively) (see \Cref{subsec:sde}), based on the conjectures, \Cref{conj:goldreich-levin-correlated,conj:goldreich-levin-identical}. 
    \item In the second step (\Cref{subsec:preponed-cp-prf}), we define a variant of the security definition considered in~\cite{CLLZ21} with respect to two different challenge distributions and prove that the copy-protection construction for $\prf$s in~\cite{CLLZ21}      (see \Cref{fig:copy-protect-CLLZ}) satisfies this security notion, based on the indistinguishability from random anti-piracy guarantees of the $\SDE$ scheme considered in the first step. 
    \item In the third step (\Cref{subsec:UPO-from-preponed-CP}), we show how to transform the copy-protection scheme obtained from the first step into UPO for a keyed circuit class with respect to the $\distrprod{}$ and $\distrid{}$ challenge distribution.
\end{enumerate}

\subsection{A New Public-Key Single-Decryptor Encryption Scheme}\label{subsec:postproc-sde}
\noindent The first step is to construct a $\SDE$ scheme of the suitable form. While $\SDE$ schemes have been studied in prior works~\cite{GZ20,CLLZ21}, we require a weaker version of security called indistinguishability from random anti-piracy, see \Cref{subsec:sde}, which has not been considered in prior works. 

Our construction is based on the $\SDE$ scheme in~\cite[Section 6.3]{CLLZ21} which we recall in \Cref{fig:CLLZ-SDE-construction}.
From here on, we will refer to it as the $\cllz$ $\SDE$ scheme, given in \Cref{fig:CLLZ-SDE-construction}. \ifllncs
Next, we define a family of $\SDE$ schemes based on the $\cllz$ $\SDE$, called $\postproc$ schemes, as follows.
\else
Next, we define a family of $\SDE$ schemes based on the $\cllz$ $\SDE$, called $\postproc$ schemes, and then in \Cref{subsubsec:construction-postproc-sde}, we give a construction of $\postproc$ $\SDE$ scheme (\Cref{fig:CLLZ-SDE-postproc-construction}). Unfortunately, we are able to prove the required security guarantees of this construction only assuming conjectures that state the simultaneous inner-product conjectures, see \Cref{conj:goldreich-levin-correlated,conj:goldreich-levin-identical}, given in \Cref{sec:conjectures}. For our purposes, we will consider the message length to be at least polynomial in the security parameter. 
\fi

\begin{figure}[!htb]
   \begin{center} 
   \begin{tabular}{|p{12cm}|}
    \hline 
\noindent\textbf{Assumes:} post-quantum indistinguishability obfuscation $\io$.\\
\ \\
\noindent$\gen(1^\secparam)$: 
\begin{compactenum}
    \item Sample $\ell_0$ uniformly random subspaces $\{A_i\}_{i\in [\ell_0]}$ and for each $i\in [\ell_0]$, sample $s_i,{s'}_i$. 
    \item Compute $\{R^0_i,R^1_i\}_{i\in \ell_0}$, where for every $i\in [\ell_0]$, $R^0_i\gets\io(A_i + s_i)$ and $R^1_i\gets \io(A_i^\perp + {s'}_i)$ are the membership oracles.
    \item Output $\sk=\{\{{A_i}_{s_i,{s'}_i}\}_i\}$ and $\pk=\{R^0_i,R^1_i\}_{i\in \ell_0}$
\end{compactenum}
\ \\
\noindent$\qkeygen(\sk)$: 
\begin{compactenum}
    \item Interprete $\sk$ as $\{\{{A_i}_{s_i,{s'}_i}\}_i\}$.
    \item Output $\rho_sk=\{\{\ket{{A_i}_{s_i,{s'}_i}}\}_i\}$.
\end{compactenum}
\ \\
\noindent$\enc(\pk,m)$: 
\begin{compactenum}
    \item Interprete $\pk=\{R^0_i,R^1_i\}_{i\in \ell_0}$.
    \item Sample $r\xleftarrow{\$}\{0,1\}^n$.
    \item Generate $\tilde{Q}\gets \io(Q_{m,r})$ where $Q_{m,r}$ has $\{R^0_i,R^1_i\}_{i\in \ell_0}$ hardcoded inside, and on input $v_1,\ldots,v_{\ell_0}\in \{0,1\}^{n\ell_0}$, checks if $R^{r_i}_i(v_i)=1$ for every $i\in [\ell_0]$ and if the check succeeds, outputs $m$, otherwise output $\bot$.
    \item Output $\ct=(r,\tilde{Q})$
\end{compactenum}
\ \\
\noindent$\dec(\rho_\sk,\ct)$
\begin{compactenum}
    \item Interprete $\ct=(r,\tilde{Q})$.
    \item For every $i\in [\ell_0]$, if $r_{i}=1$ apply $H^{\otimes n}$ on $\ket{{A_i}_{s_i,{s'}_i}}$. Let the resulting state be $\ket{\psi_x}$.
    \item Run the circuit $\tilde{Q}$  in superposition on the state $\ket{\psi_x}$ and measure the output register and output the measurement result $m$.
\end{compactenum}

\ \\ 
\hline
\end{tabular}
    \caption{The $\cllz$ $\sde$ scheme, see~\cite[Construction 1]{CLLZ21}.}
    \label{fig:CLLZ-SDE-construction}
    \end{center}
\end{figure}

\subsubsection{Definition of a  CLLZ post-processing single decryptor encryption scheme}
\label{subsubsec:definition-postproc-sde}
\pnote{this is for the next version: I'm not a fan of defining primitives keeping in mind the CLLZ construction I think we should aim to have the definition as abstract as possible and CLLZ should only come up in the instantiation.} We call a $\SDE$ scheme $(\gen,\qkeygen,\enc,\dec)$ a CLLZ post-processing if there exists polynomial time classical algorithms $(\encproc,\decproc)$, such that 
 $\decproc$ is a deterministic algorithm. 
\begin{savenotes}
\begin{figure}[!htb]
   \begin{center} 
   \begin{tabular}{|p{16cm}|}
    \hline 
\noindent\textbf{Assumes:} $\cllz$ $\SDE$ scheme given in \Cref{fig:CLLZ-SDE-construction}.\\
\ \\
\noindent$\gen(1^\secparam)$:  Same as $\cllz.\gen(1^\secparam)$.\\
\ \\
\noindent$\qkeygen(\sk)$: Same as $\cllz.\qkeygen(\sk)$.\\
\ \\

\noindent$\enc(\pk,m)$: 
\begin{compactenum}
    \item Sample $r\xleftarrow{\$}\ZQ^\secparam$, where $Q$ is the smallest prime greater than or equal to $M\cdot 2^{\secparam}$ with $M$ being the size of the message space, i.e., $M=2^{|m|}$ and $|m|$ is the bit-size of the message $m$.\anote{Prabhanjan, this is the correct version right?}
    \item Generate $c\gets \encproc(m,r)$ and generate $c'\gets \cllz.\enc(\pk,r)$\footnote{We would like to note that the obfuscated circuit may be padded more than what is required in the $\cllz$ $\SDE$ scheme, for the security proofs of the $\postproc$ $\SDE$.}.
    \item Output $\ct=(c,c')$.
\end{compactenum}
\ \\
\noindent$\dec(\rho_\sk,\ct)$
\begin{compactenum}
    \item Interprete $\ct=(c,c')$.
            \item Generate $r\gets \cllz.\dec(\rho_\sk,c')$.
            \item Output $m \gets \decproc(c,r)$.
\end{compactenum}
\ \\ 
\hline
\end{tabular}
    \caption{Definition of a $\postproc$ $\SDE$ scheme.}
    \label{fig:def:postproc-cllz}
    \end{center}
\end{figure}
For correctness of, a $\postproc$ $\SDE$ scheme (see \Cref{fig:def:postproc-cllz}) we require that for every string $r,m$,
\begin{equation}\label{eq:postproc}
c'\gets \encproc(m,r), m'\gets \decproc(c',r)\implies  m=m'.
\end{equation}
It is easy to verify that assuming \Cref{eq:postproc}, $\delta$-correctness of the $\cllz$ $\SDE$ implies $\delta$-correctness of a $\postproc$ $\SDE$ for every $\delta\in [0,1]$.
\anote{This can be generalized further but for our purposes this definition is enough}
Note that if the  above condition is satisfied then it holds that for every $\delta\in [0,1]$, $\delta$-correctness of the $\cllz$ $\SDE$ implies $\delta$-correctness of the $\postproc$ $\SDE$ (see \Cref{fig:def:postproc-cllz}).
\subsubsection{Construction of a CLLZ post-processing single decryptor encryption scheme}
\label{subsubsec:construction-postproc-sde}
We next consider the following $\postproc$ scheme given in \Cref{fig:CLLZ-SDE-postproc-construction}. As mentioned before, we will assume that the message length is at least polynomial in the security parameter.
Note that the algorithms $(\encproc,\decproc)$ in \Cref{fig:CLLZ-SDE-postproc-construction} satisfies \Cref{eq:postproc}, and hence if the $\cllz$ $\SDE$ scheme (depicted in \Cref{fig:CLLZ-SDE-construction}) satisfies $\delta$-correctness so does the $\SDE$ scheme in  \Cref{fig:CLLZ-SDE-postproc-construction}. It is also easy to see that $\decproc$ is a determinisitc algorithm.
\ifllncs
Next, we prove security for the $\SDE$ scheme in \Cref{fig:CLLZ-SDE-postproc-construction} based on the simultaneous inner product conjectures.
\else
Next we prove that the $\SDE$ scheme in  \Cref{fig:CLLZ-SDE-postproc-construction} satisfies $\distrcipherind$-indistinguishability from random anti-piracy and $\distrcipherid$-indistinguishability from random anti-piracy by exploiting the corresponding simultaneous inner product conjectures (see \Cref{conj:goldreich-levin-correlated,conj:goldreich-levin-identical}).
\fi

\begin{figure}[!htb]
   \begin{center} 
   \begin{tabular}{|p{16cm}|}
    \hline 
\noindent$\encproc(m,r)$: 
\begin{compactenum}
    \item Sample $u\xleftarrow{\$}\ZQ^\secparam$, where $Q$ is the smallest prime number greater than $2^{|m|+\secparam}$, and $|m|$ is the bit-size of the binary string $m$.\anote{Prabhanjan, just wanted to check if this is the correct version.}
    \item Generate $\tilde{m}\gets \embed_Q(m)$, where $\embed_Q$ randomly embeds the binary string $m$ in $\ZQ$, i.e., $\tilde{m}_Q\equiv kM+m_Q$ where $k\xleftarrow{\$}\{0,1,\ldots,L-1\}$, $M\equiv 2^{|m|}$, $L\equiv Q \% M$, and $m_Q$ is the canonical embedding of $m$ in $\ZQ$. 
    \item Output $u,\tilde{m}_Q+\langle u,r\rangle$, where the addition and inner product uses the product over the field $\ZQ$.
\end{compactenum}
\ \\
\noindent$\decproc(c,r)$: 
\begin{compactenum}
    \item Interprete $c$ as $u,z$.
    \item Generate $\tilde{m}_Q\gets z+ \langle u,r\rangle$. 
    \item Output $m$ where $m$ is the binary representation of $\tilde{m}_Q\mod M$.
\end{compactenum}

\ \\ 
\hline
\end{tabular}
    \caption{Construction of a $\postproc$ $\SDE$ scheme.}
    \label{fig:CLLZ-SDE-postproc-construction}
    \end{center}
\end{figure}

\end{savenotes}

\begin{remark}\label{remark:uniform-field_element-bit_strings-correspondence}
    By the definition of the randomized embedding $\embed_Q$ defined in the algorithm $\encproc$ given in \Cref{fig:CLLZ-SDE-postproc-construction}, it is easy to see that the ensemble \[\{\embed_Q(m)\}_{m\xleftarrow{\$}\{0,1\}^M}=\{\tilde{m}_Q\}_{\tilde{m}_Q\xleftarrow{\$}\{0,1,\ldots,LM-1\}}\approx_s\{\tilde{m}_Q\}_{\tilde{m}_Q\xleftarrow{\$}\ZQ},\]
    because $Q-L<M$ by definition of $L$, and hence, $\frac{Q-L}{Q}<\frac{M}{Q}$ which is at most $\frac{M}{M\cdot 2^\secparam}=\frac{1}{2^\secparam}$, by our choice of $Q$.
\end{remark}

\begin{theorem}\label{thm:postproc-sde-correlated-challenge}
    Assuming \Cref{conj:goldreich-levin-correlated}, the existence of post-quantum sub-exponentially secure $\io$ and one-way functions, and the quantum hardness of Learning-with-errors problem (LWE), the $\postproc$ $\SDE$ as defined in \Cref{fig:def:postproc-cllz} given in \Cref{fig:CLLZ-SDE-postproc-construction} satisfies  $\distrcipherind$-indistinguishability from random anti-piracy (see \Cref{subsec:sde}).
\end{theorem}

\begin{theorem}\label{thm:postproc-sde-identical-challenge}
    Assuming \Cref{conj:goldreich-levin-identical}, the existence of post-quantum sub-exponentially secure $\io$ and one-way functions, and quantum hardness of Learning-with-errors problem (LWE), the $\postproc$ $\SDE$ (as defined in \Cref{fig:def:postproc-cllz}) given in \Cref{fig:CLLZ-SDE-postproc-construction} satisfies  $\distrcipherid$-indistinguishability from random anti-piracy (see \Cref{subsec:sde}).
\end{theorem}

\ifllncs
The proof of \Cref{thm:postproc-sde-correlated-challenge,thm:postproc-sde-identical-challenge} is given in \Cref{app:proof:sde(cons)}.

\else
\ifllncs 
\subsection{Proof of~\Cref{thm:postproc-sde-correlated-challenge,thm:postproc-sde-identical-challenge}}
\label{app:proof:sde(cons)}
\else 

\fi

\begin{proof}[Proof of \Cref{thm:postproc-sde-correlated-challenge}]\anote{Need to address the $+$ vs $\oplus$ issue in the proof after talking with prabhanjan. Best would be to write everything as $+$ while keeping $m$ in $\Zq$.}
  Let $(\alice,\bob,\charlie)$ be an adversary against the $\sde$ scheme $\PPROC$ given  in \Cref{fig:CLLZ-SDE-construction} in the $\distrcipherind$-indistinguishability from random anti-piracy experiment (see Game~\ref{fig:indistinguishability_from_random_-sde-anti-piracy}). We will do a sequence of hybrids; the changes would be marked in blue.

\noindent \underline{$\hybrid_0$}: Same as $\indrsdeexpt^{\left(\alice,\bob,\charlie \right),\distrcipherind}\left( 1^{\secparam} \right)$  (see Game~\ref{fig:indistinguishability_from_random_-sde-anti-piracy}) where $\distrcipherind$ is the challenge distribution defined in \Cref{subsec:sde} for the single-decryptor encryption scheme, $\PPROC$ in~\Cref{fig:CLLZ-SDE-postproc-construction}.
\begin{enumerate}
\item $\ch$ samples $(\sk,\pk) \gets \keygen(1^\secparam)$ and $\rho_k\gets \qkeygen(k)$ and sends $\rho_{k},\pk$ to $\alice$. 
\item $\adversary(\rho_k,\pk)$ outputs $\sigma_{\bob,\charlie}$.
\item $\ch$ samples 
 $b\xleftarrow{\$} \{0,1\}$.
\item $\ch$ computes $\ct^\bob_b$ as follows:
    \begin{enumerate}
    \item Sample $r^\bob\xleftarrow{\$}\ZQ^\secparam$, and compute ${c'}^\bob\gets \cllz.\enc(\pk,r^\bob)$.
    \item Sample $u^\bob\xleftarrow{\$}\ZQ^\secparam$ and compute $c^\bob_b=(u^\bob,\langle u^\bob, r^\bob\rangle)$ if $b=0$, else sample $m^\bob\xleftarrow{\$}\{0,1\}^M$ (where $M$ is the bit-size of the messages), generate $\tilde{m}^\bob_Q\gets \embed_Q(m^\bob)$ (see \Cref{fig:CLLZ-SDE-postproc-construction} for the definition of $\embed_Q$) and compute $c^\bob_b=(u^\bob,\tilde{m}^\bob+ \langle u^\bob, r^\bob\rangle)$ (where the operations are in the field $\ZQ$) if $b=1$.
    \item Set $\ct^\bob_b=(c^\bob_b,{c'}^\bob)$.
    \end{enumerate}
\item $\ch$ computes $\ct^\charlie_b$ as follows:
    \begin{enumerate}
    \item Sample $r^\charlie\xleftarrow{\$}\ZQ^\secparam$, and compute ${c'}^\charlie\gets \cllz.\enc(\pk,r^\charlie)$.
    \item Sample $u^\charlie\xleftarrow{\$}\ZQ^\secparam$ and compute $c^\charlie_b=(u^\charlie,\langle u^\charlie, r^\charlie\rangle)$ if $b=0$, else sample $m^\charlie\xleftarrow{\$}\{0,1\}^M$,  generate $\tilde{m}^\charlie_Q\gets \embed_Q(m^\charlie)$ and compute $c^\charlie_b=(u^\charlie,\tilde{m}^\charlie+ \langle u^\charlie, r^\charlie\rangle)$ if $b=1$.
    \item Set $\ct^\charlie_b=(c^\charlie_b,{c'}^\charlie)$.
    \end{enumerate}    
\item Apply $(\bob(\ct^\bob_b,\cdot) \otimes \charlie(\ct^\charlie_b,\cdot))(\sigma_{\bob,\charlie})$ to obtain $(b_{\bfB},b_{\bfC})$. 
\item Output $1$ if $b_{\bfB}=b_\bfC=b$.
\end{enumerate}

\noindent \underline{$\hybrid_1$}: 
\begin{enumerate}
\item $\ch$ samples $(\sk,\pk) \gets \keygen(1^\secparam)$ and $\rho_k\gets \qkeygen(k)$ and sends $\rho_{k},\pk$ to $\alice$. 
\item $\adversary(\rho_k,\pk)$ outputs $\sigma_{\bob,\charlie}$.
\item $\ch$ samples 
 $b\xleftarrow{\$} \{0,1\}$.
\item $\ch$ computes $\ct^\bob_b$ as follows:
    \begin{enumerate}
    \item Sample $r^\bob\xleftarrow{\$}\ZQ^\secparam$, and compute ${c'}^\bob\gets \cllz.\enc(\pk,r^\bob)$.
    \item Sample $u^\bob\xleftarrow{\$}\ZQ^\secparam$ and compute $c^\bob_b=(u^\bob,\langle u^\bob, r^\bob\rangle)$ if $b=0$, else \sout{sample $m^\bob\xleftarrow{\$}\{0,1\}^M$ (where $M$ is the bit-size of the messages), generate $\tilde{m}^\bob_Q\gets \embed_Q(m^\bob)$ (see Fig.~\ref{fig:CLLZ-SDE-postproc-construction} for the definition of $\embed_Q$) and compute $c^\bob_b=(u^\bob,\tilde{m}^\bob+ \langle u^\bob, r^\bob\rangle)$ (where the operations are in the field $\ZQ$) if $b=1$}  \cblue{sample $\tilde{m}^\bob \xleftarrow{\$} \ZQ$, and compute $c^\bob_b=(u^\bob,\tilde{m}^\bob+ \langle u^\bob, r^\bob\rangle)$ (where the operations are in the field $\ZQ$) if $b=1$}.
    \item Set $\ct^\bob_b=(c^\bob_b,{c'}^\bob)$.
    \end{enumerate}
\item $\ch$ computes $\ct^\charlie_b$ as follows:
    \begin{enumerate}
    \item Sample $r^\charlie\xleftarrow{\$}\ZQ^\secparam$, and compute ${c'}^\charlie\gets \cllz.\enc(\pk,r^\charlie)$.
    \item Sample $u^\charlie\xleftarrow{\$}\ZQ^\secparam$ and compute $c^\charlie_b=(u^\charlie,\langle u^\charlie, r^\charlie\rangle)$ if $b=0$, else \sout{sample $m^\charlie\xleftarrow{\$}\{0,1\}^M$,  generate $\tilde{m}^\charlie_Q\gets \embed_Q(m^\charlie)$ and compute $c^\charlie_b=(u^\charlie,\tilde{m}^\charlie+ \langle u^\charlie, r^\charlie\rangle)$ if $b=1$} \cblue{sample $\tilde{m}^\charlie \xleftarrow{\$} \ZQ$, and compute $c^\charlie_b=(u^\charlie,\tilde{m}^\charlie+ \langle u^\charlie, r^\charlie\rangle)$ if $b=1$}.
    \end{enumerate}    
\item Apply $(\bob(\ct^\bob_b,\cdot) \otimes \charlie(\ct^\charlie_b,\cdot))(\sigma_{\bob,\charlie})$ to obtain $(b_{\bfB},b_{\bfC})$. 
\item Output $1$ if $b_{\bfB}=b_\bfC=b$.
\end{enumerate}

The only change from $\hybrid_0$ to $\hybrid_1$ was the distribution on $\tilde{m}^\bob_Q$ and $\tilde{m}^\charlie_Q$, which are independently and identically distributed in both the hybrids. In particular, the IID distribution on $\tilde{m}^\bob_Q$ and $\tilde{m}^\charlie_Q$ changes from $\{\embed_Q(m)\}_{m\xleftarrow{\$}\{0,1\}^M}
$ to $\{\tilde{m}_Q\}_{\tilde{m}_Q\xleftarrow{\$}\ZQ}$ across thre hybrids. Since these distributions are statistically indistinguishable by \Cref{remark:uniform-field_element-bit_strings-correspondence}, we conclude that statistically indistinguishability holds for $\hybrid_0$ and $\hybrid_1$. 

Consider the following independent search experiment against a pair of (uniform) efficient adversaries $\bob',\charlie'$.
\begin{enumerate}
\item $\ch$ samples $r^\bob,r^\charlie\xleftarrow{\$}\ZQ^\secparam$.
\item $\ch$ computes $\sigma_{\bob,\charlie}$ as follows:
\begin{enumerate}
    \item Sample $(\sk,\pk) \gets \keygen(1^\secparam)$ and prepares $\rho_k\gets \qkeygen(k)$.
    \item Run $\alice(\rho_k,\pk)$ to get $\sigma_{\bob,\charlie}$.
\end{enumerate}
\item $\ch$ computes ${c'}^\bob\gets \cllz.\enc(\pk,r^\bob)$, and  computes ${c'}^\charlie\gets \cllz.\enc(\pk,r^\charlie)$.
\item $\ch$ constructs the bipartite auxiliary state $\tau^{r^\bob,r^\charlie}_{B,C}={c'}^\bob,\sigma_{\bob,\charlie},{c'}^\charlie$, i.e., the ${c'}^\bob,\sigma_\bob$ and ${c'}^\charlie,\sigma_\charlie$ are the two partitions.
\item $\ch$ sends the respective registers of $\tau^{r^\bob,r^\charlie}_{B,C}$ to $\bob'$ and $\charlie'$, and gets back the responses ${r'}^\bob$ and ${r'}^\charlie$ respectively.
\item Ouptput $1$ if ${r'}^\bob=r^\bob$, and  ${r'}^\charlie=r^\charlie$.
\end{enumerate}

Clearly, the winning probability of $(\bob',\charlie')$ in the above game is the same as the winning probability of $(\alice,\bob',\charlie')$ in the independent search anti-piracy (see \Cref{subsec:sde}) of the $\cllz$ $\sde$ scheme given in \Cref{fig:CLLZ-SDE-construction}. It was shown in~\cite[Theorem 6.15]{CLLZ21} that the $\cllz$ $\sde$ satisfies independent search anti-piracy assuming the security guarantess of post-quantum sub-exponentially secure $\io$ and one-way functions, and quantum hardness of Learning-with-errors problem (LWE). 
Hence, under the security guarantees of the above assumptions, there exists a negligible function $\epsilon'()$ such that the winning probability of $(\bob',\charlie')$ in the above game is $\epsilon'(\secparam)$. Since the order of the field $Q>2^\secparam$, assuming \Cref{conj:goldreich-levin-correlated}, there exists a negligible function $\epsilon()$ such that the winning probability of $(\bob,\charlie)$ in the following indistinguishability game is at most $\frac{1}{2}+\epsilon(\secparam)$.

\begin{enumerate}
\item $\ch$ samples $r^\bob,r^\charlie\xleftarrow{\$}\ZQ^\secparam$.
\item $\ch$ computes $\sigma_{\bob,\charlie}$ as follows:
\begin{enumerate}
    \item Sample $(\sk,\pk) \gets \keygen(1^\secparam)$ and prepares $\rho_k\gets \qkeygen(k)$.
    \item Run $\alice(\rho_k,\pk)$ to get $\sigma_{\bob,\charlie}$.
\end{enumerate}
\item $\ch$ computes ${c'}^\bob\gets \cllz.\enc(\pk,r^\bob)$, and  computes ${c'}^\charlie\gets \cllz.\enc(\pk,r^\charlie)$.
\item $\ch$ constructs the bipartite auxiliary state $\tau^{r^\bob,r^\charlie}_{B,C}={c'}^\bob,\sigma_{\bob,\charlie},{c'}^\charlie$, i.e., the ${c'}^\bob,\sigma_\bob$ and ${c'}^\charlie,\sigma_\charlie$ are the two partitions.
\item $\ch$ samples  $b\xleftarrow{\$} \{0,1\}$.
\item $\ch$ samples $u^\bob\xleftarrow{\$}\ZQ^\secparam$ and compute $c^\bob_b=(u^\bob,\langle u^\bob, r^\bob\rangle)$ if $b=0$, else samples $\tilde{m}^\bob_Q\xleftarrow{\$}\ZQ$, and computes $c^\bob_b=(u^\bob,m^\bob+\langle u^\bob,r^\bob\rangle)$ if $b=1$.
\item Similarly,  $\ch$ samples  $u^\charlie\xleftarrow{\$}\ZQ^\secparam$ and computes $c^\charlie_b=(u^\charlie,\langle u^\charlie, r^\charlie\rangle)$ if $b=0$, else samples $\tilde{m}^\charlie_Q\xleftarrow{\$}\ZQ$, and computes $c^\charlie_b=(u^\charlie,m^\charlie+\langle u^\charlie,r^\charlie\rangle))$ if $b=1$.
\item $\ch$ sends $c^\bob_b$ and $c^\charlie_b$ along with the respective registers of $\tau^{r^\bob,r^\charlie}_{B,C}$ to $\bob'$ and $\charlie'$ respectively, and gets back the responses $b^\bob$ and $b^\charlie$ respectively.
\item Output $1$ if $b_{\bfB}=b_\bfC=b$.
\end{enumerate}

However, note that the view of the adversaries $\bob$ and $\charlie$ in the indistinguishability game above is the same as the view in $\hybrid_3$. Therefore, the winning probability of $(\alice,\bob,\charlie)$ in $\hybrid_1$ is at most $\frac{1}{2}+\epsilon(\secparam)$. This completes the proof of the theorem.

\end{proof}


\begin{proof}[Proof of \Cref{thm:postproc-sde-identical-challenge}]
    The proof directly follows by combining \Cref{lemma:identical-postproc-anti-piracy-to-identical-search-anti-piracy,lemma:identical-search-anti-piracy-postproc}, which we state and prove next.
\end{proof}

\begin{lemma}\label{lemma:identical-postproc-anti-piracy-to-identical-search-anti-piracy}
    Assuming \Cref{conj:goldreich-levin-identical}, the $\postproc$ $\sde$ as defined in \Cref{fig:def:postproc-cllz} given in \Cref{fig:CLLZ-SDE-postproc-construction} satisfies $\distrcipherid$-indistinguishability from random anti-piracy, if $\cllz$ $\sde$ (see \Cref{fig:CLLZ-SDE-construction}) satisfies $\distrid$-search anti-piracy (see \Cref{subsec:sde}).
\end{lemma}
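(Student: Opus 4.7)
The plan is to mirror the proof of \Cref{thm:postproc-sde-correlated-challenge} almost verbatim, swapping in the identical variants of the relevant primitives: the $\distrid$-search anti-piracy of $\cllz$ in place of independent search, and \Cref{conj:goldreich-levin-identical} in place of \Cref{conj:goldreich-levin-correlated}. Let $(\alice,\bob,\charlie)$ be an adversary in $\indrsdeexpt^{(\alice,\bob,\charlie),\distrcipherid}(1^\secparam)$ against the $\postproc$ $\sde$ in \Cref{fig:CLLZ-SDE-postproc-construction}, where under $\distrcipherid$ the challenger uses $r^{\bob}=r^{\charlie}=r$, $u^{\bob}=u^{\charlie}=u$, and $m^{\bob}=m^{\charlie}=m$ across the two ciphertexts.

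First I would take one trivial hybrid step that rewrites the $b=1$ branch: instead of outputting $(u, m\oplus \langle u,r\rangle)$, output $(u,m)$ for freshly sampled uniform $m\in \{0,1\}^q$. Since $m$ is uniform and used identically on both sides, the two distributions on $(u, m\oplus\langle u,r\rangle)$ and $(u,m)$ are identical, so this hybrid is perfectly indistinguishable from the real experiment.

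Next I would set up the reduction to \Cref{conj:goldreich-levin-identical}. Define the family of bipartite states $\{\tau^{r}_{B,C}\}_{r\in\{0,1\}^q}$ by: sample $(\sk,\pk)\gets \cllz.\gen(1^\secparam)$; run $\alice(\qkeygen(\sk),\pk)$ to obtain $\sigma_{\bob,\charlie}$; compute $c'\gets \cllz.\enc(\pk,r)$; and hand $(c',\sigma_{\bob})$ to Bob's register and $(c',\sigma_{\charlie})$ to Charlie's register (the same $c'$ appears on both sides because of $\distrcipherid$). The unpredictability hypothesis of \Cref{conj:goldreich-levin-identical} on this family states that no QPT $(\bob',\charlie')$ can simultaneously recover $r$ on both sides with non-negligible probability. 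This is precisely what $\distrid$-search anti-piracy of $\cllz$ gives us: any such $(\bob',\charlie')$ yields a triple $(\alice,\bob',\charlie')$ winning the identical search game of $\cllz$. The conjecture then implies that every non-local solver for the $(\distr_{\calX}^{\id},\distr_{\ch}^{\id})$-$\simip$ problem on $\{\tau^r_{B,C}\}$ wins with probability at most $\frac{1}{2}+\negl(\secparam)$. By inspection, this game is exactly the post-hybrid experiment $\hybrid_1$: the $\simip$ challenger samples $(u,m)\xleftarrow{\$}\Zq^{n+1}$, a common bit $b$, and hands each of $\bob,\charlie$ the pair $(u,\langle u,r\rangle)$ or $(u,m)$. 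Composing the two reductions bounds the advantage of $(\alice,\bob,\charlie)$ by $\negl(\secparam)$, proving the lemma.

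The only delicate step I anticipate is verifying that the auxiliary state $(c',\sigma_{\bob,\charlie},c')$ meets the formal conditions of the conjecture as an $r$-dependent bipartite advice state: in particular that $\sigma_{\bob,\charlie}$ is produced \emph{before} $r$ is sampled (so no extra information about $r$ is smuggled through $\alice$'s computation), and that $c'$, which does depend on $r$, is an efficiently preparable classical value given $r$ and $\pk$. Both are immediate from the construction, but they are what makes the unpredictability hypothesis align with the CLLZ identical search anti-piracy definition; once this alignment is spelled out, the rest is a syntactic rewriting.
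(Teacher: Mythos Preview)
Your proposal is correct and follows essentially the same approach as the paper: one trivial hybrid replacing $(u,m\oplus\langle u,r\rangle)$ by $(u,m)$ in the $b=1$ branch, then packaging $(c',\sigma_{\bob,\charlie},c')$ as the bipartite auxiliary state so that the unpredictability premise of \Cref{conj:goldreich-levin-identical} is exactly the $\distrid$-search anti-piracy of $\cllz$, and reading off the $\simip$ conclusion as the post-hybrid experiment. The paper's proof is structured identically (its reference to ``$\hybrid_3$'' at the end is a typo for $\hybrid_1$, which you correctly identified).
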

\begin{proof}
\anote{Need to change the proof below similar to how it is done for the correlated case.}

  Let $(\alice,\bob,\charlie)$ be an adversary against the $\sde$ scheme $\PPROC$ given  in \Cref{fig:CLLZ-SDE-construction} in the $\distrcipherid$-indistinguishability from random anti-piracy experiment. We will do a sequence of hybrids; the changes will be marked in blue.

\noindent \underline{$\hybrid_0$}: Same as $\indrsdeexpt^{\left(\alice,\bob,\charlie \right),\distrcipherid}\left( 1^{\secparam} \right)$  (see Game~\ref{fig:indistinguishability_from_random_-sde-anti-piracy}) where $\distrcipherid$ is the challenge distribution defined in \Cref{subsec:sde} for the single-decryptor encryption scheme, $\PPROC$ in~\Cref{fig:CLLZ-SDE-postproc-construction}.
\begin{enumerate}
\item $\ch$ samples $(\sk,\pk) \gets \keygen(1^\secparam)$ and $\rho_k\gets \qkeygen(k)$ and sends $\rho_{k},\pk$ to $\alice$. 
\item $\adversary(\rho_k,\pk)$ outputs $\sigma_{\bob,\charlie}$.
\item $\ch$ samples
 $b\xleftarrow{\$} \{0,1\}$.
\item $\ch$ computes $\ct_b$ as follows:
    \begin{enumerate}
    \item Sample $r\xleftarrow{\$}\ZQ^\secparam$, and compute ${c'}\gets \cllz.\enc(\pk,r)$.
    \item Sample $u\xleftarrow{\$}\ZQ^\secparam$ and compute $c_b=(u,\langle u, r\rangle)$ if $b=0$, else sample $m\xleftarrow{\$}\{0,1\}^M$ (where $M$ is the bit-size of the messages), generate $\tilde{m}_Q\gets \embed_Q(m)$ (see \Cref{fig:CLLZ-SDE-postproc-construction} for the definition of $\embed_Q$) and compute $c_b=(u,\tilde{m}+ \langle u, r\rangle)$ (where the operations are in the field $\ZQ$) if $b=1$.
    \item Set $\ct_b=(c_b,{c'})$.
    \end{enumerate}
    

\item Apply $(\bob(\ct_b,\cdot) \otimes \charlie(\ct_b,\cdot))(\sigma_{\bob,\charlie})$ to obtain $(b_{\bfB},b_{\bfC})$. 
\item Output $1$ if $b_{\bfB}=b_\bfC=b$.
\end{enumerate}

\noindent \underline{$\hybrid_1$}: 
\begin{enumerate}
\item $\ch$ samples $(\sk,\pk) \gets \keygen(1^\secparam)$ and $\rho_k\gets \qkeygen(k)$ and sends $\rho_{k},\pk$ to $\alice$. 
\item $\adversary(\rho_k,\pk)$ outputs $\sigma_{\bob,\charlie}$.
\item $\ch$ samples 
 $b\xleftarrow{\$} \{0,1\}$.
\item $\ch$ computes $\ct_b$ as follows:
    \begin{enumerate}
    \item Sample $r\xleftarrow{\$}\ZQ^\secparam$, and compute ${c'}\gets \cllz.\enc(\pk,r)$.
    \item Sample $u\xleftarrow{\$}\ZQ^\secparam$ and compute $c_b=(u,\langle u, r\rangle)$ if $b=0$, else \sout{sample $m\xleftarrow{\$}\{0,1\}^M$ (where $M$ is the bit-size of the messages), generate $\tilde{m}_Q\gets \embed_Q(m)$ (see Fig.~\ref{fig:CLLZ-SDE-postproc-construction} for the definition of $\embed_Q$) and compute $c_b=(u,\tilde{m}+ \langle u, r\rangle)$ (where the operations are in the field $\ZQ$) if $b=1$}  \cblue{sample $\tilde{m} \xleftarrow{\$} \ZQ$, and compute $c_b=(u,\tilde{m}+ \langle u, r\rangle)$ (where the operations are in the field $\ZQ$) if $b=1$}.
    \item Set $\ct_b=(c_b,{c'})$.
    \end{enumerate}
\item Apply $(\bob(\ct_b,\cdot) \otimes \charlie(\ct_b,\cdot))(\sigma_{\bob,\charlie})$ to obtain $(b_{\bfB},b_{\bfC})$. 
\item Output $1$ if $b_{\bfB}=b_\bfC=b$.
\end{enumerate}

The indistinguishability holds since the overall distribution of $\ct_b$ did not change across hybrids $\hybrid_0$ and $\hybrid_1$.

Consider the following search experiment against a pair of (uniform) efficient adversaries $\bob',\charlie'$.
\begin{enumerate}
\item $\ch$ samples $r\xleftarrow{\$}\ZQ^\secparam$.
\item $\ch$ computes $\sigma_{\bob,\charlie}$ as follows:
\begin{enumerate}
    \item Sample $(\sk,\pk) \gets \keygen(1^\secparam)$ and prepares $\rho_k\gets \qkeygen(k)$.
    \item Run $\alice(\rho_k,\pk)$ to get $\sigma_{\bob,\charlie}$.
\end{enumerate}
\item $\ch$ computes ${c'}\gets \cllz.\enc(\pk,r)$.
\item $\ch$ constructs the bipartite auxiliary state $\tau^{r}_{B,C}={c'}^\bob,\sigma_{\bob,\charlie},{c'}^\charlie$, i.e., the ${c'}^\bob,\sigma_\bob$ and ${c'}^\charlie,\sigma_\charlie$ are the two partitions, where ${c'}^\bob={c'}^\charlie={c'}$.
\item $\ch$ sends the respective registers of $\tau^{r}_{B,C}$ to $\bob'$ and $\charlie'$, and gets back the responses ${r'}^\bob$ and ${r'}^\charlie$ respectively.
\item Ouptput $1$ if ${r'}^\bob={r'}^\charlie=r$.
\end{enumerate}
 
Clearly, the winning probability of $(\bob',\charlie')$ in the above game is the same as the winning probability of $(\alice,\bob',\charlie')$ in the $\distrid$-search anti-piracy (see \Cref{subsec:sde}) of the $\cllz$ $\sde$ scheme given in \Cref{fig:CLLZ-SDE-construction}. Assuming the $\cllz$ $\sde$ satisfies  $\distrid$-search anti-piracy (see \Cref{subsec:sde}), there exists a negligible function $\epsilon'()$ such that the winning probability of $(\bob',\charlie')$ in the above game is $\epsilon'(\secparam)$. Since $Q>2^\secparam$, by \Cref{conj:goldreich-levin-correlated}, there exists a negligible function $\epsilon()$ such that the winning probability of $(\bob,\charlie)$ in the following indistinguishability game is at most $\frac{1}{2}+\epsilon(\secparam)$

\begin{enumerate}
\item $\ch$ samples $r\xleftarrow{\$}\ZQ^\secparam$.
\item $\ch$ computes $\sigma_{\bob,\charlie}$ as follows:
\begin{enumerate}
    \item Sample $(\sk,\pk) \gets \keygen(1^\secparam)$ and prepares $\rho_k\gets \qkeygen(k)$.
    \item Run $\alice(\rho_k,\pk)$ to get $\sigma_{\bob,\charlie}$.
\end{enumerate}
\item $\ch$ computes ${c'}\gets \cllz.\enc(\pk,r)$.
\item $\ch$ constructs the bipartite auxiliary state $\tau^{r}_{B,C}={c'}^\bob,\sigma_{\bob,\charlie},{c'}^\charlie$, i.e., the ${c'}^\bob,\sigma_\bob$ and ${c'}^\charlie,\sigma_\charlie$ are the two partitions, where ${c'}^\bob={c'}^\charlie={c'}$.
\item $\ch$ samples  $b\xleftarrow{\$} \{0,1\}$.
\item $\ch$ samples $u\xleftarrow{\$}\{0,1\}^q$ and compute $c_b=(u,\langle u, r\rangle)$ if $b=0$, else computes $c_b=(u,m)$ if $b=1$.
\item $\ch$ sends $c^\bob_b$ and $c^\charlie_b$ along with the respective registers of $\tau^{r^\bob,r^\charlie}_{B,C}$ to $\bob'$ and $\charlie'$ respectively, where $c^\bob_b=c^\charlie_b=c_b$ and gets back the responses $b^\bob$ and $b^\charlie$ respectively.
\item Output $1$ if $b_{\bfB}=b_\bfC=b$.
\end{enumerate}

\begin{enumerate}
\item $\ch$ samples $r^\bob,r^\charlie\xleftarrow{\$}\ZQ^\secparam$.
\item $\ch$ computes $\sigma_{\bob,\charlie}$ as follows:
\begin{enumerate}
    \item Sample $(\sk,\pk) \gets \keygen(1^\secparam)$ and prepares $\rho_k\gets \qkeygen(k)$.
    \item Run $\alice(\rho_k,\pk)$ to get $\sigma_{\bob,\charlie}$.
\end{enumerate}
\item $\ch$ computes ${c'}^\bob\gets \cllz.\enc(\pk,r^\bob)$, and  computes ${c'}^\charlie\gets \cllz.\enc(\pk,r^\charlie)$.
\item $\ch$ constructs the bipartite auxiliary state $\tau^{r^\bob,r^\charlie}_{B,C}={c'}^\bob,\sigma_{\bob,\charlie},{c'}^\charlie$, i.e., the ${c'}^\bob,\sigma_\bob$ and ${c'}^\charlie,\sigma_\charlie$ are the two partitions.
\item $\ch$ sends the respective registers of $\tau^{r^\bob,r^\charlie}_{B,C}$ to $\bob'$ and $\charlie'$, and gets back the responses ${r'}^\bob$ and ${r'}^\charlie$ respectively.
\item Ouptput $1$ if ${r'}^\bob=r^\bob$, and  ${r'}^\charlie=r^\charlie$.
\end{enumerate}

However, note that the view of the adversaries $\bob$ and $\charlie$ in the indistinguishability game above is the same as the view in $\hybrid_3$. Therefore, the winning probability of $(\alice,\bob,\charlie)$ in $\hybrid_1$ is at most $\frac{1}{2}+\epsilon(\secparam)$. This completes the proof of the lemma.

\end{proof}

\begin{lemma}\label{lemma:identical-search-anti-piracy-postproc}
     Assuimng post-quantum sub-exponentially secure $\io$ and quantum hardness of Learning-with-errors problem (LWE), the $\cllz$ $\sde$ (see \Cref{fig:CLLZ-SDE-construction}) satisfies $\distrid$-search anti-piracy (see \Cref{subsec:sde}).
\end{lemma}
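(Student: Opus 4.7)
The plan is to adapt the $\io$-based hybrid argument used by $\cllz$ to prove independent-challenge search anti-piracy (their Theorem~6.15) to the identical-challenge setting. At a high level the $\io$ hybrids are essentially unchanged; the only real difference is the monogamy-of-entanglement game to which we ultimately reduce.

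First I would apply a sequence of $\io$ hybrids to the single challenge ciphertext $\ct = (r, \tilde{Q})$ handed to both $\bob$ and $\charlie$. Using sub-exponential security of $\io$ together with compute-and-compare obfuscation instantiated from quantum hardness of LWE, the program $\tilde{Q}$ can be replaced by an obfuscation of a circuit that outputs $r$ only when given input vectors that simultaneously satisfy the membership oracles $\{R^{r_i}_i\}$, and outputs $\bot$ otherwise. Because $\bob$ and $\charlie$ receive the \emph{same} ciphertext in the identical-challenge setting, each hybrid touches only one obfuscated program rather than two independent ones, which in fact simplifies CLLZ's argument.

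Next I would reduce the resulting experiment to a parallel-repeated identical-challenge monogamy game for coset states: $\alice$ is given $\{\ket{A_{i,s_i,s'_i}}\}_{i \in [\ell_0]}$ together with the classical descriptions of the subspaces and cosets, produces a bipartite state, and then both $\bob$ and $\charlie$ receive the \emph{same} uniformly random string $r^* \in \{0,1\}^{\ell_0}$ (playing the role of the $r$ used by the challenger). Each must output a tuple $(v_1, \ldots, v_{\ell_0})$ satisfying $v_i \in A_i + s_i$ when $r^*_i = 0$ and $v_i \in A_i^\perp + s'_i$ when $r^*_i = 1$. Any successful attack on the post-$\io$ experiment yields a non-local adversary winning this monogamy game with essentially the same probability, since the only way to recover $r$ from the punctured obfuscation is to produce such vector tuples on both sides.

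The main obstacle, and the only substantive departure from CLLZ, is bounding the winning probability of this identical-challenge monogamy game. Unlike the dual game used in the independent-challenge proof, $\alice$ has the trivial strategy of measuring each coset state in a single fixed basis before splitting and broadcasting the classical results, which already wins with probability exactly $2^{-\ell_0}$; a generic parallel-repetition theorem for a per-coordinate game whose value is not bounded away from $1$ will not suffice. I would therefore invoke (or, if not already explicit in the literature, establish) a strong identical-challenge monogamy-of-entanglement theorem for parallel coset states, in the spirit of the strong monogamy property proven by $\cllz$ in their Section~4, showing that the optimal winning probability is at most $2^{-\ell_0} + \negl(\secparam)$. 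Combined with the polynomial $\io$ hybrid losses, this yields the desired negligible bound on the success probability of any QPT $(\alice, \bob, \charlie)$ as long as $\ell_0$ is chosen super-logarithmic in $\secparam$, completing the proof.
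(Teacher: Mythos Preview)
Your route is very different from the paper's and far more laborious. The paper's proof is two sentences: CLLZ's Theorem~6.15 already establishes $\distrprod$-search anti-piracy (independent challenges) under the stated assumptions, and then the generic lifting theorem of~\cite{AKL23} for cloning games whose trivial success probability is negligible immediately upgrades this to $\distrid$-search anti-piracy. No re-opening of the CLLZ hybrid argument is needed, and no new monogamy-of-entanglement bound has to be proved.

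Your plan of redoing the $\io$ hybrids in the identical-challenge setting and then reducing to a new identical-challenge parallel monogamy game could plausibly be made to work, but the step you yourself flag as ``the main obstacle'' --- bounding the identical-challenge monogamy game at $2^{-\ell_0}+\negl$ --- is genuinely nontrivial and, as you acknowledge, may not be in the literature. The AKL23 lifting result sidesteps all of this: because the search experiment has negligible trivial success in both the independent and identical settings, security under one challenge distribution automatically transfers to the other. This is exactly the kind of black-box tool to reach for before reproving CLLZ from scratch. As a minor aside, your description of the monogamy game hands $\alice$ ``the classical descriptions of the subspaces and cosets''; taken literally that makes the game trivially winnable with probability $1$, so presumably you meant $\alice$ gets only the states (and perhaps obfuscated membership oracles), with subspace descriptions revealed only to $\bob$ and $\charlie$ in the challenge phase.
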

\begin{proof}
    By~\cite[Theorem 6.15]{CLLZ21}, assuming the security of post-quantum sub-exponentially secure $\io$ and one-way functions, and quantum hardness of Learning-with-errors problem (LWE)
    , the $\cllz$ $\sde$  (see \Cref{fig:CLLZ-SDE-construction}) satisfies independent search anti-piracy. 
    Since the trivial success probabilities of the $\distrprod$-search anti-piracy and $\distrid$-search anti-piracy expeirments for $\sde$ are both negligible, by the lifting result in~\cite[Theorem ]{AKL23}\anote{Prabhanjan, can you point to the lifting theorem from the cloning games paper?}, we conclude that \Cref{lemma:identical-search-anti-piracy-postproc} holds.
\end{proof}

\fi


\subsection{Copy-Protection for PRFs with Preponed Security} \label{subsec:preponed-cp-prf}
\noindent We first introduce the definition of $\cpppiracy$ in~\Cref{subsubsec:preponedsde} and then we present the constructions of copy-protection in~\Cref{subsubsec:cons:preponed}.  

\subsubsection{Definition}
\label{subsubsec:preponedsde}
\noindent We introduce a new security notion for copy-protection called $\cpppiracy$. 
\par Consider a pseudorandom function family $\fclass=\{\fclass_{\secparam}\}_{\secparam \in \mathbb{N}}$, where $\fclass_{\secparam} = \{ f_k:\{0,1\}^{\ell(\secparam)} \rightarrow \{0,1\}^{\kappa(\secparam)}\ :\ k \in \{0,1\}^{\secparam} \}$. Moreover, $f_k$ can be implemented using a polynomial-sized circuit, denoted by $C_k$. 

\begin{definition}[Preponed Security]\label{def:cpcpiracy_PRFs}
    A copy-protection scheme $\cpscheme=(\copyprotect,\eval)$ for $\fclass$ (\Cref{sec:def:copyprotection}) satisfies $\distr_\inpclass$-$\cpppiracy$ 
     if for any $QPT$ $(\alice,\bob,\charlie)$, there exists a negligible function $\negl$ such that:
\[\Pr[\prepexpt^{\left(\alice,\bob,\charlie \right),\fclass,\distrprod}\left( 1^{\secparam}\right)=1]\leq \frac{1}{2}+\negl.\]
where $\prepexpt$ is defined in~\Cref{fig:preponedsecurity}.

We consider two instantiations of $\distr_\inpclass$:
\begin{enumerate}
    \item $\distrprod$ which is the product of uniformly random distribution on $\{0,1\}^\ell$, meaning $x_1,x_2\gets \distrprod(1^\secparam)$ where $x_1,x_2\xleftarrow{\$}\{0,1\}^\ell$ independently.
    \item $\distrid$ which is the perfectly correlatd distribution on $\{0,1\}^\ell$ with uniform marginals, meaning $x,x\gets \distrid(1^\secparam)$ where $x\xleftarrow{\$}\{0,1\}^\ell$.
\end{enumerate}


\end{definition}

 \begin{figure}[!htb]
   \begin{center} 
   \begin{tabular}{|p{12cm}|}
    \hline 
\begin{center}
\underline{$\prepexpt^{\left(\alice,\bob,\charlie \right),\cpscheme,\distr_\inpclass}\left( 1^{\secparam} \right)$}: 
\end{center}
\begin{enumerate}
\item $\ch$ samples $k\gets \keygen(1^\secparam)$, then  generates $\rho_{C_k} \leftarrow \copyprotect(1^{\secparam},C_k)$ and sends $\rho_{f_k}$ to $\alice$. 
\item $\ch$ samples $x^\bob,x^\charlie \gets \distr_\inpclass(1^\secparam)$, $b\xleftarrow{\$} \{0,1\}$. Let $y^\bob_1=f(x^\bob),y^\charlie_1=f(x^\charlie)$
, and $y^\bob_0=y_1,y^\charlie_0=y_2$ where $y_1,y_2 \xleftarrow{\$} \{0,1\}^{\kappa(\secparam)}$. $\ch$ gives $(y^\bob_{b},y^\charlie_{b})$ to Alice.
\item $\alice(\rho_{C_k})$ outputs a bipartite state $\sigma_{\bob,\charlie}$.
\item Apply $(\bob(x^\bob,\cdot) \otimes \charlie(x^\charlie,\cdot))(\sigma_{\bob,\charlie})$ to obtain $(b_{\bfB},b_{\bfC})$. 
\item Output $1$ if $b_{\bfB}=b_\bfC=b$.
\end{enumerate}
\ \\ 
\hline
\end{tabular}
    \caption{Preponed security experiment for copy-protection of PRFs with respect to the distribution $\distr_\inpclass$.}
    \label{fig:preponedsecurity}
    \end{center}
\end{figure}

\subsubsection{Construction} 
\label{subsubsec:cons:preponed}

The $\cllz$ copy-protection scheme is given in \Cref{fig:copy-protect-CLLZ}.

\begin{figure}[!htb]
   \begin{center} 
   \begin{tabular}{|p{12cm}|}
    \hline 
\noindent\textbf{Assumes:} Punctrable and extractable $\prf$ family $F_1=(\keygen,\eval)$ (represented as $F_1(k,x)=\prf.\eval(k,\cdot)$) and secondary $\prf$ family $F_2,F_3$ with some special properties as noted in~\cite{CLLZ21}\\
\ \\
\noindent$\copyprotect(K_1)$: 
\begin{compactenum}
    \item Sample secondary keys $K_2,K_3$, and $\{\{\ket{{A_i}_{s_i,{s'}_i}}\}_i\}$, and compute the coset state $\{\{\ket{{A_i}_{s_i,{s'}_i}}\}_i\}$.
    \item Compute $\tilde{P}\gets \io(P)$ where $P$ is as given in \Cref{fig:CLLZ-circuit-P}.
    \item Output $\rho=(\tilde{P},\{\{\ket{{A_i}_{s_i,{s'}_i}}\}_i\} )$.
\end{compactenum}
\ \\
\noindent$\eval(\rho,x)$:
\begin{compactenum}
    \item Interprete $\rho=(\tilde{P},\{\{\ket{{A_i}_{s_i,{s'}_i}}\}_i\} )$.
    \item Let $x=x_0\|x_1\|x_2$, where $x_0=\ell_0$. For every $i\in [\ell_0]$, if $x_{0,i}=1$ apply $H^{\otimes n}$ on $\ket{{A_i}_{s_i,{s'}_i}}$. Let the resulting state be $\ket{\psi_x}$.
    \item Run the circuit $\tilde{C}$  in superposition on the input registers $(X,V)$ with the initial state $(x,\ket{\psi_x})$ and then measure the output register to get an output $y$.
\end{compactenum}
\ \\ 
\hline
\end{tabular}
    \caption{$\cllz$ copy-protection for $\prf$s.}
    \label{fig:copy-protect-CLLZ}
    \end{center}
\end{figure}

 \begin{figure}[!htb]
   \begin{center} 
   \begin{tabular}{|p{12cm}|}
    \hline 
\begin{center}
\underline{$P$}: 
\end{center}
 Hardcoded keys $K_1,K_2,K_3,R^0_i,R^1_i$ for every $i\in [\ell_0]$
 On input $x=x_0\|x_1\|x_2$ and vectors $v=v_1,\ldots v_{\ell_0}$.
 \begin{enumerate}
     \item If $F_3(K_3,x_1)\oplus x_2=x_0\|Q$ and $x_1=F_2(K_2,x_0\|Q)$:
     
     \textbf{Hidden trigger mode:} Treat $Q$ as a classical circuit and output $Q(v)$.
     \item Otherwise, check if the following holds: for all $i\in \ell_0$, $R^{x_{0,i}}(v_i)=1$ (where $x_{0,i}$ is the $i^{th}$ coordinate of $x_0$).
     
     \textbf{Normal mode:} If so, output $F_1(K_1,x)$ where $F_1()=\prf.\eval()$ is the primary pseudorandom function family that is being copy-protected. Otherwise output $\bot$.\label{line:normal-mode-cllz-construction}
 \end{enumerate}
\ \\ 
\hline
\end{tabular}
    \caption{Circuit $P$ in $\cllz$ copy-protection of $\prf$.}
    \label{fig:CLLZ-circuit-P-construction}
    \end{center}
\end{figure}

\paragraph{Construction of Copy-Protection.} 

\begin{proposition}\label{prop:cpcpiracy_PRF_from_cpcpiracysde-noncllz}
    Assuming the existence of post-quantum $\io$, and one-way functions, and if there exists a $\postproc$ $\SDE$ scheme that satisfies $\distrcipherind$-indistinguishability from random anti-piracy, see \Cref{subsec:sde}, then the $\cllz$ copy-protection construction in~\cite[Section 7.3]{CLLZ21} (see \Cref{fig:copy-protect-CLLZ}) satisfies $\distrprod$-$\cpppiracy$ (\Cref{def:cpcpiracy_PRFs}). 
\end{proposition}

\begin{proposition}\label{prop:cpcpiracy_PRF_from_cpcpiracysde-id-noncllz}
    Assuming the existence of post-quantum $\io$, and one-way functions, and if there exists a $\postproc$ $\SDE$ scheme that satisfies $\distrcipherid$-indistinguishability from random anti-piracy, see \Cref{subsec:sde}, then the $\cllz$ copy-protection construction in~\cite[Section 7.3]{CLLZ21} (see \Cref{fig:copy-protect-CLLZ}) satisfies $\distrid$-$\cpppiracy$ (\Cref{def:cpcpiracy_PRFs}). 
\end{proposition}

\ifllncs
The proof of lemmas \Cref{prop:cpcpiracy_PRF_from_cpcpiracysde-noncllz,prop:cpcpiracy_PRF_from_cpcpiracysde-id-noncllz} can be found in~\Cref{sec:app:proof:cp}. 
\else 
\ifllncs 
\subsection{Proof of~\Cref{prop:cpcpiracy_PRF_from_cpcpiracysde-noncllz,prop:cpcpiracy_PRF_from_cpcpiracysde-id-noncllz}}
\label{sec:app:proof:cp}
\else 

\fi

\begin{proof}[\textbf{Proof of \cref{prop:cpcpiracy_PRF_from_cpcpiracysde-noncllz}}]
    To prove the proposition, we adopt the proof of~\cite[Theorem 7.12, Appendix F]{CLLZ21}. 
\par We will start with a series of hybrids. 
The changes are marked in \cblue{blue}. \\

\noindent \underline{$\hybrid_0$}: Same as $\prepexpt^{\left(\alice,\bob,\charlie \right),\cpscheme,\distr_\inpclass}\left( 1^{\secparam} \right)$ 
 (see Game~\ref{fig:preponedsecurity}) where $\distr=\distrprod$ (see the definition in \Cref{def:cpcpiracy_PRFs}) for the $\cllz$ copy-protection scheme see~\Cref{fig:copy-protect-CLLZ}.
\begin{enumerate}
    \item $\ch$ samples $K_1\gets \prf.\gen(1^\secparam)$ and generates $\rho=(\{\ket{{A_i}_{s_i,{s'}_i}}\}_{i\in \ell_0},\io(P))\gets \cllz.\qkeygen(K_1)$, and sends $\rho$ to $\alice$. $P$ has $K_1,K_2,K_3$ hardcoded in it where $K_2,K_3$ are the secondary  keys.
    \item $\ch$ generates $x^\bob,x^\charlie\xleftarrow{\$}\{0,1\}^n$, where $x^\bob=x^\bob_0\|x^\bob_1\|x^\bob_2, x^\charlie=x^\charlie_0\|x^\charlie_1\|x^\charlie_2$ and computes $y_0^\bob\gets \prf.\eval(K_1,x^\bob)$ and $y^\charlie_0\gets \prf.\eval(K_1,x^\charlie)$.
    \item $\ch$ also samples $y^\bob_1,y^\charlie_1 \xleftarrow{\$} \{0,1\}^m$.
    \item $\ch$ samples $b\xleftarrow{\$} \{0,1\}$, and sends $\alice$, $(\rho,y^\bob_b,y^\charlie_b)$.
    \item $\alice$ on receiving $(\rho,y^\bob_b,y^\charlie_b)$ produces a bipartite state  $\sigma_{\bob,\charlie}$.
    \item  Apply $(\bob(x^\bob,\cdot) \otimes \charlie(x^\charlie,\cdot))(\sigma_{\bob,\charlie})$ to obtain $(b^{\bob},b^{\charlie})$. 
    \item Output $1$ if $b^\bob=b^\charlie=b$, else $0$. 
\end{enumerate}

\noindent \underline{$\hybrid_1$}: We modify the sampling procedure of the challenge inputs $x^{\bob}$ and $x^{\charlie}$. 
\begin{enumerate}
    \item $\ch$ samples $K_1\gets \prf.\gen(1^\secparam)$ and generates $\rho=(\{\ket{{A_i}_{s_i,{s'}_i}}\}_{i\in \ell_0},\io(P))\gets \cllz.\qkeygen(K_1)$, and sends $\rho$ to $\alice$. $P$ has $K_1,K_2,K_3$ hardcoded in it where $K_2,K_3$ are the secondary  keys.
    \item $\ch$ generates $x^\bob,x^\charlie\xleftarrow{\$}\{0,1\}^n$, where $x^\bob=x^\bob_0\|x^\bob_1\|x^\bob_2, x^\charlie=x^\charlie_0\|x^\charlie_1\|x^\charlie_2$
    and computes $y_0^\bob\gets \prf.\eval(K_1,x^\bob)$ and $y^\charlie_0\gets \prf.\eval(K_1,x^\charlie)$. 
    \item \cblue{$\ch$ also computes $x^\bob_\trig\gets \gentrig(x^\bob_0,y^\bob_0,K_2,K_3,\{{A_i}_{s_i,{s'}_i}\}_{i\in \ell_0})$,
    
    and $x^\charlie_\trig\gets \gentrig(x^\charlie_0,y^\charlie,K_2,K_3,\{{A_i}_{s_i,{s'}_i}\}_{i\in \ell_0})$.}
    \item $\ch$ also samples $y^\bob_1,y^\charlie_1 \xleftarrow{\$} \{0,1\}^m$.
    \item $\ch$ samples $b\xleftarrow{\$} \{0,1\}$ and sends $\alice$ $(\rho,y^\bob_b,y^\charlie_b)$.
    \item $\alice$ on receiving $(\rho,y^\bob_b,y^\charlie_b)$ produces a bipartite state  $\sigma_{\bob,\charlie}$.
    \item  Apply $(\bob(($ \sout{$x^\bob$} \cblue{$x^\bob_\trig$} $,\cdot) \otimes \charlie(($ \sout{$x^\charlie$} \cblue{$x^\charlie_\trig$} $,\cdot))(\sigma_{\bob,\charlie})$ to obtain $(b^{\bob},b^{\charlie})$. 
    \item Output $1$ if $b^\bob=b^\charlie=b$, else $0$. 
\end{enumerate}

\begin{claim}
Assuming the security of $\prf$, hybrids $\hybrid_1$ and $\hybrid_2$ are computationally indistinguishable. 
\end{claim}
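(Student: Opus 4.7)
The plan is to reduce this indistinguishability to the pseudorandomness of the secondary PRFs $F_2$ and $F_3$ used inside the circuit $P$, combined with $\io$ security, following the Sahai--Waters-style puncturing technique as adapted in the $\cllz$ proof~\cite{CLLZ21}. The transition from $\hybrid_1$ to $\hybrid_2$ is aimed at replacing the trigger-generation step with an equivalent step in which the trigger components are sampled uniformly at random, so that the inputs $x^\bob_\trig, x^\charlie_\trig$ eventually fed to $\bob$ and $\charlie$ become distributed as fresh uniform strings, which matches the distribution of $x^\bob, x^\charlie$ in the previous hybrid.

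Concretely, I would first move the $\gentrig$ computation inside a sequence of sub-hybrids that puncture $K_3$ at the PRF queries needed to form $x^\bob_\trig$ and $x^\charlie_\trig$. For each punctured point, I appeal to the puncturable PRF security of $F_3$ to replace $F_3(K_3,\cdot)$ with a freshly sampled random string; this is safe because the only copy of $K_3$ outside the obfuscated program $\io(P)$ is used inside $\gentrig$, and the obfuscated program can itself be replaced (via $\io$) by an equivalent program that has $K_3$ punctured at those two queries and the corresponding outputs hardwired. After this step, the $x_2$ components of the trigger inputs become uniformly random; a symmetric argument with $F_2$ and $K_2$ then makes the $x_1$ components uniform as well. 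At that point, $x^\bob_\trig$ and $x^\charlie_\trig$ are distributed exactly as uniform random strings over $\{0,1\}^n$, so $\hybrid_2$ can be identified with the hybrid that simply passes uniform strings to $\bob$ and $\charlie$ (while still using the original $y^\bob_0, y^\charlie_0$ for $\alice$).

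The main obstacle is that the PRF keys $K_2$ and $K_3$ live inside $\io(P)$, so puncturable PRF security cannot be invoked in a black-box fashion. This forces a careful $\io$-based program-switch step: hardwire the PRF outputs at the relevant punctured points into $P$, observe that the hardwired and unhardwired programs are functionally identical (since puncturable PRFs agree with the unpunctured PRF at non-punctured points), invoke $\io$ to switch them, and only then invoke puncturable PRF security to re-randomize the hardwired values. A secondary subtlety is the coincidence event $x^\bob = x^\charlie$; under $\distrprod$ this occurs with negligible probability and can be absorbed into the final indistinguishability gap. The entire argument is a subroutine of the longer proof in~\cite[Appendix F]{CLLZ21}, and the claim follows by instantiating their hybrids through this transition.
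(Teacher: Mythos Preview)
Your sketch argues that trigger inputs produced by $\gentrig$ are computationally indistinguishable from uniformly random strings, by puncturing the secondary PRF keys $K_2, K_3$ inside $\io(P)$ and invoking puncturable-PRF security together with $\io$. This is precisely the content of~\cite[Lemma 7.17]{CLLZ21}, which is exactly what the paper's proof of this claim cites. In that narrow sense you and the paper take the same route.

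However, you have misidentified what changes between $\hybrid_1$ and $\hybrid_2$ as those hybrids are actually defined. In both $\hybrid_1$ and $\hybrid_2$ the trigger inputs $x^\bob_\trig, x^\charlie_\trig$ are the values fed to $\bob$ and $\charlie$; the only change in $\hybrid_2$ is that $y_0^\bob, y_0^\charlie$ are sampled uniformly at random rather than as $F_1(K_1, x^\bob), F_1(K_1, x^\charlie)$. That step is argued (in the paper's text immediately following the $\hybrid_2$ definition) via the \emph{extractor} property of the primary PRF $F_1$---a statistical argument involving $F_1$, not $F_2, F_3$ or $\io$. Your description of $\hybrid_2$ as ``the hybrid that simply passes uniform strings to $\bob$ and $\charlie$ while still using the original $y^\bob_0, y^\charlie_0$'' is in fact the paper's $\hybrid_0$, not $\hybrid_2$. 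The claim statement in the paper carries an off-by-one typo (its accompanying proof argues $\hybrid_0 \approx \hybrid_1$, not $\hybrid_1 \approx \hybrid_2$), which is likely the source of the confusion; your argument is correct for that earlier transition but does not address the literal $\hybrid_1 \to \hybrid_2$ step.
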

\begin{proof}
$\hybrid_1$ is computationally indistinguishable from $\hybrid_0$ due to~\cite[Lemma 7.17]{CLLZ21}. The same arguments via~\cite[Lemma 7.17]{CLLZ21} were made in showing the indistinguishability between hybrids $\hybrid_0$ and $\hybrid_1$ in the proof of~\cite[Theorem 7.12]{CLLZ21}. \pnote{I'm a bit confused: how does the indistinguishability follow from Lemma 7.17 and also 7.12? is it supposed to be just one. I have added the claim below, please check} \anote{Yeah I mentioned the theorem 7.12 since this hybrid argument was made there as well in a similar fashion, but I agree that it is confusing}\anote{Is it better now?}

\end{proof}

\noindent \underline{$\hybrid_2$}: We modify the generation of the outputs $y_0^{\bob}$ and $y_{0}^{\charlie}$. 
\begin{enumerate}
    \item $\ch$ samples $K_1\gets \prf.\gen(1^\secparam)$ and generates $\rho=(\{\ket{{A_i}_{s_i,{s'}_i}}\}_{i\in \ell_0},\io(P))\gets \cllz.\qkeygen(K_1)$, and sends $\rho$ to $\alice$. $P$ has $K_1,K_2,K_3$ hardcoded in it where $K_2,K_3$ are the secondary  keys.
    \item $\ch$ generates $x^\bob,x^\charlie\xleftarrow{\$}\{0,1\}^n$, where $x^\bob=x^\bob_0\|x^\bob_1\|x^\bob_2, x^\charlie=x^\charlie_0\|x^\charlie_1\|x^\charlie_2$
    and \sout{computes $y_0^\bob\gets \prf.\eval(K_1,x^\bob)$ and $y^\charlie_0\gets \prf.\eval(K_1,x^\charlie)$} \cblue{samples $y^\bob_0,y^\charlie_0\xleftarrow{\$}\{0,1\}^m$}. 
    \item $\ch$ also computes $x^\bob_\trig\gets \gentrig(x^\bob_0,y^\bob_0,K_2,K_3,\{{A_i}_{s_i,{s'}_i}\}_{i\in \ell_0})$,

    and $x^\charlie_\trig\gets \gentrig(x^\charlie_0,y^\charlie,K_2,K_3,\{{A_i}_{s_i,{s'}_i}\}_{i\in \ell_0})$.
    \item $\ch$ also samples $y^\bob_1,y^\charlie_1 \xleftarrow{\$} \{0,1\}^m$.
    \item $\ch$ samples $b\xleftarrow{\$} \{0,1\}$ and sends $\alice$ $(\rho,y^\bob_b,y^\charlie_b)$.
    \item $\alice$ on receiving $(\rho,y^\bob_b,y^\charlie_b)$  produces a bipartite state  $\sigma_{\bob,\charlie}$.
    \item  Apply $(\bob(x^\bob_\trig,\cdot) \otimes \charlie(x^\charlie_\trig,\cdot))(\sigma_{\bob,\charlie})$ to obtain $(b^{\bob},b^{\charlie})$. 
    \item Output $1$ if $b^\bob=b^\charlie=b$, else $0$. 
\end{enumerate}

\noindent $\hybrid_2$ is statistically indistinguishable from $\hybrid_1$ due to the extractor properties of the primary $\prf$ family. For more details, refer to the proof of see~\cite[Theorem 7.12]{CLLZ21}. 

\begin{claim}
Assuming the extractor properties of $\prf$, hybrids $\hybrid_2$ and $\hybrid_3$ are statistically indistinguishable. 
\end{claim}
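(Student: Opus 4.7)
The plan is to first make explicit what $\hybrid_3$ ought to be: namely, the variant of $\hybrid_2$ in which the obfuscated program $P$ inside $\rho$ is replaced (via $\io$) by an equivalent program $P'$ that has the primary PRF key $K_1$ punctured at the two challenge points $x^\bob, x^\charlie$, with the uniformly sampled $y^\bob_0, y^\charlie_0$ of $\hybrid_2$ hardwired as the outputs of $P'$ in normal mode on these two points. This is the natural next step in the CLLZ template, because it disentangles the obfuscated circuit from knowledge of the true PRF values at $x^\bob, x^\charlie$ and thereby sets up the later reduction to the SDE game.

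I would then argue statistical closeness of $\hybrid_2$ and $\hybrid_3$ in two sub-steps. The first sub-step invokes the extractor / unpredictability property of the extractable puncturable $\prf$ family: for a random key $K_1$ and independent uniform $x^\bob, x^\charlie \xleftarrow{\$} \{0,1\}^n$, the tuple $\bigl(K_1\{x^\bob,x^\charlie\}, \prf.\eval(K_1,x^\bob), \prf.\eval(K_1,x^\charlie)\bigr)$ is statistically close to $\bigl(K_1\{x^\bob,x^\charlie\}, u^\bob, u^\charlie\bigr)$ with $u^\bob, u^\charlie$ independent uniform. Since in $\hybrid_2$ the values $y^\bob_0, y^\charlie_0$ used both for the triggers and to be hardwired are already fresh uniforms, this lets us pretend that the true PRF outputs at $x^\bob, x^\charlie$ are exactly $y^\bob_0, y^\charlie_0$ up to negligible statistical distance. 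The second sub-step then applies $\io$ between $P$ and $P'$, which is a genuine functionality-preserving rewrite conditioned on that pretence, and the triggers $x^\bob_\trig, x^\charlie_\trig$ are left untouched because puncturing the normal-mode branch does not affect the hidden-trigger code path of $P$ in Figure~\ref{fig:CLLZ-circuit-P-construction}.

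The main obstacle I expect is handling the two-point version of the extractor property: the standard statement of extractability in prior work is usually for a single punctured point, so I would either appeal to the two-point extractable puncturable PRF that CLLZ explicitly use in Section~7.3, or insert a two-step micro-hybrid that punctures one point at a time and applies the extractor guarantee sequentially, using that $\distrprod$ samples $x^\bob, x^\charlie$ independently so the two invocations are to independent instances. A secondary subtlety is that the input to $\alice$ already contains $y^\bob_0, y^\charlie_0$, so the statistical argument must be made jointly with these "side-channels" present; fortunately the extractor statement above is phrased precisely as a joint statistical closeness that already includes the output values, so this imposes no extra work.

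Finally, since both sub-steps contribute only a negligible statistical gap (the extractor step being statistical by definition of extractability, and the $\io$ step being functionality-preserving and thus in particular statistical for a perfectly-correct $\io$, or negligibly close otherwise), the overall transition from $\hybrid_2$ to $\hybrid_3$ is statistical, matching the wording of the claim.
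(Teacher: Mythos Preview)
You have misidentified $\hybrid_3$. In the paper, $\hybrid_3$ does not puncture $K_1$ inside $P$; it is a purely syntactic reformulation of $\hybrid_2$ in the language of the $\cllz$ single-decryptor scheme (sample the coset states and $\pk$ first, generate $(x_0,Q)\gets\cllz.\enc(\pk,y_b)$, and build the trigger from that). The very next claim in the paper asserts that $\hybrid_2$ and $\hybrid_3$ are in fact \emph{identically} distributed. The step that uses the extractor property is the transition $\hybrid_1\to\hybrid_2$; the index ``$\hybrid_2$ and $\hybrid_3$'' in the claim you were given is an off-by-one slip (compare the sentence just before the claim, ``$\hybrid_2$ is statistically indistinguishable from $\hybrid_1$ due to the extractor properties'', and the correctly-numbered analogous claim in the identical-challenge proof). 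The paper's one-line argument simply defers to the matching step in CLLZ's Theorem~7.12: after the challenges are replaced by hidden triggers in $\hybrid_1$, the experiment depends on $x^\bob,x^\charlie$ only through the prefixes $x^\bob_0,x^\charlie_0$ and through $y_0^\bob=F_1(K_1,x^\bob),\,y_0^\charlie=F_1(K_1,x^\charlie)$; since the suffixes $x^\bob_1\|x^\bob_2$ and $x^\charlie_1\|x^\charlie_2$ retain full min-entropy, the extracting property of $F_1$ makes $y_0^\bob,y_0^\charlie$ statistically uniform even given $K_1$ and everything else. No puncturing of $K_1$ inside $P$ and no $\io$ swap is done anywhere in this hybrid sequence.

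Independently, your second sub-step has a genuine error that would break the ``statistical'' conclusion even for your intended $\hybrid_3$: you write that the $\io$ replacement of $P$ by a functionally equivalent $P'$ is ``statistical for a perfectly-correct $\io$''. Perfect correctness only says the obfuscated circuit computes the same function; it says nothing about the distributions $\io(P)$ and $\io(P')$ being statistically close. The security of $\io$ is a computational guarantee, so any transition that invokes it is at best computational, not statistical.
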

\begin{proof}
The proof is identical to the proof of indistinguishability of $\hybrid_1$ and $\hybrid_2$ in the proof of~\cite[Theorem 7.12]{CLLZ21}.
\end{proof}

\noindent \underline{$\hybrid_3$}: This hybrid is a reformulation of $\hybrid_2$ in terms of the $\cllz$ $\sde$ scheme, see \cref{fig:CLLZ-SDE-construction}.
\begin{enumerate}
    \item \cblue{$\ch$ samples $\{{A_i}_{s_i,{s'}_i}\}_{i\in \ell_0}$ and generates $\{\ket{{A_i}_{s_i,{s'}_i}}\}_{i\in \ell_0}$, and treats it as the quantum decryption key for the $\cllz$ 
single-decryptor encryption scheme (see \cref{fig:CLLZ-SDE-construction}), where the secret key is $\{{A_i}_{s_i,{s'}_i}\}_{i\in \ell_0}$. $\ch$ also generates $\pk=\{R^0_i,R^1_i\}_{i\in \ell_0}$, where for every $i\in [\ell_0]$, $R^0_i=\io(A_i + s_i)$ and $R^1_i=\io(A_i^\perp + {s'}_i)$.}  
    \item $\ch$ \sout{generates $x^\bob,x^\charlie\xleftarrow{\$}\{0,1\}^n$, where $x^\bob=x^\bob_0\|x^\bob_1\|x^\bob_2, x^\charlie=x^\charlie_0\|x^\charlie_1\|x^\charlie_2$
    and} samples $y^\bob_0,y^\charlie_0\xleftarrow{\$}\{0,1\}^m$. 
    \item \sout{$\ch$ also computes $x^\bob_\trig\gets \gentrig(x^\bob_0,y^\bob_0,K_2,K_3,\{{A_i}_{s_i,{s'}_i}\}_{i\in \ell_0})$, \\

    and $x^\charlie_\trig\gets \gentrig(x^\charlie_0,y^\charlie,K_2,K_3,\{{A_i}_{s_i,{s'}_i}\}_{i\in \ell_0})$.}
    \item $\ch$ also samples $y^\bob_1,y^\charlie_1 \xleftarrow{\$} \{0,1\}^m$.
    \item $\ch$ samples $b\xleftarrow{\$} \{0,1\}$\cblue{, and generates $x^\bob_0,Q^\bob\gets \cllz.\enc(\pk,y^\bob_b)$ and $x^\charlie_0,Q^\charlie\gets \cllz.\enc(\pk,y^\charlie_b)$.   }
    \item \cblue{$\ch$ samples keys $K_1,K_2,K_3$ and constructs the program $P$ which hardcodes $K_1,K_2,K_3$. It then prepares $\rho=(\{\ket{{A_i}_{s_i,{s'}_i}}\}_{i\in \ell_0},\io(P))$ and sends to $\alice$.}
     \item $\alice$ on receiving  $(\rho,y^\bob_b,y^\charlie_b)$  produces a bipartite state  $\sigma_{\bob,\charlie}$.
     \item \cblue{$\ch$ then generates $x^\bob_\trig,x^\charlie_\trig\in \{0,1\}^n$  as follows:
     \begin{enumerate}
         \item Let ${x^\bob_\trig}_1=F_2(K_2,x^\bob_0\|Q^\bob)$ and ${x^\bob_\trig}_2=F_3(K_3,{x^\bob_\trig}_1)$. Let $x^\bob_\trig=x^\bob_0\|{x^\bob_\trig}_1\|{x^\bob_\trig}_2.$
         \item Let ${x^\charlie_\trig}_1=F_2(K_2,x^\charlie_0\|Q^\charlie)$ and ${x^\charlie_\trig}_2=F_3(K_3,{x^\charlie_\trig}_1)$. Let $x^\charlie_\trig=x^\charlie_0\|{x^\charlie_\trig}_1\|{x^\charlie_\trig}_2.$
     \end{enumerate}
     
     }
    \item  Apply $(\bob(x^\bob_\trig,\cdot) \otimes \charlie(x^\charlie_\trig,\cdot))(\sigma_{\bob,\charlie})$ to obtain $(b^{\bob},b^{\charlie})$. 
    \item Output $1$ if $b^\bob=b^\charlie=b$, else $0$. 
\end{enumerate}

\begin{claim}
The output distributions of the hybrids $\hybrid_2$ and $\hybrid_3$ are identically distributed. 
\end{claim}
\begin{proof}
The proof is identical to the proof of indistinguishability of $\hybrid_2$ and $\hybrid_3$ in the proof of~\cite[Theorem 7.12]{CLLZ21}.
\end{proof}


\noindent Finally we give a reduction from $\hybrid_{3}$ to the indistinguishability from random anti-piracy experiment (\cref{fig:indistinguishability_from_random_-sde-anti-piracy}) for $\postproc$ single-decryptor encryption scheme, where $\cllz$ $\sde$ is the one given in \cref{fig:CLLZ-SDE-construction}, for more details see~\cite[Construction 1, Section 6.3, pg. 39]{CLLZ21}. Let $(\alice,\bob,\charlie)$ be an adversary in $\hybrid_{3}$ above. Consider the following non-local adversary $(\reduc_\alice,\reduc_\bob,\reduc_\charlie)$:
\begin{enumerate}
    \item $\reduc_\alice$ samples $y^\bob_0,y^\bob_1,y^\charlie_0,y^\charlie_1\xleftarrow{\$}\{0,1\}^m$.
    \item $\reduc_\alice$ gets the quantum decryptor $\{\ket{{A_i}_{s_i,{s'}_i}}\}_{i\in \ell_0}$ and a public key $\pk=(R^0_i,R^1_i)$ from $\ch$, the challenger in the correlated challenge $\SDE$ anti-piracy experiment (see \cref{fig:correlated-sde-cpa-style-anti-piracy}) for the $\cllz$ $\SDE$ scheme. 
    \item $\reduc_\alice$ samples $K_1,K_2,K_3$ and prepares the circuit $P$ using $R^0_i,R^1_i$ and the keys $K_1,K_2,K_3$. Let $\rho=\{\ket{{A_i}_{s_i,{s'}_i}}\}_{i\in \ell_0},\io(P))$.

    \item $\reduc_\alice$ samples a bit $d\xleftarrow{\$}\{0,1\}$
     and runs $\alice$ on  $(\rho,y^\bob_d,y^\charlie_d)$ and gets back the output $\sigma_{\bob,\charlie}$.
    \item $\reduc_\alice$ sends $(K_1,K_2,K_3,d,\sigma_\bob)$ to $\reduc_\bob$ and  $(K_1,K_2,K_3,d,\sigma_\charlie)$ to $\reduc_\charlie$.
    \item $\reduc_\bob$ on receiving $(c^\bob,(x^\bob_0,T^\bob))$  as the challenge cipher text from $\ch$ as the challenge ciphertext and $K_1,K_2,K_3,d,\sigma_\bob$ from $\reduc_\alice$, does the following:
    \begin{enumerate}
        \item $\reduc_\bob$ generates the circuit $Q^\bob$ which on any input $x_0$ generates $r\gets T^\bob(x_0)$ and if the output is $\bot$ outputs $\bot$, else computes $\decproc(c^\bob,r)$ and if the outcome is $0$, output $y^\bob_0$, else output $y^\bob_1$. $\reduc_\bob$ generates $\tilde{Q}^\bob\gets \io(Q^\bob)$.
        \item $\reduc_\bob$ constructs $x^\bob_\trig$ as follows. Let ${x^\bob_\trig}_1=F_2(K_2,x^\bob_0\|\widetilde{Q}^\bob)$ and ${x^\bob_\trig}_2=F_3(K_3,{x^\bob_\trig}_1)$. Let $x^\bob_\trig=x^\bob_0\|{x^\bob_\trig}_1\|{x^\bob_\trig}_2.$
        \item $\reduc_\bob$ runs $\bob$ on $(x^\bob_\trig,\sigma_\bob)$ to get an output $b^\bob$.
        \item $\reduc_\bob$ outputs $b^\bob \oplus d$.
    \end{enumerate}

    \item Similarly, $\reduc_\charlie$ on receiving $(c^\charlie,(x^\charlie_0,T^\charlie))$ as the challenge cipher text from $\ch$ and $K_1,K_2,K_3,d,\sigma_\charlie$ from $\reduc_\alice$, does the following:
    \begin{enumerate}
         \item $\reduc_\charlie$ generates the circuit $Q^\charlie$ which on any input $x_0$ generates $r\gets T^\charlie(x_0)$ and if the output is $\bot$ outputs $\bot$, else computes $\decproc(c^\bob,r)$ and if the outcome is $0$, output $y^\charlie_0$, else output $y^\charlie_1$. $\reduc_\charlie$ generates $\tilde{Q}^\charlie\gets \io(Q^\charlie)$.
        \item $\reduc_\charlie$ constructs $x^\charlie_\trig$ as follows. Let ${x^\charlie_\trig}_1=F_2(K_2,x^\charlie_0\|\widetilde{Q}^\charlie)$ and ${x^\charlie_\trig}_2=F_3(K_3,{x^\charlie_\trig}_1)$. Let $x^\charlie_\trig=x^\charlie_0\|{x^\charlie_\trig}_1\|{x^\charlie_\trig}_2.$
        \item $\reduc_\charlie$ runs $\charlie$ on $(x^\charlie_\trig,\sigma_\charlie)$ to get an output $b^\charlie$.
        \item $\reduc_\charlie$ outputs $b^\charlie \oplus d$.
    \end{enumerate}

\end{enumerate}
Note that the functionality of $Q^\bob$ and $Q^\charlie$ are the same as that of $W^\bob,W^\charlie$ in the ciphertexts $(x^\bob_0,W^\bob)$ and $(x^\charlie_0,W^\charlie)$ obtained by running $\cllz.\enc(\pk,\cdot)$ algorithm on $y^\bob_b$ and $y^\charlie_b$ with $x^\bob_0$ and $x^\charlie_0$ as the randomness respectively.
Note that in $\hybrid_3$, $\bob$ (and similarly, $\charlie$) needs to distinguish between the following two inputs: a random string $y^\bob$ along with either a triggered input $x^\bob$ encoding $y^\bob$ which is also the view of the inside adversary in the reduction above in the event $b=d$ in the simulated experiment; or a  triggered input $x^\bob$ encoding $\tilde{y}^\bob$ random string where $\tilde{y}^\bob\xleftarrow{\$}$ sampled independent of $y^\bob$,  which is the view of the inside adversary in the reduction above in the event $b\neq d$ in the simulated experiment. Therefore, by the $\io$ guarantees, the view of the inside $\alice,\bob,\charlie$ is the same as that in $\hybrid_3$.  
   
\end{proof}

\begin{proof}[\textbf{Proof of \Cref{prop:cpcpiracy_PRF_from_cpcpiracysde-id-noncllz}}]
The proof is the same as the proof for \Cref{prop:cpcpiracy_PRF_from_cpcpiracysde-noncllz} up to minor changes. 
\par We will start with a series of hybrids. 
The changes are marked in \cblue{blue}. \\

\noindent \underline{$\hybrid_0$}: Same as $\prepexpt^{\left(\alice,\bob,\charlie \right),\cpscheme,\distr_\inpclass}\left( 1^{\secparam} \right)$ 
 (see Game~\ref{fig:preponedsecurity}) where $\distr=\distrid$ (see the definition in \Cref{def:cpcpiracy_PRFs}) for the $\cllz$ copy-protection scheme see~\Cref{fig:copy-protect-CLLZ}.
\begin{enumerate}
    \item $\ch$ samples $K_1\gets \prf.\gen(1^\secparam)$ and generates $\rho=(\{\ket{{A_i}_{s_i,{s'}_i}}\}_{i\in \ell_0},\io(P))\gets \cllz.\qkeygen(K_1)$, and sends $\rho$ to $\alice$. $P$ has $K_1,K_2,K_3$ hardcoded in it where $K_2,K_3$ are the secondary  keys.
    \item $\ch$ generates $x\xleftarrow{\$}\{0,1\}^n$, where $x=x_0\|x_1\|x_2$ and computes $y_0\gets \prf.\eval(K_1,x)$.
    \item $\ch$ also samples $y_1 \xleftarrow{\$} \{0,1\}^m$.
    \item $\ch$ samples $b\xleftarrow{\$} \{0,1\}$, and sends $\alice$, $(\rho,y_b,y_b)$.
    \item $\alice$ on receiving $(\rho,y_b,y_b)$ produces a bipartite state  $\sigma_{\bob,\charlie}$.
    \item  Apply $(\bob(x,\cdot) \otimes \charlie(x,\cdot))(\sigma_{\bob,\charlie})$ to obtain $(b^{\bob},b^{\charlie})$. 
    \item Output $1$ if $b^\bob=b^\charlie=b$, else $0$. 
\end{enumerate}

\noindent \underline{$\hybrid_1$}: We modify the sampling procedure of the challenge input $x$. 
\begin{enumerate}
    \item $\ch$ samples $K_1\gets \prf.\gen(1^\secparam)$ and generates $\rho=(\{\ket{{A_i}_{s_i,{s'}_i}}\}_{i\in \ell_0},\io(P))\gets \cllz.\qkeygen(K_1)$, and sends $\rho$ to $\alice$. $P$ has $K_1,K_2,K_3$ hardcoded in it where $K_2,K_3$ are the secondary  keys.
    \item $\ch$ generates $x\xleftarrow{\$}\{0,1\}^n$, where $x=x_0\|x_1\|x_2$ and computes $y_0\gets \prf.\eval(K_1,x)$.
    \item $\ch$ also samples $y_1 \xleftarrow{\$} \{0,1\}^m$.
    \item \cblue{$\ch$ also computes $x_\trig\gets \gentrig(x_0,y_0,K_2,K_3,\{{A_i}_{s_i,{s'}_i}\}_{i\in \ell_0})$.}
     \item $\ch$ also samples $y_1 \xleftarrow{\$} \{0,1\}^m$.
    \item $\ch$ samples $b\xleftarrow{\$} \{0,1\}$, and sends $\alice$, $(\rho,y_b,y_b)$.
    \item $\alice$ on receiving $(\rho,y_b,y_b)$ produces a bipartite state  $\sigma_{\bob,\charlie}$.
    \item  Apply $(\bob($ \sout{$x$} \cblue{$x_\trig$} $,\cdot) \otimes \charlie($ \sout{$x$} \cblue{$x_\trig$} $,\cdot))(\sigma_{\bob,\charlie})$ to obtain $(b^{\bob},b^{\charlie})$. 
    \item Output $1$ if $b^\bob=b^\charlie=b$, else $0$. 
\end{enumerate}

$\hybrid_1$ is computationally indistinguishable from $\hybrid_0$ due to~\cite[Lemma 7.17]{CLLZ21}. The same arguments via~\cite[Lemma 7.17]{CLLZ21} were made in showing the indistinguishability between hybrids $\hybrid_0$ and $\hybrid_1$ in the proof of~\cite[Theorem 7.12]{CLLZ21}. 

\begin{claim}
Assuming the security of $\prf$, hybrids $\hybrid_0$ and $\hybrid_1$ are computationally indistinguishable. 
\end{claim}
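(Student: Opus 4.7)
The plan is to directly invoke \cite[Lemma 7.17]{CLLZ21} as the claim itself suggests, but I will sketch the reduction in slightly more detail so that it is clear which PRF keys are being exploited. First I would unpack the $\gentrig$ routine: on input $(x_0, y_0, K_2, K_3, \{A_i\}_{s_i, s_i'})$ it constructs a small classical circuit $Q$ (whose obfuscation encodes $y_0$ together with the coset-membership checks used in the normal mode of $P$), sets $x_{\trig,1} = F_2(K_2, x_0 \| Q)$ and then $x_{\trig,2} = F_3(K_3, x_{\trig,1}) \oplus (x_0 \| Q)$, and outputs $x_\trig = x_0 \| x_{\trig,1} \| x_{\trig,2}$. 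The crucial observation is that $K_2, K_3$ are only accessed inside the obfuscated program $\io(P)$ that $\alice$ holds; the outside adversaries $\bob, \charlie$ never see them in the clear.

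The main reduction proceeds by two punctured-PRF hybrids. I would first puncture $K_2$ at the point $x_0 \| Q$ and puncture $K_3$ at the point $x_{\trig,1}$, hardwire the two corresponding outputs into $P$, and appeal to $\io$ security to argue that the resulting obfuscation is computationally indistinguishable from the original. Then, since the punctured keys $K_2\{x_0 \| Q\}$ and $K_3\{x_{\trig,1}\}$ are the only information about $F_2(K_2, x_0 \| Q)$ and $F_3(K_3, x_{\trig,1})$ leaked to the adversary, I would invoke the puncturable pseudorandomness of $F_2$ and $F_3$ to replace $x_{\trig,1}$ and $F_3(K_3, x_{\trig,1})$ with independent uniform strings. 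After both replacements, $x_{\trig,2} = r_3 \oplus (x_0 \| Q)$ for a uniform $r_3$ and hence is itself uniform; consequently $x_\trig$ is distributed as a uniformly random string in $\{0,1\}^n$, which is exactly the distribution in $\hybrid_0$.

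The main obstacle — and the entire reason \cite[Lemma 7.17]{CLLZ21} is nontrivial — is establishing that puncturing $P$ at the triggered inputs does not perturb its input/output behavior at any other point. Concretely, one must verify (a) that $x_\trig$ does not accidentally pass the normal-mode coset checks in $P$, so that puncturing the normal-mode branch at $x_\trig$ is functionality-preserving, and (b) that the hidden-trigger branch is not accidentally activated on any input other than $x_\trig$, which in turn requires sparseness and collision-resistance-flavored properties of the secondary PRFs $F_2, F_3$ that are built into their specification. Both conditions are exactly what~\cite[Lemma 7.17]{CLLZ21} establishes, so the indistinguishability of $\hybrid_0$ and $\hybrid_1$ reduces cleanly to that lemma combined with $\io$ security and puncturable-PRF security of $F_2$ and $F_3$.
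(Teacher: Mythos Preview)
Your approach is essentially the same as the paper's: both simply defer to \cite[Lemma~7.17]{CLLZ21} (equivalently, to the $\hybrid_0$/$\hybrid_1$ step in the proof of \cite[Theorem~7.12]{CLLZ21}), and your sketch of that lemma's internals---puncturing $K_2,K_3$, invoking $\io$, and replacing the PRF outputs by fresh randomness---is the right template.

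One imprecision in your elaboration is worth flagging. You conclude that after the two replacements $x_\trig$ becomes a uniform string ``which is exactly the distribution in $\hybrid_0$.'' Marginal uniformity of $x_\trig$ is not enough: in $\hybrid_0$ the challenge $x$ is the \emph{same} string used to compute $y_0=F_1(K_1,x)$, whereas in your endpoint the (now-uniform) $x_\trig$ is independent of the $(x_1,x_2)$ that determine $y_0$, and moreover the obfuscated program is a modified $P''$ with hardwired random values rather than the original $P$. Un-puncturing $P''$ back to $P$ via $\io$ is not immediate because $P''$ and $P$ differ at the hardwired points. The actual \cite{CLLZ21} argument handles this carefully (the injectivity/sparseness properties of $F_2,F_3$ you allude to are exactly what make the functionality-preservation go through in both directions). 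Since you explicitly defer to Lemma~7.17 for those details, this does not invalidate your proof---but your one-line summary of the endpoint overstates what the bare replacement steps establish.
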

\begin{proof}
The proof is identical to the proof of indistinguishability of $\hybrid_0$ and $\hybrid_1$ in the proof of~\cite[Theorem 7.12]{CLLZ21}.
\end{proof}

\noindent \underline{$\hybrid_2$}: We modify the generation of the outputs $y_0$. 
\begin{enumerate}
    \item $\ch$ samples $K_1\gets \prf.\gen(1^\secparam)$ and generates $\rho=(\{\ket{{A_i}_{s_i,{s'}_i}}\}_{i\in \ell_0},\io(P))\gets \cllz.\qkeygen(K_1)$, and sends $\rho$ to $\alice$. $P$ has $K_1,K_2,K_3$ hardcoded in it where $K_2,K_3$ are the secondary  keys.
    \item $\ch$ generates $x\xleftarrow{\$}\{0,1\}^n$, where $x=x_0\|x_1\|x_2,$
    and \sout{computes $y_0^\gets \prf.\eval(K_1,x)$} \cblue{samples $y_0\xleftarrow{\$}\{0,1\}^m$}. 
    \item $\ch$ also computes $x_\trig\gets \gentrig(x_0,y_0,K_2,K_3,\{{A_i}_{s_i,{s'}_i}\}_{i\in \ell_0})$.
    \item $\ch$ also samples $y_1 \xleftarrow{\$} \{0,1\}^m$.
    \item $\ch$ samples $b\xleftarrow{\$} \{0,1\}$ and sends $\alice$ $(\rho,y_b,y_b)$.
    \item $\alice$ on receiving $(\rho,y_b,y_b)$  produces a bipartite state  $\sigma_{\bob,\charlie}$.
    \item  Apply $(\bob(x_\trig,\cdot) \otimes \charlie(x_\trig,\cdot))(\sigma_{\bob,\charlie})$ to obtain $(b^{\bob},b^{\charlie})$. 
    \item Output $1$ if $b^\bob=b^\charlie=b$, else $0$. 
\end{enumerate}

\noindent $\hybrid_2$ is statistically indistinguishable from $\hybrid_1$ due to the extractor properties of the primary $\prf$ family. For more details, refer to the proof of see~\cite[Theorem 7.12]{CLLZ21}. 

\begin{claim}
Assuming the extractor properties of $\prf$, hybrids $\hybrid_1$ and $\hybrid_2$ are statistically indistinguishable. 
\end{claim}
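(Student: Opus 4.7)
The plan is to mirror the argument used to move between the corresponding hybrids in the proof of \Cref{prop:cpcpiracy_PRF_from_cpcpiracysde-noncllz} above, which itself follows the template of~\cite[Theorem 7.12]{CLLZ21}. The only change between $\hybrid_1$ and $\hybrid_2$ is that $y_0$ is resampled: in $\hybrid_1$ it equals $\prf.\eval(K_1,x)$, while in $\hybrid_2$ it is uniform in $\{0,1\}^m$. Everything else -- the copy-protected state $\rho$, the secondary keys $K_2,K_3$, the coset data $\{{A_i}_{s_i,s'_i}\}$, the hidden trigger $x_\trig$ generated from $(x_0,y_0,K_2,K_3,\{{A_i}_{s_i,s'_i}\})$, and the inputs fed to $\bob$ and $\charlie$ -- is a deterministic function of $(y_0,y_1,\text{other fixed randomness})$, so the statistical distance between the two hybrids is bounded by the statistical distance between their joint distributions of $(\rho, y_b, y_b, x_\trig)$, and in particular by the statistical distance between the trigger distribution induced when $y_0 = \prf.\eval(K_1,x)$ versus when $y_0$ is uniform.

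The key step is to invoke the \emph{extractor} property of the primary puncturable $\prf$, exactly as in~\cite[Lemma~7.18]{CLLZ21}. That property guarantees that for any adversarially chosen $x_0$, the distribution $\{(K_1, \prf.\eval(K_1,x_0\|x_1\|x_2))\}$ (over the randomness of $K_1$ and the other input blocks used in the hidden trigger generation) is statistically close to $\{(K_1, u)\}$ where $u$ is uniform, even in the presence of the coset states and the secondary keys. Concretely, I would isolate the component of the experiment that depends on $y_0$ -- namely the trigger $x_\trig = \gentrig(x_0,y_0,K_2,K_3,\{{A_i}_{s_i,s'_i}\})$ -- and observe that, because the hidden-trigger generation invokes the PRFs $F_2,F_3$ and the coset structure on an input that already includes $y_0$ as a ``message'', replacing $y_0$ by a uniformly random string merely reshuffles the input to $\gentrig$ within a set on which its output distribution is (statistically) the same. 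This is exactly the content of the extractor-based argument in~\cite[Theorem~7.12]{CLLZ21}.

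The main (and essentially only) obstacle is to verify that the adaptation to our preponed-security setting does not break the extractor hypothesis. In our hybrids, unlike in~\cite{CLLZ21}, the string $y_b$ is given to $\alice$ \emph{before} the split, so one must check that the joint view of $\alice$ (which includes $y_b$) together with the later challenge $x_\trig$ given to $\bob,\charlie$ still fits the extractor setup. Since $y_0$ is statistically hidden from $\alice$ when $b=1$ (as she only sees $y_1$) and statistically hidden conditioned on the PRF key in $\hybrid_1$ (by the extractor property) when $b=0$, the distributions of the entire experiment under $y_0 = \prf.\eval(K_1,x)$ and under uniform $y_0$ differ by at most a negligible function in $\secparam$, which gives the desired statistical indistinguishability. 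Thus the claim reduces directly to the extractor guarantee of the primary puncturable $\prf$, as asserted.
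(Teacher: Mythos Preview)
Your proposal is correct and follows essentially the same approach as the paper: both reduce the indistinguishability of $\hybrid_1$ and $\hybrid_2$ to the extractor property of the primary PRF exactly as in the corresponding step of~\cite[Theorem~7.12]{CLLZ21}. The paper's own proof is in fact just a one-line pointer to that argument, so your write-up is strictly more detailed while staying on the same track.

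One small wording issue worth tightening: in your last paragraph you phrase the key step as ``$y_0$ is statistically hidden from $\alice$''. That is not quite the right framing. The extractor property is a \emph{distributional} statement: over the choice of $K_1$ and the unrevealed blocks $x_1\|x_2$, the pair $(K_1, F_1(K_1,x_0\|x_1\|x_2))$ is statistically close to $(K_1,U_m)$, even given $x_0$ and the coset/secondary-key data. This is what lets you swap $y_0$ for uniform in the \emph{entire} experiment at once, regardless of whether $b=0$ (so $\alice$ sees $y_0$ directly) or $b=1$ (so $y_0$ only enters via $x_\trig$). Splitting into the cases $b=0$ and $b=1$ and speaking of $y_0$ being ``hidden'' is unnecessary and slightly obscures that the extractor bound applies uniformly to the joint view $(\rho,y_b,y_b,x_\trig)$. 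With that adjustment, your argument matches the paper's intended one.
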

\begin{proof}
The proof is identical to the proof of indistinguishability of $\hybrid_1$ and $\hybrid_2$ in the proof of~\cite[Theorem 7.12]{CLLZ21}.
\end{proof}

\noindent \underline{$\hybrid_3$}: This hybrid is a reformulation of $\hybrid_2$.
\begin{enumerate}
    \item \cblue{$\ch$ samples $\{{A_i}_{s_i,{s'}_i}\}_{i\in \ell_0}$ and generates $\{\ket{{A_i}_{s_i,{s'}_i}}\}_{i\in \ell_0}$, and treats it as the quantum decryption key for the $\cllz$ 
single-decryptor encryption scheme (see \cref{fig:CLLZ-SDE-construction}), where the secret key is $\{{A_i}_{s_i,{s'}_i}\}_{i\in \ell_0}$. $\ch$ also generates $\pk=\{R^0_i,R^1_i\}_{i\in \ell_0}$, where for every $i\in [\ell_0]$, $R^0_i=\io(A_i + s_i)$ and $R^1_i=\io(A_i^\perp + {s'}_i)$.}  
    \item $\ch$ \sout{generates $x\xleftarrow{\$}\{0,1\}^n$, where $x=x_0\|x_1\|x_2$
    and} samples $y_0\xleftarrow{\$}\{0,1\}^m$. 
    \item \sout{$\ch$ also computes $x_\trig\gets \gentrig(x_0,y_0,K_2,K_3,\{{A_i}_{s_i,{s'}_i}\}_{i\in \ell_0})$, 
    }
    \item $\ch$ also samples $y_1\xleftarrow{\$} \{0,1\}^m$.
    \item $\ch$ samples $b\xleftarrow{\$} \{0,1\}$\cblue{, and generates $x_0,Q\gets \cllz.\enc(\pk,y_b)$.}
    \item \cblue{$\ch$ samples keys $K_1,K_2,K_3$ and constructs the program $P$ which hardcodes $K_1,K_2,K_3$. It then prepares $\rho=(\{\ket{{A_i}_{s_i,{s'}_i}}\}_{i\in \ell_0},\io(P))$ and sends to $\alice$.}
     \item $\alice$ on receiving  $(\rho,y_b,y_b)$  produces a bipartite state  $\sigma_{\bob,\charlie}$.
     \item \cblue{$\ch$ then generates $x_\trig\in \{0,1\}^n$  as follows:
     Let ${x_\trig}_1=F_2(K_2,x_0\|Q^\bob)$ and ${x_\trig}_2=F_3(K_3,{x_\trig}_1)$. Let $x^\bob_\trig=x_0\|{x_\trig}_1\|{x_\trig}_2.$

     }
    \item  Apply $(\bob(x^\bob_\trig,\cdot) \otimes \charlie(x^\charlie_\trig,\cdot))(\sigma_{\bob,\charlie})$ to obtain $(b^{\bob},b^{\charlie})$. 
    \item Output $1$ if $b^\bob=b^\charlie=b$, else $0$. 
\end{enumerate}

\begin{claim}
The output distributions of the hybrids $\hybrid_2$ and $\hybrid_3$ are identically distributed. 
\end{claim}
\begin{proof}
The proof is identical to the proof of indistinguishability of $\hybrid_2$ and $\hybrid_3$ in the proof of~\cite[Theorem 7.12]{CLLZ21}.
\end{proof}


\noindent Finally we give a reduction from $\hybrid_{3}$ to the indistinguishability from random anti-piracy experiment (\cref{fig:indistinguishability_from_random_-sde-anti-piracy}) for $\postproc$ single-decryptor encryption scheme, where $\cllz$ $\sde$ is the one given in \cref{fig:CLLZ-SDE-construction}, for more details see~\cite[Construction 1, Section 6.3, pg. 39]{CLLZ21}. Let $(\alice,\bob,\charlie)$ be an adversary in $\hybrid_{3}$ above. Consider the following non-local adversary $(\reduc_\alice,\reduc_\bob,\reduc_\charlie)$:
\begin{enumerate}
    \item $\reduc_\alice$ samples $y_0,y_1\xleftarrow{\$}\{0,1\}^m$.
    \item $\reduc_\alice$ gets the quantum decryptor $\{\ket{{A_i}_{s_i,{s'}_i}}\}_{i\in \ell_0}$ and a public key $\pk=(R^0_i,R^1_i)$ from $\ch$, the challenger in the correlated challenge $\SDE$ anti-piracy experiment (see \cref{fig:correlated-sde-cpa-style-anti-piracy}) for the $\cllz$ $\SDE$ scheme. 
    \item $\reduc_\alice$ samples $K_1,K_2,K_3$ and prepares the circuit $P$ using $R^0_i,R^1_i$ and the keys $K_1,K_2,K_3$. Let $\rho=\{\ket{{A_i}_{s_i,{s'}_i}}\}_{i\in \ell_0},\io(P))$.

    \item $\reduc_\alice$ samples a bit $d\xleftarrow{\$}\{0,1\}$
     and runs $\alice$ on  $(\rho,y_d,y_d)$ and gets back the output $\sigma_{\bob,\charlie}$.
    \item $\reduc_\alice$ samples a random string $s\xleftarrow{\$}$ of appropriate length as required by $\bob$ and $\charlie$ to run the $\io$ compiler.
    \item $\reduc_\alice$ sends $(K_1,K_2,K_3,d,s,\sigma_\bob)$ to $\reduc_\bob$ and  $(K_1,K_2,K_3,d,s,\sigma_\charlie)$ to $\reduc_\charlie$.
    \item $\reduc_\bob$ on receiving $(c,(x_0,T))$  as the challenge cipher text from $\ch$ as the challenge ciphertext and $K_1,K_2,K_3,d,s,\sigma_\bob$ from $\reduc_\alice$, does the following:
    \begin{enumerate}
        \item  $\reduc_\bob$ generates the circuit $Q$ which on any input $x_0$ generates $r\gets T(x_0)$ and if the output is $\bot$ outputs $\bot$, else computes $\decproc(c,r)$ and if the outcome is $0$, output $y_0$, else output $y_1$. $\reduc_\bob$ generates $\tilde{Q}\gets \io(Q;s)$.
        \item $\reduc_\bob$ constructs $x_\trig$ as follows. Let ${x_\trig}_1=F_2(K_2,x_0\|\widetilde{Q})$ and ${x_\trig}_2=F_3(K_3,{x_\trig}_1)$. Let $x_\trig=x_0\|{x_\trig}_1\|{x_\trig}_2.$
        \item $\reduc_\bob$ runs $\bob$ on $(x_\trig,\sigma_\bob)$ to get an output $b^\bob$.
        \item $\reduc_\bob$ outputs $b^\bob \oplus d$.
    \end{enumerate}

    \item Similarly, $\reduc_\charlie$ on receiving $(c,(x_0,T))$ as the challenge cipher text from $\ch$ and $K_1,K_2,K_3,d,s,\sigma_\charlie$ from $\reduc_\alice$, does the following:
    \begin{enumerate}
         \item $\reduc_\charlie$ generates the circuit $Q$ which on any input $x_0$ generates $r\gets T(x_0)$ and if the output is $\bot$ outputs $\bot$, else computes $\decproc(c,r)$ and if the outcome is $0$, output $y_0$, else output $y_1$. $\reduc_\charlie$ generates $\tilde{Q}\gets \io(Q;s)$.
        \item $\reduc_\charlie$ constructs $x_\trig$ as follows. Let ${x_\trig}_1=F_2(K_2,x_0\|\widetilde{Q})$ and ${x_\trig}_2=F_3(K_3,{x_\trig}_1)$. Let $x_\trig=x_0\|{x_\trig}_1\|{x_\trig}_2.$
        \item $\reduc_\bob$ runs $\bob$ on $(x_\trig,\sigma_\bob)$ to get an output $b^\charlie$.
        \item $\reduc_\charlie$ outputs $b^\charlie \oplus d$.
    \end{enumerate}

\end{enumerate}

Note that the functionality of $Q$ is the same as that of $W$ in the cipher text $(x_0,W)$ obtained by running $\cllz.\enc(\pk,\cdot)$ algorithm on $y_b$ with $x_0$ as the randomness. Note that in $\hybrid_3$, $\bob$ (and similarly, $\charlie$) needs to distinguish between the following two inputs: a random string $y$ along with either a triggered input $x$ encoding $y$ which is also the view of the inside adversary in the reduction above in the event $b=d$ in the simulated experiment; or a  triggered input $x$ encoding $\tilde{y}$ random string where $\tilde{y}\xleftarrow{\$}$ sampled independent of $y$,  which is the view of the inside adversary in the reduction above in the event $b\neq d$ in the simulated experiment. Therefore, by the $\io$ guarantees, the view of the inside $\alice,\bob,\charlie$ is the same as that in $\hybrid_3$.  
   
\end{proof}

\fi

\subsection{UPO for Keyed Circuits from Copy-Protection with Preponed Security} \label{subsec:UPO-from-preponed-CP}

\begin{theorem}\label{thm:strong-CLLZ-cp-prf_puncturable-CP-f-uniform}
    Assuming \Cref{conj:goldreich-levin-correlated}, the existence of post-quantum sub-exponentially secure $\io$ and one-way functions, and the quantum hardness of Learning-with-errors problem (LWE),
    there is a construction of $\upo$ satisfying $\distrprod$-generalized $\UPO$ security (see Definition~\ref{def:newcpsecurity}), for any generalized keyed puncturable circuit class $\cktclass$ in $\ppoly$, see \Cref{subsec:upo-definition}. 
\end{theorem}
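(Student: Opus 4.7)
The plan chains the preceding results. Combining Proposition~\ref{prop:cpcpiracy_PRF_from_cpcpiracysde-noncllz} with Theorem~\ref{thm:postproc-sde-correlated-challenge}, the $\cllz$ copy-protection scheme for PRFs (Figure~\ref{fig:copy-protect-CLLZ}) already satisfies $\distrprod$-preponed security under the hypotheses stated. I will use this as the single non-trivial hardness source, relying on $\io$-based program surgery and standard PRF puncturing techniques for the rest.

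\textbf{Construction.} I will take $F$ to be a puncturable extractable PRF and define $\UPO.\obf(1^\secparam,C_k)$ to sample a PRF key $K \gets F.\gen(1^\secparam)$, produce $\rho_K \gets \cllz.\copyprotect(K)$, and output $(\rho_K,\io(E_{K,C_k}))$, where $E_{K,C_k}$ is a padded circuit that on input $(x,y)$ returns $C_k(x)$ whenever $y=F_K(x)$ and $\bot$ otherwise. Evaluation first recovers $y=F_K(x)$ from $\rho_K$ using the $\cllz$ PRF evaluation, then runs $\io(E_{K,C_k})$ on $(x,y)$. This yields a UPO scheme for a specific keyed class, and Corollary~\ref{cor:UPO-io} then lifts it, via an additional $\io$ wrap, to any generalized puncturable keyed circuit class in $\ppoly$.

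\textbf{Reduction.} Suppose $(\alice,\bob,\charlie)$ violates $\distrprod$-generalized UPO security with advantage $\varepsilon$. I will construct $(\alice',\bob',\charlie')$ violating $\distrprod$-preponed CP-PRF security with essentially the same advantage. The reduction $\alice'$ obtains $\rho_K$ and $(y^\bob_b,y^\charlie_b)$ from the preponed challenger and runs $\alice$ to get $(k,\mu^\bob,\mu^\charlie)$. It then builds a ``trigger-switched'' iO'd circuit $E^\star$ that on input $(x,y)$ outputs $\mu^\bob(x)$ if $y=y^\bob_b$, outputs $\mu^\charlie(x)$ if $y=y^\charlie_b$, and otherwise mimics $E_{K,C_k}$. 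Handing $(\rho_K,\io(E^\star))$ to $\alice$, splitting the bipartite output, and relaying $x^\bob,x^\charlie$ to $\bob,\charlie$ completes the reduction. When the preponed bit is $b=1$ so that $y^\bob_b=F_K(x^\bob)$, Bob's copy-protected evaluation computes $y=F_K(x^\bob)=y^\bob_b$, triggers the $\mu^\bob$ branch, and outputs $\mu^\bob(x^\bob)$: precisely the punctured UPO behavior on $G_{k^*}$. When $b=0$ so that $y^\bob_b$ is uniform, the trigger fires only with negligible probability and Bob sees $C_k(x^\bob)$: the unpunctured behavior. A symmetric argument applies to Charlie, so the UPO bit aligns with the preponed bit and the advantages coincide up to a negligible correctness loss.

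\textbf{Main obstacle.} The hard part will be constructing $E^\star$ without direct access to $K$ (needed for the $F_K(x)$ check on generic inputs) or to $x^\bob,x^\charlie$ (needed to confine the trigger to those two points). I plan to resolve this via a short sequence of $\io$ hybrids that first puncture $K$ at $\{x^\bob,x^\charlie\}$, hardcode the correct circuit outputs at those two points, and then invoke preponed CP-PRF security to substitute the real PRF outputs with uniform preponed y-values. Near-injectivity of the PRF on random inputs ensures that the $y$-based trigger essentially identifies $x=x^\bob$ (resp.\ $x^\charlie$), so $y$-level puncturing faithfully tracks the UPO $x$-level puncturing. Standard $\io$ padding and the composition theorem then extend the argument uniformly to all of $\ppoly$.
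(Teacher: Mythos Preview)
Your high-level chaining (Theorem~\ref{thm:postproc-sde-correlated-challenge} $\Rightarrow$ Proposition~\ref{prop:cpcpiracy_PRF_from_cpcpiracysde-noncllz} $\Rightarrow$ preponed security $\Rightarrow$ UPO) is exactly the paper's route, but the construction and reduction you give for the last step have a real gap. Your verifier $E_{K,C_k}$ hardcodes the raw PRF key $K$ in order to test $y=F_K(x)$. The preponed challenger, however, hands $\alice'$ only $\rho_K=(\io(P),\text{coset states})$ and the values $(y^\bob_b,y^\charlie_b)$; it never reveals $K$. Your proposed fix of ``puncturing $K$ via $\io$ hybrids'' does not close this: at the end of those hybrids the generic-input branch of $E^\star$ still needs $K_{\{x^\bob,x^\charlie\}}$, which the reduction does not have either. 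Worse, you cannot invoke PRF puncturing security to swap $F_K(x^\bob),F_K(x^\charlie)$ for random while $\rho_K$ is in the adversary's view, because $\io(P)$ inside $\rho_K$ hardcodes the \emph{unpunctured} key in its normal mode. The puncturing has to be done simultaneously inside $P$, which is precisely the non-black-box surgery on the $\cllz$ program the paper performs (Hybrids 4--5 of Lemma~\ref{lemma:puncturable_anti-piracy_from_cpcpiracy_PRFs}).

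The paper's construction (Figure~\ref{fig:copy-protect-construction}) sidesteps the issue by changing the interface: the verifier $D$ takes $(x,v,y)$ and recomputes $y'=\prg(\io(P)(x,v))$ using only $\io(P)$, which the reduction \emph{does} receive from the preponed challenger; the final-hybrid $D$ is therefore buildable from the reduction's inputs alone. The extra $\prg$ layer is also load-bearing and is not covered by your ``near-injectivity of the PRF'' remark: PRFs need not be injective, and the paper instead uses a length-doubling $\prg$ so that the image of $C$ is a negligible fraction of $\{0,1\}^{2m}$. That sparsity is what makes the dummy $y$-trigger in $D_0$ vacuous (Claim~\ref{claim:hyb_0-hyb_1}) and what allows the $y$-based trigger in $D_1$ to coincide with the intended $x$-based puncturing (Claim~\ref{claim:hyb_7-hyb_8}). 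Without routing the coset witness $v$ through the verifier and without the $\prg$, your hybrid chain cannot terminate in a circuit the reduction is able to construct.
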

\begin{proof}
    The proof follows by combining \Cref{lemma:copy-protection-construction-completeness,thm:copy-protection-construction-puncturable_anti-piracy}.
\end{proof}

\begin{theorem}\label{thm:strong-CLLZ-cp-prf_puncturable-CP-f-uniform-id}
    Assuming \Cref{conj:goldreich-levin-identical},  the existence of post-quantum sub-exponentially secure $\io$ and one-way functions, and the quantum hardness of Learning-with-errors problem (LWE), 
    there is a construction of $\upo$ satisfying $\distrid$-generalized $\UPO$ security (see Definition~\ref{def:newcpsecurity}), for any generalized keyed puncturable circuit class $\cktclass$ in $ \ppoly$, see \Cref{subsec:upo-definition}. 
\end{theorem}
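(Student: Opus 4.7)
The plan is to mirror the proof of \Cref{thm:strong-CLLZ-cp-prf_puncturable-CP-f-uniform} throughout, simply replacing the product challenge distribution $\distrprod$ by the identical challenge distribution $\distrid$ at every step and invoking the identical-distribution analogues of the underlying results. Concretely, I would appeal to the same completeness lemma (referred to as \Cref{lemma:copy-protection-construction-completeness} in the companion theorem), which is distribution-independent and therefore transfers verbatim, together with an identical-distribution analogue of \Cref{thm:copy-protection-construction-puncturable_anti-piracy}.

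First, I would invoke \Cref{thm:postproc-sde-identical-challenge} to obtain a $\postproc$ $\sde$ scheme satisfying $\distrcipherid$-indistinguishability from random anti-piracy, using \Cref{conj:goldreich-levin-identical} together with post-quantum sub-exponentially secure $\io$, one-way functions, and LWE. Feeding this into \Cref{prop:cpcpiracy_PRF_from_cpcpiracysde-id-noncllz} then yields the $\cllz$ copy-protection scheme for PRFs with $\distrid$-$\cpppiracy$ (preponed security under the identical challenge distribution). This is exactly the object needed to instantiate the generic transformation from preponed-secure copy-protection for PRFs into $\UPO$ for a generalized keyed puncturable circuit class (the ``third step'' of this section).

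Next, I would carry out the reduction from generalized $\UPO$ security to preponed copy-protection security. Given a keyed circuit $C_k$ together with puncturing circuits $\mu_{\bob},\mu_{\charlie}$, the $\UPO$ obfuscation would hand out the copy-protected PRF state $\rho_{f_{K_1}}$ together with an $\io$ obfuscation of a circuit $P'$ that, on input $(x,v)$, checks $f_{K_1}(x)=v$ and outputs $C_k(x)$ if so and $\bot$ otherwise. In the hybrid where the challenger hands over $\rho_{G_{k^*}}$ instead, the hardwired program would instead output $\mu_{\bob}(x^{\bob})$ (respectively $\mu_{\charlie}(x^{\charlie})$) at the punctured point. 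Because the challenge distribution is identical, we have $x^{\bob}=x^{\charlie}=x$, so in the reduction Alice can obtain the single challenge input $x$ as the preponed value and hardwire $\mu_{\bob}(x),\mu_{\charlie}(x)$ into the obfuscated program before splitting. The distinguishing advantage of any $(\alice,\bob,\charlie)$ in the $\UPO$ game then translates directly into a distinguishing advantage in the $\distrid$-preponed experiment, using standard $\io$ hybrids (puncture $f_{K_1}$ at $x$, replace $f_{K_1}(x)$ by a truly random value, and then switch in the hardwired outputs from $\mu_{\bob},\mu_{\charlie}$).

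The main subtlety, and the only real difference from the $\distrprod$ case, lies in propagating the identical coupling through every intermediate reduction: in each hybrid Bob and Charlie must receive \emph{the same} challenge input, which forces the preponed-PRF reduction, the $\sde$ reduction, and the final simultaneous inner-product step to all use $\distrid$-style couplings rather than independent ones. Since every ingredient we cite above has an identical-distribution version available, invoking \Cref{conj:goldreich-levin-identical} in place of \Cref{conj:goldreich-levin-correlated} at the final step closes the chain, and concludes the proof. As $\cktclass$ was an arbitrary generalized puncturable keyed circuit class in $\ppoly$, the theorem follows.
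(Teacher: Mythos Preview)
Your high-level chain matches the paper exactly: combine the distribution-independent correctness lemma (\Cref{lemma:copy-protection-construction-completeness}) with the $\distrid$-security theorem, which in turn is obtained by chaining \Cref{thm:postproc-sde-identical-challenge} into \Cref{prop:cpcpiracy_PRF_from_cpcpiracysde-id-noncllz} and then into the $\distrid$-version of the UPO-from-preponed-CP lemma. So at the level of ``which ingredients get combined,'' you are on target.

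However, your description of the key reduction contains a genuine misconception. You write that ``Alice can obtain the single challenge input $x$ as the preponed value and hardwire $\mu_{\bob}(x),\mu_{\charlie}(x)$ into the obfuscated program before splitting.'' That is not what is preponed: in the preponed experiment (\Cref{fig:preponedsecurity}) Alice receives the PRF \emph{outputs} $(y^{\bob}_b,y^{\charlie}_b)$ --- either $(f_{K_1}(x^{\bob}),f_{K_1}(x^{\charlie}))$ or uniformly random strings --- and \emph{not} the challenge inputs $x^{\bob},x^{\charlie}$, which are revealed only to $\bob$ and $\charlie$. If Alice actually knew $x$, the entire difficulty (and the need for the preponed notion) would disappear. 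The paper's reduction instead hardcodes the preponed $y$-values (after a PRG) into the outer obfuscated circuit $D$; the branch triggered by the check ``$y\in\{y^{\bob},y^{\charlie}\}$'' then evaluates the \emph{circuits} $\mu_{\bob},\mu_{\charlie}$ on the runtime input $x$. This is precisely why $\mu_{\bob},\mu_{\charlie}$ are supplied as circuits rather than values. Your hybrid outline (``puncture $f_{K_1}$ at $x$, replace $f_{K_1}(x)$ by random, switch in hardwired outputs'') is correct in spirit, but the mechanism by which Alice embeds the punctured behavior must go through the $y$-side, not the $x$-side.
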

\begin{proof}
    The proof follows by combining \Cref{lemma:copy-protection-construction-completeness,thm:copy-protection-construction-puncturable_anti-piracy-id}.
\end{proof}

The construction is as follows. 
In the construction given in \Cref{fig:copy-protect-construction}, the $\prf$ family $(\keygen,\eval)$ satisfies the requirements as in~\cite{CLLZ21} and has input length $n(\secparam)$ and output length $m$; $\prg$ is a length-doubling injective pseudorandom generator with input length $m$.

\begin{figure}[!htb]
   \begin{center} 
   \begin{tabular}{|p{12cm}|}
    \hline 
\noindent\textbf{Assumes:} $\prf$ family $(\keygen,\eval)$ with same properties as needed in~\cite{CLLZ21}, $\prg$, $\cllz$ copy-protection scheme $(\copyprotect,\eval)$.\\
\ \\
\noindent$\obf(1^{\secparam},W)$:
\begin{compactenum}
    \item Sample a random key $k\gets \prf.\keygen(1^\secparam)$. 
    \item Compute $\io(P),\{\{\ket{{A_i}_{s_i,{s'}_i}}\}_i\} \gets \cllz.\copyprotect(k)$. 
    \item Compute $\tilde{C}\gets \io(C)$ where $C= \prg\cdot \io(P)$. 
    \item Compute $\io(D)$ where $D$ takes as input $x,v,y$, and runs $C$ on $x,v$ to get $y'$ and outputs $\bot$ if $y'\neq y$ or $y'=\bot$, else it runs the circuit $W$ on $x$ to output $W(x)$. 
    \item Output $\rho=(\{\{\ket{{A_i}_{s_i,{s'}_i}}\}_i\},\tilde{C},\io(D))$.
\end{compactenum}
\ \\
\noindent$\eval(\rho,x)$
\begin{compactenum}
    \item Interprete $\rho=(\{\{\ket{{A_i}_{s_i,{s'}_i}}\}_i\},\tilde{C},\io(D))$.
    \item Let $x=x_0\|x_1\|x_2$, where $x_0=\ell_0$. For every $i\in [\ell_0]$, if $x_{0,i}=1$ apply $H^{\otimes n}$ on $\ket{{A_i}_{s_i,{s'}_i}}$. Let the resulting state be $\ket{\psi_x}$.\label{line:Hadamard}
    \item Run the circuit $\tilde{C}$  in superposition on the input registers $(X,V)$ with the initial state $(x,\ket{\psi_x})$ and then measure the output register to get an output $y$. Let the resulting state quantum state on register $V$ be $\sigma$.
    \item Run $\io(D)$ on the registers $X,V,Y$  in superposition where registers $X,Y$ are initialized to classical values $x,y$ and then measure the output register to get an output $z$. Output $z$.
    
    \label{line:outside-C-eval}
\end{compactenum}
\ \\ 
\hline
\end{tabular}
    \caption{Construction of a UPO scheme.}
    \label{fig:copy-protect-construction}
    \end{center}
\end{figure}

\begin{lemma}\label{lemma:copy-protection-construction-completeness}
    The construction given in \Cref{fig:copy-protect-construction} satisfies $(1-\negl)$-$\UPO$ correctness for any generalized puncturable keyed circuit class in $\ppoly$ for some negligible function $\negl$.
\end{lemma}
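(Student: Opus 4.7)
The plan is to reduce $(1-\negl)$-correctness to two ingredients already at our disposal: correctness of the CLLZ copy-protection scheme for PRFs, together with correctness of the two $\io$ invocations; and a short measurement-consistency observation explaining why the second coherent execution of $C$ inside $\io(D)$ always reproduces the value $y$ obtained from the first measurement.

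First, I would analyze the first measurement in step~3 of $\eval$. By iO-correctness, $\tilde{C}=\io(C)$ computes the same function as $C=\prg\cdot \io(P)$, so the statistics of measuring its output register on input $(x,\ket{\psi_x})$ coincide with those of $C$ on the same input. The correctness of the underlying $\cllz$ copy-protection scheme (applied to the key $k$) guarantees that running $\io(P)$ on the Hadamard-rotated coset state produced in Line~\ref{line:Hadamard} yields $F_1(K_1,x)=\prf(k,x)$ with probability $1-\negl(\secparam)$. Composing with the deterministic map $\prg$, the measurement in step~3 returns $y^{\ast}:=\prg(\prf(k,x))\neq\bot$ except with negligible probability.

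Next, I would handle step~4. Conditioned on the first measurement returning $y^{\ast}$, the post-measurement state $\sigma$ on the $V$-register is (by definition of a projective measurement of a classical deterministic function) supported only on those basis vectors $v$ for which $C(x,v)=y^{\ast}$. When $\io(D)$ is then run coherently on $(x,\sigma,y^{\ast})$, the inner recomputation of $y'=C(x,v)$ produces $y^{\ast}$ on every branch of the superposition, so the equality test $y'\stackrel{?}{=}y$ succeeds and the branch $y'=\bot$ is not triggered; hence $D$ outputs $W(x)$ on every branch, and the final measurement in Line~\ref{line:outside-C-eval} returns $W(x)$ deterministically. One more appeal to iO-correctness ensures that $\io(D)$ implements the same functionality as $D$.

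Combining these steps with a union bound over the negligible error events ($\cllz$ correctness and the two iO correctness statements) yields the claimed $(1-\negl)$-correctness. The only non-routine point is the measurement-consistency observation for the second invocation of $C$; this is not a genuine obstacle since $C(x,\cdot)$ is a classical deterministic function and the first measurement is precisely the projective measurement of its output register.
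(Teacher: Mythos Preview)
Your argument is correct and reaches the same conclusion as the paper, but via a different mechanism for the second stage. The paper invokes the \emph{almost as good as new} lemma: since the first measurement returns the deterministic value $y^{\ast}=\prg(\prf(k,x))$ with probability $1-\negl$, the post-measurement state $\sigma$ is negligibly close in trace distance to the original $\ket{\psi_x}$, and hence the coherent evaluation of $C$ inside $\io(D)$ behaves (up to another negligible error) as it would on $\ket{\psi_x}$. You bypass this entirely by observing that, whatever $\sigma$'s distance from $\ket{\psi_x}$, the projective measurement in step~3 forces $\sigma$ to be supported on computational-basis vectors $v$ with $\tilde C(x,v)=y^{\ast}$ (hence $C(x,v)=y^{\ast}$, by perfect $\io$ correctness), so the inner recomputation in $D$ returns $y^{\ast}$ with certainty on every branch. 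Your route is more elementary---no named lemma required---and yields a deterministic guarantee for the second stage rather than carrying a second negligible error through; the paper's route additionally recovers $\sigma\approx\ket{\psi_x}$, which is what one would want for reusability of the obfuscated state (as flagged in the remark after the correctness definition), though for the bare correctness statement your projection argument is cleaner.
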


\ifllncs 
The proof is given in \Cref{app:proof:completeness}.
\else 
\ifllncs 
\subsection{Proof of~\Cref{lemma:copy-protection-construction-completeness}}
\label{app:proof:completeness}
\else 

\fi 

\begin{proof}[Proof of \Cref{lemma:copy-protection-construction-completeness}]\anote{I have slightly changed the proof}
Let $W$ be the circuit that is obfuscated, and let the resulting obfuscated state be $\rho=(\{\{\ket{{A_i}_{s_i,{s'}_i}}\}_i\},\tilde{C},\io(D))$. We will show that for every input $x=(x_0,x_1,x_2)$, the $\eval$ algorithm on $(\rho,x)$ outputs $W(x)$ except with negligible probability. Let $\ket{\phi_x}$ be the state obtained after running the Hadamard operation on  $\{\{\ket{{A_i}_{s_i,{s'}_i}}\}_i\}$ (see \Cref{line:Hadamard} of the $\eval$ algorithm in \Cref{fig:copy-protect-construction}). 
It is easy to check that for every input $x$, by the correctness of $\cllz$ copy-protection, running $\tilde{C}$ that is generated as $\tilde{C}\gets \io(C)$ on $(x,\ket{\phi_x}$ in superposition, and then measuring the output register results in $y$ which is equal to $\prg(\prf.\eval(k,x))$, except with negligible probability. By the almost as good as new lemma~\cite{aaronson2016complexity}, this would mean that the resulting quantum state $\sigma$ which is negligibly close to $\ket{\psi_x}\bra{\psi_x}$ in trace distance.\anote{Prabhanjan are you happy with this proof?}  
Hence, running $C$  on $\sigma$ in \Cref{line:outside-C-eval} and inside $\io(D)$ in superposition and then checking if the output is equal to $y$ in superposition (see \Cref{line:outside-C-eval} of the $\eval()$ algorithm in \Cref{fig:copy-protect-construction}), must succeed and $\io(D)$ will output $W(x)$, except with negligible probability. Therefore, except with negligible probability, $\eval(\rho,x)$ outputs $W(x)$.\anote{Change $\io(P),\io(D)$ to $\tilde{P}$ and $\tilde{D}$ respectively.}\anote{If we have time, simplify the proof by measuring after step 2.}
\end{proof}\anote{Prabhanjan, is this explanation satisfactory?}
\fi 
\begin{theorem}\label{thm:copy-protection-construction-puncturable_anti-piracy}
    Assuming \Cref{conj:goldreich-levin-correlated},  post-quantum sub-exponentially secure $\io$ and one-way functions, and the quantum hardness of Learning-with-errors problem (LWE),
    the construction given in \Cref{fig:copy-protect-construction} satisfies $\distrprod$-generalized $\upo$ security (see \Cref{subsec:upo-definition}) for any generalized puncturable keyed circuit class in $\ppoly$. 
\end{theorem}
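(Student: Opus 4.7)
The plan is to reduce the $\distrprod$-generalized $\upo$ security of the construction in Figure~\ref{fig:copy-protect-construction} to the $\distrprod$-$\cpppiracy$ of the $\cllz$ copy-protection scheme for $\prf$s, as established in Proposition~\ref{prop:cpcpiracy_PRF_from_cpcpiracysde-noncllz}. The key structural observation is that the two UPO games (at $b=0$ and $b=1$) differ only in the circuit $W$ embedded inside the outermost $\io(D)$: both games sample the same copy-protected key $(\io(P), \{\ket{{A_i}_{s_i, s_i'}}\})$ and the same intermediate $\tilde{C} = \io(\prg \cdot \io(P))$. Hence it suffices to argue that no efficient non-local $(\alice, \bob, \charlie)$ can distinguish $\io(D)$ where $D$ embeds $W = C_k$ from $\io(D)$ where $D$ embeds the generalized punctured circuit $G_{k^*}$, while the remaining components are sampled identically.

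I would proceed through a hybrid chain. In a first block of hybrids, I would use $\io$ security together with puncturable-$\prf$ arguments to rewrite $D$ into a functionally-equivalent circuit whose behaviour at the challenge inputs $x^\bob, x^\charlie$ is mediated by an explicit equality check between the supplied $y$ and hardcoded strings $y^\bob_\star = \prg(\prf(k, x^\bob))$ and $y^\charlie_\star = \prg(\prf(k, x^\charlie))$, rather than by the generic check $\tilde{C}(x, v) = y$. The pivotal hybrid then replaces $y^\bob_\star, y^\charlie_\star$ by uniformly random strings $r^\bob, r^\charlie$; this step is precisely the $\distrprod$-$\cpppiracy$ game for the $\cllz$ CP-$\prf$, with $\reduc_\alice$ using the challenger-supplied values $(y^\bob_c, y^\charlie_c)$ as the hardcoded strings inside $D$, and $\reduc_\bob, \reduc_\charlie$ forwarding their inputs verbatim to the UPO adversary's $\bob, \charlie$. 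Because $\prg$ is length-doubling and injective, a uniformly random string lies outside the image of $\prg$ with overwhelming probability, so the check at $x \in \{x^\bob, x^\charlie\}$ cannot succeed for any $v$; thus in this hybrid $D$ outputs $\bot$ at the two challenge points regardless of whether it embeds $W$ or $G_{k^*}$. A single $\io$ step then swaps the embedded circuit, and the chain is run in reverse to reach the $b=1$ game. Combining these hybrids with Proposition~\ref{prop:cpcpiracy_PRF_from_cpcpiracysde-noncllz} (which already assumes Conjecture~\ref{conj:goldreich-levin-correlated}, sub-exponential $\io$, one-way functions, and LWE) closes the reduction.

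The main obstacle is aligning the hybrid structure with the preponed game: since $\alice$ in the preponed game learns $(y^\bob_c, y^\charlie_c)$ but not the points $x^\bob, x^\charlie$ themselves, the hardcoded comparison inside $D$ must be phrased in a way that activates at the (unknown) challenge inputs. The natural resolution is to compose the equality test $y \stackrel{?}{=} y^\bob_\star$ (respectively $y^\charlie_\star$) with the coset-membership check already inherited from $\tilde{C}$: only inputs $(x, v)$ whose $x_0$-prefix selects the correct coset label, together with a $y$ matching one of the hardcoded strings, pass both tests simultaneously, so the location-specificity comes from $\tilde{C}$ rather than from $D$ referring to $x^\bob, x^\charlie$ explicitly. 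Verifying functional equivalence at each $\io$ step, keeping the intermediate circuits polynomial-sized (which may require suitable padding, as in the $\postproc$ $\sde$), and propagating the $\distrprod$ independent-challenge structure through every hybrid, constitutes the bulk of the remaining technical work.
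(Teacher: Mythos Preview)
Your high-level plan—reduce to $\distrprod$-preponed security of the $\cllz$ PRF copy-protection and then invoke Proposition~\ref{prop:cpcpiracy_PRF_from_cpcpiracysde-noncllz}—is exactly the paper's route; the paper packages the first step as Lemma~\ref{lemma:puncturable_anti-piracy_from_cpcpiracy_PRFs} and states this theorem as the combination of that lemma, Proposition~\ref{prop:cpcpiracy_PRF_from_cpcpiracysde-noncllz}, and Theorem~\ref{thm:postproc-sde-correlated-challenge}.

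There is, however, a genuine gap in your hybrid sketch for the lemma. After replacing $y^\bob_\star, y^\charlie_\star$ by random strings you assert that $D$ outputs $\bot$ at the two challenge points, and then plan a single $\io$ swap of the embedded $W$ against $G_{k^*}$. This does not hold: at $x^\bob$ with valid coset vectors $v$, the generic verification $\tilde{C}(x^\bob, v) = y$ still passes whenever $y = \prg(\prf.\eval(k, x^\bob))$ is supplied, and your added test $y \stackrel{?}{=} y^\bob_\star$ simply fails (since $y^\bob_\star$ is now a fresh random string outside the $\prg$ image). The circuit therefore falls through to its else branch and evaluates the embedded $W$ (or $G_{k^*}$) at $x^\bob$—it does not output $\bot$. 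Since $W_s(x^\bob) \neq \mu_\bob(x^\bob)$ in general, the two versions of $D$ remain functionally inequivalent at the challenge points and the $\io$ swap step fails. Making the else branch output $\bot$ instead would break functional equivalence at every non-challenge point, so that is not a fix either.

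The paper's proof of Lemma~\ref{lemma:puncturable_anti-piracy_from_cpcpiracy_PRFs} avoids any swap step. Both $D_0$ and $D_1$ are brought into a common template ``if $y = \tilde{C}(x,v) \neq \bot$ and $y \in \{y^\bob_\star, y^\charlie_\star\}$ then output $\mu_{\bob}$ or $\mu_{\charlie}$, else output $W_s(x)$,'' and they differ only in the hardcoded $y^\star$ values: random for $D_0$ (so the $\mu$-branch never fires and $D_0$ computes $W_s$) versus $\prg(\prf.\eval(k, x^\star))$ for $D_1$ (so the $\mu$-branch fires exactly at the challenge inputs and $D_1$ computes $G_{k^*}$). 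Getting $D_1$ into this form is the bulk of the hybrids: one must temporarily hardcode the challenge outputs inside $\tilde{C}$, puncture the PRF key inside $P$, and argue via the length-doubling of $\prg$ that the test $y \in \{y^\star\}$ becomes equivalent to $x \in \{x^\bob, x^\charlie\}$ once $y^\star$ is random. Once both $D_b$ are in this form, a single preponed-security reduction suffices: $\reduc_\alice$ hardcodes $\prg(u^\bob), \prg(u^\charlie)$ in one circuit $D$, and the preponed challenger's bit directly determines whether $D$ behaves as $D_0$ or as $D_1$. The conceptual point you are missing is that the hardcoded values should toggle between outputting $\mu$ and outputting $W_s$, not between outputting $W(x)$ and outputting $\bot$.
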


\begin{proof}
    The proof follows by combining \Cref{lemma:puncturable_anti-piracy_from_cpcpiracy_PRFs,prop:cpcpiracy_PRF_from_cpcpiracysde-noncllz,thm:postproc-sde-correlated-challenge}, and the observation that the quantum hardness of LWE implies post-quantum one-way functions.
\end{proof}
\begin{theorem}\label{thm:copy-protection-construction-puncturable_anti-piracy-id}
   Assuming \Cref{conj:goldreich-levin-identical}, the existence of post-quantum sub-exponentially secure $\io$ and one-way functions, and the quantum hardness of Learning-with-errors problem (LWE),
    the construction given in \Cref{fig:copy-protect-construction} satisfies $\distrid$-generalized unclonable puncturable obfuscation security (see \Cref{subsec:upo-definition}) for any generalized puncturable keyed circuit class in $\ppoly$.
\end{theorem}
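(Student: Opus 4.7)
The plan is to mirror, step for step, the proof of \Cref{thm:copy-protection-construction-puncturable_anti-piracy}, replacing each ingredient by its identical-challenge analog. Concretely, the proof of the correlated version was obtained by chaining \Cref{thm:postproc-sde-correlated-challenge} (simultaneous Goldreich--Levin on independent challenges implies $\distrcipherind$-indistinguishability-from-random anti-piracy of the $\postproc$ $\sde$), \Cref{prop:cpcpiracy_PRF_from_cpcpiracysde-noncllz} (which lifts such $\sde$ security to $\distrprod$-preponed security of the $\cllz$ copy-protection for PRFs), and \Cref{lemma:puncturable_anti-piracy_from_cpcpiracy_PRFs} (which upgrades preponed CP-security of PRFs to generalized UPO security of the construction in \Cref{fig:copy-protect-construction} for any generalized puncturable keyed circuit class in $\ppoly$). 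I would use the identical-distribution counterparts in exactly the same order.

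In more detail, first I would invoke \Cref{thm:postproc-sde-identical-challenge}, which -- assuming \Cref{conj:goldreich-levin-identical}, post-quantum sub-exponentially secure $\io$ and OWFs, and quantum LWE -- yields that the $\postproc$ $\sde$ scheme of \Cref{fig:CLLZ-SDE-postproc-construction} satisfies $\distrcipherid$-indistinguishability-from-random anti-piracy. Next I would feed this into \Cref{prop:cpcpiracy_PRF_from_cpcpiracysde-id-noncllz} to conclude that the $\cllz$ copy-protection scheme for PRFs (\Cref{fig:copy-protect-CLLZ}) satisfies $\distrid$-$\cpppiracy$ in the sense of \Cref{def:cpcpiracy_PRFs}. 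Finally, I would apply the identical-distribution analog of \Cref{lemma:puncturable_anti-piracy_from_cpcpiracy_PRFs}, viewing the construction of \Cref{fig:copy-protect-construction} as a black-box wrapper around the $\cllz$ copy-protected PRF state, to deduce the $\distrid$-generalized UPO security claimed in the theorem.

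The routine observation I would rely on is that the reduction underlying \Cref{lemma:puncturable_anti-piracy_from_cpcpiracy_PRFs} is distribution-preserving: the challenge pair $(x^{\bob}, x^{\charlie})$ the UPO adversary forces on the challenger is simply handed through to the preponed-CP experiment verbatim, so if that pair is drawn from $\distrid$ then the inner experiment is precisely a $\distrid$-preponed-CP experiment on the underlying PRF. In particular, the only place the marginal/correlation structure of the challenge enters the proof is through \Cref{prop:cpcpiracy_PRF_from_cpcpiracysde-id-noncllz} and \Cref{thm:postproc-sde-identical-challenge}, both of which are stated for $\distrid$. I would note that the quantum hardness of LWE already implies post-quantum one-way functions, so the assumption list in the theorem is covered by the assumption lists of the three ingredients.

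The main obstacle is verifying that the identical-distribution analog of \Cref{lemma:puncturable_anti-piracy_from_cpcpiracy_PRFs} indeed goes through unchanged. The potentially delicate point is the hybrid in which one hardwires the PRG image $\prg(\prf.\eval(k,x))$ into the obfuscated check circuits $C$ and $D$ on the two challenge inputs, and then replaces those PRG images by freshly uniform strings; in the identical case $x^{\bob}=x^{\charlie}=x$, so there is only one hardwiring, and the injectivity of $\prg$ together with iO security still lets us puncture the PRF key at $x$ without observable change. Once the adversary's view is reduced to $(x, y_b, y_b)$ for the preponed-CP challenger, \Cref{prop:cpcpiracy_PRF_from_cpcpiracysde-id-noncllz} closes the argument and we obtain the claimed $\frac{1}{2}+\negl(\secparam)$ bound, completing the proof.
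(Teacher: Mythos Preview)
Your proposal is correct and matches the paper's proof essentially verbatim: the paper proves the theorem by combining \Cref{thm:postproc-sde-identical-challenge}, \Cref{prop:cpcpiracy_PRF_from_cpcpiracysde-id-noncllz}, and \Cref{lemma:puncturable_anti-piracy_from_cpcpiracy_PRFs-id} (the identical-distribution analog of \Cref{lemma:puncturable_anti-piracy_from_cpcpiracy_PRFs}), together with the observation that quantum LWE implies post-quantum one-way functions. Your remark that the reduction in \Cref{lemma:puncturable_anti-piracy_from_cpcpiracy_PRFs} is distribution-preserving, so that only one hardwiring is needed when $x^{\bob}=x^{\charlie}$, is exactly the ``minor adaptation'' the paper alludes to in its one-line proof of \Cref{lemma:puncturable_anti-piracy_from_cpcpiracy_PRFs-id}.
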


\begin{proof}
    The proof follows by combining \Cref{lemma:puncturable_anti-piracy_from_cpcpiracy_PRFs-id,prop:cpcpiracy_PRF_from_cpcpiracysde-id-noncllz,thm:postproc-sde-identical-challenge}, and the observation that the quantum hardness of LWE implies post-quantum one-way functions.
\end{proof}

\begin{lemma}\label{lemma:puncturable_anti-piracy_from_cpcpiracy_PRFs}
   Assuming the existence of post-quantum $\io$, one-way functions, and that $\cllz$ copy protection construction for $\prf$s given in \Cref{fig:copy-protect-CLLZ}, satisfies $\distrprod$-$\cpppiracy$ (defined in \Cref{def:cpcpiracy_PRFs}, the construction given in \Cref{fig:copy-protect-construction} for $\cktclassw$ satisfies $\distrprod$-generalized $\UPO$ security guarantee (see \Cref{subsec:upo-definition}), for any puncturable keyed circuit class $\cktclassw=\{\{W_s\}_{s\in \keyspace_\secparam}\}_\secparam$ in $\ppoly$.
\end{lemma}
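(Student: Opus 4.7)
The plan is to reduce $\distrprod$-generalized $\upo$ security of the construction in \Cref{fig:copy-protect-construction} to the $\distrprod$-$\cpppiracy$ of the $\cllz$ $\prf$ copy-protection scheme. Given a $\upo$ adversary $(\alice,\bob,\charlie)$ against $\cktclassw$, I build a preponed-security adversary $(\reduc_\alice,\reduc_\bob,\reduc_\charlie)$ whose winning probability matches that of $(\alice,\bob,\charlie)$ up to negligible terms. $\reduc_\alice$ first runs $\alice$ to extract its selection $(s,\mu_\bob,\mu_\charlie)$ and then receives from the preponed challenger the $\cllz$ state $\rho_{C_k}=(\{\ket{{A_i}_{s_i,{s'}_i}}\}_i,\io(P))$ together with the challenge outputs $(y^\bob_b,y^\charlie_b)$. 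It rebuilds the $\upo$ obfuscation exactly as $\obf$ does, except it replaces the inner obfuscated program by $\io(D^*)$, where $D^*(x,v,y)$ first checks $y=C(x,v)$ with $C=\prg\cdot\io(P)$ and returns $\bot$ otherwise; if $y=\prg(y^\bob_b)$ it returns $\mu_\bob(x)$, if $y=\prg(y^\charlie_b)$ it returns $\mu_\charlie(x)$, else it returns $W_s(x)$. The simulated obfuscation $(\{\ket{{A_i}_{s_i,{s'}_i}}\}_i,\tilde{C},\io(D^*))$ is handed to $\alice$; in the challenge phase, $\reduc_\bob$ and $\reduc_\charlie$ forward $x^\bob,x^\charlie$ verbatim to $\bob,\charlie$ (both experiments use $\distrprod$) and return their guesses unchanged.

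Correctness of the simulation rests on two $\io$-based functional-equivalence arguments that bracket the preponed bit switch. When $b=1$, $y^\bob_b=\prf.\eval(k,x^\bob)$ and $y^\charlie_b=\prf.\eval(k,x^\charlie)$, so combining $y=C(x,v)=\prg(\prf.\eval(k,x))$ with $y=\prg(y^\bob_b)$ forces $x=x^\bob$ on valid $v$ (by injectivity of $\prg$, and using that $\prf.\eval(k,\cdot)$ separates $x^\bob$ from $x^\charlie$ with overwhelming probability over $k$ and over uniform $x^\bob,x^\charlie$); analogously at $x^\charlie$. Hence $D^*$ agrees on every input with the honest hardcoded circuit for the punctured functionality $G_{s^*}\gets\genpuncture(s,x^\bob,x^\charlie,\mu_\bob,\mu_\charlie)$, and $\io$ security makes $\alice$'s view indistinguishable from the $b'=1$ $\upo$ world. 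When $b=0$, the $y^\bob_b,y^\charlie_b$ are uniformly random and, with overwhelming probability, do not lie in the image of $\prf.\eval(k,\cdot)$ on $\{0,1\}^n$, so the two special clauses of $D^*$ become dead code; $D^*$ is then functionally identical to the honest $D$ computing $W_s$, and $\alice$'s view matches the $b'=0$ $\upo$ world.

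The correspondence $b=b'$ then yields the reduction: any $\upo$ advantage $\varepsilon(\secparam)$ of $(\alice,\bob,\charlie)$ becomes a preponed-security advantage $\varepsilon(\secparam)-\negl(\secparam)$ of $(\reduc_\alice,\reduc_\bob,\reduc_\charlie)$, contradicting the assumed $\distrprod$-$\cpppiracy$ of the $\cllz$ scheme and completing the proof. The main obstacle is getting both functional equivalences right, since $\io$ demands pointwise equality of $D$ and $D^*$ rather than computational indistinguishability on typical inputs. The $b=0$ case needs random $y^\bob_b,y^\charlie_b$ to avoid the \emph{entire} image of $\prf.\eval(k,\cdot)$ on $\{0,1\}^n$; this is handled by a union bound when the $\prf$ output length sufficiently exceeds the input length, or otherwise by a standard punctured-programming hybrid on the $\prf$ in the style of~\cite{CLLZ21}. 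The $b=1$ case needs $\prf.\eval(k,\cdot)$ to separate the two uniform challenge inputs, a simple statistical fact. Once these pointwise equalities are established, (possibly subexponential) post-quantum $\io$ security closes the argument.
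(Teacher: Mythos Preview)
Your high-level strategy---embed the preponed values $(y^\bob_b,y^\charlie_b)$ into a single circuit $D^*$ so that the preponed bit $b$ simulates the $\upo$ bit---is precisely the paper's endgame. But the direct reduction you sketch does not go through: the two ``functional equivalence'' claims you need for $\io$ to apply are both false as stated, and the paper spends a chain of thirteen hybrids setting things up before a reduction of your shape becomes valid.

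The core problem is the image of $P$. By injectivity of $\prg$, your test $y=\prg(y^\bob_b)$ together with $y=C(x,v)=\prg(\io(P)(x,v))$ collapses to $\io(P)(x,v)=y^\bob_b$. In the $b=1$ case you then need $\io(P)(x,v)=\prf.\eval(k,x^\bob)$ to force $x=x^\bob$ on \emph{every} input, not just to separate $x^\bob$ from $x^\charlie$; but $P$ has a hidden-trigger mode that outputs $Q(v)$ for an arbitrary circuit $Q$ encoded in $x$, so its image is all of $\{0,1\}^m$, and even in normal mode you would need full $\prf$ injectivity, which is not assumed. In the $b=0$ case you need a uniform $y^\bob_b\in\{0,1\}^m$ to miss the image of $P$, which again is all of $\{0,1\}^m$ because of hidden triggers---so neither the union-bound fix nor a $\prf$-only puncturing hybrid rescues the argument. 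The paper's route exploits the $2m$-bit output of $C$: for $D_0$ it first hardcodes \emph{fresh random $2m$-bit} strings (these miss $\imgclass_\prg$, hence $\imgclass_C$, statistically, making the added branch dead code so $\io$ applies), then moves to $\prg$ of random $m$-bit strings by PRG security; for $D_1$ it rewrites the test as $x\in\{x^\bob,x^\charlie\}$, hardcodes $y^\bob_1,y^\charlie_1$ in $C$, punctures the $\prf$ inside $P$, replaces those values by fresh randomness (PRF then PRG security), at which point the $y$-test and the $x$-test provably coincide, and finally undoes the changes to $C$. Only after this chain does your single-$D^*$ reduction to preponed security become sound.
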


\begin{lemma}\label{lemma:puncturable_anti-piracy_from_cpcpiracy_PRFs-id}
    Assuming the existence of post-quantum $\io$, one-way functions, and that $\cllz$ copy protection construction for $\prf$s given in \Cref{fig:copy-protect-CLLZ}, satisfies $\distrid$-$\cpppiracy$ (defined in \Cref{def:cpcpiracy_PRFs}),  the construction given in \Cref{fig:copy-protect-construction} for $\cktclassw$ satisfies $\distrid$-generalized $\UPO$ security guarantee (see \Cref{subsec:upo-definition}), for any puncturable keyed circuit class $\cktclassw=\{\{W_s\}_{s\in \keyspace_\secparam}\}_\secparam$ in $\ppoly$. 
\end{lemma}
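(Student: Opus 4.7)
The plan is to exhibit a black-box reduction from a $\distrid$-generalized-UPO adversary against the construction in \Cref{fig:copy-protect-construction} to a $\distrid$-$\cpppiracy$ adversary against the CLLZ copy-protection scheme of \Cref{fig:copy-protect-CLLZ}. Given $(\alice,\bob,\charlie)$ breaking the former with advantage $\delta$, I will construct $(\reduc_\alice,\reduc_\bob,\reduc_\charlie)$ breaking the latter with advantage $\delta - \negl(\secparam)$. The identical-distribution setting is essential: because $x^\bob = x^\charlie$, the preponed challenge values satisfy $y^\bob = y^\charlie =: y$, so a single programmed value $y^* := \prg(y)$ suffices to drive the reduction.

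On receiving the CP-PRF challenge $(\rho_k, y)$, the reduction $\reduc_\alice$ takes $\alice$'s query $(s,\mu_\bob,\mu_\charlie)$, parses $\rho_k = (\tilde P,\{\ket{{A_i}_{s_i,{s'_i}}}\})$, assembles $\tilde C = \io(\prg \cdot \tilde P)$ exactly as in \Cref{fig:copy-protect-construction}, and then builds a programmed circuit $D^*$ that hardcodes $s$, $\mu_\bob$, and $y^*$: on input $(x',v,y')$, if $C(x',v) = y' \neq \bot$ it outputs $\mu_\bob(x')$ when $y' = y^*$ and $W_s(x')$ otherwise, and outputs $\bot$ on a failed check. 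It forwards $(\{\ket{{A_i}_{s_i,{s'_i}}}\},\tilde C,\io(D^*))$ to $\alice$, receives the bipartite state $\sigma_{\bob,\charlie}$, and routes its halves to $\reduc_\bob$ and $\reduc_\charlie$. In the challenge phase $\reduc_\bob$ and $\reduc_\charlie$ each receive their challenge point from the CP-PRF challenger, invoke $\bob$ (respectively $\charlie$) on that point together with the appropriate register of $\sigma_{\bob,\charlie}$, and forward the resulting bit.

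Correctness of the mapping rests on two $\io$-hops. When the CP-PRF bit is $1$, we have $y = f_k(x)$ and $y^* = \prg(f_k(x))$; by injectivity of the length-doubling $\prg$ and the approximate injectivity of the CLLZ-compatible $\prf$ family over a random key, the set $\{x' : \prg(f_k(x')) = y^*\}$ equals exactly $\{x\}$ with overwhelming probability, so $D^*$ is functionally equivalent to the circuit $D_{G_{s^*}}$ that would arise from $\obf(G_{s^*})$; thus $\io(D^*) \approx_c \io(D_{G_{s^*}})$, matching the $b = 1$ branch of $\genupoexpt$. When the CP-PRF bit is $0$, $y$ is uniform, so $y^*$ lies outside $\prg(\mathrm{Image}(f_k))$ except with negligible probability (the image of $\prg$ is sparse in its codomain); hence $D^*$ is functionally equivalent to $D_{W_s}$, matching the $b = 0$ branch. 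Composing these two hops shows the reduction wins the CP-PRF experiment with advantage $\delta - \negl(\secparam)$, contradicting the assumed $\distrid$-$\cpppiracy$ of CLLZ.

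The main obstacle is establishing the functional-equivalence claim needed to invoke $\io$: one must certify that the preimage set $\{x' : \prg(f_k(x')) = y^*\}$ is either exactly a single point (when $b=1$) or empty (when $b=0$), for \emph{every} input $x'$, not merely random ones. This is handled by combining injectivity of $\prg$ with the statistical injectivity of the CLLZ-compatible $\prf$ family over a random key (the same property driving the hidden-trigger analysis of \cite{CLLZ21}) and by absorbing the negligible failure probability into the advantage gap. Apart from this the argument parallels the $\distrprod$-case (\Cref{lemma:puncturable_anti-piracy_from_cpcpiracy_PRFs}); the only substantive simplification is that a single $y^*$ is programmed rather than a pair $(y^{*,\bob},y^{*,\charlie})$, because $y^\bob = y^\charlie$ under $\distrid$.
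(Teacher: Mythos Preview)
Your direct one-shot reduction has a genuine gap at exactly the point you flag as ``the main obstacle,'' and the resolution you propose does not close it.

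You hardcode $y^* = \prg(y)$ and want to invoke $\io$ security by arguing that $D^*$ is functionally equivalent to $D_{W_s}$ (when $y$ is random) or to $D_{G_{s^*}}$ (when $y = f_k(x)$). But $y^*$ is, by construction, always in the range of $\prg$. The circuit $C = \prg \circ \io(P)$ also has image contained in the range of $\prg$, so you cannot use sparsity of $\prg$'s image in $\{0,1\}^{2m}$ to rule out preimages. Concretely: in the $b=0$ case you need that no $(x',v)$ satisfies $P(x',v) = y$ for uniform $y \in \{0,1\}^m$, which fails whenever the PRF (or the hidden-trigger branch of $P$) hits $y$; in the $b=1$ case you need that the only such $(x',v)$ have $x' = x$, which fails if the PRF has a collision $f_k(x') = f_k(x)$ or if some hidden-trigger input $x'$ encodes a circuit $Q$ with $Q(v) = f_k(x)$. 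None of the stated properties of the CLLZ PRF (puncturability, the extractor property) give you statistical injectivity, and the hidden-trigger branch of $P$ can output an \emph{arbitrary} $Q(v)$ on adversarially crafted $x'$, so ``absorbing the negligible failure probability'' is not available---the bad set of inputs need not be empty, and $\io$ security requires exact functional equivalence.

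This is precisely why the paper's proof of \Cref{lemma:puncturable_anti-piracy_from_cpcpiracy_PRFs} (which the present lemma inherits verbatim ``up to minor adaptations'') routes through a dozen hybrids rather than a single hop. The essential maneuver is to first puncture the PRF key inside $P$, then use puncturing security and then $\prg$ security to replace the check values by \emph{uniform strings in $\{0,1\}^{2m}$}. At that point---and only at that point---the check values lie outside the entire image of $C$ with overwhelming probability (since $\mathrm{Im}(C) \subseteq \mathrm{Im}(\prg)$ regardless of hidden-trigger behavior), so the $\io$ hop swapping the $x$-check for the $y$-check is sound. The hybrids then reverse the randomization to land at the form needed for the reduction to $\distrid$-preponed security. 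Your proposal collapses these steps, and the collapsed argument does not go through without an additional (unstated and unjustified) injectivity assumption on $P$ as a whole.
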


\ifllncs 
The proof of the \Cref{lemma:puncturable_anti-piracy_from_cpcpiracy_PRFs,lemma:puncturable_anti-piracy_from_cpcpiracy_PRFs-id} can be found in~\Cref{app:proof:lemma4}. 
\else

\ifllncs 
\subsection{Proof of \Cref{lemma:puncturable_anti-piracy_from_cpcpiracy_PRFs,lemma:puncturable_anti-piracy_from_cpcpiracy_PRFs-id}}
\label{app:proof:lemma4}
\else 

\fi 

\begin{proof}[Proof of \Cref{lemma:copy-protection-construction-completeness}]\anote{I have slightly changed the proof}
Let $W$ be the circuit that is obfuscated, and let the resulting obfuscated state be $\rho=(\{\{\ket{{A_i}_{s_i,{s'}_i}}\}_i\},\tilde{C},\io(D))$. We will show that for every input $x=(x_0,x_1,x_2)$, the $\eval$ algorithm on $(\rho,x)$ outputs $W(x)$ except with negligible probability. Let $\ket{\phi_x}$ be the state obtained after running the Hadamard operation on  $\{\{\ket{{A_i}_{s_i,{s'}_i}}\}_i\}$ (see \Cref{line:Hadamard} of the $\eval$ algorithm in \Cref{fig:copy-protect-construction}). 
It is easy to check that for every input $x$, by the correctness of $\cllz$ copy-protection, running $\tilde{C}$ that is generated as $\tilde{C}\gets \io(C)$ on $(x,\ket{\phi_x}$ in superposition, and then measuring the output register results in $y$ which is equal to $\prg(\prf.\eval(k,x))$, except with negligible probability. By the almost as good as new lemma~\cite{aaronson2016complexity}, this would mean that the resulting quantum state $\sigma$ which is negligibly close to $\ket{\psi_x}\bra{\psi_x}$ in trace distance.\anote{Prabhanjan are you happy with this proof?}  
Hence, running $C$  on $\sigma$ in \Cref{line:outside-C-eval} and inside $\io(D)$ in superposition and then checking if the output is equal to $y$ in superposition (see \Cref{line:outside-C-eval} of the $\eval()$ algorithm in \Cref{fig:copy-protect-construction}), must succeed and $\io(D)$ will output $W(x)$, except with negligible probability. Therefore, except with negligible probability, $\eval(\rho,x)$ outputs $W(x)$.\anote{Change $\io(P),\io(D)$ to $\tilde{P}$ and $\tilde{D}$ respectively.}\anote{If we have time, simplify the proof by measuring after step 2.}
\end{proof}\anote{Prabhanjan, is this explanation satisfactory?}

\begin{proof}[\textbf{Proof of \cref{lemma:puncturable_anti-piracy_from_cpcpiracy_PRFs}}]
Let $(\alice,\bob,\charlie)$ be a QPT adversary in the security experiment given in~\cref{fig:genupo:expt} with $\distr_\inpclass=\distrprod$ as mentioned in the lemma. 
We mark the changes in blue.

\noindent $\hybrid_0$: \\
Same as the security experiment given in~\cref{fig:genupo:expt}. 
\begin{enumerate}
\item $\alice$ sends a key $s\in \keyspace_\secparam$ and functions $\mu_\bob$ and $\mu_\charlie$ to $\ch$.
\item $\ch$ samples $x^\bob,x^\charlie \xleftarrow{\$} \{0,1\}^n$.
\item $\ch$ samples $k\gets \keygen$, and generates $\io(P),\{\ket{{A_i}_{s_i,{s'}_i}}\}_i \leftarrow \cllz.\copyprotect(1^{\secparam},k)$.
\item $\ch$ constructs $\tilde{C}\gets \io(C)$ where $C=\prg\cdot \io(P)$. 
\item $\ch$ constructs the circuit $\io(D_0),\io(D_1)$ where $D_0$, $D_1$ are as depicted in \cref{fig:D_0-hybrid-0,fig:D_1-hybrid-0}. 
\item $\ch$ samples $b\xleftarrow{\$} \{0,1\}$ and sends $(\io({C}),\{\ket{{A_i}_{s_i,{s'}_i}}\}_i,\io(D_b))$ to $\alice$. 
 
\item $\alice(\tilde{C},\{\ket{{A_i}_{s_i,{s'}_i}}\}_i,\io(D_b))$ outputs a bipartite state $\sigma_{\bob,\charlie}$.
\item Apply $(\bob(x^\bob,\cdot) \otimes \charlie(x^\charlie,\cdot))(\sigma_{\bob,\charlie})$ to obtain $(b_{\bfB},b_{\bfC})$. 
\item Output $1$ if $b_{\bfB}=b_\bfC=b$.
\end{enumerate}

 \begin{figure}[!htb]
   \begin{center} 
   \begin{tabular}{|p{12cm}|}
    \hline 
\begin{center}
\underline{$P$}: 
\end{center}
 Hardcoded keys $K_1,K_2,K_3,R^0_i,R^1_i$ for every $i\in [\ell_0]$
 On input $x=x_0\|x_1\|x_2$ and vectors $v=v_1,\ldots v_{\ell_0}$.
 \begin{enumerate}
     \item If $F_3(K_3,x_1)\oplus x_2=x_0\|Q$ and $x_1=F_2(K_2,x_0\|Q)$:
     
     \textbf{Hidden trigger mode:} Treat $Q$ as a classical circuit and output $Q(v)$.
     \item Otherwise, check if the following holds: for all $i\in \ell_0$, $R^{x_{0,i}}(v_i)=1$ (where $x_{0,i}$ is the $i^{th}$ coordinate of $x_0$).
     
     \textbf{Normal mode:} If so, output $F_1(K_1,x)$ where $F_1()=\prf.\eval()$ is the primary pseudorandom function family that is being copy-protected. Otherwise output $\bot$.\label{line:normal mode}
 \end{enumerate}
\ \\ 
\hline
\end{tabular}
    \caption{Circuit $P$ in $\hybrid_0$.}
    \label{fig:CLLZ-circuit-P}
    \end{center}
\end{figure}

\begin{figure}[!htb]
   \begin{center} 
   \begin{tabular}{|p{12cm}|}
    \hline 
\begin{center}
\underline{$D_0$}: 
\end{center}
Hardcoded keys $W_s,C$.
On input: $x,v,y$.
\begin{enumerate}
\item Run $y'\gets C(x,v)$.
\item If $y'\neq y$ or $y'=\bot$ output $\bot$.
\item If $y=y'\neq \bot$, output $W_s(x)$.
\end{enumerate}
\ \\ 
\hline
\end{tabular}
    \caption{Circuit $D_0$ in $\hybrid_{0}$}
    \label{fig:D_0-hybrid-0}
    \end{center}
\end{figure}

\begin{figure}[!htb]
   \begin{center} 
   \begin{tabular}{|p{12cm}|}
    \hline 
\begin{center}
\underline{$D_1$}: 
\end{center}
Hardcoded keys {$W_{s,x^\bob,x^\charlie,\mu_\bob,\mu_\charlie},$}$C$.
On input: $x,v,y$.
\begin{enumerate}
\item Run $y'\gets C(x,v)$.
\item If $y'\neq y$ or $y'=\bot$ output $\bot$.
\item {If $y=y'\neq \bot$, output $W_{s,x^\bob,x^\charlie,\mu_\bob,\mu_\charlie}(x)$.}
\end{enumerate}
\ \\ 
\hline
\end{tabular}
    \caption{Circuit $D_1$ in $\hybrid_{0}$}
    \label{fig:D_1-hybrid-0}
    \end{center}
\end{figure}

\noindent \underline{$\hybrid_1$}: 
\begin{enumerate}
\item $\alice$ sends a key $s\in \keyspace_\secparam$ and functions $\mu_\bob$ and $\mu_\charlie$ to $\ch$.
\item $\ch$ samples $x^\bob,x^\charlie \xleftarrow{\$} \{0,1\}^n$.
\item $\ch$ samples $k\gets \keygen$, and generates $\io(P),\{\ket{{A_i}_{s_i,{s'}_i}}\}_i \leftarrow \cllz.\copyprotect(1^{\secparam},k)$.
\item $\ch$ constructs $\tilde{C}\gets \io(C)$ where $C=\prg\cdot \io(P)$. 
\item \cblue{$\ch$ samples $y^\bob_0,y^\charlie_1 \xleftarrow{\$} \{0,1\}^{2m}$. }
\item $\ch$ constructs the circuit $\io(D_0),\io(D_1)$ where $D_0$ and \cblue{$D_1$} are as depicted in \cblue{\cref{fig:D_0-hybrid-1}} and \cref{fig:D_1-hybrid-0}, respectively. 
\item $\ch$ samples $b\xleftarrow{\$} \{0,1\}$ and sends $(\io({C}),\{\ket{{A_i}_{s_i,{s'}_i}}\}_i,\io(D_b))$ to $\alice$. 
 
\item $\alice(\tilde{C},\{\ket{{A_i}_{s_i,{s'}_i}}\}_i,\io(D_b))$ outputs a bipartite state $\sigma_{\bob,\charlie}$.
\item Apply $(\bob(x^\bob,\cdot) \otimes \charlie(x^\charlie,\cdot))(\sigma_{\bob,\charlie})$ to obtain $(b_{\bfB},b_{\bfC})$. 
\item Output $1$ if $b_{\bfB}=b_\bfC=b$.
\end{enumerate}

\begin{figure}[!htb]
   \begin{center} 
   \begin{tabular}{|p{12cm}|}
    \hline 
\begin{center}
\underline{$D_0$}: 
\end{center}
Hardcoded keys $W_s$,\cblue{$\mu_\bob,\mu_\charlie$},$C$.
On input: $x,v,y$.
\begin{enumerate}
\item Run $y'\gets C(x,v)$.
\item If $y'\neq y$ or $y'=\bot$ output $\bot$.
\item \cblue{If $y=y'\neq \bot$ and $y\in \{y^\bob_0, y^\charlie_0\}$:, output $g(x)$.}\label{line:dummy-check}
\begin{enumerate}
    \item \cblue{If $y=y^\bob_0$ output $\mu_\bob(x^\bob)$.}
    \item \cblue{If $y=y^\charlie_0$ output $\mu_\charlie(x^\charlie)$.}
\end{enumerate}
\item \cblue{If $y=y'\neq \bot$ and $y\not\in \{y^\bob_0, y^\charlie_0\}$, output $W_s(x)$.}
\end{enumerate}
\ \\ 
\hline
\end{tabular}
    \caption{Circuit $D_0$ in $\hybrid_{1}$}
    \label{fig:D_0-hybrid-1}
    \end{center}
\end{figure}

\noindent \underline{$\hybrid_2$}: 
\begin{enumerate}
\item $\alice$ sends a key $s\in \keyspace_\secparam$ and functions $\mu_\bob$ and $\mu_\charlie$ to $\ch$.
\item $\ch$ samples $x^\bob,x^\charlie \xleftarrow{\$} \{0,1\}^n$.
\item $\ch$ samples $k\gets \keygen$, and generates $\io(P),\{\ket{{A_i}_{s_i,{s'}_i}}\}_i \leftarrow \cllz.\copyprotect(1^{\secparam},k)$.
\item $\ch$ constructs $\tilde{C}\gets \io(C)$ where $C=\prg\cdot \io(P)$. 
\item $\ch$ samples \sout{$y^\bob_0,y^\charlie_1 \xleftarrow{\$} \{0,1\}^{2m}$} \cblue{$y_1,y_2\xleftarrow{\$}\{0,1\}^m$, and computes $y^\bob_0\gets \prg(y_1),y^\charlie_0\gets \prg(y_2)$. }
\item $\ch$ constructs the circuit $\io(D_0),\io(D_1)$ where $D_0$ and $D_1$ are as depicted in \cref{fig:D_0-hybrid-1,fig:D_1-hybrid-0}, respectively. 
\item $\ch$ samples $b\xleftarrow{\$} \{0,1\}$ and sends $(\io({C}),\{\ket{{A_i}_{s_i,{s'}_i}}\}_i,\io(D_b))$ to $\alice$. 
 
\item $\alice(\tilde{C},\{\ket{{A_i}_{s_i,{s'}_i}}\}_i,\io(D_b))$ outputs a bipartite state $\sigma_{\bob,\charlie}$.
\item Apply $(\bob(x^\bob,\cdot) \otimes \charlie(x^\charlie,\cdot))(\sigma_{\bob,\charlie})$ to obtain $(b_{\bfB},b_{\bfC})$. 
\item Output $1$ if $b_{\bfB}=b_\bfC=b$.
\end{enumerate}

\noindent \underline{$\hybrid_3$}: 
\begin{enumerate}
\item $\alice$ sends a key $s\in \keyspace_\secparam$ and functions $\mu_\bob$ and $\mu_\charlie$ to $\ch$.
\item $\ch$ samples $x^\bob,x^\charlie \xleftarrow{\$} \{0,1\}^n$.
\item $\ch$ samples $k\gets \keygen$, and generates $\io(P),\{\ket{{A_i}_{s_i,{s'}_i}}\}_i \leftarrow \cllz.\copyprotect(1^{\secparam},k)$.
\item $\ch$ constructs $\tilde{C}\gets \io(C)$ where $C=\prg\cdot \io(P)$. 
\item $\ch$ samples {$y_1,y_2\xleftarrow{\$}\{0,1\}^m$, and computes $y^\bob_0\gets \prg(y_1),y^\charlie_0\gets \prg(y_2)$. }
\item $\ch$ constructs the circuit $\io(D_0),\io(D_1)$ where $D_0$ and \cblue{$D_1$} are as depicted in {\cref{fig:D_0-hybrid-1}} and \cblue{\cref{fig:D_1-hybrid-3}}, respectively. 
\item $\ch$ samples $b\xleftarrow{\$} \{0,1\}$ and sends $(\io({C}),\{\ket{{A_i}_{s_i,{s'}_i}}\}_i,\io(D_b))$ to $\alice$. 
 
\item $\alice(\tilde{C},\{\ket{{A_i}_{s_i,{s'}_i}}\}_i,\io(D_b))$ outputs a bipartite state $\sigma_{\bob,\charlie}$.
\item Apply $(\bob(x^\bob,\cdot) \otimes \charlie(x^\charlie,\cdot))(\sigma_{\bob,\charlie})$ to obtain $(b_{\bfB},b_{\bfC})$. 
\item Output $1$ if $b_{\bfB}=b_\bfC=b$.
\end{enumerate}

\begin{figure}[!htb]
   \begin{center} 
   \begin{tabular}{|p{12cm}|}
    \hline 
\begin{center}
\underline{$D_1$}: 
\end{center}
Hardcoded keys $f,g,${$C$}.
On input: $x,v,y$.
\begin{enumerate}
\item Run $y'\gets C(x,v)$
\item If $y'\neq y$ or $y'=\bot$ output $\bot$.
\item \sout{If $y=y'\neq \bot$, output $W_{s,x^\bob,x^\charlie,\mu_\bob,\mu_\charlie}(x)$.}
\item \cblue{If $y=y'\neq \bot$ and $x\in \{x^\bob, x^\charlie\}$:}
\begin{enumerate}
    \item \cblue{If $x=x^\bob$ output $\mu_\bob(x^\bob)$.}
    \item \cblue{If $x=x^\charlie$ output $\mu_\charlie(x^\charlie)$.}
\end{enumerate}
\item \cblue{If $y=y'\neq \bot$ and $x\not\in \{x^\bob, x^\charlie\}$}, output $W_s(x)$.
\end{enumerate}
\ \\ 
\hline
\end{tabular}
    \caption{Circuit $D_1$ in $\hybrid_{3}$}
    \label{fig:D_1-hybrid-3}
    \end{center}
\end{figure}

\noindent \underline{$\hybrid_4$}: 
\begin{enumerate}
\item $\alice$ sends a key $s\in \keyspace_\secparam$ and functions $\mu_\bob$ and $\mu_\charlie$ to $\ch$.
\item $\ch$ samples $x^\bob,x^\charlie \xleftarrow{\$} \{0,1\}^n$.
\item $\ch$ samples $k\gets \keygen$, and runs the $\cllz.\copyprotect(1^\secparam,k)$ algorithm as follows:\sout{generates $\io(P),\{\ket{{A_i}_{s_i,{s'}_i}}\}_i \leftarrow \cllz.\copyprotect(1^{\secparam},k)$.}\footnote{There is no change in this line compared to $\hybrid_3$, we only spell out the $\cllz.\copyprotect(1^\secparam,k)$ explicitly in order to use intermediate information in the next few steps.}
\begin{enumerate}
    \item \cblue{Samples $\ell_0$ coset states $\ket{{A_i}_{s_i,{s'}_i}}_i$ and construct $R^0_i=\io(A_i+s_i)$ and $R^1_i=\io(A_i+{s'}_i)$ for every $i\in [\ell_0]$.}
    \item \cblue{Samples keys $K_2,K_3$ from the respective secondary $\prf$s and use $R^0_i=\io(A_i+s_i)$ and $R^1_i=\io(A_i+{s'}_i)$ along with {$k$} to construct $P$, as given in \cref{fig:CLLZ-circuit-P}.}
\end{enumerate}
\item \cblue{$\ch$ computes $y^\bob_1=\prg(\prf.\eval(k,x^\bob)),y^\charlie_1=\prg(\prf.\eval(k,x^\charlie))$, and uses $y^\bob_1,y^\charlie_1$ along with $R^0_i,R^1_i,\io(P),\prg$} to construct \cblue{$C$ as depicted in \cref{fig:C-hybrid-4}}. 
\item $\ch$ samples {$y_1,y_2\xleftarrow{\$}\{0,1\}^m$, and computes $y^\bob_0\gets \prg(y_1),y^\charlie_0\gets \prg(y_2)$. }
\item $\ch$ constructs the circuit $\io(D_0),\io(D_1)$ where $D_0$ and $D_1$ are as depicted in \cref{fig:D_0-hybrid-1,fig:D_1-hybrid-3}, respectively. 
\item $\ch$ samples $b\xleftarrow{\$} \{0,1\}$ and sends $(\io({C}),\{\ket{{A_i}_{s_i,{s'}_i}}\}_i,\io(D_b))$ to $\alice$. 
 
\item $\alice(\tilde{C},\{\ket{{A_i}_{s_i,{s'}_i}}\}_i,\io(D_b))$ outputs a bipartite state $\sigma_{\bob,\charlie}$.
\item Apply $(\bob(x^\bob,\cdot) \otimes \charlie(x^\charlie,\cdot))(\sigma_{\bob,\charlie})$ to obtain $(b_{\bfB},b_{\bfC})$. 
\item Output $1$ if $b_{\bfB}=b_\bfC=b$.
\end{enumerate}

\begin{figure}[!htb]
   \begin{center} 
   \begin{tabular}{|p{12cm}|}
    \hline 
\begin{center}
\underline{${C}$}: 
\end{center}
Hardcoded keys {$\io(P),\prg$,}\cblue{$x^\bob,x^\charlie,y^\bob_1,y^\charlie_1,{k_{x^\bob,x^\charlie}}_1,R^0_i,R^1_i$ for all $i\in \ell_0$ (where $\ell_0$ is the number of coset states.)}.
On input: $x,v$.
\begin{enumerate}
\item \cblue{If $x\in (x^\bob,x^\charlie)$:}
\begin{enumerate}
    \item \cblue{Check if $R^{x_{0,i}}_i(v_i)=1$ for all $i\in \ell_0$, and reject otherwise.}
    \item \cblue{If $x= x^\bob$, output $y^\bob_1$.}
    \item \cblue{If $x= x^\charlie$, output $y^\charlie_1$.}
\end{enumerate}
\item If $x\not\in \{x^\bob, x^\charlie\}$, output \cblue{$\prg(\io(P)(x))$}.
\end{enumerate}
\ \\ 
\hline
\end{tabular}
    \caption{Circuit ${C}$ in $\hybrid_{4}$}
    \label{fig:C-hybrid-4}
    \end{center}
\end{figure}

\noindent \underline{$\hybrid_5$}: 
\begin{enumerate}
\item $\alice$ sends a key $s\in \keyspace_\secparam$ and functions $\mu_\bob$ and $\mu_\charlie$ to $\ch$.
\item $\ch$ samples $x^\bob,x^\charlie \xleftarrow{\$} \{0,1\}^n$.
\item $\ch$ samples $k\gets \keygen$, and does the following:
\begin{enumerate}
    \item  \cblue{Computes $k_{x^\bob,x^\charlie}\gets \prf.\puncture(k,\{x^\bob,x^\charlie\})$.}
    \item {Samples $\ell_0$ coset states $\ket{{A_i}_{s_i,{s'}_i}}_i$ and construct $R^0_i=\io(A_i+s_i)$ and $R^1_i=\io(A_i+{s'}_i)$ for every $i\in [\ell_0]$.}
    \item {Samples keys $K_2,K_3$ from the respective secondary $\prf$s and use $R^0_i=\io(A_i+s_i)$ and $R^1_i=\io(A_i+{s'}_i)$ along with \cblue{$k_{x^\bob,x^\charlie}$} to construct $P$, as given in $\cref{fig:CLLZ-circuit-P}$.}
\end{enumerate}
\item $\ch$ computes $y^\bob_1=\prg(\prf.\eval(k,x^\bob))$, $y^\charlie_1=\prg(\prf.\eval(k,x^\charlie))$ and uses $y^\bob_1,y^\charlie_1$ along with $R^0_i,R^1_i,\io(P),\prg$ to construct {$C$ as depicted in \cref{fig:C-hybrid-4}}.  
\item $\ch$ samples {$y_1,y_2\xleftarrow{\$}\{0,1\}^m$, and computes $y^\bob_0\gets \prg(y_1),y^\charlie_0\gets \prg(y_2)$. }
\item $\ch$ constructs the circuit $\io(D_0),\io(D_1)$ where $D_0$ and $D_1$ are as depicted in \cref{fig:D_0-hybrid-1,fig:D_1-hybrid-3}, respectively. 
\item $\ch$ samples $b\xleftarrow{\$} \{0,1\}$ and sends $(\io({C}),\{\ket{{A_i}_{s_i,{s'}_i}}\}_i,\io(D_b))$ to $\alice$. 
 
\item $\alice(\tilde{C},\{\ket{{A_i}_{s_i,{s'}_i}}\}_i,\io(D_b))$ outputs a bipartite state $\sigma_{\bob,\charlie}$.
\item Apply $(\bob(x^\bob,\cdot) \otimes \charlie(x^\charlie,\cdot))(\sigma_{\bob,\charlie})$ to obtain $(b_{\bfB},b_{\bfC})$. 
\item Output $1$ if $b_{\bfB}=b_\bfC=b$.
\end{enumerate}

\noindent \underline{$\hybrid_6$}:
\begin{enumerate}
\item $\alice$ sends a key $s\in \keyspace_\secparam$ and functions $\mu_\bob$ and $\mu_\charlie$ to $\ch$.
\item $\ch$ samples $x^\bob,x^\charlie \xleftarrow{\$} \{0,1\}^n$.
\item $\ch$ samples $k\gets \keygen$, and does the following:
\begin{enumerate}
    \item  Computes $k_{x^\bob,x^\charlie}\gets \prf.\puncture(k,\{x^\bob,x^\charlie\})$.
    \item {Samples $\ell_0$ coset states $\ket{{A_i}_{s_i,{s'}_i}}_i$ and construct $R^0_i=\io(A_i+s_i)$ and $R^1_i=\io(A_i+{s'}_i)$ for every $i\in [\ell_0]$.}
    \item {Samples keys $K_2,K_3$ from the respective secondary $\prf$s and use $R^0_i=\io(A_i+s_i)$ and $R^1_i=\io(A_i+{s'}_i)$ along with {$k_{x^\bob,x^\charlie}$} to construct $P$, as given in \cref{fig:CLLZ-circuit-P}.}
\end{enumerate}
\item \cblue{$\ch$ samples $u^\bob,u^\charlie \xleftarrow{\$}\{0,1\}^m$ and computes $y^\bob_1=\prg(u^\bob),y^\charlie_1=\prg(u^\charlie)$} \sout{$\ch$ computes $y^\bob_1=\prf.\eval(k,x^\bob)$, $y^\charlie_1=\prf.\eval(k,x^\charlie)$} and uses $y^\bob_1,y^\charlie_1$ along with $R^0_i,R^1_i,\io(P),\prg$ to construct $C$ as depicted in \cref{fig:C-hybrid-4}.
\item $\ch$ samples {$y_1,y_2\xleftarrow{\$}\{0,1\}^m$, and computes $y^\bob_0\gets \prg(y_1),y^\charlie_0\gets \prg(y_2)$. }
\item $\ch$ constructs the circuit $\io(D_0),\io(D_1)$ where $D_0$ and $D_1$ are as depicted in \cref{fig:D_0-hybrid-1,fig:D_1-hybrid-3}, respectively. 
\item $\ch$ samples $b\xleftarrow{\$} \{0,1\}$ and sends $(\io({C}),\{\ket{{A_i}_{s_i,{s'}_i}}\}_i,\io(D_b))$ to $\alice$. 
 
\item $\alice(\tilde{C},\{\ket{{A_i}_{s_i,{s'}_i}}\}_i,\io(D_b))$ outputs a bipartite state $\sigma_{\bob,\charlie}$.
\item Apply $(\bob(x^\bob,\cdot) \otimes \charlie(x^\charlie,\cdot))(\sigma_{\bob,\charlie})$ to obtain $(b_{\bfB},b_{\bfC})$. 
\item Output $1$ if $b_{\bfB}=b_\bfC=b$.
\end{enumerate}

\noindent \underline{$\hybrid_7$}:
\begin{enumerate}
\item $\alice$ sends a key $s\in \keyspace_\secparam$ and functions $\mu_\bob$ and $\mu_\charlie$ to $\ch$.
\item $\ch$ samples $x^\bob,x^\charlie \xleftarrow{\$} \{0,1\}^n$.
\item $\ch$ samples $k\gets \keygen$, and does the following:
\begin{enumerate}
    \item  {Computes $k_{x^\bob,x^\charlie}\gets \prf.\puncture(k,\{x^\bob,x^\charlie\})$.}
    \item {Samples $\ell_0$ coset states $\ket{{A_i}_{s_i,{s'}_i}}_i$ and construct $R^0_i=\io(A_i+s_i)$ and $R^1_i=\io(A_i+{s'}_i)$ for every $i\in [\ell_0]$.}
    \item {Samples keys $K_2,K_3$ from the respective secondary $\prf$s and use $R^0_i=\io(A_i+s_i)$ and $R^1_i=\io(A_i+{s'}_i)$ along with {$k_{x^\bob,x^\charlie}$} to construct $P$, as given in \cref{fig:CLLZ-circuit-P}.}
\end{enumerate}
\item \cblue{$\ch$ samples $y^\bob_1,y^\charlie_1 \xleftarrow{\$}\{0,1\}^{2m}$} \sout{$\ch$ samples $u^\bob,u^\charlie \xleftarrow{\$}\{0,1\}^m$ and computes $y^\bob_1=\prg(u^\bob),y^\charlie_1=\prg(u^\charlie)$} and uses $y^\bob_1,y^\charlie_1$ along with $R^0_i,R^1_i,\io(P),\prg$ to construct $C$ as depicted in \cref{fig:C-hybrid-4}.
\item $\ch$ samples {$y_1,y_2\xleftarrow{\$}\{0,1\}^m$, and computes $y^\bob_0\gets \prg(y_1),y^\charlie_0\gets \prg(y_2)$. }
\item $\ch$ constructs the circuit $\io(D_0),\io(D_1)$ where $D_0$ and $D_1$ are as depicted in \cref{fig:D_0-hybrid-1,fig:D_1-hybrid-3}, respectively. 
\item $\ch$ samples $b\xleftarrow{\$} \{0,1\}$ and sends $(\io({C}),\{\ket{{A_i}_{s_i,{s'}_i}}\}_i,\io(D_b))$ to $\alice$. 
 
\item $\alice(\tilde{C},\{\ket{{A_i}_{s_i,{s'}_i}}\}_i,\io(D_b))$ outputs a bipartite state $\sigma_{\bob,\charlie}$.
\item Apply $(\bob(x^\bob,\cdot) \otimes \charlie(x^\charlie,\cdot))(\sigma_{\bob,\charlie})$ to obtain $(b_{\bfB},b_{\bfC})$. 
\item Output $1$ if $b_{\bfB}=b_\bfC=b$.
\end{enumerate}

\noindent \underline{$\hybrid_8$}:
\begin{enumerate}
\item $\alice$ sends a key $s\in \keyspace_\secparam$ and functions $\mu_\bob$ and $\mu_\charlie$ to $\ch$.
\item $\ch$ samples $x^\bob,x^\charlie \xleftarrow{\$} \{0,1\}^n$.
\item $\ch$ samples $k\gets \keygen$, and does the following:
\begin{enumerate}
    \item  {Computes $k_{x^\bob,x^\charlie}\gets \prf.\puncture(k,\{x^\bob,x^\charlie\})$.}
    \item {Samples $\ell_0$ coset states $\ket{{A_i}_{s_i,{s'}_i}}_i$ and construct $R^0_i=\io(A_i+s_i)$ and $R^1_i=\io(A_i+{s'}_i)$ for every $i\in [\ell_0]$.}
    \item {Samples keys $K_2,K_3$ from the respective secondary $\prf$s and use $R^0_i=\io(A_i+s_i)$ and $R^1_i=\io(A_i+{s'}_i)$ along with {$k_{x^\bob,x^\charlie}$} to construct $P$.}
\end{enumerate}
\item $\ch$ samples $y^\bob_1,y^\charlie_1 \xleftarrow{\$}\{0,1\}^{2m}$ and uses $y^\bob_1,y^\charlie_1$ along with $R^0_i,R^1_i,\io(P),\prg$ to construct $C$ as depicted in \cref{fig:C-hybrid-4}.
\item $\ch$ samples {$y_1,y_2\xleftarrow{\$}\{0,1\}^m$, and computes $y^\bob_0\gets \prg(y_1),y^\charlie_0\gets \prg(y_2)$. }
\item $\ch$ constructs the circuit $\io(D_0),\io(D_1)$ where $D_0$ and \cblue{$D_1$} are as depicted in \cref{fig:D_0-hybrid-1} and \cblue{\cref{fig:D_1-hybrid-8}}, respectively. 
\item $\ch$ samples $b\xleftarrow{\$} \{0,1\}$ and sends $(\io({C}),\{\ket{{A_i}_{s_i,{s'}_i}}\}_i,\io(D_b))$ to $\alice$. 
 
\item $\alice(\tilde{C},\{\ket{{A_i}_{s_i,{s'}_i}}\}_i,\io(D_b))$ outputs a bipartite state $\sigma_{\bob,\charlie}$.
\item Apply $(\bob(x^\bob,\cdot) \otimes \charlie(x^\charlie,\cdot))(\sigma_{\bob,\charlie})$ to obtain $(b_{\bfB},b_{\bfC})$. 
\item Output $1$ if $b_{\bfB}=b_\bfC=b$.
\end{enumerate}

\begin{figure}[!htb]
   \begin{center} 
   \begin{tabular}{|p{12cm}|}
    \hline 
\begin{center}
\underline{$D_1$}: 
\end{center}
Hardcoded keys $f,g,${$C$}.
On input: $x,v,y$.
\begin{enumerate}
\item Run $y'\gets C(x,v)$
\item If $y'\neq y$ or $y'=\bot$ output $\bot$.
\item If $y=y'\neq \bot$ and \cblue{$y\in \{y^\bob_1, y^\charlie_1\}$}\sout{$x\in \{x^\bob, x^\charlie\}$}:
\begin{enumerate}
    \item If \cblue{$y=y^\bob_1$} \sout{$x=x^\bob$} output $\mu_\bob(x^\bob)$.
    \item If \cblue{$y=y^\charlie_1$} \sout{$x=x^\charlie$} output $\mu_\charlie(x^\charlie)$.
\end{enumerate}
\item If $y=y'\neq \bot$ and \cblue{$y\not\in \{y^\bob_1, y^\charlie_1\}$}\sout{$x\not\in \{x^\bob, x^\charlie\}$}, output $W_s(x)$.
\end{enumerate}
\ \\ 
\hline
\end{tabular}
    \caption{Circuit $D_1$ in $\hybrid_8$}
    \label{fig:D_1-hybrid-8}
    \end{center}
\end{figure}

\noindent \underline{$\hybrid_{9}$}:
\begin{enumerate}
\item $\alice$ sends a key $s\in \keyspace_\secparam$ and functions $\mu_\bob$ and $\mu_\charlie$ to $\ch$.
\item $\ch$ samples $x^\bob,x^\charlie \xleftarrow{\$} \{0,1\}^n$.
\item $\ch$ samples $k\gets \keygen$, and does the following:
\begin{enumerate}
    \item  {Computes $k_{x^\bob,x^\charlie}\gets \prf.\puncture(k,\{x^\bob,x^\charlie\})$.}
    \item {Samples $\ell_0$ coset states $\ket{{A_i}_{s_i,{s'}_i}}_i$ and construct $R^0_i=\io(A_i+s_i)$ and $R^1_i=\io(A_i+{s'}_i)$ for every $i\in [\ell_0]$.}
    \item {Samples keys $K_2,K_3$ from the respective secondary $\prf$s and use $R^0_i=\io(A_i+s_i)$ and $R^1_i=\io(A_i+{s'}_i)$ along with {$k_{x^\bob,x^\charlie}$} to construct $P$, as given in \cref{fig:CLLZ-circuit-P}.}
\end{enumerate}
\item \cblue{$\ch$ samples $u^\bob,u^\charlie \xleftarrow{\$}\{0,1\}^m$ and computes $y^\bob_1=\prg(u^\bob),y^\charlie_1=\prg(u^\charlie)$} \sout{$\ch$ samples $y^\bob_1,y^\charlie_1 \xleftarrow{\$}\{0,1\}^{2m}$} and uses $y^\bob_1,y^\charlie_1$ along with $R^0_i,R^1_i,\io(P),\prg$ to construct $C$ as depicted in \cref{fig:C-hybrid-4}.
\item $\ch$ samples {$y_1,y_2\xleftarrow{\$}\{0,1\}^m$, and computes $y^\bob_0\gets \prg(y_1),y^\charlie_0\gets \prg(y_2)$. }
\item $\ch$ constructs the circuit $\io(D_0),\io(D_1)$ where $D_0$ and $D_1$ are as depicted in \cref{fig:D_0-hybrid-1,fig:D_1-hybrid-8}, respectively. 
\item $\ch$ samples $b\xleftarrow{\$} \{0,1\}$ and sends $(\io({C}),\{\ket{{A_i}_{s_i,{s'}_i}}\}_i,\io(D_b))$ to $\alice$. 
 
\item $\alice(\tilde{C},\{\ket{{A_i}_{s_i,{s'}_i}}\}_i,\io(D_b))$ outputs a bipartite state $\sigma_{\bob,\charlie}$.
\item Apply $(\bob(x^\bob,\cdot) \otimes \charlie(x^\charlie,\cdot))(\sigma_{\bob,\charlie})$ to obtain $(b_{\bfB},b_{\bfC})$. 
\item Output $1$ if $b_{\bfB}=b_\bfC=b$.
\end{enumerate}

\noindent \underline{$\hybrid_{10}$}: 
\begin{enumerate}
\item $\alice$ sends a key $s\in \keyspace_\secparam$ and functions $\mu_\bob$ and $\mu_\charlie$ to $\ch$.
\item $\ch$ samples $x^\bob,x^\charlie \xleftarrow{\$} \{0,1\}^n$.
\item $\ch$ samples $k\gets \keygen$, and does the following:
\begin{enumerate}
    \item  {Computes $k_{x^\bob,x^\charlie}\gets \prf.\puncture(k,\{x^\bob,x^\charlie\})$.}
    \item {Samples $\ell_0$ coset states $\ket{{A_i}_{s_i,{s'}_i}}_i$ and construct $R^0_i=\io(A_i+s_i)$ and $R^1_i=\io(A_i+{s'}_i)$ for every $i\in [\ell_0]$.}
    \item {Samples keys $K_2,K_3$ from the respective secondary $\prf$s and use $R^0_i=\io(A_i+s_i)$ and $R^1_i=\io(A_i+{s'}_i)$ along with {$k_{x^\bob,x^\charlie}$} to construct $P$, as given in \cref{fig:CLLZ-circuit-P}.}
\end{enumerate}
\item \cblue{$\ch$ computes $y^\bob_1=\prg(\prf.\eval(k,x^\bob))$, $y^\charlie_1=\prg(\prf.\eval(k,x^\charlie))$} \sout{$\ch$ samples $u^\bob,u^\charlie \xleftarrow{\$}\{0,1\}^m$ and computes $y^\bob_1=\prg(u^\bob),y^\charlie_1=\prg(u^\charlie)$} and uses $y^\bob_1,y^\charlie_1$ along with $R^0_i,R^1_i,\io(P),\prg$ to construct $C$ as depicted in \cref{fig:C-hybrid-4}.
\item $\ch$ samples {$y_1,y_2\xleftarrow{\$}\{0,1\}^m$, and computes $y^\bob_0\gets \prg(y_1),y^\charlie_0\gets \prg(y_2)$. }
\item $\ch$ constructs the circuit $\io(D_0),\io(D_1)$ where $D_0$ and $D_1$ are as depicted in \cref{fig:D_0-hybrid-1,fig:D_1-hybrid-8}, respectively. 
\item $\ch$ samples $b\xleftarrow{\$} \{0,1\}$ and sends $(\io({C}),\{\ket{{A_i}_{s_i,{s'}_i}}\}_i,\io(D_b))$ to $\alice$. 
 
\item $\alice(\tilde{C},\{\ket{{A_i}_{s_i,{s'}_i}}\}_i,\io(D_b))$ outputs a bipartite state $\sigma_{\bob,\charlie}$.
\item Apply $(\bob(x^\bob,\cdot) \otimes \charlie(x^\charlie,\cdot))(\sigma_{\bob,\charlie})$ to obtain $(b_{\bfB},b_{\bfC})$. 
\item Output $1$ if $b_{\bfB}=b_\bfC=b$.
\end{enumerate}

\noindent \underline{$\hybrid_{11}$}: 
\begin{enumerate}
\item $\alice$ sends a key $s\in \keyspace_\secparam$ and functions $\mu_\bob$ and $\mu_\charlie$ to $\ch$.
\item $\ch$ samples $x^\bob,x^\charlie \xleftarrow{\$} \{0,1\}^n$.
\item $\ch$ samples $k\gets \keygen$, and does the following:
\begin{enumerate}
    \item  \sout{Computes $k_{x^\bob,x^\charlie}\gets \prf.\puncture(k,\{x^\bob,x^\charlie\})$.}
    \item {Samples $\ell_0$ coset states $\ket{{A_i}_{s_i,{s'}_i}}_i$ and construct $R^0_i=\io(A_i+s_i)$ and $R^1_i=\io(A_i+{s'}_i)$ for every $i\in [\ell_0]$.}
    \item {Samples keys $K_2,K_3$ from the respective secondary $\prf$s and use $R^0_i=\io(A_i+s_i)$ and $R^1_i=\io(A_i+{s'}_i)$ along with \sout{$k_{x^\bob,x^\charlie}$} \cblue{k} to construct $P$, as given in \cref{fig:CLLZ-circuit-P}.}
\end{enumerate}
\item $\ch$ computes $y^\bob_1=\prg(\prf.\eval(k,x^\bob))$, $y^\charlie_1=\prg(\prf.\eval(k,x^\charlie))$ and uses $y^\bob_1,y^\charlie_1$ along with $R^0_i,R^1_i,\io(P),\prg$ to construct $C$ as depicted in \cref{fig:C-hybrid-4}.
\item $\ch$ samples {$y_1,y_2\xleftarrow{\$}\{0,1\}^m$, and computes $y^\bob_0\gets \prg(y_1),y^\charlie_0\gets \prg(y_2)$. }
\item $\ch$ constructs the circuit $\io(D_0),\io(D_1)$ where $D_0$ and $D_1$ are as depicted in \cref{fig:D_0-hybrid-1,fig:D_1-hybrid-8}, respectively. 
\item $\ch$ samples $b\xleftarrow{\$} \{0,1\}$ and sends $(\io({C}),\{\ket{{A_i}_{s_i,{s'}_i}}\}_i,\io(D_b))$ to $\alice$. 
 
\item $\alice(\tilde{C},\{\ket{{A_i}_{s_i,{s'}_i}}\}_i,\io(D_b))$ outputs a bipartite state $\sigma_{\bob,\charlie}$.
\item Apply $(\bob(x^\bob,\cdot) \otimes \charlie(x^\charlie,\cdot))(\sigma_{\bob,\charlie})$ to obtain $(b_{\bfB},b_{\bfC})$. 
\item Output $1$ if $b_{\bfB}=b_\bfC=b$.
\end{enumerate}

\noindent \underline{$\hybrid_{12}$}:
\begin{enumerate}
\item $\alice$ sends a key $s\in \keyspace_\secparam$ and functions $\mu_\bob$ and $\mu_\charlie$ to $\ch$.
\item $\ch$ samples $x^\bob,x^\charlie \xleftarrow{\$} \{0,1\}^n$.
\item $\ch$ samples $k\gets \keygen$, \cblue{and computes $\io(P),\ket{{A_i}_{s_i,{s'}_i}}_i\gets \cllz.\copyprotect(k)$.}
\item {$\ch$ computes $y^\bob_1=\prg(\prf.\eval(k,x^\bob)),y^\charlie_1=\prg(\prf.\eval(k,x^\charlie))$.}
\item $\ch$ constructs \cblue{$C=\prg\cdot \io(P)$}. 
\item $\ch$ samples {$y_1,y_2\xleftarrow{\$}\{0,1\}^m$, and computes $y^\bob_0\gets \prg(y_1),y^\charlie_0\gets \prg(y_2)$. }
\item $\ch$ constructs the circuit $\io(D_0),\io(D_1)$ where $D_0$ and $D_1$ are as depicted in \cref{fig:D_0-hybrid-1,fig:D_1-hybrid-8}, respectively. 
\item $\ch$ samples $b\xleftarrow{\$} \{0,1\}$ and sends $(\io({C}),\{\ket{{A_i}_{s_i,{s'}_i}}\}_i,\io(D_b))$ to $\alice$. 
 
\item $\alice(\tilde{C},\{\ket{{A_i}_{s_i,{s'}_i}}\}_i,\io(D_b))$ outputs a bipartite state $\sigma_{\bob,\charlie}$.
\item Apply $(\bob(x^\bob,\cdot) \otimes \charlie(x^\charlie,\cdot))(\sigma_{\bob,\charlie})$ to obtain $(b_{\bfB},b_{\bfC})$. 
\item Output $1$ if $b_{\bfB}=b_\bfC=b$.
\end{enumerate}

\noindent Next, we give a reduction from $\hybrid_12$ to the $\cpppiracy$ of the $\cllz$ copy-protection (for the $\prf$s with the required property having the key-generation algorithm $\keygen$ as mentioned above) to finish the proof.
The reduction does the following.
\begin{enumerate}
    \item $R_\alice$ runs $\alice$ to get a circuit $f$ and $g$.
    \item $R_\alice$ on receiving the copy-protected $\prf$, $\io(P),\{\ket{{A_i}_{s_i,{s'}_i}}\}_i$ and $u^\bob,u^\charlie$, computes $y^\bob\gets \prg(u^\bob)$ and $y^\charlie=\prg(u^\charlie)$, and  
     creates the circuit $\tilde{C}\gets \io(C)$ where ${C}=\prg\cdot \io(P)$. $R_\alice$ also creates $\io(D)$ where $D$ on input $x,v,y$ runs ${C}$ on $x,v$ to get $y'$ and outputs $\bot$ if $y'\neq y$ or $y'=\bot$, else if $y'\in \{y^\bob_0,y^\charlie\}$ outputs $g(x)$, else it runs the circuit $W_s$ to output  $W_s(x)$. 
     $R_\alice$ runs $\alice$ on $\rho_k,\io(D)$ and gets an output $\sigma_{\bob,\charlie}$, it then sends the corresponding registers of $\sigma_{\bob,\charlie}$ 
      to both $R_\bob$ and  $R_\charlie$.
    \item $R_\bob$ and $R_\charlie$ receive $x^\bob$ and $x^\charlie$ from the challenger and run the adversaries $\bob(x^\bob,\cdot)$ and $\charlie(x^\charlie,\cdot)$ respectively on $\sigma_{\bob,\charlie}$, to get the outputs $b^\bob$ and $b^\charlie$ respectively,. 
     $R_\bob$ and $R_\charlie$ output $1-b^\bob$ and $1-b^\charlie$, respectively.
\end{enumerate}

Finally, we prove the indistinguishability of the hybrids to finish the proof.

\paragraph{Indistinguishability of hybrids}

\begin{claim}\label{claim:hyb_0-hyb_1}
Assuming the security of $\io$, hybrids $\hybrid_0$ and $\hybrid_1$ are computationally indistinguishable. 
\end{claim}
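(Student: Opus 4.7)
The plan is to observe that the only syntactic difference between $\hybrid_0$ and $\hybrid_1$ is the definition of the circuit $D_0$, which the adversary sees only through $\io(D_0)$ (and only when the challenge bit is $b=0$). Everything else---the coset states $\{\ket{{A_i}_{s_i,{s'}_i}}\}_i$, the obfuscation $\io(C)$ of $C = \prg \cdot \io(P)$, the circuit $D_1$, and the challenge input distribution---is identical across the two hybrids. My first step is to bound the probability that the freshly sampled $y^\bob_0, y^\charlie_0 \xleftarrow{\$} \{0,1\}^{2m}$ happen to lie in the image of $\prg$. Since $\prg$ is an injective length-doubling generator with domain $\{0,1\}^m$, its image has size exactly $2^m$ inside $\{0,1\}^{2m}$, so by a union bound this event occurs with probability at most $2 \cdot 2^{-m} = \negl(\secparam)$.

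Conditioned on the complement of this event, I would argue that the two versions of $D_0$ are \emph{functionally equivalent}. The only inputs on which they can disagree are those $(x,v,y)$ that trigger the new ``$y \in \{y^\bob_0, y^\charlie_0\}$'' branch of the $\hybrid_1$ version; but that branch is also guarded by $y' = C(x,v) = y$, and $C(x,v) = \prg(\io(P)(x,v))$ forces such a $y$ to lie in the image of $\prg$, hence to differ from both $y^\bob_0$ and $y^\charlie_0$ by the above event. The branch is therefore unreachable, and on every remaining input both circuits either return $\bot$ or $W_s(x)$ in matching fashion. After padding both $D_0$ circuits to a common size to accommodate the additional hardcoded values $\mu_\bob, \mu_\charlie, y^\bob_0, y^\charlie_0$ and the extra comparison logic of the $\hybrid_1$ version, post-quantum $\io$ security then yields computational indistinguishability of $\io(D_0)$ across the two hybrids, even in the presence of all the other public components, which the reduction can sample and construct on its own from a fresh $\prf$ key $k$ and a run of $\cllz.\copyprotect(1^\secparam,k)$.

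The $b=1$ branch of the experiment requires no separate argument: $D_1$ does not depend on $y^\bob_0, y^\charlie_0$ in either hybrid, so the joint distributions conditioned on $b=1$ coincide exactly. Averaging over $b$ and invoking the triangle inequality yields the claim. The main obstacle---if it can be called one---is purely bookkeeping: (i) writing the $\io$-reduction so it samples the $\cllz$ components, the challenge inputs, and $y^\bob_0, y^\charlie_0$ internally, submits the two candidate $D_0$ circuits to the $\io$ challenger, and then faithfully simulates the triple $(\alice,\bob,\charlie)$ on the returned obfuscation; and (ii) implementing explicit padding so the two candidate circuits are the same size. Both steps are routine, and I expect no deeper difficulty.
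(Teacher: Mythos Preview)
Your proposal is correct and follows essentially the same approach as the paper: both argue that with overwhelming probability the uniformly random $y^\bob_0, y^\charlie_0 \in \{0,1\}^{2m}$ miss the image of $\prg$ (hence the image of $C$), so the new branch in $D_0$ is unreachable and the two versions of $D_0$ are functionally equivalent, after which $\io$ security finishes. Your write-up is more explicit about the reduction mechanics (padding, handling $b=1$, internally sampling the $\cllz$ components), but the core argument is identical.
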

\begin{proof}[Proof of Claim~\ref{claim:hyb_0-hyb_1}]
For any function $f$, let $\imgclass_f$ denote the image of $f$.
Since $\imgclass_\prg$ is a negligible fraction of $\{0,1\}^{2m}$ and $y^\bob_0,y^\charlie_0$ were chosen uniformly at random, with overwhelming probability $y^\bob_0,y^\charlie_0\not \in \imgclass_\prg$ and hence not in $\imgclass_{{C}}$. Therefore with overwhelming probability over the choice of $y^\bob_0,y^\charlie_0$, any $(x,v,y)$ that satisfies this check also satisfies $y\not\in\{y^\bob_0,y^\charlie_0\}$. Hence with overwhelming probability, if $y'=y\neq \bot$, the penultimate check (\cref{line:dummy-check} in \cref{fig:D_0-hybrid-1}) will always fail, and therefore, $D_0$ will always output $W_s(x)$. Hence with overwhelming probability, $D_0$ has the same functionality in both the hybrids, and therefore by $\io$ guarantees, the indistinguishability of the hybrids holds.

\end{proof}

\begin{claim}\label{claim:hyb_1-hyb_2}
Assuming the pseudorandomness of $\prg$, hybrids $\hybrid_1$ and $\hybrid_2$ are computationally indistinguishable. 
\end{claim}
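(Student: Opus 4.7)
The plan is to reduce indistinguishability of $\hybrid_1$ and $\hybrid_2$ to the pseudorandomness of $\prg$ via a standard two-step hybrid argument. Recall that the only syntactic difference between the two hybrids is how $y^\bob_0$ and $y^\charlie_0$ are generated: in $\hybrid_1$ they are sampled uniformly from $\{0,1\}^{2m}$, whereas in $\hybrid_2$ they are generated as $\prg(y_1)$ and $\prg(y_2)$ for independently and uniformly sampled seeds $y_1,y_2 \xleftarrow{\$}\{0,1\}^m$. Since $\prg$ is length-doubling with input length $m$ and output length $2m$, each of these replacements corresponds exactly to a single PRG challenge.

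I would first introduce an intermediate hybrid $\hybrid_{1.5}$ in which $y^\bob_0 \gets \prg(y_1)$ is sampled pseudorandomly while $y^\charlie_0$ is still drawn uniformly from $\{0,1\}^{2m}$. The indistinguishability of $\hybrid_1$ and $\hybrid_{1.5}$ follows from a direct reduction to $\prg$: given a challenge string $z \in \{0,1\}^{2m}$ (either uniform or a PRG output), the reduction sets $y^\bob_0 := z$, samples $y^\charlie_0$ uniformly, and simulates the rest of the hybrid honestly (it samples all other keys, constructs $\io(P)$, $\tilde C$, $\io(D_0)$ and $\io(D_1)$ exactly as specified, noting that $y^\bob_0$ only appears hardcoded inside $D_0$ via~\Cref{fig:D_0-hybrid-1}, which the reduction can build in polynomial time). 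Any QPT distinguisher between $\hybrid_1$ and $\hybrid_{1.5}$ therefore yields a QPT distinguisher against $\prg$ with the same advantage. An entirely symmetric reduction, where the PRG challenge is instead embedded as $y^\charlie_0$, shows indistinguishability of $\hybrid_{1.5}$ and $\hybrid_2$.

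Combining the two steps by the triangle inequality, any QPT adversary that distinguishes $\hybrid_1$ from $\hybrid_2$ with non-negligible advantage yields a QPT adversary against the pseudorandomness of $\prg$ with non-negligible advantage, contradicting the assumed security of $\prg$. I do not expect any serious obstacle here: the reduction is straightforward and uniform, since all other components of the hybrid (the coset states, the PRF keys $K_1,K_2,K_3$, the obfuscated circuits $\io(P),\tilde C,\io(D_0),\io(D_1)$, and the choice of $b$) are sampled identically in $\hybrid_1$ and $\hybrid_2$, and $y^\bob_0, y^\charlie_0$ are used only inside the (classically hardcoded) circuit $D_0$, which the reduction can produce efficiently.
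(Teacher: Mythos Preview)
Your proposal is correct and is exactly the standard argument the paper has in mind: the paper's own proof of this claim is simply ``The proof is immediate,'' and your two-step hybrid reduction (replacing $y^\bob_0$ and then $y^\charlie_0$ by PRG outputs) is precisely how one spells out that immediacy.
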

\begin{proof}[Proof of Claim~\ref{claim:hyb_0-hyb_1}]
The proof is immediate. 
\end{proof}

\begin{claim}\label{claim:hyb_2-hyb_3}
Assuming the security of $\io$, hybrids $\hybrid_{2}$ and $\hybrid_{3}$ are computationally indistinguishable. 
\end{claim}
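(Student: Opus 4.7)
The plan is to observe that the only object that changes between $\hybrid_2$ and $\hybrid_3$ is the circuit $D_1$ that is handed to $\io$, and to argue that the two versions of $D_1$ compute \emph{exactly} the same function. Once we establish functional equivalence, the claim follows directly from the security of $\io$ after a standard size-padding step (which is why in the construction in~\Cref{fig:copy-protect-construction} we pad $D$ to an a-priori polynomial bound).

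Concretely, on input $(x,v,y)$ both versions first compute $y' \gets C(x,v)$ and output $\bot$ whenever $y' = \bot$ or $y' \neq y$. So it suffices to compare their outputs on any $(x,v,y)$ for which $y = y' \neq \bot$. In $\hybrid_2$ the output in this case is $W_{s,x^{\bob},x^{\charlie},\mu_{\bob},\mu_{\charlie}}(x)$, where $W_{s,x^{\bob},x^{\charlie},\mu_{\bob},\mu_{\charlie}}$ is the punctured circuit produced by $\genpuncture$ and therefore, by its very definition (\Cref{sec:upo:gensecurity}), equals $\mu_{\bob}(x^{\bob})$ on $x=x^{\bob}$, equals $\mu_{\charlie}(x^{\charlie})$ on $x=x^{\charlie}$, and equals $W_s(x)$ everywhere else. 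But the case-split used by the $\hybrid_3$ version of $D_1$ (\Cref{fig:D_1-hybrid-3}) returns precisely these three values in exactly these three cases. Hence the two circuits compute the same function on every input.

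Given functional equivalence, I would invoke the $\io$ security once to replace $\io(D_1^{(\hybrid_2)})$ by $\io(D_1^{(\hybrid_3)})$ via a straightforward reduction: the reduction samples all the other ingredients ($k$, $\{\ket{{A_i}_{s_i,s'_i}}\}$, $\io(P)$, $\tilde{C}$, $y^{\bob}_0, y^{\charlie}_0$, $\io(D_0)$, and the bit $b$) exactly as in the hybrids, and forwards the $\io$-challenge as the obfuscation of $D_1$; everything else is local. If the adversary $(\alice,\bob,\charlie)$ distinguishes the two hybrids with non-negligible advantage, the reduction distinguishes the two obfuscations with the same advantage, contradicting $\io$ security.

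The only mild subtlety (and the closest thing to an obstacle) is ensuring the two $D_1$ variants can be padded to the same circuit size, since the $\hybrid_2$ version hardwires the single punctured circuit $W_{s,x^{\bob},x^{\charlie},\mu_{\bob},\mu_{\charlie}}$ while the $\hybrid_3$ version separately hardwires $x^{\bob}, x^{\charlie}, \mu_{\bob}, \mu_{\charlie}$, and $W_s$ and performs an explicit case-split. This is handled by our standing convention (stated in~\Cref{sec:upo:security} and reiterated in~\Cref{sec:upo:gensecurity}) that $C_k$ and $G_{k^*}$ share the same size, and by padding $D$ in the construction so that both descriptions fit; with this padding, $\io$ applies as usual and the two hybrids become computationally indistinguishable.
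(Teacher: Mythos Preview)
Your proposal is correct and follows essentially the same approach as the paper: both argue that the two versions of $D_1$ are functionally equivalent by unfolding the definition of the (generalized) punctured circuit $W_{s,x^{\bob},x^{\charlie},\mu_{\bob},\mu_{\charlie}}$, and then invoke $\io$ security. The paper's proof is a two-sentence sketch of exactly this; you have simply spelled out the reduction and the padding convention in more detail.
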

\begin{proof}[Proof of Claim~\ref{claim:hyb_2-hyb_3}]
The modification did not change the functionality of $D_1$ in this hybrid compared to the previous hybrid by the definition of $W_{s,x^\bob,x^\charlie,\mu_\bob,\mu_\charlie}$ and the $\puncture$ algorithm associated with $\cktclassw$.
 Hence, the indistinguishability follows from the $\io$ guarantees.
\end{proof}

\begin{claim}\label{claim:hyb_3-hyb_4}
Assuming the security of $\io$, 
hybrids $\hybrid_3$ and $\hybrid_4$ are computationally indistinguishable. 
\end{claim}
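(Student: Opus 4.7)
The plan is to show that the two circuits $C$ being obfuscated by $\io$ in $\hybrid_3$ and $\hybrid_4$ are functionally equivalent with overwhelming probability over the coins sampled by $\ch$, and then invoke the security of $\io$ together with a standard hybrid argument over the obfuscation $\tilde{C}$ (note that $\tilde{C}$ is the only object in the view of $\alice$ that actually depends on how $C$ is formed; the states $\{\ket{{A_i}_{s_i,s'_i}}\}_i$, the program $\io(P)$, the randomness $y_0^\bob, y_0^\charlie$, and the circuit $\io(D_b)$ are sampled identically in the two hybrids).

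Fix the choices of $x^\bob,x^\charlie$, the coset states $\ket{{A_i}_{s_i,s'_i}}_i$ together with the programs $R_i^0,R_i^1$, the primary $\prf$ key $k$, and the secondary keys $K_2,K_3$. Denote the circuit in $\hybrid_3$ by $C_{\mathrm{old}}=\prg\cdot \io(P)$ and the circuit in $\hybrid_4$ by $C_{\mathrm{new}}$ as in \Cref{fig:C-hybrid-4}. I will show that with overwhelming probability over the choice of $(x^\bob,x^\charlie,K_2,K_3)$, for every input $(x,v)$,
\[ C_{\mathrm{old}}(x,v)=C_{\mathrm{new}}(x,v). \]
First, consider any $x\notin\{x^\bob,x^\charlie\}$. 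Then $C_{\mathrm{new}}(x,v)=\prg(\io(P)(x,v))=C_{\mathrm{old}}(x,v)$ by definition. Next, consider $x\in\{x^\bob,x^\charlie\}$, say $x=x^\bob$ (the other case is symmetric). Since $x^\bob$ is uniform and independent of $K_2$, the probability that $x^\bob_1 = F_2(K_2, x^\bob_0\|Q)$ holds for some $Q$ that also satisfies the second hidden-trigger check is negligible by pseudorandomness of $F_2$ (this is the same argument used in~\cite[Lemma 7.17]{CLLZ21} and \Cref{claim:hyb_0-hyb_1} above). Conditioned on the hidden-trigger branch not being activated on $x^\bob$, the program $P$ falls through to its normal mode (\Cref{line:normal mode} of \Cref{fig:CLLZ-circuit-P}), and hence:
\begin{itemize}
    \item If $R_i^{x^\bob_{0,i}}(v_i)=1$ for every $i\in[\ell_0]$, then $P(x^\bob,v)=F_1(K_1,x^\bob)=\prf.\eval(k,x^\bob)$, so $C_{\mathrm{old}}(x^\bob,v)=\prg(\prf.\eval(k,x^\bob))=y^\bob_1=C_{\mathrm{new}}(x^\bob,v)$.
    \item Otherwise, $P(x^\bob,v)=\bot$, so $C_{\mathrm{old}}(x^\bob,v)=\prg(\bot)$ and $C_{\mathrm{new}}(x^\bob,v)$ also rejects. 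Provided we define $\prg(\bot)=\bot$ (or, more cleanly, have $C$ propagate $\bot$ before applying $\prg$, which is the intended reading of the composition $\prg\cdot\io(P)$), these agree.
\end{itemize}

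Given functional equivalence except on a set of setups of negligible measure, indistinguishability of the two hybrids reduces to the security of $\io$ via a standard two-step reduction: first condition on the good event (which occurs with overwhelming probability and only removes a negligible statistical factor), and then build a reduction ${\cal R}$ which, given either $\io(C_{\mathrm{old}})$ or $\io(C_{\mathrm{new}})$ from an external $\io$ challenger, internally simulates every other component of the hybrid (the coset states, $\io(P)$, $\io(D_b)$, the $y_0$'s, etc.) and feeds the result to $(\alice,\bob,\charlie)$. The main subtlety (and the only place where care is required) is the padding convention for $\io$: the two circuits $C_{\mathrm{old}}$ and $C_{\mathrm{new}}$ have different natural sizes, so $C$ must be padded in both hybrids to the maximum of the two sizes before obfuscation. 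Assuming this padding is done (as is standard in all $\io$-based constructions in this paper, and is the reason the $\postproc$ $\sde$ in \Cref{fig:CLLZ-SDE-construction} is padded beyond its strict need), the reduction is immediate and the computational indistinguishability of $\hybrid_3$ and $\hybrid_4$ follows from the post-quantum security of $\io$.
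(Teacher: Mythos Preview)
Your proof is correct and follows the same approach as the paper: argue that $C_{\mathrm{old}}=\prg\cdot\io(P)$ and $C_{\mathrm{new}}$ (from \Cref{fig:C-hybrid-4}) are functionally equivalent with overwhelming probability because uniformly random $x^\bob,x^\charlie$ are not hidden triggers, and then invoke $\io$ security. The paper cites \cite[Lemma~7.13]{CLLZ21} (the correctness lemma) for the ``random inputs are not triggers'' step; your pointer to \Cref{claim:hyb_0-hyb_1} is misplaced, since that claim is about $y_0^\bob,y_0^\charlie\notin\imgclass_\prg$ rather than the trigger condition.

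One small oversight: you assert that $\tilde C$ is the \emph{only} object in $\alice$'s view that depends on how $C$ is formed, but the circuits $D_0$ and $D_1$ (Figures~\ref{fig:D_0-hybrid-1} and~\ref{fig:D_1-hybrid-3}) also hardcode $C$ and invoke it as a subroutine, so $\io(D_b)$ changes between the hybrids as well. This does not break anything---the same functional-equivalence argument shows that $D_b$ with $C_{\mathrm{old}}$ hardcoded and $D_b$ with $C_{\mathrm{new}}$ hardcoded compute the same function, so a second application of $\io$ security (or one combined reduction that receives both obfuscations) closes the gap. Just amend your reduction accordingly.
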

\begin{proof}[Proof of Claim~\ref{claim:hyb_3-hyb_4}]
The indistinguishability follows by the $\io$ guarantees and the claim that with overwhelming probability, the functionalities of $\prg\cdot \io(P)$ and $C$ in this hybrid are the same. The proof of the claim is as follows.


In the proof of correctness~\cite[Lemma 7.13]{CLLZ21} of the $\cllz$ copy-protection scheme, it was shown that the probability over the keys for the secondary pseudorandom functions, that $x^\bob,x^\charlie$ are in the hidden triggers, is negligible. Hence, with overwhelming probability over the secondary pseudorandom function keys, $(x^\bob,v)$ and $(x^\charlie,v)$ will not satisfy the trigger condition for  $P$ and therefore, not run in the hidden-trigger mode\footnote{Note that this property depends only on the secondary keys $k_2$ and $k_3$. Since, over the hybrids, we only punctured the primary key and not the two secondary keys, the same correctness guarantee holds in this hybrid as in the unpunctured case of hybrid $0$.}. Hence with the same overwhelming probability, the functionality of $P$ will not change even if we skip the hidden trigger check for $\{x^\bob,x^\charlie\}$. Note that conditioned on the functionality does not change for $P$ by skipping the check for $\{x^\bob,x^\charlie\}$, the functionality of $C$ in $\hybrid_2$ and $\hybrid_3$ are the same. 
Hence, with overwhelming probability, the functionality of ${C}$ in $\hybrid_3$ is the same as that of $\prg\cdot \io(P)$. 
\end{proof}

\begin{claim}\label{claim:hyb_4-hyb_5}
Assuming the security of $\io$, hybrids $\hybrid_4$ and $\hybrid_5$ are computationally indistinguishable. 
\end{claim}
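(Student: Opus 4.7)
The plan is to argue that the circuits $C$ obfuscated by the outer $\io$ in $\hybrid_4$ and $\hybrid_5$ are functionally equivalent (on every input), and then invoke the security of $\io$ applied to $C$. The single change across the two hybrids is in the embedded $\io(P)$ inside $C$: $\hybrid_4$ constructs $P$ from the unpunctured primary PRF key $k$, whereas $\hybrid_5$ constructs $P$ from the punctured key $k_{x^\bob,x^\charlie} \gets \prf.\puncture(k,\{x^\bob,x^\charlie\})$. Everything else handed to $\alice$—the hardcoded values $y^\bob_1, y^\charlie_1$, the coset-membership oracles $\{R^0_i, R^1_i\}$, the coset states $\{\ket{{A_i}_{s_i,{s'}_i}}\}_i$, the secondary keys $K_2, K_3$, and the outer circuits $D_0, D_1$—is sampled identically.

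First I would show that $P$ with $k$ and $P$ with $k_{x^\bob,x^\charlie}$ compute the same function on every input $(x,v)$ with $x \notin \{x^\bob, x^\charlie\}$: in the normal mode (line~\ref{line:normal mode} of \Cref{fig:CLLZ-circuit-P}) the outputs $F_1(k,x)$ and $F_1(k_{x^\bob,x^\charlie}, x)$ coincide by puncturing correctness of the primary $\prf$ family, and in the hidden-trigger mode $P$ returns $Q(v)$, which is extracted from the input and is entirely independent of the primary key. Consequently, the only inputs where the two versions of $P$ could conceivably differ in functionality are those with $x \in \{x^\bob, x^\charlie\}$ evaluated in normal mode.

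Next I would observe, by inspecting the circuit $C$ in \Cref{fig:C-hybrid-4}, that whenever $x \in \{x^\bob, x^\charlie\}$ the circuit $C$ checks coset membership of $v$ against $\{R^{x_{0,i}}_i\}_i$ and directly returns the hardcoded $y^\bob_1$ or $y^\charlie_1$, never invoking $\io(P)$ at all. Thus the set of inputs on which $P$'s functionality might change is exactly the set on which $C$ bypasses $P$, and so the complete circuit $C$ implements the same input-output map in both hybrids. Assuming $C$ is padded to a common size in both hybrids (a routine step, since the punctured $P$ can be padded up to the size of the unpunctured $P$), the security of $\io$ applied to $C$ yields computational indistinguishability of $\tilde{C} = \io(C)$ across $\hybrid_4$ and $\hybrid_5$; combined with the identical distribution of the remaining components of $\alice$'s view, this gives indistinguishability of the two hybrids.

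The only subtle step is the claim that hidden-trigger executions of $P$ never implicitly leak $F_1(k, x^\bob)$ or $F_1(k, x^\charlie)$ that would distinguish the two $P$'s: this is immediate from \Cref{fig:CLLZ-circuit-P}, where the trigger branch outputs $Q(v)$ with $Q$ and $v$ drawn entirely from the input, without any dependence on $K_1$. The padding alignment of $P$ across the two hybrids is the standard one used in all $\cllz$-style puncturing arguments and needs no new ingredients.
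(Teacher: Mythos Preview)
Your argument is correct and mirrors the paper's: since the circuit $C$ of \Cref{fig:C-hybrid-4} never passes an input with $x\in\{x^\bob,x^\charlie\}$ to $\io(P)$, puncturing correctness of the primary PRF guarantees $C$ has identical functionality in both hybrids, and $\io$ security on $C$ yields indistinguishability. One small imprecision worth tightening: $D_0,D_1$ hardcode the circuit $C$ (not merely its functionality), so they are not literally ``sampled identically'' across hybrids---but since $C$'s functionality is preserved so is that of each $D_b$, and a second invocation of $\io$ security on $D_b$ (which both you and the paper leave implicit) closes this.
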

\begin{proof}
The indistinguishability holds because $P$ was hardcoded directly only in the circuit in the circuit $C$ in the previous hybrid, and in $C$, we never use the key $P$ to evaluate on $\{x^\bob,x^\charlie\}$, and hence the functionality did not change even after we punctured the $\prf$ key hardcoded inside $P$ in $\hybrid_5$, due to the puncturing correctness of the $\prf$. Hence the indistinguishability follows from the $\io$ guarantee since we did not change the functionality of ${C}$.
\end{proof}

\begin{claim}\label{claim:hyb_5-hyb_6}
Assuming the security of the pseudorandom function family $\prf$, hybrids $\hybrid_5$ and $\hybrid_6$ are computationally indistinguishable. 
\end{claim}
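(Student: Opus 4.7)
The plan is to reduce directly to the two-point puncturing security of the pseudorandom function family $\prf$. Inspecting the two hybrids side by side, the only syntactic difference is the way the values $y^\bob_1, y^\charlie_1$ are generated: in $\hybrid_5$ they equal $\prg(\prf.\eval(k, x^\bob))$ and $\prg(\prf.\eval(k, x^\charlie))$, whereas in $\hybrid_6$ they equal $\prg(u^\bob)$ and $\prg(u^\charlie)$ for freshly sampled $u^\bob, u^\charlie \xleftarrow{\$} \{0,1\}^m$. Crucially, everywhere else in both hybrids the only key material dependent on $k$ that is actually used is the punctured key $k_{x^\bob,x^\charlie}$: the circuit $P$ (and hence $C$, $\tilde{C}$, and $D_b$) is built entirely from $k_{x^\bob,x^\charlie}$ together with $R^0_i, R^1_i$, and the secondary keys $K_2, K_3$. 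Thus the full key $k$ is only consulted when computing the two values $\prf.\eval(k, x^\bob)$ and $\prf.\eval(k, x^\charlie)$, which is exactly the setting in which puncturing security applies.

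First I would describe the reduction $\mathcal{R}$ that, given a distinguisher $(\alice,\bob,\charlie)$ between $\hybrid_5$ and $\hybrid_6$, breaks puncturing security of $\prf$. $\mathcal{R}$ runs $\alice$ to obtain $(s, \mu_\bob, \mu_\charlie)$, samples $x^\bob, x^\charlie \xleftarrow{\$} \{0,1\}^n$, and queries the puncturing challenger on the set $\{x^\bob, x^\charlie\}$, receiving back the punctured key $k_{x^\bob,x^\charlie}$ together with a pair $(z^\bob, z^\charlie)$ that is either $(\prf.\eval(k, x^\bob), \prf.\eval(k, x^\charlie))$ or a pair of uniformly random strings in $\{0,1\}^m$. $\mathcal{R}$ then executes the rest of the experiment exactly as specified in $\hybrid_5/\hybrid_6$: it samples the $\ell_0$ coset states and the membership oracles $R^0_i, R^1_i$, samples the secondary keys $K_2, K_3$, assembles $P$ from $k_{x^\bob,x^\charlie}$, sets $y^\bob_1 := \prg(z^\bob)$ and $y^\charlie_1 := \prg(z^\charlie)$, builds $C$, $D_0, D_1$ as in \Cref{fig:D_0-hybrid-1} and \Cref{fig:D_1-hybrid-8}, samples $b$, and forwards $(\io(C), \{\ket{{A_i}_{s_i,{s'}_i}}\}_i, \io(D_b))$ to $\alice$. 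Finally it runs $(\bob(x^\bob,\cdot) \otimes \charlie(x^\charlie,\cdot))$ on $\alice$'s output and outputs $1$ if both guesses equal $b$.

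A direct inspection shows that when $(z^\bob, z^\charlie)$ are the genuine PRF evaluations, $\mathcal{R}$'s simulation is distributed identically to $\hybrid_5$; and when $(z^\bob, z^\charlie)$ are uniformly random, the simulation is distributed identically to $\hybrid_6$. Hence the distinguishing advantage of $(\alice,\bob,\charlie)$ between $\hybrid_5$ and $\hybrid_6$ equals the advantage of $\mathcal{R}$ against the puncturing security game of $\prf$, which is negligible by the assumed security of the PRF family. The only subtlety I expect is confirming that the $\prf$ family used in the construction supports puncturing at two points simultaneously while still guaranteeing pseudorandomness at both punctured points given the punctured key; this is precisely the property assumed of the primary PRF in the $\cllz$ construction, so invoking it is routine.
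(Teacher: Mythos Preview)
Your reduction is correct and matches the paper's intended argument exactly: the paper records the proof as ``immediate,'' and what that means is precisely the reduction to the (two-point) puncturing security of $\prf$ that you wrote out. One small bookkeeping slip: in $\hybrid_5$ and $\hybrid_6$ the circuit $D_1$ is still the one from \Cref{fig:D_1-hybrid-3}, not \Cref{fig:D_1-hybrid-8} (the switch to the latter only happens at $\hybrid_8$), so your reduction should build $D_1$ accordingly; this does not affect the argument since $D_1$ in \Cref{fig:D_1-hybrid-3} also depends only on $k_{x^\bob,x^\charlie}$ through $C$.
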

\begin{proof}
The proof is immediate.
\end{proof}

\begin{claim}\label{claim:hyb_6-hyb_7}
Assuming the pseudorandomness of $\prg$, hybrids $\hybrid_6$ and $\hybrid_7$ are computationally indistinguishable. 
\end{claim}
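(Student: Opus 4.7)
The plan is to reduce directly to the pseudorandomness of $\prg$ via a standard two-step hybrid. The only difference between $\hybrid_6$ and $\hybrid_7$ is how $y^\bob_1$ and $y^\charlie_1$ are generated: in $\hybrid_6$, they are set to $\prg(u^\bob)$ and $\prg(u^\charlie)$ for freshly sampled seeds $u^\bob,u^\charlie \xleftarrow{\$} \{0,1\}^m$; in $\hybrid_7$, they are sampled directly from $\{0,1\}^{2m}$. Crucially, the seeds $u^\bob, u^\charlie$ are used nowhere else in $\hybrid_6$: they do not enter the circuit $C$, the circuits $D_0, D_1$, the coset states, the punctured key $k_{x^\bob,x^\charlie}$, or the challenger's bit $b$.

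I would introduce an intermediate hybrid $\hybrid_{6.5}$ in which $y^\bob_1 = \prg(u^\bob)$ for a random seed $u^\bob$ while $y^\charlie_1 \xleftarrow{\$} \{0,1\}^{2m}$ is sampled uniformly, and everything else is identical to $\hybrid_6$. The indistinguishability $\hybrid_6 \approx_c \hybrid_{6.5}$ follows from a direct reduction to the pseudorandomness of $\prg$: a distinguisher receives a challenge string $z \in \{0,1\}^{2m}$ which is either $\prg(u^\charlie)$ or uniform, sets $y^\charlie_1 := z$, samples everything else (including $u^\bob$ and $y^\bob_1 = \prg(u^\bob)$) honestly, runs the complete experiment including the final predicate $b_\bob = b_\charlie = b$, and outputs the resulting bit. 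Since $z$ appears only through $y^\charlie_1$ inside $C$ (\Cref{fig:C-hybrid-4}) and inside $D_1$ (\Cref{fig:D_1-hybrid-8}), and these circuits are produced independently of $u^\charlie$, the simulation is perfect and any non-negligible gap between the hybrids would yield a non-negligible advantage against $\prg$.

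An entirely analogous reduction shows $\hybrid_{6.5} \approx_c \hybrid_7$ by swapping the role of $u^\bob$: a $\prg$ challenge $z$ is embedded as $y^\bob_1$, while $y^\charlie_1$ is now sampled uniformly as in $\hybrid_{6.5}$ and $\hybrid_7$. Composing the two steps gives $\hybrid_6 \approx_c \hybrid_7$.

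The main conceptual check, rather than a technical obstacle, is to verify that the seeds $u^\bob$ and $u^\charlie$ truly do not leak elsewhere in the experiment. Inspecting the description of $\hybrid_6$, the circuit $C$ only hardcodes the images $y^\bob_1, y^\charlie_1$ (not the preimages), the punctured key $k_{x^\bob,x^\charlie}$ is derived from $k$ and $x^\bob,x^\charlie$ with no dependence on the seeds, and the circuits $D_0, D_1$ reference $y^\bob_1, y^\charlie_1$ only through their values. Hence the reduction is clean and the pseudorandomness of $\prg$ suffices, completing the argument.
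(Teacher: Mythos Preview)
Your proposal is correct and matches the paper's approach; the paper's own proof is simply ``The proof is immediate,'' and your write-up spells out the standard two-step $\prg$ reduction that underlies that one-liner. The key observation you make---that the seeds $u^\bob,u^\charlie$ appear nowhere else in the experiment---is exactly what makes the reduction go through, and there is nothing more to it.
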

\begin{proof}
The proof is immediate.
\end{proof}

\begin{claim}\label{claim:hyb_7-hyb_8}
Assuming the security of $\io$, hybrids $\hybrid_7$ and $\hybrid_8$ are computationally indistinguishable. 
\end{claim}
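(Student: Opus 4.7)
The only difference between $\hybrid_7$ and $\hybrid_8$ is the branching condition used inside the circuit $D_1$: in $\hybrid_7$ the condition is based on whether the input $x$ lies in $\{x^\bob,x^\charlie\}$, whereas in $\hybrid_8$ the condition is based on whether the value $y$ supplied to $D_1$ lies in $\{y^\bob_1,y^\charlie_1\}$. My plan is to show that, with overwhelming probability over the choice of $y^\bob_1,y^\charlie_1$, the two versions of $D_1$ compute the same function; then the indistinguishability of $\io(D_1)$ across the two hybrids follows directly from $\io$ security, and every other ingredient in the two hybrids is sampled identically.

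The key observation is the structure of the circuit $C$ hardcoded inside $D_1$, as defined in \cref{fig:C-hybrid-4}: on input $x^\bob$ (with a valid coset witness $v$) it outputs exactly $y^\bob_1$, on input $x^\charlie$ it outputs exactly $y^\charlie_1$, and on every other input it outputs a value of the form $\prg(\io(P)(x))$, i.e., an element of the image $\imgclass_\prg$ of the length-doubling $\prg$. I will do a case analysis on the input triple $(x,v,y)$. When $y\neq y'$ or $y'=\bot$ both circuits output $\bot$ and agree. When $y=y'\neq\bot$ and $x\in\{x^\bob,x^\charlie\}$, the value $y$ equals $y^\bob_1$ or $y^\charlie_1$ respectively, so both circuits output the appropriate $\mu_\bob(x^\bob)$ or $\mu_\charlie(x^\charlie)$. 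The only remaining and nontrivial case is $y=y'\neq\bot$ with $x\notin\{x^\bob,x^\charlie\}$, where $\hybrid_7$ outputs $W_s(x)$ while $\hybrid_8$ outputs $W_s(x)$ provided $y\notin\{y^\bob_1,y^\charlie_1\}$.

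The main and essentially the only obstacle is ruling out this last disagreement, which I will do by a simple counting argument: in this case $y=y'\in\imgclass_\prg$, and since $\prg:\{0,1\}^m\to\{0,1\}^{2m}$ is length-doubling and injective, $|\imgclass_\prg|\le 2^m$ while $y^\bob_1,y^\charlie_1$ are sampled uniformly from $\{0,1\}^{2m}$. Therefore the probability that either $y^\bob_1\in\imgclass_\prg$ or $y^\charlie_1\in\imgclass_\prg$ is bounded by $2\cdot 2^{-m}$, which is negligible in $\secparam$. Conditioned on the complement of this bad event, the two versions of $D_1$ are functionally identical, and the pair $(y^\bob_1,y^\charlie_1)$ enters no other part of either hybrid apart from being hardcoded inside $C$ (identically across the two hybrids) and, in $\hybrid_8$, inside $D_1$.

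To conclude, I will construct a standard reduction to $\io$: given a challenger producing either $\io(D_1^{(7)})$ or $\io(D_1^{(8)})$, the reduction samples all other ingredients of the hybrid (including $y^\bob_1,y^\charlie_1$ and $\tilde C$) honestly, aborts if the negligible bad event occurs, and otherwise submits the two functionally equivalent circuits to its $\io$ challenger. The advantage of any distinguisher between $\hybrid_7$ and $\hybrid_8$ is thus bounded by the $\io$ advantage plus $O(2^{-m})$, both of which are negligible.
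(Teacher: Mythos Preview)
Your proposal is correct and follows essentially the same approach as the paper: both arguments show that, with overwhelming probability over the uniform choice of $y^\bob_1,y^\charlie_1\in\{0,1\}^{2m}$, the two versions of $D_1$ are functionally identical (because the only possible disagreement requires $y^\bob_1$ or $y^\charlie_1$ to land in the size-$2^m$ image of $\prg$), and then invoke $\io$ security. The paper phrases this as ``the check $y'=y\neq\bot$ and $y\in\{y^\bob_1,y^\charlie_1\}$ is equivalent to $y'=y\neq\bot$ and $x\in\{x^\bob,x^\charlie\}$'' while you do an explicit case split on $(x,v,y)$, but the content is the same.
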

\begin{proof}

We will show that the functionality of $D_1$ did not change across the hybrids $\hybrid_7$ and $\hybrid_8$ (see \cref{fig:D_1-hybrid-8,fig:D_1-hybrid-3}), and hence indistinguishability of the hybrids follows from the $\io$ guarantees.
Note that since ${C}$ in $\hybrid_8$ satisfies ${C}(x^\bob,v^\bob)=y^\bob$ and  ${C}(x^\charlie,v^\charlie)=y^\charlie$ $\forall v^\bob\in V^\bob$ and $v^\charlie\in V^\charlie$, where $V^\bob$ (respectively, $V^\charlie$) is the set of all $v$ such that $(x^\bob,v)$ (respectively, $(x^\charlie,v)$) passes the coset check in the normal mode (see \cref{line:normal mode}), respectively.
    Moreover, the image of ${C}$ restricted to $\inpclass_{{C}}\setminus \left((x^\bob,v^\bob)\cup (x^\charlie,v^\charlie)\right)$, i.e.,  \[\imgclass_{{{C}}_{\inpclass_{{C}}\setminus (x^\bob,v^\bob)\cup (x^\charlie,v^\charlie)}}\subset \imgclass_{\prg\left(\{0,1\}^m\right)},\] where $m$ is the output length of the $\prf$ family, $(x^\bob,v^\bob)$ (respectively, $(x^\charlie,v^\charlie)$) is the short hand notation for $\{(x^\bob,v)\mid w\in V^\bob\}$ (respectively, $\{(x^\charlie,v)\mid w\in V^\charlie\}$). 
    Since $\imgclass_{\imgclass_\prg}$ is a negligible fraction of  $\{0,1\}^{2m}$, $\imgclass_{{{C}}_{\inpclass_{{C}}\setminus (x^\bob,v^\bob)\cup (x^\charlie,v^\charlie)}}$ is also a negligible fraction of $\{0,1\}^{2m}$.
    Since $y^\bob_1,y^\charlie_1$ are sampled uniformly at random independent of the set $\imgclass_{{{C}}_{\inpclass_{{C}}\setminus (x^\bob,v^\bob)\cup (x^\charlie,v^\charlie)}}$, except with negligible probability, \[y^\bob_1,y^\charlie_1\not \in \imgclass_{{{C}}_{\inpclass_{{C}}\setminus (x^\bob,v^\bob)\cup (x^\charlie,v^\charlie)}}.\] Note that we did not change the description of ${C}$ after $\hybrid_3$, hence as noted in $\hybrid_3$, \[{C}(x^\bob,v)\in \{y^\bob_1 , \bot\}, \quad {C}(x^\charlie,v)\in \{y^\charlie_1,\bot\}.\]
    Therefore, combining the last two statements, except with negligible probability, the preimage(s) of $y^\bob_1$ are of the form $(x^\bob,v)$, and the only non-$\bot$ image of $x^\bob$ is $y^\bob_1$, and similarly for $y^\charlie_1$ and $x^\charlie$. Hence except with negligible probability, the check that $y'=y\neq \bot$ and $y\in \{y^\bob_1,y^\charlie_1\}$ is equivalent to $y'=y\neq \bot$ and $x\in \{x^\bob,x^\charlie\}$. Therefore with overwhelming probability, the functionality of $D_1$ in $\hybrid_7$ (see \cref{fig:D_1-hybrid-3}) and in $\hybrid_8$ (see \cref{fig:D_1-hybrid-8}) are the same.

\end{proof}

\begin{claim}\label{claim:hyb_8-hyb_9}
Assuming the pseudorandomness of $\prg$, hybrids $\hybrid_8$ and $\hybrid_9$ are computationally indistinguishable. 
\end{claim}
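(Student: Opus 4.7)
The only difference between the two hybrids is the way the strings $y^{\bob}_1$ and $y^{\charlie}_1$ are sampled. In $\hybrid_8$, both are sampled uniformly at random from $\{0,1\}^{2m}$, whereas in $\hybrid_9$, seeds $u^{\bob}, u^{\charlie} \xleftarrow{\$} \{0,1\}^{m}$ are sampled and $y^{\bob}_1 = \prg(u^{\bob})$, $y^{\charlie}_1 = \prg(u^{\charlie})$. The plan is a standard two-step hybrid reduction to the pseudorandomness of $\prg$.

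Concretely, I would introduce an intermediate hybrid $\hybrid_{8.5}$ in which $y^{\bob}_1 = \prg(u^{\bob})$ with $u^{\bob} \xleftarrow{\$} \{0,1\}^m$, while $y^{\charlie}_1$ is still sampled uniformly from $\{0,1\}^{2m}$. Indistinguishability of $\hybrid_8$ and $\hybrid_{8.5}$ follows from the pseudorandomness of $\prg$: given a challenge $y^* \in \{0,1\}^{2m}$ that is either truly uniform or of the form $\prg(u)$ for uniform $u$, a reduction can simulate the entire experiment using $y^*$ in place of $y^{\bob}_1$ (note that all other sampling in the experiment, including that of $y^{\charlie}_1$, $y^{\bob}_0$, $y^{\charlie}_0$, $x^{\bob}$, $x^{\charlie}$, the coset states, PRF keys, and the circuits $P$, $C$, $D_0$, $D_1$, can be performed efficiently by the reduction without knowing whether $y^*$ is random or pseudorandom). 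The reduction then runs $(\alice, \bob, \charlie)$ exactly as in the experiment and outputs $1$ if $b_{\bob} = b_{\charlie} = b$. Its distinguishing advantage between the two cases is exactly the distinguishing advantage between $\hybrid_8$ and $\hybrid_{8.5}$, which must therefore be negligible.

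An identical argument swapping the roles of $y^{\bob}_1$ and $y^{\charlie}_1$ shows that $\hybrid_{8.5}$ and $\hybrid_9$ are computationally indistinguishable by pseudorandomness of $\prg$. Composing the two steps via the triangle inequality yields the claim. The main subtlety is simply verifying that the reduction is efficient and does not need access to the seed $u^{\bob}$ (resp.\ $u^{\charlie}$) beyond what is already present in the experiment; since $u^{\bob}$ and $u^{\charlie}$ are not used anywhere else in $\hybrid_8$ or $\hybrid_9$ (they appear only through their images under $\prg$), this is immediate.
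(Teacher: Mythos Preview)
Your proof is correct and takes the same approach as the paper, which simply states ``The proof is immediate.'' You have spelled out the standard two-step PRG reduction in detail, which is exactly what underlies the paper's one-line justification.
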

\begin{proof}
The proof is immediate.
\end{proof}

\begin{claim}\label{claim:hyb_9-hyb_10}
Assuming the puncturing security of the pseudorandom function family $\prf$, hybrids $\hybrid_9$ and $\hybrid_{10}$ are computationally indistinguishable. 
\end{claim}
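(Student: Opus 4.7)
The plan is to reduce directly to the (two-point) puncturing pseudorandomness of the primary $\prf$ family, which states that for any QPT adversary that first commits to points $x^\bob,x^\charlie$ and then receives the punctured key $k_{x^\bob,x^\charlie}$ together with a pair $(z^\bob,z^\charlie)$, it cannot tell whether $(z^\bob,z^\charlie)=(\prf.\eval(k,x^\bob),\prf.\eval(k,x^\charlie))$ or $(z^\bob,z^\charlie)$ is freshly uniform on $\{0,1\}^m\times\{0,1\}^m$. (Two-point puncturing security follows from standard one-point puncturing security by a routine hybrid.) Observe that across $\hybrid_9$ and $\hybrid_{10}$ the only quantities that change are $y^\bob_1$ and $y^\charlie_1$, which are obtained by applying $\prg$ either to $u^\bob,u^\charlie \xleftarrow{\$}\{0,1\}^m$ (in $\hybrid_9$) or to $\prf.\eval(k,x^\bob),\prf.\eval(k,x^\charlie)$ (in $\hybrid_{10}$). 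In both hybrids the primary $\prf$ key appears \emph{only} in punctured form $k_{x^\bob,x^\charlie}$ inside $P$, and the values at $x^\bob,x^\charlie$ themselves are used only through $\prg$ inside $C$; the rest of the experiment can be simulated without $k$.

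Concretely, the reduction $\reduct$ proceeds as follows. First, $\reduct$ runs $\alice$ to obtain $(s,\mu_\bob,\mu_\charlie)$ (these do not depend on the PRF), samples $x^\bob,x^\charlie\xleftarrow{\$}\{0,1\}^n$, and submits them to the PRF puncturing challenger. $\reduct$ receives back the punctured key $k_{x^\bob,x^\charlie}$ and a challenge pair $(z^\bob,z^\charlie)$. Second, $\reduct$ independently samples the $\ell_0$ coset states and the obfuscated membership checkers $R^0_i,R^1_i$, picks secondary keys $K_2,K_3$, and constructs $\io(P)$ with the punctured key $k_{x^\bob,x^\charlie}$ plugged in as the primary key. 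Third, $\reduct$ sets $y^\bob_1\gets \prg(z^\bob)$ and $y^\charlie_1\gets \prg(z^\charlie)$ and uses these together with $R^0_i,R^1_i,\io(P),\prg$ to build the circuit $C$ of \Cref{fig:C-hybrid-4}. Fourth, $\reduct$ samples $y_1,y_2\xleftarrow{\$}\{0,1\}^m$, sets $y^\bob_0\gets \prg(y_1),y^\charlie_0\gets \prg(y_2)$, constructs $\io(D_0)$ and $\io(D_1)$ as in \Cref{fig:D_0-hybrid-1,fig:D_1-hybrid-8} (neither circuit requires $k$ itself, only $s,\mu_\bob,\mu_\charlie,x^\bob,x^\charlie,y^\bob_0,y^\charlie_0,y^\bob_1,y^\charlie_1$, and $C$), samples $b\xleftarrow{\$}\{0,1\}$, and sends $(\io(C),\{\ket{{A_i}_{s_i,{s'}_i}}\}_i,\io(D_b))$ to $\alice$. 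Finally, $\reduct$ forwards $x^\bob,x^\charlie$ to $\bob,\charlie$ in the challenge phase and outputs $1$ iff $b_\bob=b_\charlie=b$.

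If the PRF challenger's bit equals $0$, i.e.\ $(z^\bob,z^\charlie)=(\prf.\eval(k,x^\bob),\prf.\eval(k,x^\charlie))$, then the view of $(\alice,\bob,\charlie)$ is distributed exactly as in $\hybrid_{10}$; if the bit equals $1$, i.e.\ $(z^\bob,z^\charlie)$ is uniform, then the view is distributed exactly as in $\hybrid_9$ (with $u^\bob=z^\bob,u^\charlie=z^\charlie$). Hence any non-negligible gap between the success probabilities of $(\alice,\bob,\charlie)$ in $\hybrid_9$ and $\hybrid_{10}$ translates into a non-negligible advantage for $\reduct$ against the two-point puncturing security of $\prf$, completing the proof.

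The main thing to double-check is the factorization of where $k$ appears: once one confirms that (i) the punctured key suffices to build $\io(P)$ since $P$ never needs to evaluate at $x^\bob$ or $x^\charlie$ (this was arranged in $\hybrid_5$), and (ii) the values $\prf.\eval(k,x^\bob),\prf.\eval(k,x^\charlie)$ enter the hybrids solely as the arguments to $\prg$ used to define $y^\bob_1,y^\charlie_1$ in $C$, the reduction is immediate and no additional hybrids are needed.
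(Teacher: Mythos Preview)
Your reduction is correct and is exactly the approach the paper has in mind; the paper simply writes ``The proof is immediate'' without spelling out the reduction, whereas you have written out the details of how the punctured key $k_{x^\bob,x^\charlie}$ suffices to simulate the entire experiment and how the PRF challenge values are plugged in through $\prg$ to produce $y^\bob_1,y^\charlie_1$.
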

\begin{proof}
The proof is immediate.
\end{proof}

\begin{claim}\label{claim:hyb_10-hyb_11}
Assuming the security of $\io$, hybrids $\hybrid_{10}$ and $\hybrid_{11}$ are computationally indistinguishable. 
\end{claim}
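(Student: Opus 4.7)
The plan is to reduce indistinguishability to the security of the outer $\io$ applied to the circuit $C$ depicted in \cref{fig:C-hybrid-4}, by arguing that the circuit $C$ constructed in $\hybrid_{10}$ and the circuit $C$ constructed in $\hybrid_{11}$ are functionally equivalent, so that standard $\io$ security directly gives the claim. The only syntactic difference between the two hybrids is that the program $P$ (hardcoded inside $C$ via $\io(P)$) is built from the punctured key $k_{x^\bob,x^\charlie}$ in $\hybrid_{10}$ and from the full key $k$ in $\hybrid_{11}$; everything else, including the hardcoded outputs $y^\bob_1=\prg(\prf.\eval(k,x^\bob))$ and $y^\charlie_1=\prg(\prf.\eval(k,x^\charlie))$, is identical.

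First I would unpack the behavior of $C$ on each input $x$ in terms of the logic of \cref{fig:C-hybrid-4}. On inputs with $x \in \{x^\bob,x^\charlie\}$, $C$ short-circuits: it performs the coset membership check using $R^0_i,R^1_i$ and then returns the hardcoded value $y^\bob_1$ or $y^\charlie_1$ directly, without ever invoking $\io(P)$. On inputs with $x \notin \{x^\bob,x^\charlie\}$, $C$ returns $\prg(\io(P)(x))$, and on these inputs the punctured key $k_{x^\bob,x^\charlie}$ and the full key $k$ agree by puncturing correctness of the $\prf$: both the normal mode and the hidden-trigger mode of $P$ return the same value, because $F_1$ is only evaluated on points outside the punctured set, and the hidden-trigger branch of $P$ does not depend on the $\prf$ key at all. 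Hence $\io(P^{\mathsf{punc}})$ and $\io(P^{\mathsf{full}})$ agree on every input that $C$ ever feeds into them, so the two $C$'s realize identical functions.

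Having established functional equivalence, I would finish by invoking $\io$ security for the outer obfuscation $\io(C)$: the two distributions on $\tilde C = \io(C)$ produced in $\hybrid_{10}$ and $\hybrid_{11}$ are computationally indistinguishable, and every other component of the experiment (the coset states, the program $\io(D_b)$, the values $y^\bob_0,y^\charlie_0,y^\bob_1,y^\charlie_1$, and the bit $b$) is sampled identically in the two hybrids. A straightforward reduction therefore yields that any distinguisher between $\hybrid_{10}$ and $\hybrid_{11}$ breaks the security of $\io$.

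The main subtlety I anticipate is simply the correct accounting in the functional-equivalence argument: one must ensure that $C$ truly never invokes $\io(P)$ on $x\in\{x^\bob,x^\charlie\}$, which requires reading the circuit in \cref{fig:C-hybrid-4} carefully (in particular, the $x\in\{x^\bob,x^\charlie\}$ branch must be reached before any call to $\io(P)$), and that the padding of $P$ used in $\hybrid_{10}$ (with $k_{x^\bob,x^\charlie}$) and $\hybrid_{11}$ (with $k$) matches so that $\io$ can be applied to circuits of the same size. Both are standard $\io$ bookkeeping and pose no conceptual difficulty.
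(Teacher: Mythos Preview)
Your approach is essentially the same as the paper's (which simply defers to Claim~\ref{claim:hyb_4-hyb_5}): argue that the circuit $C$ of \cref{fig:C-hybrid-4} has identical functionality in the two hybrids because it never invokes $\io(P)$ on $x\in\{x^\bob,x^\charlie\}$, and then appeal to $\io$ security. Your functional-equivalence analysis is correct.

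There is, however, one inaccuracy in your bookkeeping. You assert that $\io(D_b)$ ``is sampled identically in the two hybrids,'' but this is not the case: both $D_0$ and $D_1$ (see \cref{fig:D_0-hybrid-1,fig:D_1-hybrid-8}) hardcode the circuit $C$ and run $y' \gets C(x,v)$. Since the \emph{description} of $C$ differs between the hybrids (a different $\io(P)$ is hardcoded inside it), the description of $D_b$ differs as well. The fix is immediate: $D_b$ uses $C$ only through its input--output behavior, so your functional-equivalence argument for $C$ propagates to $D_b$, and one more application of $\io$ security (via an intermediate sub-hybrid, first swapping $\io(C)$ and then $\io(D_b)$) finishes the argument. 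The paper's own proof is equally terse on this point, but it does not affirmatively claim that $\io(D_b)$ is unchanged.
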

\begin{proof}
The proof is the same as that of Claim~\ref{claim:hyb_4-hyb_5}.
\end{proof}

\begin{claim}\label{claim:hyb_11-hyb_12}
Assuming the security of $\io$, hybrids $\hybrid_{11}$ and $\hybrid_{12}$ are computationally indistinguishable. 
\end{claim}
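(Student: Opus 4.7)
The plan is to argue that the only modification between the two hybrids is the description of the circuit $C$ that is fed into $\io$, and that with overwhelming probability (over the choice of the secondary PRF keys $K_2, K_3$) the two descriptions compute the identical function on every input. Once this is established, indistinguishability follows immediately from $\io$ security applied to $\tilde{C} = \io(C)$.

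First I would carry out a careful case analysis of the functionality of $C$ from \Cref{fig:C-hybrid-4} (the version used in $\hybrid_{11}$) against $C = \prg \cdot \io(P)$ (the version used in $\hybrid_{12}$), where $P$ is the $\cllz$ circuit of \Cref{fig:CLLZ-circuit-P} with the unpunctured key $k$ hardcoded. On inputs $(x,v)$ with $x \notin \{x^\bob, x^\charlie\}$, both circuits compute $\prg(\io(P)(x,v))$ by definition, so they agree. On inputs $(x,v)$ with $x = x^\bob$ and the coset check passing (i.e.\ $R^{x_{0,i}}_i(v_i) = 1$ for all $i \in [\ell_0]$), the $\hybrid_{11}$ circuit outputs $y^\bob_1$. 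The $\hybrid_{12}$ circuit computes $\prg(\io(P)(x^\bob, v))$; provided the hidden-trigger branch of $P$ does not fire on $(x^\bob, v)$, we have $\io(P)(x^\bob, v) = F_1(k, x^\bob) = \prf.\eval(k, x^\bob)$, and so the output is $\prg(\prf.\eval(k, x^\bob)) = y^\bob_1$ by the definition of $y^\bob_1$ in $\hybrid_{11}$. An analogous argument covers $x = x^\charlie$. On inputs $(x^\bob, v)$ with the coset check failing, the $\hybrid_{11}$ circuit rejects, and $P$ in normal mode outputs $\bot$, so the $\hybrid_{12}$ circuit also rejects (again assuming the hidden-trigger branch does not fire).

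The one potential obstacle is the hidden-trigger branch of $P$ on $(x^\bob, v)$ or $(x^\charlie, v)$. This is exactly the issue already handled in the proof of \Cref{claim:hyb_3-hyb_4}: as shown in the correctness analysis of~\cite[Lemma 7.13]{CLLZ21}, over the random choice of the secondary PRF keys $K_2, K_3$ the probability that the trigger condition $F_3(K_3, x_1) \oplus x_2 = x_0 \| Q$ and $x_1 = F_2(K_2, x_0 \| Q)$ is ever satisfied at $x^\bob$ or $x^\charlie$ is negligible. I would invoke this fact verbatim to conclude that, except on a negligible-measure event over $(K_2, K_3)$, the functionalities of the two circuits agree everywhere.

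Finally, I would note that the two descriptions can be padded to the same size (as is already implicitly assumed throughout the construction), so that $\io$ security applies. A standard reduction to the $\io$ security game then turns any distinguisher between $\hybrid_{11}$ and $\hybrid_{12}$ into a distinguisher for the two obfuscated circuits, contradicting post-quantum $\io$ security and completing the proof.
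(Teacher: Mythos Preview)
Your proposal is correct and takes essentially the same approach as the paper, which simply states that the proof is identical to that of Claim~\ref{claim:hyb_3-hyb_4}. You have spelled out the case analysis (agreement on $x\notin\{x^\bob,x^\charlie\}$, agreement on the punctured points via the definition of $y^\bob_1,y^\charlie_1$ and the fact that the hidden-trigger branch almost surely does not fire by~\cite[Lemma 7.13]{CLLZ21}) in more detail than the paper does, but the underlying argument is the same.
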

\begin{proof}
The proof is the same as that of~Claim~\ref{claim:hyb_3-hyb_4}.
\end{proof}



\end{proof}

\begin{proof}[\textbf{Proof of \Cref{lemma:puncturable_anti-piracy_from_cpcpiracy_PRFs-id}}]
The proof is the same as that of \Cref{lemma:puncturable_anti-piracy_from_cpcpiracy_PRFs} upto minor adaptations and hence we omit the proof.
\end{proof}
\fi

\ifllncs

\else

\newcommand{\lrue}{\mathsf{lrUE}}
\newcommand{\lrueexpt}{\mathsf{lrUE.Expt}}
\newcommand{\qsio}{\mathsf{qsio}}
\newcommand{\test}{\mathsf{Test}}

\section{Construction from Quantum State iO} 
\label{sec:cons:qsio}
Recently, Coladangelo and Gunn~\cite{CG23} proposed the definition of quantum state iO and presented a candidate construction of qsiO. In this section, we show how to construct UPO from qsiO assuming unclonable encryption. As an intermediate tool, we consider a variant of private-key unclonable encryption introduced in~\cite{CG23}, which is referred to as key-testable (private-key) unclonable encryption.\paragraph{Key-testable unclonable encryption} A key-testable private-key unclonable encryption~\cite{CG23} is a private unclonable encryption scheme with an additional QPT algorithm $\test$ that takes a key $\sk \in \{0,1\}^\secparam$ and a quantum cipher $\rho$ and outputs a bit $b$ such that for every two keys $sk,sk'\in \{0,1\}^\secparam$, such that $\sk\neq \sk'$, and a message $m\in \{0,1\}^n$, such that
\[\Pr[\rho\gets \enc(\sk,m), b\gets \test(\sk',\rho): b=\delta_\sk(\sk')]=1,\]
where $\delta_\sk$ is the function that is $1$ at $\sk$ and $0$ everywhere else, and $\enc$ is the encryption algorithm for the unclonable encryption scheme.

\paragraph{Unclonable encryption schemes with a specific kind of key-generation algorithm} In addition to the key-testable property, for the purpose of our construction of $\UPO$, we also require that the key-generation algorithm of the underlying unclonable encryption scheme simply samples a key uniformly at random from $\{0,1\}^\secparam$.
We need this restriction on the key-generation algorithm because, in our construction, the output distribution of the key-generation algorithm, i.e., the distribution of the secret key, determines the challenge distribution $\distr_\inpclass$, i.e., the distribution of the point to be punctured. We note that without this restriction on the key-generation algorithm, our construction still yields an $\upo$ scheme that satisfies generalized $\UPO$ security with respect to the challenge distribution $\distr_\inpclass=\{x,x\}_{x\gets\gen(1^\secparam)}$.

We next show that given a private-key unclonable encryption scheme, we can transform it into another scheme satisfying the above-mentioned restriction on the key-generation algorithm. 

\begin{theorem}\label{thm:generic-uenc-to-uenc-with-uniform-keys}
    Let $n:\mathbb{R} \rightarrow \mathbb{R}$ be arbitrary. Any private key unclonable encryption scheme $\UE=(\gen,\enc,\dec)$ on $n(\secparam)$ long messages where $\secparam$ is the security parameter can be transformed into another private-key unclonable encryption scheme $\UE'=(\gen',\enc',\dec')$ with the same message length such that the key-generation algorithm $\gen'$ is the uniformly random sampler from the keyspace of $\UE'$.

    In particular, any private key unclonable bit encryption scheme $\UE$ can be transformed into another private-key unclonable bit encryption scheme $\UE'=(\gen',\enc',\dec')$ such that $\gen'$ is of the above-mentioned form.
\end{theorem}
\begin{proof} 
Given $\UE=(\gen,\enc,\dec)$, we define $\UE'=(\gen',\enc',\dec')$ as follows.
\begin{itemize}
    \item $\gen'(1^\secparam)$: Sample $k'\xleftarrow{\$}\{0,1\}^\secparam$, and output $k'$.
    \item $\enc'(k',m)$: Generate $k\gets \gen(1^\secparam)$, and then generate $\rho\gets\enc(k,m)$. Output $\rho'=(\rho,k\oplus k')$.
    \item $\dec'(k',\rho')$: Interprete $\rho'=\rho,c$. Generate $k=k'\oplus c$, and then generate $m\gets \dec(k,\rho)$. Output $m$.
\end{itemize}
Clearly, the correctness of $\UE'$ is immediate from the correctness of $\UE$, and $\gen'$ satisfies the property mentioned in the theorem. For security, let $(\alice,\bob,\charlie)$ be an adversary against $\UE'$ in the unclonable indistinguishability game (see \Cref{subsec:uenc}). Consider the following reduction $(\reduc_\alice,\reduc_\bob,\reduc_\charlie)$ that uses  $(\alice,\bob,\charlie)$ to win the unclonable indistinguishability game against $\UE$.

\begin{enumerate}
    \item $\reduc_\alice$ runs $\alice$ on the security parameter and get backs a message pair $(m_0,m_1)$, which she sends to the challenger $\ch$.
    \item $\ch$ sends a cipher $\rho$.
    \item $\reduc_\alice$ samples $r\xleftarrow{\$}\{0,1\}^\secparam$, and feeds $\rho'=(\rho,r)$ to $\alice$ who then outputs a bipartite state $\sigma_(\bob,\charlie)$. $\reduc_\alice$ outputs $(r_\bob,\sigma_{\bob,\charlie},r_\charlie)$ where $r_\bob=r_\charlie=r$.
    \item $\reduc_\bob$ (respectively, $\reduc_\charlie$) on receiving $(r_\bob,\sigma_\bob)$ (respectively, $(r_\charlie,\sigma_\charlie)$) from $\reduc_\alice$ and a key $k$ from the challenger, runs $\bob$ on $(r_\bob\oplus k,\sigma_\bob)$ (respectively, $\charlie$ on $(r_\charlie\oplus k,\sigma_\charlie)$), and outputs the output of $\bob$ (respectively, $\charlie$).
\end{enumerate}
It is easy to see that the success probability of $(\reduc_\alice,\reduc_\bob,\reduc_\charlie)$ is the same as that of $(\alice,\bob,\charlie)$, which completes the proof of the theorem. 
\end{proof}

It was shown in~\cite{CG23} that assuming $\qsio$, the key testable property can be generically attached to any unclonable encryption scheme that satisfies the above-mentioned restriction on the key-generation algorithm.
\begin{theorem}[{Adapted from~\cite[Theorem 16]{CG23}}]\label{thm:key-testing-from-qsio}
    If injective one-way functions and $\qsio$ exist, then any private key unclonable bit\footnote{In~\cite{CG23}, the authors consider unclonable encryption scheme without fixed message length, i.e., the adversary chooses the message length during the security game. However, their proof also works for unclonable encryptions with fixed message length and, in particular, bit encryption.} encryption scheme (with the key-generation algorithm being the uniformly random sampler from the keyspace, $\{0,1\}^\secparam$) can be compiled into one with key testing (and the same key-generation algorithm).
\end{theorem}
For the rest of the section, for any key testable private-key unclonable encryption scheme, we will assume that the $\gen$ algorithm is just the uniformly random sampler from the keyspace and hence use a triplet of algorithm $(\enc,\dec,\test)$ to represent a key testable unclonable encryption scheme.
\paragraph{UPO from qsiO.} We consider the following tools: 
\begin{itemize}
    \item  A key-testable private-key unclonable bit encryption scheme $\UE=(\enc,\dec,\test)$.
    \item Quantum state iO scheme, denoted by $\qsio=(\obf,\eval)$. 
\end{itemize}
\begin{theorem}\label{thm:upo-from-key-testable-ue+qsio}
    Suppose there exists a key-testable private-key unclonable encryption scheme, $\UE=(\enc,\dec\test)$. 
     Then, any $\qsio$ scheme $(\obf,\eval)$ is also a $\UPO$ scheme satisfying $\distrid$-generalized $\UPO$ security guarantee (see \Cref{subsec:upo-definition}), for any puncturable keyed circuit class $\cktclassw=\{\{W_s\}_{s\in \keyspace_\secparam}\}_\secparam$ in $\ppoly$.
\end{theorem}

\begin{proof}
The correctness follows immediately from the correctness of the $\qsio$ scheme.    

Next, we prove security.
Let $(\alice,\bob,\charlie)$ be a QPT adversary in the generalized $\UPO$ security experiment given in~\cref{fig:genupo:expt} with $\distr_\inpclass=\distrid$.

\noindent $\hybrid_0:$
Same as the security experiment given in~\cref{fig:genupo:expt}.
\begin{enumerate}
    \item $\alice$ sends a key $s\in \keyspace_\secparam$ and function $\mu$\footnote{In the security experiment in~\cref{fig:genupo:expt}, $\alice$ sends two functions $\mu_\bob$, $\mu_\charlie$ but since in the context of the proof, $\distr_\inpclass=\distrid$, the second function $\mu_\charlie$ is redundant and do not play any part. Therefore, for the sake of the proof, we can assume, without loss of generality, that $\alice$ just sends a single function $\mu$ to the challenger.} to $\ch$.
    \item $\ch$ samples $x^* \xleftarrow{\$} \{0,1\}^{n(\secparam)}$, and a bit $b\xleftarrow{\$}\{0,1\}$.
    \item $\ch$ generates $\tilde{\rho}_0\gets \obf(1^\secparam,W_s)$, and $\tilde{\rho}_1\gets \obf(1^\secparam,W_{s,x^*,\mu})$, where $W_{s,x^*,\mu} \gets \genpuncture(s,x^*,x^*,\mu,\mu)$.
    \item $\ch$ sends $\tilde{\rho}_b$ to $\alice$.
    \item $\alice(\tilde{\rho}_b)$ outputs a bipartite state $\sigma_{\bob,\charlie}$.
    \item Apply $(\bob(x^*,\cdot) \otimes \charlie(x^*,\cdot))(\sigma_{\bob,\charlie})$ to obtain $(b_{\bfB},b_{\bfC})$. 
    \item Output $1$ if $b_{\bfB}=b_\bfC=b$.
\end{enumerate}

\noindent $\hybrid_1:$
\begin{enumerate}
    \item $\alice$ sends a key $s\in \keyspace_\secparam$ and function $\mu$ to $\ch$.
    \item $\ch$ samples $x^* \xleftarrow{\$} \{0,1\}^{n(\secparam)}$, and a bit $b\xleftarrow{\$}\{0,1\}$.
    \item $\ch$ generates \sout{$\tilde{\rho}_0\gets \obf(1^\secparam,W_s)$, and $\tilde{\rho}_1\gets \obf(1^\secparam,W_{s,x^*,\mu})$, where $W_{s,x^*,\mu} \gets \genpuncture(s,x^*,\mu)$.} \cblue{$\tilde{\rho}_b\gets \obf(1^\secparam,(C,\rho_b))$ where $\rho_b\gets \UE.\enc(x^*,b)$ and $C$ is the circuit that on input $x$, first checks if $\UE.\test(x,\rho_b)$ rejects, in which case, $C$ outputs $W_s(x)$. Else, $C$ runs $d\gets  \UE.\dec(x,\rho_b)$ and if $d=0$ outputs $W_s(x)$ else outputs $\mu(x)$.}
    \item $\ch$ sends $\tilde{\rho}_b$ to $\alice$.
    \item $\alice(\tilde{\rho}_b)$ outputs a bipartite state $\sigma_{\bob,\charlie}$.
    \item Apply $(\bob(x^*,\cdot) \otimes \charlie(x^*,\cdot))(\sigma_{\bob,\charlie})$ to obtain $(b_{\bfB},b_{\bfC})$. 
    \item Output $1$ if $b_{\bfB}=b_\bfC=b$.
\end{enumerate}
\par Hybrids $\hybrid_0$ and $\hybrid_1$ are computationally indistinguishable since the implementations $W_s$ and $C,\rho_0$, as well as $W_{s,x^*,\mu}$ and $C,\rho_1$ are functionally equivalent, i.e., $(1-\negl(\secparam))$ implementation of the same function for some negligible function $\negl$ (this, in turn, follows because of the key-testing property of the $\UE.\test$ algorithm).

Next, we give a reduction $(R_\alice,R_\bob,R_\charlie)$ from $\hybrid_1$ to unclonable indistinguishability experiment for $\UE$  as follows.

\begin{itemize}
    \item $R_\alice$ gives $0$ and $1$ as the challenge messages to the challenger.\anote{Maybe this step is redundant considering UE is a bit encryption or we can replace $0$ with $0^n$ and $1$ with $1^n$.}
    \item Challenger sends a cipher $\rho$. 
    \item $R_\alice$ computes $(C,\rho)$ using the $\UE.\test$ algorithm and $\rho$,  and then computes $\tilde{\rho}\gets \obf(1^\secparam, (C,\rho))$.
    \item $R_\alice$ feeds $\tilde{\rho}$ to $\alice$ and gets a bipartite state $\sigma_{B,C}$.
    \item $R_\bob$ (respectively, $R_\charlie$) on receiving $x$ from the challenger, runs $\bob$ (respectively, $\charlie$) on $\sigma_\bob$ (respectively, $\sigma_\charlie$) and $x$, and the output the bit outputted by $\bob$ (respectively, $\charlie$).
\end{itemize}

\noindent It follows that the advantage of the QPT adversary $(\alice,\bob,\charlie)$ in breaking UPO security is within a negligible additive factor of the advantage of the QPT adversary in breaking the unclonable indistinguishability of $\UE$. This completes the proof of generalized $\UPO$ security for $(\obf,\eval)$.
\end{proof}

Combining \Cref{thm:generic-uenc-to-uenc-with-uniform-keys}, \Cref{thm:key-testing-from-qsio}, and \Cref{thm:upo-from-key-testable-ue+qsio}, we conclude the following.
\begin{corollary}\label{cor:upo-from-qsio+uenc}
    Suppose there exists a post-quantum injective one-way function and a private key unclonable bit encryption scheme $\UE$. Then, any $\qsio$ scheme $(\obf,\eval)$ is also a $\UPO$ scheme satisfying $\distrid$-generalized $\UPO$ security guarantee (see \Cref{subsec:upo-definition}), for any puncturable keyed circuit class $\cktclassw=\{\{W_s\}_{s\in \keyspace_\secparam}\}_\secparam$ in $\ppoly$.
\end{corollary}

Combining \Cref{cor:UEnc-from-UPO} with \Cref{cor:upo-from-qsio+uenc} and using the fact that puncturable pseudorandom functions can be built from injective one-way functions, and that any $\qsio$ scheme is also an $\io$ scheme, we get the following feasibility result for unclonable encryptions in terms of increasing the message length.
\begin{corollary}[Extending bit unclonable encryption to multi-bit unclonable encryption]\label{cor:bit-unclonable_encryption_to_multi-bit_unclonable encryption}
    Assuming post-quantum injective one-way function and a private key unclonable bit encryption scheme $\UE$, and a $\qsio$ scheme, there exists a secure public-key unclonable encryption for multiple bits (see \Cref{subsec:uenc} for the definition).
\end{corollary}



\fi

\section*{Part II: Applications}
\addcontentsline{toc}{section}{\protect\numberline{}Part II: Applications}
\section{Applications}
We discuss the applications of unclonable puncturable obfuscation: 
\begin{itemize}
    \item We identify an interesting class of circuits and show that copy-protection for this class of functionalities exist. We show this in~\Cref{sec:qcp:punc}. 
    \item We generalize the result from bullet 1 to obtain an approach to copy-protect certain family of cryptographic schemes. This is discussed in~\Cref{sec:cp:punc:cryptoscheme}. 
    \item We show how to copy-protect evasive functions in~\Cref{sec:qcp:evasive}. 
    \item We show how to construct public-key single-decryptor encryption from UPO in~\Cref{sec:pksde:app}. 
    
\end{itemize}

\subsection{Notations for the applications}\label{subsec:notation-applications}
\noindent All the search-based applications (i.e., the security of which can be written as a cloning game with trivial success probability negligible) are with respect to independent challenge distribution. By the generic transformation in~\cite{AKL23}, this implies the applications also achieve security with respect to arbitrarily correlated challenge distribution. \anote{Prabhanjan, is this fine?}

\noindent A function class $\fclass=\{\fclass_\secparam\}_{\secparam\in\NN}$ is said to have a keyed circuit implementation $\cktclass=\{\{C_{k}\}_{k\in \keyspace_\secparam}\}_{\secparam\in\NN}$ if for every $f\in \fclass$, there is a circuit $C_k$ in $\cktclass$ that implements $f$, i.e., the canonical map $S_\secparam$ mapping a circuit $C$ to its functionality when seen as a map $\cktclass_\secparam \mapsto \fclass_\secparam$, is surjective. In addition, if there exists  a distribution $\distr_\fclass$ on $\fclass$, and an efficiently samplable distribution $\distr_\keyspace$ on $\keyspace$ such that
\[\{S_\secparam(C_k)\}_{k\gets \distr_\keyspace(1^\secparam)}\approx \{f\}_{f\gets \distr_\fclass(1^\secparam)},\]
then $(\distr_\keyspace,\cktclass)$ is called a keyed circuit implementation of $(\distr_\fclass,\fclass)$.
Since any circuit class can be represented as a keyed circuit class using universal circuits, there is no loss of generality in our definition of keyed circuit implementation\anote{Is this fine, Prabhanjan?}.



\subsection{Copy-Protection for Puncturable Function Classes}
\label{sec:qcp:punc}
\noindent We identify a class of circuits associated with a security property defined below. We later show that this class of circuits can be copy-protected.

\begin{definition}[Puncturable Security]\label{def:puncturable-security-function-classes}
Let $\cktclass=\{\cktclass_{\secparam}\}_{\secparam \in \mathbb{N}}$ be a puncturable keyed circuit class (as defined in~\Cref{sec:upo:security}). Let $\puncture$ be the puncturing algorithm and $\keyspace$ be the key space associated with $\cktclass$. 
\par We say that $(\cktclass,\puncture)$ satisfies $\distr_{\keyspace}$-puncturable security, where $\distr_{\keyspace}$ is a distribution on $\keyspace$, where $n$ is the input length of the circuits in $\cktclass_{\secparam}$, if the following holds: for any quantum polynomial time adversary $\alice$,
$$\prob\left[y = C_k(x_1) \ :\ \substack{k \leftarrow \distr_{\keyspace}(1^{\secparam})\\ \ \\ (x_1,x_2) \xleftarrow{\$} \{0,1\}^{2n} \\ \ \\ G_{k^*} \gets \puncture(k,x_1,x_2) \\ \ \\ y \gets \alice(x_1,G_{k^*})} \right] \leq \frac{1}{2^m} + \negl(\secparam),$$
for some negligible function $\negl$. In the above expression, $C_k \in \cktclass_{\secparam}$ and $n$ is the input length and $m$ is the output length of $C_k$.
\end{definition}

\begin{remark}
    A possible objection to the definition could be the inclusion of $x_2$ in the definition. The sole purpose of including $x_2$ is to help in the proof.
\end{remark}

\noindent \pnote{condensed the remark}

\begin{remark}
We may abuse the notation and denote $\distr_{\keyspace}$ to be a distribution on $\cktclass$. Specifically, circuit $C$ is sampled from  $\distr_{\keyspace}(1^{\secparam})$ as follows: first sample $k \leftarrow \keyspace_{\secparam}$ and then set $C=C_k$. 
\end{remark}

\begin{theorem}
\label{thm:search-copy-protection-application}
Suppose  $\fclass={\fclass_\secparam}_{\secparam\in\NN}$ be a function class equipped with a distribution $\distr_\fclass$ such that there exists a keyed circuit implementation (see \Cref{subsec:notation-applications}) $(\distr_\keyspace,\cktclass)$ satisfying the following:
\begin{enumerate}
    \item $\cktclass$ is a puncturable keyed circuit class associated with the puncturing algorithm $\puncture$ and key space $\keyspace$
    \item $\cktclass$ satisfies $\distr_{\keyspace}$-puncturable security (\Cref{def:puncturable-security-function-classes}).
\end{enumerate}
Suppose $\UPO=(\obf,\eval)$ is a secure unclonable puncturable obfuscation scheme for $\cktclass$ associated with distribution $\distr_{\calX}$, where $\distr_{\calX}$ is defined to be a uniform distribution. 
\par Then there exists a copy-protection scheme $(\copyprotect,\eval)$ for $\fclass$ satisfying $(\distr_{\keyspace},\distr_{\calX})$-anti-piracy, with respect to $\cktclass$ as the keyed circuit implementation of $\fclass$, and $(\distr_\keyspace,\cktclass)$  as the keyed circuit implementation of $(\distr_\fclass,\fclass)$. 
\end{theorem}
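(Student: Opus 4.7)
The plan is to take the copy-protection scheme to be the $\UPO$ scheme itself: set $\copyprotect(1^\secparam, C_k) := \UPO.\obf(1^\secparam, C_k)$, with evaluation inherited directly from $\UPO.\eval$. Correctness then follows immediately from $\UPO$ correctness. For anti-piracy, I would fix a QPT adversary $(\alice, \bob, \charlie)$ and compare two experiments: the real game, in which $\alice$ receives $\UPO.\obf(1^\secparam, C_k)$, and a ``punctured'' hybrid in which $\alice$ receives $\UPO.\obf(1^\secparam, G_{k^*})$ where $G_{k^*} \gets \puncture(k, x^\bob, x^\charlie)$. Let $p_0$ denote the probability that both $\bob$ and $\charlie$ output the correct values $C_k(x^\bob), C_k(x^\charlie)$ in the real game, and let $p_1$ denote the probability that both of them \emph{fail} in the punctured hybrid.

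The first step is to lower-bound $p_1$. I would argue, separately for $\bob$ and for $\charlie$, that individual success in the punctured hybrid would break the $\distr_{\keyspace}$-puncturable security of $\cktclass$: the reduction, on receiving $(x_1, G_{k^*})$ from the puncturable-security challenger, internally runs $\UPO.\obf$ on $G_{k^*}$, feeds the result to $\alice$ to produce a bipartite state, and then outputs whatever the targeted party returns on $x_1$ together with its half of that state. Since the reduction generates the obfuscation itself, no additional $\UPO$-level reasoning is needed here. Puncturable security bounds each individual success probability by $2^{-m} + \negl(\secparam)$, and a union bound gives $p_1 \geq 1 - 2 \cdot 2^{-m} - 2\,\negl(\secparam)$.

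The second step is to relate $p_0$ to $p_1$ via $\UPO$ security. I would have $\reduct_{\alice}$ sample $k \gets \distr_{\keyspace}$, forward $k$ to the $\UPO$ challenger, hand the returned state to $\alice$, and then pass $k$ together with the appropriate half of $\alice$'s output to each of $\reduct_{\bob}, \reduct_{\charlie}$. In the challenge phase $\reduct_{\bob}$ receives $x^\bob$, runs $\bob$ to obtain $y^\bob$, and outputs bit $0$ iff $y^\bob = C_k(x^\bob)$; $\reduct_{\charlie}$ is symmetric. In the $b=0$ branch this triple outputs $(0,0)$ with probability $p_0$, and in the $b=1$ branch it outputs $(1,1)$ with probability $p_1$. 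The $\UPO$ security guarantee therefore yields $\tfrac{1}{2}(p_0 + p_1) \leq \tfrac{1}{2} + \negl(\secparam)$, which when combined with the lower bound on $p_1$ forces $p_0 \leq 2 \cdot 2^{-m} + O(\negl(\secparam))$, as required for anti-piracy.

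The main subtle point, flagged in the technical overview, is that $\reduct_{\bob}$ and $\reduct_{\charlie}$ must output $0$ on \emph{correct} answers rather than on incorrect ones: with the reversed convention the $\UPO$ bound would translate into an upper bound on $p_1$, which says nothing about $p_0$. A secondary check to carry out is that the keyed-circuit-implementation hypothesis is only used to move between sampling a function from $\distr_{\fclass}$ and sampling a circuit via $k \gets \distr_{\keyspace}$, so the reductions can operate entirely on the circuit side and the distribution on $\fclass$ is never touched directly.
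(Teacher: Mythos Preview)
Your proposal is correct and follows essentially the same approach as the paper: the construction is identical (set $\copyprotect = \UPO.\obf$), the UPO reduction $(\reduct_\alice,\reduct_\bob,\reduct_\charlie)$ with the ``output $0$ on correct'' convention matches exactly, and the two-step analysis (lower-bound $p_1$ via puncturable security plus a union bound over $\bob$ and $\charlie$, then combine with the UPO inequality $\tfrac12(p_0+p_1)\le\tfrac12+\negl$) is the paper's argument. The only cosmetic difference is that you track the $2\cdot 2^{-m}$ term explicitly whereas the paper absorbs it into a single negligible function; your version is arguably cleaner given the exact form of \Cref{def:puncturable-security-function-classes}.
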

\begin{proof}
\par We define the algorithms $\cp=(\copyprotect,\eval)$ as follows: 
\begin{itemize}
    \item $\copyprotect(1^{\secparam},C)$: on input $C \in \cktclass_{\secparam}$ with input length $n(\secparam)$, it outputs $\rho_C$, where $\rho_C \leftarrow \UPO.\obf(1^{\secparam},C)$. 
    \item $\eval(\rho_C,x)$: on input $\rho_C$, input $x \in \{0,1\}^n$, it outputs the result of $\UPO.\eval(\rho_C,x)$. 
\end{itemize}
The correctness of the copy-protection scheme follows from the correctness of $\UPO$.\\

\noindent Next, we prove $(\distr_{\keyspace},\distr_{\calX})$-anti-piracy with respect to the keyed circuit implementation $(\distr_\keyspace,\cktclass)$ (see \Cref{sec:def:copyprotection}).
Let $(\alice,\bob,\charlie)$ be a non-local adversary in the anti-piracy experiment $\cpexpt^{\left(\alice,\bob,\charlie \right),\distr_{\keyspace},\distr_{\calX}}\left( 1^{\secparam} \right)$ defined in \Cref{fig:product-uniform-search-anti-piracy-puncturable-functions}. Consider the following adversary $(\reduc_\alice,\reduc_\bob,\reduc_\charlie)$ in the $\UPO$ security experiment $\upoexpt^{\left(\reduc_\alice,\reduc_\bob,\reduc_\charlie \right),\distr_{\calX},\cktclass}\left( 1^{\secparam},\cdot \right)$ (\Cref{fig:upoexpt}), defined as follows:
\begin{itemize}
    \item $\reduc_\alice$ samples $k\gets \distr_{\keyspace}(1^\secparam)$, and sends $k$ to the challenger $\ch$ in the $\UPO$ security experiment.
    \item $\reduc_\alice$ runs $\alice$ on the received obfuscated state $\rho$ from $\ch$ to get a bipartite state $\sigma_{\bob,\charlie}$ on registers ${\bf B}$ and ${\bf C}$.
    \item $\reduc_\alice$ sends register ${\bf B}$ and key $k$ to $\bob$. Similarly, $\reduc_\alice$ sends register ${\bf C}$ and key $k$ to $\charlie$.  
    \item $\ch$ generates $(x^{\bob},x^{\charlie}) \leftarrow \distr_{\calX}$.
    \item $\reduc_\bob$ on receiving the challenge $x^\bob$, runs $\bob$ on $(k,\sigma_{\bob},x^{\bob})$ to obtain $y^\bob$. $\reduc_\bob$ outputs $0$ if and only if $y^\bob=C_{k_\bob}(x^\bob)$, otherwise outputs $1$.  
    \item $\reduc_\charlie$ receives the challenge $x^{\charlie}$ and does the same as $\reduc_\bob$ but on $(k,\sigma_\charlie,x^{\charlie})$.
\end{itemize}
Define the following quantities: 
\begin{itemize}
    \item $p^{\cp}$: probability that $(\bob,\charlie)$ simultaneously output $(C_k(x^{\bob}),C_k(x^{\charlie}))$ in $\cpexpt^{\left(\alice,\bob,\charlie \right),\distr_{\keyspace},\distr_{\calX}}\left( 1^{\secparam} \right)$. 
    \item For $b \in \{0,1\}$, $p_b^{\UPO}$: probability that $(\reduc_\bob,\reduc_\charlie)$ simultaneously output $b$ in  $\upoexpt^{\left(\reduc_\alice,\reduc_\bob,\reduc_\charlie \right),\distr_{\calX},\cktclass}\left( 1^{\secparam},b \right)$.
\end{itemize}

\noindent In order to prove the security of $\cp$, we have to upper bound $p^{\cp}$. We have the following:
\begin{itemize}
    \item From the description of $(\reduc_{\alice},\reduc_{\bob},\reduc_{\charlie})$, $p^{\cp} = p_0^{\UPO}$.
    \item From the security of $\UPO$, we have that $\frac{1}{2} p_{0}^{\UPO} + \frac{1}{2}p_1^{\UPO} \leq \frac{1}{2} + \nu_1(\secparam)$ for some negligible function $\nu_1(\secparam)$.
\end{itemize}
Combining the two, we have:
\begin{equation}
\label{eqn:first}
\frac{1}{2} p^{\cp} + \frac{1}{2}p_1^{\UPO} \leq \frac{1}{2} + \nu_1(\secparam)
\end{equation}

\begin{claim}
Assuming $\distr_{\keyspace}$-puncturable security of $\cktclass$, there exists a negligible function $\nu_2(\secparam)$ such that $p_1^{\UPO} \geq 1 -  \nu_2(\secparam)$. 
\end{claim}
\begin{proof}
Define the following quantities. Let $q_1^{\reduc_\bob}$ (respectively, $q_1^{\reduc_\charlie}$) be the probability that $\reduc_\bob$ (respectively, $\reduc_\charlie$) outputs $0$. Hence, $p_1^{\UPO} \geq 1 - q_1^{\reduc_\bob} - q_1^{\reduc_\charlie}$. We prove that $q_1^{\reduc_\bob} \leq \nu_3(\secparam)$, for some negligible function $\nu_3(\secparam)$ and symmetrically, it would follow that $q_1^{\reduc_{\charlie}} \leq \nu_4(\secparam)$. 
\par Suppose $q_1^{\reduc_\bob}$ is not negligible. We design an adversary $\alice_{{\sf punc}}$ participating in the security experiment of~\Cref{def:puncturable-security-function-classes}. Adversary $\alice_{{\sf punc}}$ proceeds as follows: 

\begin{itemize}
    \item $\alice_{{\sf punc}}$ on receiving $(x_1,G_{k^*})$, where $G_{k^*} \leftarrow \puncture(k,x_1,x_2)$, generates $\rho\gets \obf(1^\secparam,G_{k^*})$.
    \item It then runs $\sigma_{\bob\charlie}\gets \reduc_\alice(\rho)$, where $\sigma_{\bob\charlie}$ is defined on two registers ${\bf B}$ and ${\bf C}$.
    \item Finally, it outputs the result of $\reduc_\bob$ on the register ${\bf B}$ and $x_1$. 
\end{itemize}

\noindent By the above description, the event that $\alice_{{\sf punc}}$ wins exactly corresponds to the event that $\reduc_\bob$ outputs $0$. That is, the probability that  $\alice_{{\sf punc}}$ wins is $q_1^{\reduc_{\bob}}$. Since $q_1^{\reduc_{\bob}}$ is not negligible, it follows that $\alice_{{\sf punc}}$ breaks the puncturable security of $\cktclass$ with non-negligible probability, a contradiction. Thus, $q_1^{\reduc_{\bob}}$ is negligible and symmetrically, $q_1^{\reduc_{\charlie}}$ is negligible. 
\end{proof}

\noindent From the above claim, we have:
\begin{equation}
\label{eqn:second}
\frac{1}{2} p^{\cp} + \frac{1}{2}p_1^{\UPO} \geq \frac{1}{2} p^{\cp} + \frac{1}{2} - \frac{1}{2}\nu_2(\secparam)
\end{equation} 

\noindent Combining~\Cref{eqn:first} and~\Cref{eqn:second}, we have: 
$$p^{\cp} \leq 2\nu_1(\secparam) + \nu_2(\secparam),$$
which concludes the theorem. 
\end{proof}

\paragraph{Instantiations.} In the theorem below, we call a pseudorandom function (PRF) to be a 2-point puncturable PRF if it can be punctured at 2 points. Such a function family can be instantiated, for instance, from post-quantum one-way functions~\cite{BGI14,BW13}. We obtain the following corollary. 

\begin{corollary}
Let $\fclass$ be a class of 2-point puncturable PRF with an evaluation circuit $\eval$ and keyspace $\{\keyspace_\secparam\}_\secparam$, and let $\cktclass=\{\{\eval(k,\cdot)\}_{k\in \keyspace_\secparam}\}_\secparam$. Assuming the existence of unclonable puncturable obfuscation for $\cktclass$, there exists a copy-protection scheme for $\fclass$. 
\end{corollary}

Combined with \Cref{thm:strong-CLLZ-cp-prf_puncturable-CP-f-uniform}, we can rephrase \Cref{thm:search-copy-protection-application} in terms of concrete assumption as follows.
\begin{corollary}\label{cor:search-copy-protection-application-concrete-assumptions}
Suppose $\fclass$ be a function class satisfying all the properties as in \Cref{thm:search-copy-protection-application}, then
assuming \Cref{conj:goldreich-levin-correlated}, the existence of post-quantum sub-exponentially secure $\io$ and one-way functions, and the quantum hardness of LWE, there exists a copy-protection scheme for $\fclass$ satisfying anti-piracy with respect to the same circuit implementation and anti-piracy notion as mentioned in \Cref{thm:search-copy-protection-application}. 

In particular, under the above assumptions, there exists a copy-protection scheme for every class of 2-point puncturable PRF.
\end{corollary}

\ifllncs

\else 
\subsection{Copy-Protection for Puncturable Cryptographic Schemes} 
\label{sec:cp:punc:cryptoscheme}
We generalize the approach in the previous section to capture puncturable cryptographic schemes, rather than just puncturable functionalities.

\paragraph{Syntax.} A cryptographic primitive that is a tuple of probabilistic polynomial time algorithms $(\gen,\eval,\puncture,\verify)$ such that
\begin{itemize}
    \item $\gen(1^\secparam)$: takes a security parameter and generates a secret key $\sk$ and a public auxiliary information $\aux$. We will assume without loss of generality that $\sk \in \{0,1\}^{\secparam}$.
    \item $\eval(\sk,x)$: takes a secret key $\sk$ and an input $x$ and outputs a output string $y$. This is a deterministic algorithm. 
    \item $\puncture(\sk,x_1,x_2)$: takes a secret key $\sk$ and a set of inputs $(x_1,x_2)$ and outputs a circuit $G_{\sk,x_1,x_2}$. This is a deterministic algorithm. 
    
    \item $\verify(\sk,\aux,x,y)$: takes a secret key $\sk$, an auxiliary information $\aux$, an input $x$ and an output $y$ and either accepts or rejects.
\end{itemize}

\begin{definition}[{Puncturable cryptographic schemes}]\label{def:puncturable-crypto}
A cryptographic scheme $(\gen,\eval,\puncture,\allowbreak \verify)$ is a puncturable cryptographic scheme if it satisfies the following  properties: 
\begin{itemize}
    \item {\bf Correctness}: The correctness property states that for any input $x$, $\verify(\sk,\aux,x,\allowbreak \eval(x))$ accepts, where $(\sk,\aux) \leftarrow \gen(1^{\secparam})$. 
    \item {\bf Correctness of Punctured Circuit}: The correctness of punctured circuit states that for any set of inputs $\{x_1,x_2\}$, and $G_{\sk,x_1,x_2}\gets \puncture(\sk,x_1,x_2)$, where $(\sk,\aux) \leftarrow \gen(1^{\secparam})$, it holds that $G_{\sk,x_1,x_2}(x)=\eval(\sk,x)$ for all $x\not\in \{x_1,x_2\}$ and $G_{\sk,x_1,x_2}(x)$ outputs $\bot$ if $x\in \{x_1,x_2\}$.
    \item {\bf Security}: We say that a puncturable cryptographic scheme $(\gen,\eval,\puncture,\verify)$ satisfies puncturable security if the following holds: for any quantum polynomial time adversary $\alice$,
$$\prob\left[\verify(\sk,\aux,x_1,y)=1 \ :\ \substack{(\sk,\aux) \gets \gen(1^{\secparam})\\ \ \\ x_1,x_2 \xleftarrow{\$} \{0,1\}^{n}\\ \ \\ G_{\sk,x_1,x_2} \gets \puncture(\sk,x_1,x_2) \\ \ \\ y \gets \alice(x_1,\aux,G_{\sk,x_1,x_2})} \right] \leq  \negl(\secparam),$$
for some negligible function $\negl$.
\end{itemize}

\par 
\end{definition}

\begin{remark}\label{remark:double-puncture-from-one-puncture-arbitray-crypto}
    A possible objection to the definition could be the inclusion of $x_2$ in the definition. The sole purpose of including $x_2$ is to help in the proof. Assuming $\io$ and length-doubling $\prg$, this added restriction does not rule out function classes further since, given $\io$ and $\prg$, any function class that satisfies the above definition without the additional puncture point has a circuit representation that satisfies the puncturing security with this additional point of puncture $x_2$.
\end{remark}

\newcommand{\pcsexpt}{{\sf PCS.Expt}}
\begin{figure}[!htb]
\begin{center} 
\begin{tabular}{|p{12cm}|}
    \hline 
\begin{center}
\underline{$\pcsexpt^{\left(\alice,\bob,\charlie \right),}\left( 1^{\secparam} \right)$}: 
\end{center}
\begin{itemize}
\item $\ch$ samples $\sk,\aux \gets \gen(1^{\secparam})$, and generates $\rho_\sk\gets \UPO.\obf(1^\secparam,\eval(\sk,\cdot))$ and sends $(\rho_\sk,\aux)$ to $\alice$.
\item $\alice$ produces a bipartite state $\sigma_{\bob,\charlie}$.
\item $\ch$ samples $x^\bob,x^\charlie\xleftarrow{\$} \{0,1\}^n$.
\item Apply $(\bob(x^{\bob},\cdot) \otimes \charlie(x^{\charlie},\cdot))(\sigma_{\bob,\charlie})$ to obtain $(y^{\bob},y^{\charlie})$. 
\item Output $1$ if $\verify(\sk,\aux,x^\bob,y^\bob)=1$ and $\verify(\sk,\aux,x^\charlie,y^\charlie)=1$.
\end{itemize}
\ \\ 
\hline
\end{tabular}
    \caption{Anti-piracy experiment with uniform and independent challenge distribution:}
    \label{fig:product-uniform-arb-crypto-experiment-anti-piracy}
    \end{center}
\end{figure}

\begin{lemma}\label{lemma:abstract-crypto-experiment}
    Suppose $(\gen,\eval,\puncture,\verify)$ is a puncturable cryptographic scheme. Let $\UPO$ be a $\upo$ for the puncuturable keyed circuit class $\{\cktclass_\secparam=\{\eval(\sk,\cdot)\}_{\sk \in \{0,1\}^{\secparam}}$ parametrized by the secret keys, equipped with $\puncture$ as the puncturing algorithm. 
    Then for every QPT adversary $(\alice,\bob,\charlie)$, there exists a negligible function $\negl$ such that the following holds:
    \[ \prob\left[ 1 \leftarrow \pcsexpt^{\left(\alice,\bob,\charlie \right)}\left( 1^{\secparam} \right)\ \right] \leq \negl(\secparam),\]
    where $\pcsexpt^{\left(\alice,\bob,\charlie \right)}$ is defined in~\Cref{fig:product-uniform-arb-crypto-experiment-anti-piracy}.
\end{lemma}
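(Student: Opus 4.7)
The plan is to mimic the reduction used in the proof of~\Cref{thm:search-copy-protection-application}, adapting it to the setting where correctness is witnessed by a verification algorithm rather than by exact equality with $\eval(\sk,\cdot)$. First, I would package $\eval(\sk,\cdot)$ as a keyed circuit class, with $\sk$ playing the role of the key and $\puncture$ supplying the puncturing algorithm; this is exactly the class $\cktclass$ that $\UPO$ is assumed to be secure for. Given a purported adversary $(\alice,\bob,\charlie)$ achieving non-negligible success probability $p^{\cp}$ in $\pcsexpt$, I would build a $\UPO$ adversary $(\reduct_\alice,\reduct_\bob,\reduct_\charlie)$ as follows: $\reduct_\alice$ samples $(\sk,\aux)\gets\gen(1^\secparam)$, forwards $\sk$ to the $\UPO$ challenger, and, on receiving back $\rho^*$ (the obfuscation of either $\eval(\sk,\cdot)$ or of $G_{\sk,x^\bob,x^\charlie}\gets\puncture(\sk,x^\bob,x^\charlie)$), runs $\alice(\rho^*,\aux)$ to obtain $\sigma_{\bob,\charlie}$. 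The side $\reduct_\bob$ (resp.\ $\reduct_\charlie$) receives the challenge $x^\bob$ (resp.\ $x^\charlie$) together with $\aux$ and the corresponding register of $\sigma_{\bob,\charlie}$, runs $\bob$ (resp.\ $\charlie$), and outputs the bit $0$ if and only if $\verify(\sk,\aux,x^\bob,y^\bob)=1$ (resp.\ $\verify(\sk,\aux,x^\charlie,y^\charlie)=1$).

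With $p_b^{\UPO}$ denoting the probability that $(\reduct_\bob,\reduct_\charlie)$ simultaneously output $(b,b)$ in $\upoexpt$ with challenge bit $b$, the key observations are: (i) by construction, $p_0^{\UPO}=p^{\cp}$, since when the $\UPO$ challenger uses $b=0$ the view of $\alice,\bob,\charlie$ is exactly that of $\pcsexpt$; and (ii) $\UPO$-security of the scheme yields $\tfrac12(p_0^{\UPO}+p_1^{\UPO})\leq \tfrac12+\nu_1(\secparam)$ for some negligible $\nu_1$. It remains to show $p_1^{\UPO}$ is negligibly close to $1$, which is where the puncturable-security property of $(\gen,\eval,\puncture,\verify)$ (\Cref{def:puncturable-crypto}) enters.

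To lower bound $p_1^{\UPO}$, I would argue by a union bound that $\Pr[\reduct_\bob \text{ outputs }0 \mid b=1]$ and $\Pr[\reduct_\charlie \text{ outputs }0 \mid b=1]$ are each negligible. Suppose for contradiction that the first quantity is non-negligible. Construct $\alice_{\sf punc}$ that on input $(x_1,\aux,G_{\sk,x_1,x_2})$ internally samples $\rho^*\gets\UPO.\obf(1^\secparam,G_{\sk,x_1,x_2})$, runs $\alice(\rho^*,\aux)$ to produce a bipartite state, and then runs $\bob$ on $x_1$ together with $\bob$'s register of that state to obtain $y^\bob$; output $y^\bob$. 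Because $x^\bob,x^\charlie$ are uniform in the $\UPO$ experiment and $(x_1,x_2)$ are uniform in the puncturable-security experiment, the view of $\alice_{\sf punc}$'s internal copy of $\alice,\bob$ is identical to that in $\upoexpt$ conditioned on $b=1$; hence the probability that $\verify(\sk,\aux,x_1,y^\bob)=1$ equals $\Pr[\reduct_\bob \text{ outputs }0 \mid b=1]$, contradicting puncturable security. The symmetric argument with $\reduct_\charlie$ finishes the bound $p_1^{\UPO}\geq 1-\nu_2(\secparam)$. Combining with (ii), $p^{\cp}=p_0^{\UPO}\leq 2\nu_1(\secparam)+\nu_2(\secparam)$, as desired.

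The main (minor) obstacle I anticipate is ensuring the distribution matching in the reduction from $\reduct_\bob$ to $\alice_{\sf punc}$: the $\UPO$ experiment draws both $x^\bob$ and $x^\charlie$ before generating the punctured circuit, and we need this to coincide with the puncturable-security sampling, which is why \Cref{def:puncturable-crypto} is formulated with two uniformly random inputs $x_1,x_2$ rather than a single one (as flagged in \Cref{remark:double-puncture-from-one-puncture-arbitray-crypto}). Since $\UPO$ is assumed to hold with respect to the uniform distribution $\distr_{\calX}=\calU_{\{0,1\}^{2n}}$, this alignment is immediate, and there is no issue. Everything else is bookkeeping analogous to~\Cref{thm:search-copy-protection-application}.
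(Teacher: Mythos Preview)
Your proposal is correct and follows essentially the same approach as the paper: build a $\UPO$ adversary that samples $(\sk,\aux)$, forwards $\sk$ to the challenger, feeds $(\rho^*,\aux)$ to $\alice$, and has $\reduct_\bob,\reduct_\charlie$ output $0$ iff $\verify$ accepts; then combine $p_0^{\UPO}=p^{\cp}$, the $\UPO$ bound $\tfrac12(p_0^{\UPO}+p_1^{\UPO})\le\tfrac12+\nu_1$, and the union-bound lower bound on $p_1^{\UPO}$ obtained via the puncturable-security reduction. One small bookkeeping point: make explicit that $\reduct_\alice$ also passes $\sk$ (not just $\aux$) to $\reduct_\bob$ and $\reduct_\charlie$, since they need it to run $\verify$; the paper does this by sending $\sk_\bob=\sk_\charlie=\sk$ along with the registers of $\sigma_{\bob,\charlie}$.
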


\begin{proof}[Proof of lemma~\ref{lemma:abstract-crypto-experiment}]
Let $(\alice,\bob,\charlie)$ be a non-local adversary in the anti-piracy experiment $\pcsexpt^{\left(\alice,\bob,\charlie \right)}$ (\Cref{fig:product-uniform-signature-anti-piracy}). Consider the following adversary $(\reduc_\alice,\reduc_\bob,\reduc_\charlie)$ in the $\UPO$ security experiment $\upoexpt^{\left(\reduc_\alice,\reduc_\bob,\reduc_\charlie \right),\distr_{\inpclass},\cktclass}$ (\Cref{fig:upoexpt}), defined as follows:
\begin{itemize}
    \item $\reduc_\alice$ samples $(\sk,\aux)\gets \gen(1^\secparam)$, and sends $\sk$ to the challenger $\ch$ in the $\UPO$ security experiment.
    \item $\reduc_\alice$ receives $\rho$ from $\ch$ and runs $\alice$ on $(\rho,\aux)$ from $\ch$ to get a bipartite state $\sigma_{\bob,\charlie}$.
    \item $\reduc_\alice$ outputs $\sk_\bob,\sk_\charlie,\aux_\bob,\aux_\charlie,\sigma_{\bob,\charlie}$ where $\sk_\bob=\sk_\charlie=\sk$ and  $\aux_\bob=\aux_\charlie=\aux$.
    \item $\reduc_\bob$ receives the challenge $x^\bob$ from $\ch$ and $(\sk_\bob,\aux_\bob,\sigma_\bob)$ from $\reduc_\alice$ and runs $\bob$ on $\sigma_{\bob}$ to obtain $y^\bob$. $\reduc_\bob$ outputs $0$ if and only if $\verify(\sk,\aux,x^\bob,y^\bob)=1$, otherwise outputs $1$.  
    \item $\reduc_\charlie$ does the same but on $(\aux_\charlie,\sigma_\charlie)$ and the challenge $x^\charlie$.
\end{itemize}

\noindent Note that the view of $(\alice,\bob,\charlie)$ in $\expt^{\left(\reduc_\alice,\reduc_\bob,\reduc_\charlie \right),\cal{U}\times\cal{U},\cktclass}\left( 1^{\secparam},0 \right)$ is identical to the UPO experiment, and the event $1\gets \expt^{\left(\reduc_\alice,\reduc_\bob,\reduc_\charlie \right),\cal{U}\times\cal{U},\cktclass}\left( 1^{\secparam},0 \right)$ corresponds to $1\gets \expt^{\left(\alice,\bob,\charlie \right),(\gen,\eval,\puncture,\verify),\UPO}\left( 1^{\secparam} \right)$.
Let \[p_b\equiv \prob[b\gets \expt^{\left(\reduc_\alice,\reduc_\bob,\reduc_\charlie \right),\cal{U}\times\cal{U},\cktclass}\left( 1^{\secparam},b \right)], \forall b\in \{0,1\}.\]
Hence,
\begin{align}
p_0=&\prob[0\gets \expt^{\left(\reduc_\alice,\reduc_\bob,\reduc_\charlie \right),\cal{U}\times\cal{U},\cktclass}\left( 1^{\secparam},0 \right)]\\
&=\prob\left[ 1 \leftarrow \expt^{\left(\alice,\bob,\charlie \right),(\gen,\eval,\puncture,\verify),\UPO}\left( 1^{\secparam} \right)\ \right].
\end{align}

Therefore, it is enough to show that $p_0$ is negligible.

Note that by the $\UPO$-security (see \Cref{def:newcpsecurity_bot}) of the $\UPO$ scheme,
there exists a negligible function $\negl(\secparam)$ such that
\[\prob[b=0]p_0+\prob[b=1]p_1=\frac{p_0+p_1}{2}\leq \frac{1}{2}+\negl(\secparam).\]
Hence,
\begin{equation}\label{eq:invoke-upo-security-sign}
    p_0\leq 1+2\negl(\secparam) - p_1.
\end{equation}

\pnote{I wasn't able to follow the next few lines. Can you double check? }\anote{I double checked. There was a  mistake which I corrected and simplified and expanded a bit. Does this make sense now?} Let $q^{\reduc_\bob}_1$ (respectively, $q^{\reduc_\charlie}_1$) be the probability that $\reduc_\bob$ ($\reduc_\charlie$) outputs $0$, i.e., the inside adversary $\bob$ (respectively, $\charlie$) passed verification, in the experiment $\expt^{\left(\reduc_\alice,\reduc_\bob,\reduc_\charlie \right),\cal{U}\times\cal{U},\cktclass}\left( 1^{\secparam},1 \right)$.

Note that the event $0\gets \expt^{\left(\reduc_\alice,\reduc_\bob,\reduc_\charlie \right),\cal{U}\times\cal{U},\cktclass}\left( 1^{\secparam},1 \right)$ corresponds to either $\reduc_\bob$ outputs $0$ or $\reduc_\charlie$ outputs $0$ in $\expt^{\left(\reduc_\alice,\reduc_\bob,\reduc_\charlie \right),\cal{U}\times\cal{U},\cktclass}\left( 1^{\secparam},1 \right)$. Hence,
\[\prob\left[0\gets \expt^{\left(\reduc_\alice,\reduc_\bob,\reduc_\charlie \right),\cal{U}\times\cal{U},\cktclass}\left( 1^{\secparam},1 \right)\right]\leq q^{\reduc_\bob}_1+q^{\reduc_\charlie}_1.\]
Therefore,
\[p_1=1-\prob\left[0\gets \expt^{\left(\reduc_\alice,\reduc_\bob,\reduc_\charlie \right),\cal{U}\times\cal{U},\cktclass}\left( 1^{\secparam},1 \right)\right]\geq 1-q^{\reduc_\bob}_1-q^{\reduc_\charlie}_1.\]
Combining with \Cref{eq:invoke-upo-security-sign}, we conclude
\begin{equation}\label{eq:translating-upo-security-sign}
    p_0\leq 1 +2\negl(\secparam) -(1-q^{\reduc_\bob}_1-q^{\reduc_\charlie}_1)=q^{\reduc_\bob}_1+q^{\reduc_\charlie}_1+2\negl(\secparam).
\end{equation}

Hence, it is enough to show that $q^{\reduc_\charlie}_1$ and $q^{\reduc_\bob}_1$ are negligible.

Consider the adversary $A_{\alice,\bob}$ in the puncturing security experiment given in \Cref{def:puncturable-digital-signatures} for the puncturable signature scheme $(\gen,\eval,\puncture,\verify)$.

\begin{itemize}
    \item $A_{\alice,\bob,}$ on receiving $x_1,G_{\sk,x_1,x_2}$ generates $\rho\gets \obf(1^\secparam,\eval(G_{\sk,x_1,x_2},\cdot))$.
    \item Then, runs $\sigma_{\bob,\charlie}\gets \alice(\rho)$.
    \item Finally, outputs $\bob(\sigma_\bob)$. 
\end{itemize}
It is easy to see that the event of $A_{\alice,\bob}$ winning the puncturing security experiment exactly corresponds with the event of $\reduc_\bob$ outputting $1$ in $\expt^{\left(\reduc_\alice,\reduc_\bob,\reduc_\charlie \right),\cal{U}\times\cal{U},\cktclass}\left( 1^{\secparam},1 \right)$, where $x_1$ corresponds to $x^\bob$. Therefore, by the puncturing security of $(\gen,\eval,\puncture,\verify)$, there exists a negligible function $\epsilon_1(\secparam)$ such that, 
$$q^{\reduc_\bob}_1=\prob\left[\verify(\sk,\aux,x_1,\sig) = 1 \ :\ \substack{(\sk,\aux) \gets \gen(1^{\secparam})\\ \ \\ x_1,x_2 \xleftarrow{\$} \{0,1\}^{n}\\ \ \\ G_{\sk,x_1,x_2} \leftarrow \puncture(\sk,\{x_1,x_2\}) \\ \ \\ \sig \leftarrow A_{\alice,\bob}(x_1,\aux,G_{\sk,x_1,x_2})} \right]\leq \epsilon_1.$$
Similarly, by considering the adversary $A_{\alice,\charlie}$ which is $A_{\alice,\bob}$ with the $\bob$ replaced as $\charlie$, we conclude that there exists a negligible function $\epsilon_2(\secparam)$ such that 
$$q^{\reduc_\charlie}_1=\prob\left[\verify(\sk,\aux,x_1,\sig) = 1 \ :\ \substack{(\sk,\aux) \gets \gen(1^{\secparam})\\ \ \\ x_1,x_2 \xleftarrow{\$} \{0,1\}^{n}\\ \ \\ G_{\sk,x_1,x_2} \leftarrow \puncture(\sk,\{x_1,x_2\}) \\ \ \\ \sig \leftarrow A_{\alice,\charlie}(x_1,\aux,G_{\sk,x_1,x_2})} \right]\leq \epsilon_2.$$

Therefore, we conclude that both $q^{\reduc_\charlie}_1$ and $q^{\reduc_\bob}_1$ are negligible in $\secparam$, which in combination with \Cref{eq:translating-upo-security-sign} completes the proof of the anti-piracy.


\end{proof}

\subsubsection{Copy-Protection for Signatures}
\begin{definition}[{Puncturable digital signatures~\cite{BSW16}}]\label{def:puncturable-digital-signatures}
Suppose  $\digitalsig=(\gen,\sign,\verify)$ be a digital signature with message length $n=n(\secparam)$ and signature length $s=s(\secparam)$. Let $\puncture,\sign^*$ be efficient polynomial time algorithms such that $\puncture()$ takes as input a secret key and a message (or a polynomial number of messages) $(\sk,m)$ and outputs $\sk_m$, and $\sign^*$ is the signing algorithm for punctured keys such that $\sign^*(\sk_m,\cdot)$ has the same functionality as $\sign^*(\sk_m,\cdot)$ on all messages
 $m' \neq m$ and $\sign^*(\sk_m,m')$ outputs $\bot$. 
\par We say that a puncturable digital signature scheme $(\gen,\sign,\puncture,\verify,\sign^*)$ satisfies puncturable security if the following holds: for any quantum polynomial time adversary $\alice$,
$$\prob\left[\verify(\vk,x_1,\sig)=1 \ :\ \substack{(\sk,\vk) \gets \gen(1^{\secparam})\\ \ \\ m_1,m_2 \xleftarrow{\$} \{0,1\}^{n}\\ \ \\ \sk_{m_1,m_2} \gets \puncture(\sk,\{m_1,m_2\}) \\ \ \\ \sig \gets \alice(m_1,\vk,\sk_{m_1,m_2})} \right] \leq  \negl(\secparam),$$
for some negligible function $\negl()$.
\end{definition}

\begin{remark}\label{remark:double-puncture-from-one-puncture}
    A possible objection to the definition could be the inclusion of $m_2$ in the definition. The sole purpose of including $m_2$ is to help in the proof. Assuming $\io$ and length-doubling $\prg$, this added restriction does not rule out function classes further since, given $\io$ and $\prg$, it can be shown that any function class that satisfies the above definition without the additional puncture point has a circuit representation that satisfies the puncturing security with this additional point of puncture $m_2$.
\end{remark}
In~\cite{BSW16}, the authors constructed a puncturable digital signature scheme from one-way functions and sub-exponentially secure indistinguishability obfuscation. We observe that their construction when instantiated with a post-quantum one-way function, and post-quantum sub-exponentially secure $\io$, satisfies post-quantum security. \anote{IMP: Prabhanjan, is this okay?}
\begin{theorem}[{Adapted from~\cite[Theorem 3.1]{BSW16}}]\label{thm:puncturable-sig-from-owf+io}
    Assuming post-quantum one-way function, and post-quantum sub-exponentially secure $\io$, there exists a post-quantum puncturable digital signature, see \Cref{def:puncturable-digital-signatures}
\end{theorem}
\begin{definition}[{Adapted from~\cite{LLQ+22}}]
    A copy-protection scheme for a signature scheme with message length $n(\secparam$ and signature length $s(\secparam)$ consists of the following algorithms:
   \begin{itemize}
    \item $(\sk,\vk)\gets \gen(1^\lambda):$ on input a security parameter $1^\lambda$, returns a classical secret key $\sk$ and a classical verification key $\vk$.
    \item $\rho_{\sk}\gets \qkeygen(\sk):$ takes a classical secret key $\sk$ and outputs a quantum signing key $\rho_\sk$. 
    \item $\sig\gets \sign(\rho_\sk,m):$ takes a quantum signing key $\rho_\sk$ and a message $m$ for $m \in \{0,1\}^{n(\lambda)}$, and outputs a classical signature $\sig$. 
    \item $b\gets \verify(\vk,m,\sig)$ takes a classical verification key $\vk$, a message $m$ and a classical signature $\sig$, and outputs a bit $b$ indicating accept ($b=1$) or reject ($b=0$).
  \end{itemize}  
\paragraph*{Correctness} For every message $m\in\{0,1\}^{n(\secparam)}$, there exists a negligible function $\delta(\secparam)$, (also called the correctness precision) such that
\[\prob[\sk,\vk\gets\gen(\secparam);\rho_\sk\gets \qkeygen(\sk), \sig\gets \sign(\rho_\sk,m) : \verify(\vk,\sig)=1]\geq 1-\delta(\secparam).\]

\paragraph*{Security}

\begin{figure}[!htb]
\begin{center} 
\begin{tabular}{|p{12cm}|}
    \hline 
\begin{center}
\underline{$\expt^{\left(\alice,\bob,\charlie \right),\cpdigitalsig}\left( 1^{\secparam} \right)$}: 
\end{center}
\begin{itemize}
\item $\ch$ samples $\sk,\vk \gets \gen(1^{\secparam})$ and generates $\rho_\sk\gets \qkeygen(\sk)$ and sends $(\rho_\sk,\vk)$ to $\alice$.
\item $\alice$ produces a bipartite state $\sigma_{\bob,\charlie}$.
\item $\ch$ samples $m^\bob,m^\charlie\xleftarrow{\$} \{0,1\}^n$.
\item Apply $(\bob(m^{\bob},\cdot) \otimes \charlie(m^{\charlie},\cdot))(\sigma_{\bob,\charlie})$ to obtain $(\sig^{\bob},\sig^{\charlie})$. 
\item Output $1$ if $\verify(\vk,m^\bob,\sig^\bob)=1$ and $\verify(\vk,m^\charlie,\sig^\charlie)=1$.
\end{itemize}
\ \\ 
\hline
\end{tabular}
    \caption{Anti-piracy experiment with uniform and independent challenge distribution for copy-protection of signatures.}
    \label{fig:product-uniform-signature-anti-piracy}
    \end{center}
\end{figure}

We say that a copy-protection scheme for signatures $\cpdigitalsig=(\gen,\qkeygen,\sign,\verify)$ satisfies anti-piracy with respect to the product distribution $\mathcal{U}\otimes \mathcal{U}$  if for every efficient adversary $(\alice,\bob,\charlie)$ in Experiment~\ref{fig:product-uniform-signature-anti-piracy} there exists a negligible function $\negl()$ such that
\[ \prob\left[ 1 \leftarrow \expt^{\left(\alice,\bob,\charlie \right),\cpdigitalsig}\left( 1^{\secparam} \right)\ \right] \leq \negl(\secparam).\]
\end{definition}

\begin{theorem}\label{thm:signature-copy-protection-application}

Suppose  $\digitalsig=(\gen,\sign,\puncture,\verify,\sign^*)$ be a puncturable digital signature with messge length $n(\secparam)$ and signature length $s(\secparam)$. 

Given a $\upo$ scheme $(\obf,\eval)$ with $\UPO$-security (see \Cref{def:newcpsecurity_bot})  for $\fclass=\{\fclass_\secparam\}_\secparam$ where $\fclass_\secparam=\{\sign(k,\cdot)\}_{k\in Support(\gen(1^\secparam))}$, equipped with $\puncture$ as the puncturing algorithm, with respect to $\distr_\inpclass=\cal{U}\times\cal{U}$, there exists a copy-protection scheme for signature $\cpdigitalsig=(\gen,\qkeygen,\sign,\verify)$ 
where the algorithms $\cpdigitalsig.\gen,\cpdigitalsig.\verify$ are the same as that of the puncturable signature scheme and $\cpdigitalsig.\qkeygen(\sk)=\obf(\sign(\sk,\cdot))$ and the $\cpdigitalsig.\sign()$ algorithm is the same as the $\eval()$ algorithm of the $\UPO$ scheme.  
\end{theorem}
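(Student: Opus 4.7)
The plan is to derive \Cref{thm:signature-copy-protection-application} as a direct instantiation of \Cref{lemma:abstract-crypto-experiment}, by exhibiting a puncturable digital signature scheme as a special case of the abstract puncturable cryptographic scheme of \Cref{def:puncturable-crypto}. First, I would define the abstract scheme $(\gen', \eval', \puncture',\verify')$ from $\digitalsig$ as follows: $\gen'(1^{\secparam})$ runs $(\sk,\vk)\gets\digitalsig.\gen(1^{\secparam})$ and outputs the secret key $\sk$ and public auxiliary $\aux=\vk$; $\eval'(\sk,m) = \digitalsig.\sign(\sk,m)$; $\verify'(\sk,\aux,m,\sig) = \digitalsig.\verify(\vk,m,\sig)$ (note that the secret key is irrelevant for signature verification, but the abstract definition permits this); and $\puncture'$ is the two-point puncturing algorithm derived from $\digitalsig.\puncture$ (which in \Cref{def:puncturable-digital-signatures} handles multiple messages, so one can directly puncture at $\{m_1,m_2\}$, with the alternative iO+PRG compilation of \Cref{remark:double-puncture-from-one-puncture} available if only single-point puncturing is natively supported).

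Next I would verify that $(\gen',\eval',\puncture',\verify')$ satisfies the three conditions of \Cref{def:puncturable-crypto}. Correctness of verification on unpunctured inputs is immediate from the correctness of the signature scheme. Correctness of the punctured circuit follows from the definition of $\digitalsig.\sign^*$, which by hypothesis agrees with $\sign(\sk,\cdot)$ on all messages outside $\{m_1,m_2\}$ and outputs $\bot$ on the punctured points. The puncturable security condition translates verbatim: the puncturable-security experiment in \Cref{def:puncturable-crypto} with $\verify'$ unfolds to exactly the winning condition in \Cref{def:puncturable-digital-signatures}, namely that the adversary produces a valid $\digitalsig$-signature on $m_1$ given $m_1, \vk$, and the two-point punctured signing key $\sk_{m_1,m_2}$, which is negligible by assumption.

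Having identified the correspondence, I would set up the copy-protection scheme $(\copyprotect,\sign)$ with $\copyprotect(1^{\secparam},\sk) = \obf(1^{\secparam},\sign(\sk,\cdot))$, $\sign(\rho_\sk,m) = \eval(\rho_\sk,m)$, and $\verify$ inherited from $\digitalsig$, as stated in the theorem. Correctness of the copy-protection scheme follows from $\UPO$ correctness together with signature correctness. For anti-piracy, I would observe that the copy-protection anti-piracy experiment $\expt^{(\alice,\bob,\charlie),\cpdigitalsig}$ of \Cref{fig:product-uniform-signature-anti-piracy} is syntactically identical to the abstract experiment $\pcsexpt^{(\alice,\bob,\charlie)}$ of \Cref{fig:product-uniform-arb-crypto-experiment-anti-piracy} under the above correspondence (the challenger provides $(\rho_\sk,\vk)$ to $\alice$, samples two uniform challenges, hands them to $\bob$ and $\charlie$, and requires both to produce accepting signatures). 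Therefore \Cref{lemma:abstract-crypto-experiment}, applied to $(\gen',\eval',\puncture',\verify')$ and the given $\UPO$ scheme with challenge distribution $\mathcal{U}\times\mathcal{U}$, yields the desired negligible upper bound on the adversary's success probability.

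The main work, and hence any real obstacle, was already absorbed into the proof of \Cref{lemma:abstract-crypto-experiment}: balancing the two $\UPO$ cases via the identity $\frac{p_0+p_1}{2}\leq\frac{1}{2}+\negl$ together with the puncturing-security argument that forces $p_1 \approx 1$. Beyond that, the only subtlety here is the cosmetic one of matching the two-point puncturing interface required by $\UPO$ with the single-point puncturing sometimes offered by existing signature schemes; as noted in \Cref{remark:double-puncture-from-one-puncture}, this is resolved under iO and a length-doubling PRG, which are assumptions already implicit in the $\UPO$ framework. Once these bookkeeping steps are in place, the theorem follows without any new technical content.
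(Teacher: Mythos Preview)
Your proposal is correct and follows essentially the same approach as the paper: instantiate the abstract puncturable cryptographic scheme of \Cref{def:puncturable-crypto} with the digital signature scheme (taking $\aux=\vk$, $\eval=\sign$, $\verify'$ ignoring $\sk$, and $\puncture'$ outputting $\sign^*(\sk_{m_1,m_2},\cdot)$), verify the three conditions, and then invoke \Cref{lemma:abstract-crypto-experiment} after observing that the signature anti-piracy experiment coincides with the abstract $\pcsexpt$. The paper's proof is terser but structurally identical.
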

\begin{proof}[Proof of \Cref{thm:signature-copy-protection-application}]
The correctness of the copy-protection of signatures scheme directly follows from the $\UPO$-correctness guarantees, see \Cref{def:upo-definition}.
Next, we prove anti-piracy.
Let $(\alice,\bob,\charlie)$ be a non-local adversary in the anti-piracy experiment $\expt^{\left(\alice,\bob,\charlie \right),\cpdigitalsig}\left( 1^{\secparam} \right)$ given in \Cref{fig:product-uniform-signature-anti-piracy}.
By the puncturing security and correctness of  $\digitalsig=(\gen,\sign,\puncture,\verify,\sign^*)$, $(\gen,\sign,\puncture,\verify')$ is a puncturable cryptographic scheme where $\vk$ is the auxiliary information $\aux$, the message space is the input space, the signature is the output space, $\gen=\digitalsig.\gen$, $\eval=\digitalsig.\sign$, $\puncture=\digitalsig.\puncture$ and $\verify'(\sk,\vk,m,\sig)=\digitalsig.\verify(\vk,m,\sig)$. 

Therefore, by \Cref{lemma:abstract-crypto-experiment}, for any adversary $(\alice,\bob,\charlie)$ in the anti-piracy experiment

 $\expt^{\left(\alice,\bob,\charlie \right),(\gen,\sign,\puncture,\verify),\UPO}\left( 1^{\secparam} \right)$, there exists a negligible function $\negl()$ such that,
\[ \prob\left[ 1 \leftarrow \expt^{\left(\alice,\bob,\charlie \right),(\gen,\sign,\puncture,\verify),\UPO}\left( 1^{\secparam} \right)\ \right] \leq \negl(\secparam).\]
However, $\expt^{\left(\alice,\bob,\charlie \right),(\gen,\sign,\puncture,\verify),\UPO}\left( 1^{\secparam} \right)$ and $\expt^{\left(\alice,\bob,\charlie \right),\cpdigitalsig}\left( 1^{\secparam} \right)$  are the same experiments and therefore, we conclude that anti-piracy holds for the $\cpdigitalsig$ with respect to uniform and independent challenge distribution.

\end{proof}
\begin{remark}\label{remark:signature-copy-protection-with-identical-challenge-distribution}
  By the same arguments as in the proof of \cref{thm:signature-copy-protection-application}, it can be shown that any $\upo$ scheme $(\obf,\eval)$ with $\distrid$-$\UPO$ security (see \Cref{def:newcpsecurity_bot}) for any puncturable keyed circuit class in ${\sf P/poly}$ (see \Cref{subsec:upo-definition}), is also a copy-protection scheme $(\copyprotect,\eval)$ for $\fclass=\{\fclass_{\secparam}\}_{\secparam \in \mathbb{N}}$ with uniform and \textit{identical} challenge distribution, where $\copyprotect()=\obf()$. 
\end{remark}\anote{Prabhanjan, we can also use the cloning games paper of yours to go from independent to arbitrary correlated challenge distribution. I think it is worth mentioning here because that way we can claim written proof instead of this verbal argument. What do you think?}\anote{I have added this in the beginning of the application section}

Since copy-protection for signatures implies public-key quantum money schemes, we get the following corollary.
\begin{corollary}\label{cor:public-quantum-money-from-puncturablesig+upo}
Suppose  $\digitalsig=(\gen,\sign,\puncture,\verify,\sign^*)$ be a puncturable digital signature with messge length $n(\secparam)$ and signature length $s(\secparam)$. 

Given a $\upo$ scheme $(\obf,\eval)$ with $\UPO$-security (see \Cref{def:newcpsecurity_bot})  for $\fclass=\{\fclass_\secparam\}_\secparam$ where $\fclass_\secparam=\{\sign(k,\cdot)\}_{k\in Support(\gen(1^\secparam))}$, equipped with $\puncture$ as the puncturing algorithm, with respect to $\distr_\inpclass=\cal{U}\times\cal{U}$, there exists a public-key quantum money scheme.
\end{corollary}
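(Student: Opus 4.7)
The plan is to derive the corollary directly from \Cref{thm:signature-copy-protection-application} combined with a standard generic transformation turning a copy-protection scheme for signatures into a public-key quantum money scheme. First, I would invoke \Cref{thm:signature-copy-protection-application} on the given puncturable signature scheme $\digitalsig$ and the assumed $\UPO$ scheme: this yields a copy-protection scheme $\cpdigitalsig = (\gen, \qkeygen, \sign, \verify)$ satisfying anti-piracy against the uniform-independent challenge distribution (\Cref{fig:product-uniform-signature-anti-piracy}).

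Next, I would describe the generic quantum money construction from $\cpdigitalsig$. The money mint runs $(\sk,\vk) \gets \gen(1^{\secparam})$ and produces a banknote $(\vk, \rho_{\sk})$ where $\rho_{\sk} \gets \qkeygen(\sk)$; the verification key $\vk$ plays the role of the public serial number and $\rho_{\sk}$ is the quantum part of the banknote. To verify a purported banknote $(\vk, \sigma)$, the verifier samples a random challenge message $m \xleftarrow{\$}\{0,1\}^{n(\secparam)}$, computes $\sig \gets \sign(\sigma, m)$ (non-destructively, using that $\sign$ inherits the near-perfect correctness and gentle-measurement property of $\eval$ in the underlying $\UPO$), and accepts iff $\verify(\vk, m, \sig) = 1$. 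Correctness of the money scheme follows immediately from the correctness of $\cpdigitalsig$ and the almost-as-good-as-new lemma invoked in~\Cref{par:upo-correctness}.

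For unforgeability, I would argue by contradiction: any QPT forger that, given a single banknote $(\vk,\rho_{\sk})$, produces two banknotes $(\vk, \sigma_{\bob})$ and $(\vk, \sigma_{\charlie})$ both of which pass verification with noticeable probability, can be directly converted into a non-local adversary $(\alice,\bob,\charlie)$ winning the anti-piracy experiment of~\Cref{fig:product-uniform-signature-anti-piracy}. Concretely, $\alice$ runs the forger on the challenger-provided $(\vk, \rho_{\sk})$, and sends $\sigma_{\bob}$ and $\sigma_{\charlie}$ to $\bob$ and $\charlie$ respectively; upon receiving challenge messages $m^{\bob}, m^{\charlie} \xleftarrow{\$} \{0,1\}^n$, they each execute $\sign(\cdot, m^{\bob})$ and $\sign(\cdot, m^{\charlie})$ and output the resulting signatures. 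Since the reduction perfectly simulates the money-forgery experiment for the forger, the success probability carries over, contradicting the anti-piracy guarantee from \Cref{thm:signature-copy-protection-application}.

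I do not expect any serious obstacle in this argument; the only subtlety is ensuring that the signing procedure is sufficiently non-destructive so that verification of a banknote leaves it in a (near) reusable state, but this is handled by the gentle-measurement/almost-as-good-as-new lemma already cited in the preliminaries, together with the overwhelming correctness of $\UPO$ that propagates to $\sign$. Hence the existence of $\digitalsig$ and the required $\UPO$ scheme suffices to construct a public-key quantum money scheme.
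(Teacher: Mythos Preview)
Your proposal is correct and follows the same approach as the paper. The paper's own argument is simply the one-line observation preceding the corollary, namely that copy-protection for signatures implies public-key quantum money; you have spelled out this folklore reduction (banknote $=$ quantum signing key, verification by signing a random challenge) and its unforgeability-to-anti-piracy reduction in detail, which is exactly what underlies the paper's terse claim.
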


Combined with \Cref{thm:strong-CLLZ-cp-prf_puncturable-CP-f-uniform} and \Cref{thm:puncturable-sig-from-owf+io}, we conclude the following feasibility results for copy-protection scheme for signature and public quantum money from concrete assumptions.

\begin{corollary}\label{cor:signature-copy-protection-from-concrete assumption}
Suppose  $\digitalsig=(\gen,\sign,\puncture,\verify,\sign^*)$ be a puncturable digital signature with messge length $n(\secparam)$ and signature length $s(\secparam)$. 

 Assuming \Cref{conj:goldreich-levin-correlated}, the existence of post-quantum sub-exponentially secure $\io$ and one-way functions, and the quantum hardness of Learning-with-errors problem (LWE), 
    there exists a copy-protection scheme for signature scheme. Hence under the same assumption, a public-key quantum money scheme exists.
\end{corollary}

\subsection{Public-key Single-Decryptor Encryption} 
\label{sec:pksde:app}


\paragraph*{Construction} Our construction is based on copy-protecting the decryption functionality of the Sahai-Waters public-key encryption scheme based on $\io$, $\prf$ (mapping $n(\secparam)$ bits to $n(\secparam)$ bits), and $\prg$ (mapping $\frac{n(\secparam)}{2}$ bits to $n(\secparam)$ bits). We assume  a $\upo$ scheme $\UPO=(\obf,\eval)$ satisfying $\distrprod$-generalized security (see \Cref{def:newcpsecurity}) for any generalized puncturable keyed circuit class in $\ppoly$. In the security proofs, we will be considering the circuit class $\cktclass=\{\{\prf.\eval(k,\cdot)\}_{k\in {\sf Supp}(\keygen(1^\secparam))}\}_{\secparam}$ equipped with the distribution $\prf.\gen(1^\secparam)$ on the $\prf$ keys, and a puncturing or a generalized puncturing algorithms, derived accordingly from the $\prf.\puncture$ algorithm. 
\begin{savenotes}
\begin{figure}[!htb]
   \begin{center} 
   \begin{tabular}{|p{12cm}|}
    \hline 
\noindent\textbf{Assumes:} $\prf$ family $(\gen,\eval,\puncture)$, length-doubling $\prg$, $\io$, $\UPO$ scheme $(\obf,\eval)$.\\
\ \\
\noindent$\gen(1^\secparam)$
\begin{compactenum}
    \item Sample a key $k\gets \prf.\gen(1^\secparam)$. 
    \item Generate the circuit $C$ that on input $r\gets \{0,1\}^{\frac{n(\secparam)}{2}}$ (the input space of $\prg$) and a message $m\in \{0,1\}^n$, outputs $(\prg(r),\prf.\eval(k,\prg(r))\oplus m)$. 
    \item Compute $\tilde{C}\gets \io(C)$.
    \item Output $(\sk,\pk)=(k,\tilde{C})$.
\end{compactenum}
\ \\
\noindent$\qkeygen(\sk)$
\begin{compactenum}
    \item Compute $\tilde{F}\gets \io(\prf.\eval(\sk,\cdot))$.
    \item Output $\rho_\sk\gets\UPO.\obf(1^\secparam,\tilde{F})$\footnote{We assume that it is possible to read off the security parameter from the secret key $\sk$. For example, the secret key could start with $1^\secparam$ followed by a special symbol, and then followed by the actual key.}.
\end{compactenum}
\ \\

\noindent$\enc(\pk,m)$
\begin{compactenum}
    \item Interprete $\pk=\tilde{C}$
    \item Sample $r\xleftarrow{\$}\{0,1\}^{\frac{n}{2}}$.
    \item Output $\ct=\tilde{C}(r,m)$.
\end{compactenum}
\ \\
\noindent$\dec(\rho_\sk,\ct)$
\begin{compactenum}
    \item Interprete $\ct=y,z$.
    \item Output $m=\UPO.\eval(\rho_\sk,y)\oplus z$.
\end{compactenum}
\ \\
\hline
\end{tabular}
    \caption{A construction of $\sde$ based on~\cite{SW14} public-key encryption.}
    \label{fig:SDE-construction}
    \end{center}
\end{figure}
\end{savenotes}

\begin{theorem}
Assuming an indistinguishability obfuscation scheme $\io$ for $\ppoly$, a puncturable pseudorandom function family $\prf=(\gen,\eval,\puncture)$ and a generalized $\upo$ $\UPO$ for any generalized puncturable keyed circuit class in $\ppoly$ with respect to $\distr_\inpclass=U\times U$, 
 there exists a $\sde$ scheme given in \Cref{fig:SDE-construction} that satisfies correctness, search anti-piracy with independent and uniform distribution and $\distrcor$-selective $\cpa$-style anti-piracy (see \Cref{subsec:sde}). 
\end{theorem}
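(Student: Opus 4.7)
My plan is to establish the three claimed properties---correctness, search anti-piracy with independent uniform challenges, and $\distrcor$-selective CPA-style anti-piracy---with the two security statements reduced to generalized UPO security via carefully designed hybrid arguments. Correctness will follow by composing the correctness of $\io$, $\UPO$, and the $\prg/\prf$ evaluations: the ciphertext $(y,z)=(\prg(r),\prf.\eval(k,\prg(r))\oplus m)$ produced by $\enc$ is inverted by $\dec$, which applies $\UPO.\eval$ to recover $\prf.\eval(k,y)$ and XORs with $z$ to yield $m$, up to a union bound over the negligible failure events of the component primitives.

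For search anti-piracy with challenge distribution $\calU\times\calU$, I will first use PRG pseudo-randomness to replace $\prg(r^\bob),\prg(r^\charlie)$ in the challenge ciphertexts with uniform strings $y^\bob,y^\charlie$ (indistinguishable since $r^\bob,r^\charlie$ are fresh and unused elsewhere), and then reduce to generalized UPO security. The reduction's $\reduc_\alice$ samples uniform $u^\bob,u^\charlie,m^\bob,m^\charlie$ and a PRF key $k$, sends $(k,\mu_\bob,\mu_\charlie)$ to the UPO challenger with $\mu_\bob\equiv u^\bob$ and $\mu_\charlie\equiv u^\charlie$, builds $\widetilde{C}=\io(C)$ using $k$, receives $\rho$, runs the SDE's $\alice$ on $(\widetilde{C},\rho)$, and shares the messages and masks with $\reduc_\bob,\reduc_\charlie$. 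Upon receiving the UPO challenges $x^\bob,x^\charlie$, $\reduc_\bob$ and $\reduc_\charlie$ hand the SDE's Bob and Charlie the ciphertexts $(x^\bob,u^\bob\oplus m^\bob)$ and $(x^\charlie,u^\charlie\oplus m^\charlie)$, and output $1$ iff the SDE adversary recovers the correct message. In the $b=1$ branch the punctured circuit evaluates to $u^\bob,u^\charlie$ at the challenge points so the ciphertexts decrypt correctly and the reduction succeeds whenever the SDE adversary does; in the $b=0$ branch the masks $u^\bob,u^\charlie$ are information-theoretically independent of the adversary's view so the messages are statistically hidden and joint recovery succeeds only with probability at most $1/2^{2n}$. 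The UPO security bound on the averaged success then forces the SDE success to be negligible.

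For $\distrcor$-selective CPA anti-piracy, I will adapt the reduction with a twist that encodes the messages into the punctured outputs. After an analogous PRG hybrid, I will set $\mu_\bob(x)=\prf.\eval(k,x)\oplus m^{\bob,0}\oplus m^{\bob,1}$ and symmetrically $\mu_\charlie$, where the message pairs are those committed to by the selective adversary. The reduction's Bob then hands the SDE's Bob the ciphertext $(x^\bob,\prf.\eval(k,x^\bob)\oplus m^{\bob,0})$, computed using $k$ shared by $\reduc_\alice$. A short calculation shows this ciphertext decrypts to $m^{\bob,0}$ under UPO $b=0$ (original PRF) and to $m^{\bob,1}$ under UPO $b=1$ (punctured PRF via $\mu_\bob$), so the simulated SDE challenge bit equals the UPO bit. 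Outputting the SDE bit guess as the UPO bit guess, and noting that the identical-bit correlation of $\distrcor$ matches the single UPO bit $b$ while the independent uniform $x^\bob,x^\charlie$ match the independent encryption randomnesses in the CPA game, UPO security yields the $1/2+\negl$ bound.

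The main obstacle I anticipate is the CPA reduction's structural mismatch: in the real CPA game the adversary sees an honest UPO, whereas in the $b=1$ branch of the reduction it sees a punctured UPO whose behavior at the challenge points has been programmed to flip the decrypted message from $m^{\cdot,0}$ to $m^{\cdot,1}$. That this mismatch contributes only a negligible term is exactly what UPO security absorbs, since any SDE adversary that notices the difference directly yields a non-local UPO distinguisher with the same advantage. A secondary subtlety is that $\mu_\bob,\mu_\charlie$ must be committed before the UPO challenge points are revealed, which is why I design them to depend only on $k$ and on the messages (or on pre-sampled constants), never on $x^\bob,x^\charlie$.
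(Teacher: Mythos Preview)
Your correctness argument is fine, and your choice of $\mu_\bob(x)=\prf.\eval(k,x)\oplus m^{\bob,0}\oplus m^{\bob,1}$ for the CPA reduction is exactly the paper's final step. The gap is that a single UPO invocation does not by itself align the punctured branch with the corresponding real SDE game. In your CPA reduction, the $b=0$ branch simulates CPA-bit-$0$ (honest UPO, ciphertext $(x^\bob,\prf.\eval(k,x^\bob)\oplus m^{\bob,0})$), but the $b=1$ branch hands the adversary a \emph{punctured} UPO together with that same ciphertext, whereas the real CPA-bit-$1$ game uses an \emph{honest} UPO and the ciphertext $(x^\bob,\prf.\eval(k,x^\bob)\oplus m^{\bob,1})$. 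These differ in both the obfuscated state and the ciphertext, and since $\reduc_\bob$ must fix one ciphertext before learning the hidden UPO bit, there is no second UPO reduction bridging them either. UPO security thus gives only $q_0+p_1\le 1+\negl$ with $p_1$ the output-$1$ probability in your punctured $b=1$ branch, not $q_0+q_1\le 1+\negl$ with $q_1$ taken in the honest CPA-bit-$1$ game; nothing you wrote relates $p_1$ to $q_1$. The paper closes this with a sequence of $\io$ and PRF-puncturing hybrids: puncture the key inside the public circuit $\widetilde C$ (via $\io$), hardcode $\prf.\eval(k,x^\bob)$ inside the inner $\io$ layer of the decryption key, swap it for uniform (PRF puncturing), reparameterize so the hardcoded value becomes $\prf.\eval(k,x^\bob)\oplus m^{\bob,0}\oplus m^{\bob,1}$, then undo the earlier steps. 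Only after all this does the $b=1$ branch coincide with the transformed CPA-bit-$1$ game, at which point your reduction applies verbatim.

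Your search reduction has the same defect and is worse off: in your $b=0$ branch the ciphertext $(x^\bob,u^\bob\oplus m^\bob)$ is not an encryption of $m^\bob$ under the honest key at all, so \emph{neither} branch is the real search game, and your argument ends up bounding the adversary's success only in the punctured $b=1$ game. The paper handles search quite differently: it reparameterizes (sample $z^\bob$ uniform, set $m^\bob=\prf.\eval(k,x^\bob)\oplus z^\bob$) so that winning becomes equivalent to both parties outputting $\prf.\eval(k,x^\bob),\prf.\eval(k,x^\charlie)$ given the honest UPO, and then invokes \emph{basic} (not generalized) UPO via its puncturable-cryptographic-scheme framework, with the honest branch matching the real game and the punctured branch shown hard through a separately proved Sahai--Waters puncturability lemma that itself uses $\io$ and PRF puncturing inside $\widetilde C$.
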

\begin{proof}
    The proof follows by combining \Cref{lemma:correctness-SDE-construction,prop:random-challenge-anti-piracy-sde,prop:cpa-style-anti-piracy-sde}.
\end{proof}

\begin{lemma}\label{lemma:correctness-SDE-construction}
    The $\sde$ construction given in \Cref{fig:SDE-construction} satisfies correctness with the same correctness precision as the underlying $\UPO$ scheme. 
\end{lemma}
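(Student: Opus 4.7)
The plan is to trace the encryption output through the decryption procedure and invoke correctness of the underlying primitives pointwise. Fix a security parameter $\secparam$ and a message $m \in \{0,1\}^n$. I will first unpack $\enc(\pk,m)$: it samples $r \xleftarrow{\$} \{0,1\}^{n/2}$ and returns $\ct = \tilde{C}(r,m)$, where $\tilde{C} \gets \io(C)$ and $C$ is the circuit that on input $(r,m)$ outputs $(\prg(r),\, \prf.\eval(k, \prg(r)) \oplus m)$. By the (perfect) functional correctness of $\io$, $\tilde{C}(r,m) = C(r,m)$, so $\ct = (y,z)$ with $y = \prg(r)$ and $z = \prf.\eval(k,\prg(r)) \oplus m$.

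Next I would analyze $\dec(\rho_\sk,\ct)$. By definition $\rho_\sk \gets \UPO.\obf(1^\secparam, \tilde{F})$, where $\tilde{F} \gets \io(\prf.\eval(\sk,\cdot))$ and $\sk = k$. Invoking the functional correctness of $\io$ again, $\tilde{F}(x) = \prf.\eval(k,x)$ for every input $x$. Now, by the UPO correctness guarantee (\Cref{par:upo-correctness}) applied to the circuit $\tilde{F}$ and the specific input $y = \prg(r)$, we have
\[
\Pr\left[\, \UPO.\eval(\rho_\sk, y) = \tilde{F}(y) \,\right] \geq \delta,
\]
where $\delta$ is the correctness precision of $\UPO$. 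On this event, $\UPO.\eval(\rho_\sk, y) = \prf.\eval(k, \prg(r))$, and therefore
\[
\UPO.\eval(\rho_\sk, y) \oplus z \;=\; \prf.\eval(k,\prg(r)) \oplus \prf.\eval(k,\prg(r)) \oplus m \;=\; m.
\]
Thus $\dec(\rho_\sk, \enc(\pk,m)) = m$ with probability at least $\delta$, matching the UPO correctness precision.

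The argument is essentially a syntactic unrolling and there is no real obstacle; the only point worth flagging is that the UPO correctness guarantee holds per fixed input and we are applying it to $y = \prg(r)$, which is a particular (random but well-defined) point in the input space of $\tilde{F}$. Since the UPO correctness statement quantifies over all inputs before any probability is taken, conditioning on the sampled $r$ (and hence on $y$) does not affect the bound, and we may average over $r$ without loss. Since $\io$ correctness is perfect, no additional loss arises from the two applications of $\io$, and the final correctness precision of the $\sde$ scheme coincides with that of $\UPO$.
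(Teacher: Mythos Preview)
Your argument is correct and is exactly the intended one: the paper simply says ``The proof is immediate, so we omit the proof,'' and your unrolling---apply $\io$ correctness to identify $\ct=(y,z)$ with $y=\prg(r)$ and $z=\prf.\eval(k,\prg(r))\oplus m$, then apply UPO correctness at the input $y$ to the circuit $\tilde{F}$ (which by $\io$ correctness computes $\prf.\eval(k,\cdot)$)---is precisely what ``immediate'' means here. Your remark that the UPO correctness bound is per-input and hence survives averaging over $r$ is the only point that even warrants comment, and you handled it correctly.
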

The proof is immediate, so we omit the proof.

\begin{proposition}\label{prop:random-challenge-anti-piracy-sde}
    The $\sde$ construction given in \Cref{fig:SDE-construction} satisfies search anti-piracy with independent and uniform distribution (see \Cref{subsec:sde}) if the underlying $\UPO$ scheme satisfies $\upo$ security for any puncturable keyed circuit class in $\ppoly$. 
\end{proposition}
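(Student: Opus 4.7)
The plan is to reduce SDE search anti-piracy to the $\UPO$ security of $\UPO$ applied to the puncturable keyed circuit class $\cktclass = \{\prf.\eval(k,\cdot)\}_k$, with the inner $\io$ folded into the obfuscation by the composition theorem (\Cref{cor:UPO-io}); the matching challenge distribution is $\distrprod$ on both sides. Starting from a non-local SDE adversary $(\alice,\bob,\charlie)$ with success probability $p$, I first pass to a hybrid $\hybrid_1$ in which the ciphertext components $y^\bob = \prg(r^\bob)$ and $y^\charlie = \prg(r^\charlie)$ are replaced by uniformly random strings in $\{0,1\}^n$; two independent invocations of the pseudorandomness of $\prg$ give $|p - p_1| \leq \negl(\secparam)$, where $p_1$ is the success probability in $\hybrid_1$.

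I then build a UPO adversary $(\reduc_\alice, \reduc_\bob, \reduc_\charlie)$ as follows. $\reduc_\alice$ samples $k \gets \prf.\gen(1^\secparam)$, sends $k$ to the UPO challenger, receives the challenge $\rho^*$, builds $\pk = \io(C)$ from $k$, runs $\alice(\rho^*, \pk)$ to obtain $\sigma_{\bob,\charlie}$, and forwards $(k, \sigma_\bob)$ to $\reduc_\bob$ and $(k, \sigma_\charlie)$ to $\reduc_\charlie$. Upon receiving the UPO challenge $x^\bob$, $\reduc_\bob$ samples $m^\bob \gets \{0,1\}^n$, computes $z^\bob = \prf.\eval(k, x^\bob) \oplus m^\bob$, runs $\bob(\sigma_\bob, (x^\bob, z^\bob))$ to get $m'^\bob$, and outputs $0$ iff $m'^\bob = m^\bob$; $\reduc_\charlie$ acts symmetrically. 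Conditioned on UPO bit $b=0$ the view of $(\alice, \bob, \charlie)$ inside the reduction is distributionally identical to $\hybrid_1$, so $\Pr[\reduc \text{ wins} \mid b=0] = p_1$; letting $q = \Pr[\reduc \text{ wins} \mid b=1]$, UPO security gives $\tfrac{1}{2}(p_1 + q) \leq \tfrac{1}{2} + \negl(\secparam)$.

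The main obstacle is to show $q \geq 1 - \negl(\secparam)$, i.e., that in the punctured case both $\bob$ and $\charlie$ almost surely fail. This is the punctured analog of the Sahai--Waters hybrid argument, carried out conditional on $b=1$: since $\prg$ is length-doubling, the uniformly random points $x^\bob, x^\charlie$ fall outside the image of $\prg$ except with probability $2 \cdot 2^{-n/2}$, and on that event the circuit $C$ is functionally equivalent to the circuit $C'$ obtained by replacing $\prf.\eval(k, \cdot)$ inside $C$ by its puncturing at $\{x^\bob, x^\charlie\}$. By $\io$ security I switch $\pk = \io(C)$ to $\io(C')$; now both $\rho^*$ and $\pk$ depend only on the punctured key $k_{\{x^\bob, x^\charlie\}}$, so the pseudorandomness of $\prf$ at the punctured points lets me replace $\prf.\eval(k, x^\bob)$ and $\prf.\eval(k, x^\charlie)$ inside $\ct^\bob, \ct^\charlie$ by uniform independent strings. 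In this final sub-hybrid $z^\bob$ is uniform and independent of $m^\bob$, so $\bob$ predicts $m^\bob$ with probability at most $2^{-n}$ (and similarly for $\charlie$); a union bound yields $q \geq 1 - 2^{-n+1} - \negl(\secparam)$. Combining with the UPO bound then gives $p_1 \leq \negl(\secparam)$, and hence $p \leq \negl(\secparam)$.
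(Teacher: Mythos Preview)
Your proof is correct and follows essentially the same approach as the paper: replace the PRG outputs by uniform strings, reduce to $\UPO'$ security (via the composition theorem), and for the $b=1$ branch run the Sahai--Waters punctured-programming argument (iO to swap the punctured key into $\pk$, then PRF puncturing security to randomize $z^\bob,z^\charlie$). The only difference is organizational: the paper factors the last two steps through \Cref{lemma:abstract-crypto-experiment} and \Cref{lemma:Sahai-waters-puncturability} (so the same machinery can be reused for signatures), whereas you inline them directly.
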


We first identify a scheme $(\gen,\eval,\verify,\puncture)$ (defined in \Cref{fig:SW14-puncturable-crypto-construction}) based on the public-key encryption scheme given in~\cite{SW14}, and show that it is a puncturable cryptographic scheme, as defined in \Cref{def:puncturable-crypto}, see \Cref{lemma:Sahai-waters-puncturability}. This result would be required in the proof of \Cref{prop:random-challenge-anti-piracy-sde} given on \Cpageref{pf:prop:random-challenge-anti-piracy-sde}. 

\begin{figure}[!htb]
   \begin{center} 
   \begin{tabular}{|p{12cm}|}
    \hline 
\noindent\textbf{Assumes:} $\prf$ family $(\gen,\eval,\puncture)$, length-doubling $\prg$, $\io$, $\UPO$ scheme $(\obf,\eval)$\\
\ \\
\noindent$\gen(1^\secparam)$: 
Generate $(k,\tilde{C})\gets \SDE.\gen(1^\secparam)$ where $\SDE$ is the $\sde$ given in \Cref{fig:SDE-construction}, and output $(\sk,\aux)$ where $\sk=k$ and $\aux=\pk$.\\
\ \\
\noindent $\eval(\sk,x)$: Same as $\prf.\eval(\sk,x)$.\\ 
\ \\
\noindent$\verify(\sk,\aux,x,y)$: Check if $\prf.\eval(\sk,x)=y$ and if true outputs $1$ else $0$.\\
\ \\
\noindent$\puncture(\sk,x_1,x_2)$: Generate $\sk_{x_1,x_2}\leftarrow \prf.\puncture(\sk,x_1,x_2)$ and output $\prf.\eval(\sk_{x_1,x_2},\cdot)$. 

\ \\
\hline
\end{tabular}
    \caption{A construction of puncturable cryptographic scheme based on~\cite{SW14} public-key encryption.}
    \label{fig:SW14-puncturable-crypto-construction}
    \end{center}
\end{figure}

\begin{lemma}\label{lemma:Sahai-waters-puncturability}
    The scheme $(\gen,\eval,\puncture,\verify)$ given in \Cref{fig:SW14-puncturable-crypto-construction} is a puncturable cryptographic scheme, as defined in \Cref{def:puncturable-crypto}.
\end{lemma}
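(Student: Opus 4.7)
The plan is to verify the three conditions of Definition~\ref{def:puncturable-crypto} in turn. Correctness is immediate: since $\eval(\sk,x) = \prf.\eval(\sk,x)$ and $\verify(\sk,\aux,x,y)$ checks precisely whether $\prf.\eval(\sk,x) = y$, we have $\verify(\sk,\aux,x,\eval(\sk,x)) = 1$ for every $x$. Correctness of the punctured circuit follows directly from the functionality guarantee of the underlying puncturable PRF: the punctured key $\sk_{x_1,x_2}$ agrees with $\sk$ on every input outside $\{x_1,x_2\}$, and we may assume without loss of generality that $\prf.\eval(\sk_{x_1,x_2},\cdot)$ outputs $\bot$ on $\{x_1,x_2\}$ (a trivial input-comparing wrapper achieves this).

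The substantive part is security. I would proceed via a short hybrid argument that mirrors the standard Sahai--Waters analysis. Hybrid~0 is the experiment of Definition~\ref{def:puncturable-crypto}: the adversary receives $x_1$, the public key $\aux = \pk = \io(C)$ where $C$ is the encryption circuit of Figure~\ref{fig:SDE-construction}, and the punctured circuit $G_{\sk,x_1,x_2} = \prf.\eval(\sk_{x_1,x_2},\cdot)$, and must output $y$ with $\prf.\eval(\sk,x_1) = y$. In Hybrid~1, I observe that since $\prg:\{0,1\}^{n/2} \to \{0,1\}^n$ is length-doubling, its image has size at most $2^{n/2}$; as $(x_1,x_2) \xleftarrow{\$} \{0,1\}^{2n}$, the probability that either $x_1$ or $x_2$ lies in the image of $\prg$ is at most $2\cdot 2^{-n/2}$, which is negligible in $\secparam$. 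Conditioning on the complementary event, $C$ never invokes $\prf.\eval(\sk,\cdot)$ on any input in $\{x_1,x_2\}$.

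In Hybrid~2, I replace $\pk = \io(C)$ by $\pk' = \io(C')$, where $C'$ is the same circuit but with the hardcoded key $\sk$ replaced by the punctured key $\sk_{x_1,x_2}$ (and appropriately padded so that $|C| = |C'|$). By the observation above, with overwhelming probability $C$ and $C'$ compute identical functions on every input $(r,m)$, because every invocation of the PRF inside either circuit is on an input of the form $\prg(r) \notin \{x_1,x_2\}$. Post-quantum iO security therefore makes Hybrid~1 and Hybrid~2 computationally indistinguishable.

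At this point, the adversary's entire view $(x_1, \pk', G_{\sk,x_1,x_2})$ depends on $\sk$ only through the punctured key $\sk_{x_1,x_2}$. The pseudorandomness-at-punctured-points guarantee of the puncturable PRF then says that $\prf.\eval(\sk,x_1)$ is computationally indistinguishable from a uniformly random string in $\{0,1\}^m$ given $(x_1,\sk_{x_1,x_2})$, so the adversary's winning probability in Hybrid~2 is at most $2^{-m} + \negl(\secparam)$. Chaining the bounds yields the required negligible security, which is what we needed to prove. The main obstacle is the bookkeeping in the iO step: one must ensure that $C$ and $C'$ are padded to a common size independent of the choice of $(x_1,x_2)$ and that the functional equivalence argument is uniform over all inputs; this is routine and essentially identical to the treatment in~\cite{SW14}.
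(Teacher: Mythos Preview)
Your proposal is correct and follows essentially the same approach as the paper: both arguments first use the length-doubling property of $\prg$ to conclude that with overwhelming probability $x_1,x_2$ lie outside the image of $\prg$, then invoke $\io$ security to replace the hardcoded PRF key in the public-key circuit by its punctured version, and finally appeal to the pseudorandomness-at-punctured-points guarantee of the PRF to bound the adversary's success probability by $2^{-n}+\negl(\secparam)$. The paper organizes this into hybrids labeled $\hybrid_0,\hybrid_1,\hybrid_2$ (with the $\prg$-image observation folded into the $\hybrid_0\to\hybrid_1$ transition), while you split the $\prg$ conditioning into its own step, but the logical content is identical.
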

\begin{proof}
The correctness and correctness of punctured circuit for $(\gen,\eval,\puncture,\verify)$ is immediate.
Next, we prove the puncturable security.

Let $A$ be an adversary in the puncturing experiment given in~\Cref{def:puncturable-crypto} for the puncturable cryptographic scheme $(\gen,\eval,\puncture,\verify)$.
\noindent $\hybrid_0$: \\
Same as the puncturing security experiment given in~\Cref{def:puncturable-crypto}.
\begin{itemize}
\item $\ch$ samples $k\gets \prf.\gen(1^{\secparam})$.
\item $\ch$ generates the circuit $\tilde{C}\gets \io(C)$ where $C$ has $k$ hardcoded and on input $r\gets \{0,1\}^{\frac{n}{2}}$ (the input space of $\prg$) and a message $m\in \{0,1\}^n$, outputs $(\prg(r),\prf.\eval(k,\prg(r))\oplus m)$. 
\item $\ch$ samples $x_1,x_2\xleftarrow{\$}\{0,1\}^{n}$.
\item $\ch$ generates $k_{x_1,x_2}\gets \prf.\puncture(k,\{x_1,x_2\})$.
\item $\ch$ sends $(x_1,k_{x_1,x_2},\tilde{C})$ to $A$ and gets back $y$.
\item $\ch$ computes $y_1\gets \prf.\eval(k,x_1)$.
\item Output $1$ if $y=y_1$.
\end{itemize}    

\noindent $\hybrid_1$: \\

\begin{itemize}
\item $\ch$ samples $k\gets \prf.\gen(1^{\secparam})$.
\item $\ch$ generates the circuit $\tilde{C}\gets \io(C)$ where $C$ has \sout{$k$}\cblue{$k_{x^\bob,x^\charlie}$} hardcoded and on input $r\gets \{0,1\}^{\frac{n}{2}}$ (the input space of $\prg$) and a message $m\in \{0,1\}^n$, outputs $(\prg(r),\prf.\eval($\sout{$k$}\cblue{$k_{x^\bob,x^\charlie}$}$,\prg(r))\oplus m)$.  
\item $\ch$ samples $x_1,x_2\xleftarrow{\$}\{0,1\}^{n}$.
\item $\ch$ generates $k_{x_1,x_2}\gets \prf.\puncture(k,\{x_1,x_2\})$.
\item $\ch$ sends $(x_1,k_{x_1,x_2},\tilde{C})$ to $A$ and gets back $y$.
\item $\ch$ computes $y_1\gets \prf.\eval(k,x_1)$.
\item Output $1$ if $y=y_1$.
\end{itemize}

The proof of indistinguishability between $\hybrid_0$ and $\hybrid_1$ is as follows. Note that $x_1,x_2\xleftarrow{\$}\{0,1\}^n$ and ${\sf Supp}(\prg)\subset\{0,1\}^n$ has size $2^{\frac{n}{2}}$, and hence is a negligible fraction of $\{0,1\}^n$. Hence, with overwhelming probability $x_1,x_2\not\in {\sf Supp}(\prg)$. Therefore with overwhelming probability, $C$ as in $\hybrid_0$ never computes $\prf.\eval(k,\cdot)$ on $x_1$ or $x_2$ on any input query. Hence, replacing $k$ with $k_{x_1,x_2}$ inside $C$ does not change the functionality of $C$, by the puncturing correctness of $\prf$. Therefore, indistinguishability holds by the $\io$ guarantee.

\noindent $\hybrid_2$: \\

\begin{itemize}
\item $\ch$ samples $k\gets \prf.\gen(1^{\secparam})$.
\item $\ch$ generates the circuit $\tilde{C}\gets \io(C)$ where $C$ has \sout{$k$}\cblue{$k_{x^\bob,x^\charlie}$} hardcoded and on input $r\gets \{0,1\}^{\frac{n}{2}}$ (the input space of $\prg$) and a message $m\in \{0,1\}^n$, outputs $(\prg(r),\prf.\eval($\sout{$k$}\cblue{$k_{x^\bob,x^\charlie}$}$,\prg(r))\oplus m)$.  
\item $\ch$ samples $x_1,x_2\xleftarrow{\$}\{0,1\}^{n}$.
\item $\ch$ generates $k_{x_1,x_2}\gets \prf.\puncture(k,\{x_1,x_2\})$.
\item $\ch$ sends $(x_1,k_{x_1,x_2},\tilde{C})$ to $A$ and gets back $y$.
\item $\ch$ \sout{computes $y_1\gets \prf.\eval(k,x_1)$} \cblue{samples $y_1\xleftarrow{\$}\{0,1\}^n$}.
\item Output $1$ if $y=y_1$.
\end{itemize}    

The indistinguishability holds because the view of $A$ in $\hybrid_1$ depends only on $k_{x_1,x_2}$ and not on $k$. Hence, $A$ cannot distinguish between $y_1\gets \prf.\eval(k,x_1)$ with $y_1\xleftarrow{\$}\{0,1\}^n$. Therefore, checking if $y$, the response of $A$ is equal to $y_1$ when $y_1\gets \prf.\eval(k,x_1)$ should be indistinguishable from the same experiment but with $y_1\xleftarrow{\$}\{0,1\}^n$.

Finally, we argue that since $y_1$ is sampled independent of $y$, the probability that $y=y_1$, i.e., the output of $\hybrid_2$ is $1$, is exactly $\frac{1}{2^n}$, which is a  negligible function of $\secparam$ since $n(\secparam)\in \poly(\secparam)$.

\end{proof}

\begin{proof}[Proof of \Cref{prop:random-challenge-anti-piracy-sde}]
\label{pf:prop:random-challenge-anti-piracy-sde}

Let $(\alice,\bob,\charlie)$ be any adversary in $\srchsdeexpt^{\left(\alice,\bob,\charlie \right),\distr}\left( 1^{\secparam} \right)$ (see \Cref{fig:product-uniform-sde-random-challenge-anti-piracy}). We will do a sequence of hybrids starting from the original anti-piracy experiment $\srchsdeexpt^{\left(\alice,\bob,\charlie \right),\distr}\left( 1^{\secparam} \right)$ for the $\sde$ scheme given in \Cref{fig:SDE-construction}.
The changes are marked in \cblue{blue}.

\noindent $\hybrid_0$: \\
Same as $\srchsdeexpt^{\left(\alice,\bob,\charlie \right),\distr}\left( 1^{\secparam} \right)$ given in~\Cref{fig:product-uniform-sde-random-challenge-anti-piracy} for the $\sde$ scheme in \Cref{fig:SDE-construction}.
\begin{itemize}
\item $\ch$ samples $k\gets \prf.\gen(1^{\secparam})$.
\item $\ch$ samples $r^\bob,r^\charlie\xleftarrow{\$}\{0,1\}^{\frac{n}{2}}$ and generates $x^\bob\gets\prg(r^\bob)$ and $x^\charlie\gets\prg(r^\charlie)$.
\item $\ch$ generates the circuit $\tilde{C}\gets \io(C)$ where $C$ has $k$ hardcoded and on input $r\gets \{0,1\}^{\frac{n}{2}}$ (the input space of $\prg$) and a message $m\in \{0,1\}^n$, outputs $(\prg(r),\prf.\eval(k,\prg(r))\oplus m)$.  
\item $\ch$ generates $\rho_\sk\gets \UPO.\obf(1^\secparam,\tilde{F})$ where $\tilde{F}\gets \io(\prf.\eval(k,\cdot))$ and sends $(\rho_\sk,\tilde{C})$ to $\alice$.
\item $\alice$ produces a bipartite state $\sigma_{\bob,\charlie}$.
\item $\ch$ samples $m^\bob,m^\charlie\xleftarrow{\$} \{0,1\}^n$.
\item $\ch$ computes $\ct^\bob=(x^\bob,z^\bob)$ and $\ct^\charlie=(x^\charlie,z^\charlie)$ where $z^\bob=\prf.\eval(k,x^\bob)\oplus m^\bob$ and $z^\charlie=\prf.\eval(k,x^\charlie)\oplus m^\charlie$.
\item Apply $(\bob(\ct^{\bob},\cdot) \otimes \charlie(\ct^{\charlie},\cdot))(\sigma_{\bob,\charlie})$ to obtain $(y^{\bob},y^{\charlie})$. 
\item Output $1$ if $y^\bob=m^\bob$ and $y^\charlie=m^\charlie$.
\end{itemize}    

\noindent $\hybrid_1$: \\
\begin{itemize}
\item $\ch$ samples $k\gets \prf.\gen(1^{\secparam})$.
\item $\ch$ samples \sout{$r^\bob,r^\charlie\xleftarrow{\$}\{0,1\}^{\frac{n}{2}}$ and generates $x^\bob\gets\prg(r^\bob)$ and $x^\charlie\gets\prg(r^\charlie)$} \cblue{$x^\bob,x^\charlie\xleftarrow{\$}\{0,1\}^n$}.
\item $\ch$ generates the circuit $\tilde{C}\gets \io(C)$ where $C$ has $k$ hardcoded and on input $r\gets \{0,1\}^{\frac{n}{2}}$ (the input space of $\prg$) and a message $m\in \{0,1\}^n$, outputs $(\prg(r),\prf.\eval(k,\prg(r))\oplus m)$.  
\item $\ch$ generates $\rho_\sk\gets \UPO.\obf(1^\secparam,\tilde{F})$ where $\tilde{F}\gets \io(\prf.\eval(k,\cdot))$ and sends $(\rho_\sk,\tilde{C})$ to $\alice$.
\item $\alice$ produces a bipartite state $\sigma_{\bob,\charlie}$.
\item $\ch$ samples $m^\bob,m^\charlie\xleftarrow{\$} \{0,1\}^n$.
\item $\ch$ computes $\ct^\bob=(x^\bob,z^\bob)$ and $\ct^\charlie=(x^\charlie,z^\charlie)$ where $z^\bob=\prf.\eval(k,x^\bob)\oplus m^\bob$ and $z^\charlie=\prf.\eval(k,x^\charlie)\oplus m^\charlie$.
\item Apply $(\bob(\ct^{\bob},\cdot) \otimes \charlie(\ct^{\charlie},\cdot))(\sigma_{\bob,\charlie})$ to obtain $(y^{\bob},y^{\charlie})$. 
\item Output $1$ if $y^\bob=m^\bob$ and $y^\charlie=m^\charlie$.
\end{itemize}    
The indistinguishability between $\hybrid_0$ and $\hybrid_1$ follows from the pseudorandomness of $\prg$.

\noindent $\hybrid_2$: \\
\begin{itemize}
\item $\ch$ samples $k\gets \prf.\gen(1^{\secparam})$.
\item $\ch$ samples $x^\bob,x^\charlie\xleftarrow{\$}\{0,1\}^n$.
\item $\ch$ generates the circuit $\tilde{C}\gets \io(C)$ where $C$ has $k$ hardcoded and on input $r\gets \{0,1\}^{\frac{n}{2}}$ (the input space of $\prg$) and a message $m\in \{0,1\}^n$, outputs $(\prg(r),\prf.\eval(k,\prg(r))\oplus m)$.  
\item $\ch$ generates $\rho_\sk\gets \UPO.\obf(1^\secparam,\tilde{F})$ where $\tilde{F}\gets \io(\prf.\eval(k,\cdot))$ and sends $(\rho_\sk,\tilde{C})$ to $\alice$.
\item $\alice$ produces a bipartite state $\sigma_{\bob,\charlie}$.
\item $\ch$ samples \sout{$m^\bob,m^\charlie\xleftarrow{\$} \{0,1\}^n$} \cblue{$z^\bob,z^\charlie\xleftarrow{\$} \{0,1\}^n$ and computes $m^\bob=\prf.\eval(k,x^\bob)\oplus z^\bob$, $m^\charlie=\prf.\eval(k,x^\charlie)\oplus z^\charlie$}.
\item $\ch$ computes $\ct^\bob=(x^\bob,z^\bob)$ and $\ct^\charlie=(x^\charlie,z^\charlie)$ \sout{ where $z^\bob=\prf.\eval(k,x^\bob)\oplus m^\bob$ and $z^\charlie=\prf.\eval(k,x^\charlie)\oplus m^\charlie$}.
\item Apply $(\bob(\ct^{\bob},\cdot) \otimes \charlie(\ct^{\charlie},\cdot))(\sigma_{\bob,\charlie})$ to obtain $(y^{\bob},y^{\charlie})$. 
\item Output $1$ if $y^\bob=m^\bob$ and $y^\charlie=m^\charlie$.
\end{itemize}    
The overall distribution on $(m^\bob,z^\bob)$ and $(m^\charlie,z^\charlie)$ across the hybrids $\hybrid_1$ and $\hybrid_2$, and hence the indistinguishability holds.

\noindent $\hybrid_3$: \\
\begin{itemize}
\item $\ch$ samples $k\gets \prf.\gen(1^{\secparam})$.
\item $\ch$ samples $x^\bob,x^\charlie\xleftarrow{\$}\{0,1\}^n$.
\item $\ch$ generates the circuit $\tilde{C}\gets \io(C)$ where $C$ has $k$ hardcoded and on input $r\gets \{0,1\}^{\frac{n}{2}}$ (the input space of $\prg$) and a message $m\in \{0,1\}^n$, outputs $(\prg(r),\prf.\eval(k,\prg(r))\oplus m)$.  
\item $\ch$ generates \sout{$\rho_\sk\gets \UPO.\obf(1^\secparam,\tilde{F})$ where $\tilde{F}\gets \io(\prf.\eval(k,\cdot))$} \cblue{$\rho_\sk\gets \UPO'.\obf(1^\secparam,\prf.\eval(k,\cdot))$} and sends $(\rho_\sk,\tilde{C})$ to $\alice$.
\item $\alice$ produces a bipartite state $\sigma_{\bob,\charlie}$.
\item $\ch$ samples $z^\bob,z^\charlie\xleftarrow{\$} \{0,1\}^n$ and computes $m^\bob=\prf.\eval(k,x^\bob)\oplus z^\bob$, $m^\charlie=\prf.\eval(k,x^\charlie)\oplus z^\charlie$.
\item $\ch$ computes $\ct^\bob=(x^\bob,z^\bob)$ and $\ct^\charlie=(x^\charlie,z^\charlie)$. 
\item Apply $(\bob(\ct^{\bob},\cdot) \otimes \charlie(\ct^{\charlie},\cdot))(\sigma_{\bob,\charlie})$ to obtain $(y^{\bob},y^{\charlie})$. 
\item Output $1$ if $y^\bob=m^\bob$ and $y^\charlie=m^\charlie$.
\end{itemize}    

$\hybrid_3$ is just a rewriting of $\hybrid_2$ in terms of the new $\upo$ scheme defined as:
 \begin{itemize}
        \item $\UPO'.\obf(1^\secparam,C)=\UPO.\obf(1^\secparam,\tilde{C})$ where $\tilde{C}\gets \io(C)$, for every circuit $C$.
        \item $\UPO'.\eval=\UPO.\eval$.
    \end{itemize}

Note that by \Cref{cor:UPO-io}, since  $\UPO$ is a $\upo$ for any generalized keyed circuit class in $\ppoly$ with respect to $\distr_\inpclass=\cal{U}\times \cal{U}$, the product of uniform distribution, so is $\UPO'$.

Next, we give a reduction from $\hybrid_3$ to an anti-piracy game with uniform and independent challenge distribution (see \Cref{fig:product-uniform-arb-crypto-experiment-anti-piracy}) for $(\gen,\eval,\puncture,\verify)$ with respect to $\UPO'$ where $\gen$ on input $1^\secparam$ samples a key $k\gets \prf.\gen(1^\secparam)$ and then constructs the circuit $\tilde{C}\gets \io(C)$ where $C$ has $k$ hardcoded and on input $r\gets \{0,1\}^{\frac{n}{2}}$ (the input space of $\prg$) and a message $m\in \{0,1\}^n$, outputs $(\prg(r),\prf.\eval(k,\prg(r))\oplus m)$, and finally outputs $(\sk,\aux)=(k,\tilde{C})$. $\eval$ is the same as $\prf.\eval$; the $\verify()$ algorithm on input $k,\tilde{C},x,y$ checks if $\prf.\eval(k,x)=y$ and if true outputs $1$ else $0$. Finally, the $\puncture()$ algorithm on input a key $k$ and a set of input points $(x_1,x_2)$, generates $k_{x_1,x_2}\gets \prf.\puncture(k,x_1,x_2)$ and outputs $\prf.\eval(k_{x_1,x_2},\cdot)$. 


Let $(\alice,\bob,\charlie)$ be an adversary  in $\hybrid_2$ above. Consider the following adversary $(\reduc_\alice,\reduc_\bob,\reduc_\charlie)$ in $\expt^{\left(\reduc_\alice,\reduc_\bob,\reduc_\charlie \right),(\gen,\eval,\puncture,\verify)}\left( 1^{\secparam} \right)$ (see \Cref{fig:product-uniform-arb-crypto-experiment-anti-piracy}):
\begin{itemize}
    \item $\reduc_\alice$ on receiving $(\rho_\sk,\tilde{C})$ from the challenger $\ch$ in  $\expt^{\left(\reduc_\alice,\reduc_\bob,\reduc_\charlie \right),(\gen,\eval,\puncture,\verify)}\left( 1^{\secparam} \right)$ (see \Cref{fig:product-uniform-arb-crypto-experiment-anti-piracy}), runs $\alice$ on it to generate $\sigma_{\bob,\charlie}$ and sends the respective registers to $\reduc_\bob$ and $\reduc_\charlie$.
    \item $\reduc_\bob$ (respectively, $\reduc_\charlie$) on receiving $x^\bob$ (respectively $x^\charlie$), samples $z^\bob\xleftarrow{\$}\{0,1\}^n$ (respectively, $z^\charlie$) and runs $\bob$ (respectively, $\charlie$) on $((z^\bob,x^\bob),\sigma_\bob)$ (respectively, $((z^\charlie,x^\charlie),\sigma_\charlie)$) to get $m^\bob$ (respectively, $m^\charlie$). $\reduc_\bob$ (respectively, $\reduc_\charlie$) outputs $m^\bob\oplus z^\bob$ (respectively, $m^\charlie\oplus z^\charlie$).   
\end{itemize}

Clearly, the event $1\gets \expt^{\left(\reduc_\alice,\reduc_\bob,\reduc_\charlie \right),(\gen,\eval,\puncture,\verify),\UPO'}\left( 1^{\secparam} \right)$ (see \Cref{fig:product-uniform-arb-crypto-experiment-anti-piracy}) exactly corresponds to the event $(\alice,\bob,\charlie)$ winning the security experiment in $\hybrid_3$.

By \Cref{lemma:Sahai-waters-puncturability}, we know that $(\gen,\eval,\puncture,\verify)$ is a puncturable cryptographic scheme. Hence by \Cref{lemma:abstract-crypto-experiment},  for every adversary $(\alice,\bob,\charlie)$ in \Cref{fig:product-uniform-arb-crypto-experiment-anti-piracy} against $(\gen,\eval,\puncture,\verify)$, there exists a negligible function $\negl()$ such that
\[ \prob\left[ 1 \leftarrow \expt^{\left(\alice,\bob,\charlie \right),(\gen,\eval,\puncture,\verify),\UPO}\left( 1^{\secparam} \right)\ \right] \leq \negl(\secparam).\]

Hence by the reduction, we conclude that $(\alice,\bob,\charlie)$ has negligible winning probability in the security experiment in $\hybrid_3$, which completes the proof.

\end{proof}

\begin{proposition}\label{prop:cpa-style-anti-piracy-sde}
    The $\sde$ construction given in \Cref{fig:SDE-construction} satisfies $\distrcor$-selective $\cpa$-style anti-piracy (see \Cref{subsec:sde}). 
\end{proposition}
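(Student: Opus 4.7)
The plan is to reduce $\distrcor$-selective $\cpa$-style anti-piracy to the $\distr_{\inpclass} = \calU \times \calU$ generalized $\UPO$ security guaranteed by the theorem statement, via a sequence of hybrids that mirror the ones used in the search-anti-piracy proof of \Cref{prop:random-challenge-anti-piracy-sde}, but with the generalized (rather than basic) puncturing algorithm in order to accommodate the $\cpa$ challenge structure.

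First, I would start in $\hybrid_0$ with the real selective $\cpa$ experiment: $\alice$ commits in advance to challenge messages $(m_0^{\bob}, m_1^{\bob}, m_0^{\charlie}, m_1^{\charlie})$ (as prescribed by the selective variant), receives $(\rho_{\sk}, \pk = \tilde{C})$, and the challenger samples $b \xleftarrow{\$} \{0,1\}$ and a correlated pair of randomness $(r^{\bob}, r^{\charlie}) \sim \distrcor$, producing challenge ciphertexts $\ct^{\bob} = (\prg(r^{\bob}), \prf.\eval(k, \prg(r^{\bob})) \oplus m_b^{\bob})$ and similarly $\ct^{\charlie}$. In $\hybrid_1$ I would apply $\prg$ pseudorandomness twice to replace $\prg(r^{\bob}), \prg(r^{\charlie})$ with uniformly random $x^{\bob}, x^{\charlie} \xleftarrow{\$} \{0,1\}^{n}$. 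In $\hybrid_2$, since $x^{\bob}, x^{\charlie}$ now lie outside $\mathrm{Image}(\prg)$ with overwhelming probability, $\io$-security lets me replace the hard-wired key $k$ inside the encryption circuit $C$ with a punctured key $k_{x^{\bob}, x^{\charlie}} \leftarrow \prf.\puncture(k, \{x^{\bob}, x^{\charlie}\})$, exactly as in \Cref{lemma:Sahai-waters-puncturability}; the functionality of $C$ is unchanged because $C$ only evaluates the PRF on $\prg$-images.

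Now the only place where $k$ is used at $x^{\bob}$ or $x^{\charlie}$ is inside the $\UPO$-obfuscated decryption functionality $\tilde{F} = \io(\prf.\eval(k, \cdot))$. In $\hybrid_3$ I would invoke generalized $\UPO$ security, using the generalized puncturing algorithm $\genpuncture$ together with adversary-chosen output functions $\mu_{\bob}, \mu_{\charlie}$. Concretely, the reduction $(\reduct_\alice, \reduct_\bob, \reduct_\charlie)$ samples $k, b$, and random masks $t^{\bob}, t^{\charlie} \xleftarrow{\$} \{0,1\}^n$; it sets $\mu_{\bob}$ to be the constant function $t^{\bob} \oplus m_b^{\bob}$ and $\mu_{\charlie}$ analogously (wrapped in $\io$ for functional equivalence with the full PRF punctured at those points), and forwards $(k, \mu_{\bob}, \mu_{\charlie})$ to the $\genupoexpt$ challenger. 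Upon receiving the obfuscated state (either of the real $\prf.\eval(k, \cdot)$ or of the generalized-punctured version), $\reduct_\alice$ runs $\alice$ and splits $\sigma_{\bob, \charlie}$. In the challenge phase, $\reduct_\bob$ receives the $\UPO$-challenge point $x^{\bob}$ and hands $\bob$ the synthetic ciphertext $(x^{\bob}, t^{\bob})$, so that the decryption value is $\prf.\eval(k, x^{\bob}) \oplus t^{\bob}$ in the $b=0$ UPO branch and $\mu_{\bob}(x^{\bob}) \oplus t^{\bob} = m_b^{\bob}$ in the $b=1$ branch; $\reduct_\bob$ then outputs $0$ iff $\bob$'s guess equals $b$, and symmetrically for $\reduct_\charlie$. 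The key observation is that the $b=1$ branch of the $\UPO$ game exactly reproduces $\hybrid_3$, while the $b=0$ branch, combined with the masks $t^{\bob}, t^{\charlie}$, yields ciphertexts whose $z$-components are uniformly random and statistically independent of $b$, so no adversary can do better than guessing in that branch.

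The main obstacle I anticipate is matching the $\distrcor$ challenge distribution of SDE with the $\calU \times \calU$ distribution required by the generalized $\UPO$ security statement of \Cref{thm:strong-CLLZ-cp-prf_puncturable-CP-f-uniform}. If $\distrcor$ has uniform marginals on each of $x^{\bob}, x^{\charlie}$ individually, the reduction goes through as sketched by sampling the correlation on the reduction's side; otherwise one needs the corresponding identical-distribution version (\Cref{thm:strong-CLLZ-cp-prf_puncturable-CP-f-uniform-id}) and \Cref{conj:goldreich-levin-identical}. Assuming the appropriate marginal/identical version is available, combining the hybrids through the triangle inequality with the negligible advantages from $\prg$, $\io$, and generalized $\UPO$ yields the desired negligible bound on the advantage in $\distrcor$-selective $\cpa$-style anti-piracy.
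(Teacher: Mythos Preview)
Your reduction to generalized $\UPO$ has a genuine gap in the choice of $\mu_{\bob},\mu_{\charlie}$ and the output rule. With $\mu_{\bob}$ set to the constant $t^{\bob}\oplus m_b^{\bob}$ and ciphertext $(x^{\bob},t^{\bob})$, \emph{neither} branch of the $\UPO$ experiment matches the real SDE experiment (or any hybrid indistinguishable from it): in the $\UPO$ $b=0$ branch the adversary holds the real PRF but a ciphertext whose second component is a fresh random string unrelated to any $m_j^{\bob}$; in the $\UPO$ $b=1$ branch the decryption key is the punctured circuit $G_{k^*}$, not $\prf.\eval(k,\cdot)$. You never exhibit hybrids connecting the real SDE game to either branch. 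Moreover, your output rule ``$\reduc_{\bob}$ outputs $0$ iff $\bob$'s guess equals the reduction's SDE bit $b$'' does not yield a $\UPO$ attacker: you only argue $q_0 = \Pr[\reduc_{\bob}=\reduc_{\charlie}=0 \mid \text{UPO bit }0]\le 1/2$, which together with the $\UPO$ bound $\tfrac{q_0+q_1}{2}\le \tfrac{1}{2}+\negl$ gives no nontrivial constraint on $q_1$, and $q_1$ itself is $\Pr[b^{\bob}\ne b\wedge b^{\charlie}\ne b]$, not the SDE winning probability.

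The paper's proof uses a fundamentally different idea: it takes $\mu_{\bob}(x)=\prf.\eval(k,x)\oplus m_0^{\bob}\oplus m_1^{\bob}$ (crucially a function of the key $k$, which the reduction samples itself) and ciphertext $(x^{\bob},\prf.\eval(k,x^{\bob})\oplus m_0^{\bob})$. With this choice the $\UPO$ $b=0$ branch \emph{is} the SDE game with challenge bit $0$ and the $\UPO$ $b=1$ branch is the SDE game with challenge bit $1$; the reduction simply forwards $\bob$'s and $\charlie$'s bits, so the $\UPO$ success probability equals the SDE success probability. Establishing that this final game is indistinguishable from the real SDE experiment requires a longer hybrid chain (the paper uses roughly ten hybrids, puncturing and then un-puncturing the PRF key to swap $\prf.\eval(k,x^{\bob})$ for random and back, interleaved with $\io$ steps), not the three hybrids you sketch. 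Your concern about matching $\distrcor$ with $\calU\times\calU$ is a red herring: $\distrcor$ refers to correlated challenge \emph{bits} with independently sampled ciphertext randomness, which after the $\prg$ step yields independent uniform $x^{\bob},x^{\charlie}$, exactly as required by $\distrprod$-generalized $\UPO$.
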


\begin{proof}

Let $\UPO'$ be a new $\upo$ scheme defined as:
 \begin{itemize}
        \item $\UPO'.\obf(1^\secparam,C)=\UPO.\obf(1^\secparam,\tilde{C})$ where $\tilde{C}\gets \io(C)$,  for every circuit $C$.
        \item $\UPO'.\eval=\UPO.\eval$.
    \end{itemize}
By \Cref{cor:UPO-io}, since  $\UPO$ is a $\upo$ for any generalized keyed circuit class in $\ppoly$ with respect to the independent challenge distribution $\distr_\inpclass=\cal{U}\times \cal{U}$, $\UPO'$ also satisfies the same security guarantees.

Let $(\alice,\bob,\charlie)$ be any adversary in $\selcpasdeexpt^{\left(\alice,\bob,\charlie \right),\distrcor}\left( 1^{\secparam} \right)$ (see \Cref{fig:correlated-sde-cpa-style-anti-piracy}) against the $\sde$ construction in \Cref{fig:SDE-construction}. We will do a sequence of hybrids starting from the original anti-piracy experiment $\selcpasdeexpt^{\left(\alice,\bob,\charlie \right),\distr}\left( 1^{\secparam} \right)$ for the $\sde$ scheme given in \Cref{fig:SDE-construction}, and finally give a reduction to the generalized $\upo$ security game of $\UPO'$ for  $\fclass=\{\fclass_\secparam\}$, where $\fclass_\secparam=\{\prf.\eval(k,\cdot)\}_{k\in {\sf Supp}(\prf.\gen(1^\secparam))}$ with respect to the puncture algorithm $\genpuncture$ defined as follows:
the $\genpuncture$ algorithm, which takes as input $(k,x_1,x_2,\mu_1,\mu_2)$ and does the following:
\begin{itemize}
    \item Generates $k_{x_1,x_2}\gets \prf.\puncture(k,x_1,x_2)$.
    \item Constructs the circuit $G_{k_{x_1,x_2},x_1,x_2,\mu_1,\mu_2}$ which on input $x$, outputs $\prf.\eval(k_{x_1,x_2},x)$ if $x\not\in\{x_1,x_2\}$, and outputs $\mu_1(x_1)$ if $x=x_1$ and  $\mu_2(x_2)$ if $x=x_2$.
    \item Output $E$.
\end{itemize}

The changes are marked in \cblue{blue}.

\noindent $\hybrid_0$: \\ \anote{Self: Need to change the prepone the step of adversary submitting challenge messages}
Same as $\selcpasdeexpt^{\left(\alice,\bob,\charlie \right),\distrcor}\left( 1^{\secparam} \right)$ given in~\Cref{fig:correlated-sde-cpa-style-anti-piracy} for the $\sde$ scheme in \Cref{fig:SDE-construction}.
 \begin{itemize}
\item $\alice$ sends two same-length message pairs $(m^\bob_0,m^\bob_1, m^\charlie_0,m^\charlie_1)$.
\item $\ch$ samples $k\gets \prf.\gen(1^{\secparam})$.
\item $\ch$ samples $r^\bob,r^\charlie\xleftarrow{\$}\{0,1\}^{\frac{n}{2}}$ and generates $x^\bob\gets\prg(r^\bob)$ and $x^\charlie\gets\prg(r^\charlie)$ as well as  generates $y^\bob\gets \prf.\eval(k,x^\bob)$ and $y^\charlie\gets \prf.\eval(k,x^\charlie)$.
\item $\ch$ generates the circuit $\tilde{C}\gets \io(C)$ where $C$ has $k$ hardcoded and on input $r\gets \{0,1\}^{\frac{n}{2}}$ (the input space of $\prg$) and a message $m\in \{0,1\}^n$, outputs $(\prg(r),\prf.\eval(k,\prg(r))\oplus m)$.  
\item $\ch$ generates $\rho_\sk\gets \UPO.\obf(1^\secparam,\tilde{F})$ where $\tilde{F}\gets \io(\prf.\eval(k,\cdot))$ and sends $(\rho_\sk,\tilde{C})$ to $\alice$.
\item $\alice$ produces a bipartite state $\sigma_{\bob,\charlie}$.
\item $\ch$ samples $b\xleftarrow{\$} \{0,1\}$.
\item $\ch$ computes $\ct^\bob=(x^\bob,z^\bob)$ and $\ct^\charlie=(x^\charlie,z^\charlie)$ where $z^\bob=y^\bob\oplus m^\bob_b$ and $z^\charlie=y^\charlie\oplus m^\charlie_b$.
\item Apply $(\bob(\ct^{\bob},\cdot) \otimes \charlie(\ct^{\charlie},\cdot))(\sigma_{\bob,\charlie})$ to obtain $(b^{\bob},b^{\charlie})$. 
\item Output $1$ if $b^\bob=b^\charlie=b$.
 
 \end{itemize}  

 \noindent $\hybrid_1$: \\
This is the same as $\hybrid_0$ up to re-ordering some of the steps performed by the $\ch$ without affecting view of the adversary.  
 \begin{itemize}
\item $\alice$ sends two same-length message pairs $(m^\bob_0,m^\bob_1, m^\charlie_0,m^\charlie_1)$.
\item $\ch$ samples $k\gets \prf.\gen(1^{\secparam})$.
\item \cblue{$\ch$ samples $b\xleftarrow{\$} \{0,1\}$.}
\item $\ch$ samples $r^\bob,r^\charlie\xleftarrow{\$}\{0,1\}^{\frac{n}{2}}$ and generates $x^\bob\gets\prg(r^\bob)$ and $x^\charlie\gets\prg(r^\charlie)$ as well as  generates $y^\bob\gets \prf.\eval(k,x^\bob)$ and $y^\charlie\gets \prf.\eval(k,x^\charlie)$.
\item $\ch$ generates the circuit $\tilde{C}\gets \io(C)$ where $C$ has $k$ hardcoded and on input $r\gets \{0,1\}^{\frac{n}{2}}$ (the input space of $\prg$) and a message $m\in \{0,1\}^n$, outputs $(\prg(r),\prf.\eval(k,\prg(r))\oplus m)$.  
\item $\ch$ generates $\rho_\sk\gets \UPO.\obf(1^\secparam,\tilde{F})$ where $\tilde{F}\gets \io(\prf.\eval(k,\cdot))$ and sends $(\rho_\sk,\tilde{C})$ to $\alice$.

\item $\alice$ produces a bipartite state $\sigma_{\bob,\charlie}$.
\item \sout{$\ch$ samples $b\xleftarrow{\$} \{0,1\}$.}
\item $\ch$ computes $\ct^\bob=(x^\bob,z^\bob)$ and $\ct^\charlie=(x^\charlie,z^\charlie)$ where $z^\bob=y^\bob\oplus m^\bob_b$ and $z^\charlie=y^\charlie\oplus m^\charlie_b$.
\item Apply $(\bob(\ct^{\bob},\cdot) \otimes \charlie(\ct^{\charlie},\cdot))(\sigma_{\bob,\charlie})$ to obtain $(b^{\bob},b^{\charlie})$. 
\item Output $1$ if $b^\bob=b^\charlie=b$.
 
 \end{itemize}

\noindent $\hybrid_2$: \\
\begin{itemize}
\item $\alice$ sends two same-length message pairs $(m^\bob_0,m^\bob_1, m^\charlie_0,m^\charlie_1)$.
\item $\ch$ samples $k\gets \prf.\gen(1^{\secparam})$.
\item $\ch$ samples $b\xleftarrow{\$} \{0,1\}$.
\item $\ch$ samples \sout{$r^\bob,r^\charlie\xleftarrow{\$}\{0,1\}^{\frac{n}{2}}$ and generates $x^\bob\gets\prg(r^\bob)$ and $x^\charlie\gets\prg(r^\charlie)$} \cblue{$x^\bob,x^\charlie\xleftarrow{\$}\{0,1\}^n$} as well as  generates $y^\bob\gets \prf.\eval(k,x^\bob)$ and $y^\charlie\gets \prf.\eval(k,x^\charlie)$.
\item $\ch$ generates the circuit $\tilde{C}\gets \io(C)$ where $C$ has $k$ hardcoded and on input $r\gets \{0,1\}^{\frac{n}{2}}$ (the input space of $\prg$) and a message $m\in \{0,1\}^n$, outputs $(\prg(r),\prf.\eval(k,\prg(r))\oplus m)$.  
\item $\ch$ generates $\rho_\sk\gets \UPO.\obf(1^\secparam,\tilde{F})$ where $\tilde{F}\gets \io(\prf.\eval(k,\cdot))$ and sends $(\rho_\sk,\tilde{C})$ to $\alice$.
\item $\alice$ produces a bipartite state $\sigma_{\bob,\charlie}$.
\item $\ch$ samples $b\xleftarrow{\$} \{0,1\}$.
\item $\ch$ computes $\ct^\bob=(x^\bob,z^\bob)$ and $\ct^\charlie=(x^\charlie,z^\charlie)$ where $z^\bob=y^\bob\oplus m^\bob_b$ and $z^\charlie=y^\charlie\oplus m^\charlie_b$.
\item Apply $(\bob(\ct^{\bob},\cdot) \otimes \charlie(\ct^{\charlie},\cdot))(\sigma_{\bob,\charlie})$ to obtain $(b^{\bob},b^{\charlie})$. 
\item Output $1$ if $b^\bob=b^\charlie=b$.
\end{itemize}    
The indistinguishability between $\hybrid_1$ and $\hybrid_2$ follows from the pseudorandomness of $\prg$.

\noindent $\hybrid_3$: \\
\begin{itemize}
\item $\alice$ sends two same-length message pairs $(m^\bob_0,m^\bob_1, m^\charlie_0,m^\charlie_1)$.
\item $\ch$ samples $k\gets \prf.\gen(1^{\secparam})$.
\item $\ch$ samples $b\xleftarrow{\$} \{0,1\}$. 
\item $\ch$ samples $x^\bob,x^\charlie\xleftarrow{\$}\{0,1\}^n$ as well as  generates $y^\bob\gets \prf.\eval(k,x^\bob)$ and $y^\charlie\gets \prf.\eval(k,x^\charlie)$.
\item \cblue{$\ch$ generates $k_{x^\bob,x^\charlie}\gets \prf.\puncture(k,\{x^\bob,x^\charlie\})$.}
\item $\ch$ generates the circuit $\tilde{C}\gets \io(C)$ \sout{where $C$ has $k$ hardcoded and on input $r\gets \{0,1\}^{\frac{n}{2}}$ (the input space of $\prg$) and a message $m\in \{0,1\}^n$, outputs $(\prg(r),\prf.\eval(k,\prg(r))\oplus m)$.} \cblue{where $C$ is constructed depending on the bit $b$ as follows. If $b=0$ (respectively, $b=1$), $C$ has $k$ (respectively, $k_{x^\bob,x^\charlie}$) hardcoded and on input $r\gets \{0,1\}^{\frac{n}{2}}$ (the input space of $\prg$) and a message $m\in \{0,1\}^n$, outputs $(\prg(r),\prf.\eval(k,\prg(r))\oplus m)$ (respectively, $(\prg(r),\prf.\eval(k_{x^\bob,x^\charlie},\prg(r))\oplus m)$).}  
\item $\ch$ generates $\rho_\sk\gets \UPO.\obf(1^\secparam,\tilde{F})$ where $\tilde{F}\gets \io(\prf.\eval(k,\cdot))$ and sends $(\rho_\sk,\tilde{C})$ to $\alice$.
\item $\alice$ produces a bipartite state $\sigma_{\bob,\charlie}$.

\item $\ch$ computes $\ct^\bob=(x^\bob,z^\bob)$ and $\ct^\charlie=(x^\charlie,z^\charlie)$ where $z^\bob=y^\bob\oplus m^\bob_b$ and $z^\charlie=y^\charlie\oplus m^\charlie_b$.
\item Apply $(\bob(\ct^{\bob},\cdot) \otimes \charlie(\ct^{\charlie},\cdot))(\sigma_{\bob,\charlie})$ to obtain $(b^{\bob},b^{\charlie})$. 
\item Output $1$ if $b^\bob=b^\charlie=b$.
\end{itemize}       

The proof of indistinguishability between $\hybrid_2$ and $\hybrid_3$ is as follows. Note that $x^\bob,x^\charlie\xleftarrow{\$}\{0,1\}^n$ and ${\sf Supp}(\prg)\subset\{0,1\}^n$ has size $2^{\frac{n}{2}}$, and hence is a negligible fraction of $\{0,1\}^n$. Hence, with overwhelming probability $x^\bob,x^\charlie\not\in {\sf Supp}(\prg)$. Therefore with overwhelming probability, $C$ as in $\hybrid_0$ never computes $\prf.\eval(k,\cdot)$ on $x^\bob$ or $x^\charlie$ on any input query. Hence, replacing $k$ with $k_{x_1,x_2}$ inside $C$ in the $b=1$ case of the security experiment does not change the functionality of $C$, by the puncturing correctness of $\prf$. Therefore, indistinguishability holds by the $\io$ guarantee.

\noindent $\hybrid_4$: \\
\begin{itemize}
\item $\alice$ sends two same-length message pairs $(m^\bob_0,m^\bob_1, m^\charlie_0,m^\charlie_1)$.
\item $\ch$ samples $k\gets \prf.\gen(1^{\secparam})$.
\item $\ch$ samples $b\xleftarrow{\$} \{0,1\}$. 
\item $\ch$ samples $x^\bob,x^\charlie\xleftarrow{\$}\{0,1\}^n$ as well as  generates $y^\bob\gets \prf.\eval(k,x^\bob)$ and $y^\charlie\gets \prf.\eval(k,x^\charlie)$.
\item \cblue{$\ch$ generates $k_{x^\bob,x^\charlie}\gets \prf.\puncture(k,\{x^\bob,x^\charlie\})$.}
\item $\ch$ generates the circuit $\tilde{C}\gets \io(C)$ where $C$ is constructed depending on the bit $b$ as follows. If $b=0$ (respectively, $b=1$), $C$ has $k$ (respectively, $k_{x^\bob,x^\charlie}$) hardcoded and on input $r\gets \{0,1\}^{\frac{n}{2}}$ (the input space of $\prg$) and a message $m\in \{0,1\}^n$, outputs $(\prg(r),\prf.\eval(k,\prg(r))\oplus m)$ (respectively, $(\prg(r),\prf.\eval(k_{x^\bob,x^\charlie},\prg(r))\oplus m)$).
\item \cblue{If $b=0$,} $\ch$ generates $\rho_\sk\gets \UPO.\obf(1^\secparam,\tilde{F})$ where $\tilde{F}\gets \io(\prf.\eval(k,\cdot))$, \cblue{else, if $b=1$, generates  $\rho_\sk\gets \UPO.\obf(1^\secparam,\tilde{W})$, where $\tilde{W}\gets \io(W)$ and $W$ is as depicted in \Cref{fig:W-hybrid-4}}  and sends $(\rho_\sk,\tilde{C})$ to $\alice$.
\item $\alice$ produces a bipartite state $\sigma_{\bob,\charlie}$.

\item $\ch$ computes $\ct^\bob=(x^\bob,z^\bob)$ and $\ct^\charlie=(x^\charlie,z^\charlie)$ where $z^\bob=y^\bob\oplus m^\bob_b$ and $z^\charlie=y^\charlie\oplus m^\charlie_b$.
\item Apply $(\bob(\ct^{\bob},\cdot) \otimes \charlie(\ct^{\charlie},\cdot))(\sigma_{\bob,\charlie})$ to obtain $(b^{\bob},b^{\charlie})$. 
\item Output $1$ if $b^\bob=b^\charlie=b$.
\end{itemize}

\begin{figure}[!htb]
   \begin{center} 
   \begin{tabular}{|p{12cm}|}
    \hline 
\begin{center}
\underline{$W$}: 
\end{center}
Hardcoded keys $k_{x^\bob,x^\charlie},y^\bob,y^\charlie$.
On input: $x$.
\begin{itemize}
\item If $x=x^\bob$, output $y^\bob$.
\item Else if, $x=x^\charlie$, output $y^\charlie$.
\item Else, run $\prf.\eval(k_{x^\bob,x^\charlie},x)$ and output the result.
\end{itemize}
\ \\ 
\hline
\end{tabular}
    \caption{Circuit $W$ in $\hybrid_{4}$}
    \label{fig:W-hybrid-4}
    \end{center}
\end{figure}

Clearly, $W$ and $\prf.\eval(k,\cdot)$ has the same functionality and therefore indistinguishability holds by $\io$ guarantees.

\noindent $\hybrid_5$: \\
\begin{itemize}
\item $\alice$ sends two same-length message pairs $(m^\bob_0,m^\bob_1, m^\charlie_0,m^\charlie_1)$.
\item $\ch$ samples $k\gets \prf.\gen(1^{\secparam})$.
\item $\ch$ samples $b\xleftarrow{\$} \{0,1\}$. 
\item $\ch$ samples $x^\bob,x^\charlie\xleftarrow{\$}\{0,1\}^n$ as well as generates $y^\bob\gets \prf.\eval(k,x^\bob)$, $y^\charlie\gets \prf.\eval(k,x^\charlie)$ \cblue{if $b=0$; and $y^\bob\xleftarrow{\$}\{0,1\}^n$, and $y^\charlie\xleftarrow{\$}\{0,1\}^n$ if $b=1$}.
\item $\ch$ generates $k_{x^\bob,x^\charlie}\gets \prf.\puncture(k,\{x^\bob,x^\charlie\})$.
\item $\ch$ generates the circuit $\tilde{C}\gets \io(C)$ where $C$ is constructed depending on the bit $b$ as follows. If $b=0$ (respectively, $b=1$), $C$ has $k$ (respectively, $k_{x^\bob,x^\charlie}$) hardcoded and on input $r\gets \{0,1\}^{\frac{n}{2}}$ (the input space of $\prg$) and a message $m\in \{0,1\}^n$, outputs $(\prg(r),\prf.\eval(k,\prg(r))\oplus m)$ (respectively, $(\prg(r),\prf.\eval(k_{x^\bob,x^\charlie},\prg(r))\oplus m)$).
\item If $b=0$, $\ch$ generates $\rho_\sk\gets \UPO.\obf(1^\secparam,\tilde{F})$ where $\tilde{F}\gets \io(\prf.\eval(k,\cdot))$, else, if $b=1$, generates  $\rho_\sk\gets \UPO.\obf(1^\secparam,\tilde{W})$, where $\tilde{W}\gets \io(W)$ and $W$ is as depicted in \Cref{fig:W-hybrid-4}  and sends $(\rho_\sk,\io(C))$ to $\alice$.
\item $\alice$ produces a bipartite state $\sigma_{\bob,\charlie}$.
\item $\ch$ computes $\ct^\bob=(x^\bob,z^\bob)$ and $\ct^\charlie=(x^\charlie,z^\charlie)$ where $z^\bob=y^\bob\oplus m^\bob_b$ and $z^\charlie=y^\charlie\oplus m^\charlie_b$.
\item Apply $(\bob(\ct^{\bob},\cdot) \otimes \charlie(\ct^{\charlie},\cdot))(\sigma_{\bob,\charlie})$ to obtain $(b^{\bob},b^{\charlie})$. 
\item Output $1$ if $b^\bob=b^\charlie=b$.
\end{itemize}
Since the views of the adversary $(\alice,\bob,\charlie)$ in $b=1$ case in hybrids $\hybrid_4$ and $\hybrid_5$ are only dependent on $k_{x^\bob,x^\charlie}$, the indistinguishability between $\hybrid_4$ and $\hybrid_5$ holds by the puncturing security of $\prf$.


\noindent $\hybrid_6$: \\
\begin{itemize}
\item $\alice$ sends two same-length message pairs $(m^\bob_0,m^\bob_1, m^\charlie_0,m^\charlie_1)$.
\item $\ch$ samples $k\gets \prf.\gen(1^{\secparam})$.
\item $\ch$ samples $b\xleftarrow{\$} \{0,1\}$. 
\item $\ch$ samples $x^\bob,x^\charlie\xleftarrow{\$}\{0,1\}^n$\sout{ as well as generates $y^\bob\gets \prf.\eval(k,x^\bob)$, $y^\charlie\gets \prf.\eval(k,x^\charlie)$ if $b=0$; and $y^\bob\xleftarrow{\$}\{0,1\}^n$, and $y^\charlie\xleftarrow{\$}\{0,1\}^n$ if $b=1$}.
\item $\ch$ generates $k_{x^\bob,x^\charlie}\gets \prf.\puncture(k,\{x^\bob,x^\charlie\})$.
\item $\ch$ generates the circuit $\tilde{C}\gets \io(C)$ where $C$ is constructed depending on the bit $b$ as follows. If $b=0$ (respectively, $b=1$), $C$ has $k$ (respectively, $k_{x^\bob,x^\charlie}$) hardcoded and on input $r\gets \{0,1\}^{\frac{n}{2}}$ (the input space of $\prg$) and a message $m\in \{0,1\}^n$, outputs $(\prg(r),\prf.\eval(k,\prg(r))\oplus m)$ (respectively, $(\prg(r),\prf.\eval(k_{x^\bob,x^\charlie},\prg(r))\oplus m)$).
\item If $b=0$, $\ch$ generates $\rho_\sk\gets \UPO.\obf(1^\secparam,\tilde{F})$ where $\tilde{F}\gets \io(\prf.\eval(k,\cdot))$, else, if $b=1$, generates  $\rho_\sk\gets \UPO.\obf(1^\secparam,\tilde{W})$, where $\tilde{W}\gets \io(W)$ and $W$ is as depicted in \Cref{fig:W-hybrid-4}  and sends $(\rho_\sk,\io(C))$ to $\alice$.
\item \cblue{If $b=1$, $\ch$ samples $u^\bob,u^\charlie\xleftarrow{\$}\{0,1\}^n$ and computes $y^\bob=u^\bob\oplus m^\bob_0\oplus m^\bob_1$ and $y^\charlie=u^\charlie\oplus m^\charlie_0\oplus m^\charlie_1$, else if $b=0$, $\ch$ generates $y^\bob\gets \prf.\eval(k,x^\bob)$, $y^\charlie\gets \prf.\eval(k,x^\charlie)$}.
\item $\alice$ produces a bipartite state $\sigma_{\bob,\charlie}$.
\item $\ch$ computes $\ct^\bob=(x^\bob,z^\bob)$ and $\ct^\charlie=(x^\charlie,z^\charlie)$ where \sout{$z^\bob=y^\bob\oplus m^\bob_b$ and $z^\charlie=y^\charlie\oplus m^\charlie_b$} \cblue{$z^\bob=y^\bob\oplus m^\bob_0$ and $z^\charlie=y^\charlie\oplus m^\charlie_0$ if $b=0$, and $z^\bob=y^\bob\oplus m^\bob_1$ and $z^\charlie=y^\charlie\oplus m^\charlie_1$ if $b=1$}.
\item Apply $(\bob(\ct^{\bob},\cdot) \otimes \charlie(\ct^{\charlie},\cdot))(\sigma_{\bob,\charlie})$ to obtain $(b^{\bob},b^{\charlie})$. 
\item Output $1$ if $b^\bob=b^\charlie=b$.
\end{itemize}
The indistinguishability between $\hybrid_5$ and $\hybrid_6$ since we did not change the distribution on $y^\bob$, $y^\charlie$ in both the cases $b=0$ and $b=1$, and hence we did not change the distribution on $z^\bob$, $z^\charlie$  in both the $b=0$ and the $b=1$ cases across the hybrids $\hybrid_5$ and $\hybrid_6$.

\noindent $\hybrid_7$: \\
\begin{itemize}
\item $\alice$ sends two same-length message pairs $(m^\bob_0,m^\bob_1, m^\charlie_0,m^\charlie_1)$.
\item $\ch$ samples $k\gets \prf.\gen(1^{\secparam})$.
\item $\ch$ samples $b\xleftarrow{\$} \{0,1\}$. 
\item $\ch$ samples $x^\bob,x^\charlie\xleftarrow{\$}\{0,1\}^n$.
\item $\ch$ generates $k_{x^\bob,x^\charlie}\gets \prf.\puncture(k,\{x^\bob,x^\charlie\})$.
\item $\ch$ generates the circuit $\tilde{C}\gets \io(C)$ where $C$ is constructed depending on the bit $b$ as follows. If $b=0$ (respectively, $b=1$), $C$ has $k$ (respectively, $k_{x^\bob,x^\charlie}$) hardcoded and on input $r\gets \{0,1\}^{\frac{n}{2}}$ (the input space of $\prg$) and a message $m\in \{0,1\}^n$, outputs $(\prg(r),\prf.\eval(k,\prg(r))\oplus m)$ (respectively, $(\prg(r),\prf.\eval(k_{x^\bob,x^\charlie},\prg(r))\oplus m)$).
\item If $b=0$, $\ch$ generates $\rho_\sk\gets \UPO.\obf(1^\secparam,\tilde{F})$ where $\tilde{F}\gets \io(\prf.\eval(k,\cdot))$, else, if $b=1$, generates  $\rho_\sk\gets \UPO.\obf(1^\secparam,\tilde{W})$, where $\tilde{W}\gets \io(W)$ and $W$ is \cblue{as depicted in \Cref{fig:W-hybrid-7}}  and sends $(\rho_\sk,\io(C))$ to $\alice$.
\item If $b=1$, $\ch$ samples $u^\bob,u^\charlie\xleftarrow{\$}\{0,1\}^n$ \sout{and computes $y^\bob=u^\bob\oplus m^\bob_0\oplus m^\bob_1$ and $y^\charlie=u^\charlie\oplus m^\charlie_0\oplus m^\charlie_1$}, else if $b=0$, $\ch$ generates $y^\bob\gets \prf.\eval(k,x^\bob)$, $y^\charlie\gets \prf.\eval(k,x^\charlie)$.
\item $\alice$ produces a bipartite state $\sigma_{\bob,\charlie}$.

\item $\ch$ computes $\ct^\bob=(x^\bob,z^\bob)$ and $\ct^\charlie=(x^\charlie,z^\charlie)$ where $z^\bob=y^\bob\oplus m^\bob_0$ and $z^\charlie=y^\charlie\oplus m^\charlie_0$ if $b=0$, and \sout{$z^\bob=y^\bob\oplus m^\bob_1$ and $z^\charlie=y^\charlie\oplus m^\charlie_1$} \cblue{$z^\bob=u^\bob\oplus m^\bob_0$ and $z^\charlie=u^\charlie\oplus m^\charlie_0$} if $b=1$.
\item Apply $(\bob(\ct^{\bob},\cdot) \otimes \charlie(\ct^{\charlie},\cdot))(\sigma_{\bob,\charlie})$ to obtain $(b^{\bob},b^{\charlie})$. 
\item Output $1$ if $b^\bob=b^\charlie=b$.
\end{itemize}

\begin{figure}[!htb]
   \begin{center} 
   \begin{tabular}{|p{12cm}|}
    \hline 
\begin{center}
\underline{$W$}: 
\end{center}
Hardcoded keys $k_{x^\bob,x^\charlie}$\sout{$y^\bob,y^\charlie$}\cblue{$,u^\bob\oplus m^\bob_0\oplus m^\bob_1,u^\charlie\oplus m^\charlie_0\oplus m^\charlie_1$}.
On input: $x$.
\begin{itemize}
\item If $x=x^\bob$, output \sout{$y^\bob$} \cblue{$u^\bob\oplus m^\bob_0\oplus m^\bob_1$}.
\item Else if, $x=x^\charlie$, output \sout{$y^\charlie$} \cblue{$u^\charlie\oplus m^\charlie_0\oplus m^\charlie_1$}.
\item Else, run $\prf.\eval(k_{x^\bob,x^\charlie},x)$ and output the result.
\end{itemize}
\ \\ 
\hline
\end{tabular}
    \caption{Circuit $W$ in $\hybrid_{7}$}
    \label{fig:W-hybrid-7}
    \end{center}
\end{figure}

The indistinguishability between $\hybrid_6$ and $\hybrid_7$ holds because, in $\hybrid_7$, we just rewrote $y^\bob$ and $y^\charlie$ wherever it appeared in the $b=1$  case of $\hybrid_6$ in terms of $u^\bob$ and $u^\charlie$, respectively.

\noindent $\hybrid_8$: \\
\begin{itemize}
\item $\alice$ sends two same-length message pairs $(m^\bob_0,m^\bob_1, m^\charlie_0,m^\charlie_1)$.
\item $\ch$ samples $k\gets \prf.\gen(1^{\secparam})$.
\item $\ch$ samples $b\xleftarrow{\$} \{0,1\}$. 
\item $\ch$ samples $x^\bob,x^\charlie\xleftarrow{\$}\{0,1\}^n$.
\item $\ch$ generates $k_{x^\bob,x^\charlie}\gets \prf.\puncture(k,\{x^\bob,x^\charlie\})$.
\item $\ch$ generates the circuit $\tilde{C}\gets \io(C)$ where $C$ is constructed depending on the bit $b$ as follows. If $b=0$ (respectively, $b=1$), $C$ has $k$ (respectively, $k_{x^\bob,x^\charlie}$) hardcoded and on input $r\gets \{0,1\}^{\frac{n}{2}}$ (the input space of $\prg$) and a message $m\in \{0,1\}^n$, outputs $(\prg(r),\prf.\eval(k,\prg(r))\oplus m)$ (respectively, $(\prg(r),\prf.\eval(k_{x^\bob,x^\charlie},\prg(r))\oplus m)$).
\item If $b=0$, $\ch$ generates $\rho_\sk\gets \UPO.\obf(1^\secparam,\tilde{F})$ where $\tilde{F}\gets \io(\prf.\eval(k,\cdot))$, else, if $b=1$, generates  $\rho_\sk\gets \UPO.\obf(1^\secparam,\tilde{W})$, where $\tilde{W}\gets \io(W)$ and $W$ is as depicted in \Cref{fig:W-hybrid-7}  and sends $(\rho_\sk,\io(C))$ to $\alice$.
\item If $b=1$, $\ch$ \sout{samples $u^\bob,u^\charlie\xleftarrow{\$}\{0,1\}^n$} \cblue{generates $u^\bob\gets \prf.\eval(k,x^\bob)$, $u^\charlie\gets \prf.\eval(k,x^\charlie)$}, else if $b=0$, $\ch$ generates $y^\bob\gets \prf.\eval(k,x^\bob)$, $y^\charlie\gets \prf.\eval(k,x^\charlie)$.
\item $\alice$ produces a bipartite state $\sigma_{\bob,\charlie}$.

\item $\ch$ computes $\ct^\bob=(x^\bob,z^\bob)$ and $\ct^\charlie=(x^\charlie,z^\charlie)$ where $z^\bob=y^\bob\oplus m^\bob_0$ and $z^\charlie=y^\charlie\oplus m^\charlie_0$ if $b=0$, and $z^\bob=u^\bob\oplus m^\bob_0$ and $z^\charlie=u^\charlie\oplus m^\charlie_0$ if $b=1$.
\item Apply $(\bob(\ct^{\bob},\cdot) \otimes \charlie(\ct^{\charlie},\cdot))(\sigma_{\bob,\charlie})$ to obtain $(b^{\bob},b^{\charlie})$. 
\item Output $1$ if $b^\bob=b^\charlie=b$.
\end{itemize}

Since the views of the adversary $(\alice,\bob,\charlie)$ in $b=1$ case in hybrids $\hybrid_7$ and $\hybrid_8$ are only dependent on $k_{x^\bob,x^\charlie}$, the indistinguishability between $\hybrid_7$ and $\hybrid_8$ holds by the puncturing security of $\prf$.

\noindent $\hybrid_9$: \\
\begin{itemize}
\item $\alice$ sends two same-length message pairs $(m^\bob_0,m^\bob_1, m^\charlie_0,m^\charlie_1)$.
\item $\ch$ samples $k\gets \prf.\gen(1^{\secparam})$.
\item $\ch$ samples $b\xleftarrow{\$} \{0,1\}$. 
\item $\ch$ samples $x^\bob,x^\charlie\xleftarrow{\$}\{0,1\}^n$.
\item $\ch$ generates $k_{x^\bob,x^\charlie}\gets \prf.\puncture(k,\{x^\bob,x^\charlie\})$.
\item $\ch$ generates the circuit $\tilde{C}\gets \io(C)$ where $C$ is constructed depending on the bit $b$ as follows. If $b=0$ (respectively, $b=1$), $C$ has $k$ (respectively, $k_{x^\bob,x^\charlie}$) hardcoded and on input $r\gets \{0,1\}^{\frac{n}{2}}$ (the input space of $\prg$) and a message $m\in \{0,1\}^n$, outputs $(\prg(r),\prf.\eval(k,\prg(r))\oplus m)$ (respectively, $(\prg(r),\prf.\eval(k_{x^\bob,x^\charlie},\prg(r))\oplus m)$).
\item If $b=0$, $\ch$ generates $\rho_\sk\gets \UPO.\obf(1^\secparam,\tilde{F})$ where $\tilde{F}\gets \io(\prf.\eval(k,\cdot))$, else, if $b=1$, \cblue{generates the circuits $\mu_{k,m^\bob_0\oplus m^\bob_1}$ and $\mu_{k,m^\charlie_0\oplus m^\charlie_1}$ which on any input $x$ output $\prf.\eval(k,x)\oplus m^\bob_0\oplus m^\bob_1$ and $\prf.\eval(k,x)\oplus m^\charlie_0\oplus m^\charlie_1$ respectively, and also} generates  $\rho_\sk\gets \UPO.\obf(1^\secparam,\tilde{W})$, where $\tilde{W}\gets \io(W)$ and $W$ is as \cblue{depicted in \Cref{fig:W-hybrid-9}}  and sends $(\rho_\sk,\tilde{C)}$ to $\alice$.
\item If $b=1$, $\ch$  {generates $u^\bob\gets \prf.\eval(k,x^\bob)$, $u^\charlie\gets \prf.\eval(k,x^\charlie)$}, else if $b=0$, $\ch$ generates $y^\bob\gets \prf.\eval(k,x^\bob)$, $y^\charlie\gets \prf.\eval(k,x^\charlie)$.
\item $\alice$ produces a bipartite state $\sigma_{\bob,\charlie}$.
\item $\ch$ computes $\ct^\bob=(x^\bob,z^\bob)$ and $\ct^\charlie=(x^\charlie,z^\charlie)$ where $z^\bob=y^\bob\oplus m^\bob_0$ and $z^\charlie=y^\charlie\oplus m^\charlie_0$ if $b=0$, and $z^\bob=u^\bob\oplus m^\bob_0$ and $z^\charlie=u^\charlie\oplus m^\charlie_0$ if $b=1$.
\item Apply $(\bob(\ct^{\bob},\cdot) \otimes \charlie(\ct^{\charlie},\cdot))(\sigma_{\bob,\charlie})$ to obtain $(b^{\bob},b^{\charlie})$. 
\item Output $1$ if $b^\bob=b^\charlie=b$.
\end{itemize}

\begin{figure}[!htb]
   \begin{center} 
   \begin{tabular}{|p{12cm}|}
    \hline 
\begin{center}
\underline{$W$}: 
\end{center}
Hardcoded keys $k_{x^\bob,x^\charlie}$\sout{$,u^\bob\oplus m^\bob_0\oplus m^\bob_1,u^\charlie\oplus m^\charlie_0\oplus m^\charlie_1$},\cblue{$\mu_{k,m^\bob_0\oplus m^\bob_1}$,$\mu_{k,m^\charlie_0\oplus m^\charlie_1}$}.
On input: $x$.
\begin{itemize}
\item If $x=x^\bob$, output \sout{$u^\bob\oplus m^\bob_0\oplus m^\bob_1$} \cblue{$\mu_{k,m^\bob_0\oplus m^\bob_1}(x)$}.
\item Else if, $x=x^\charlie$, output  \sout{$u^\charlie\oplus m^\charlie_0\oplus m^\charlie_1$} \cblue{$\mu_{k,m^\charlie_0\oplus m^\charlie_1}(x)$}.
\item Else, run $\prf.\eval(k_{x^\bob,x^\charlie},x)$ and output the result.
\end{itemize}
\ \\ 
\hline
\end{tabular}
    \caption{Circuit $W$ in $\hybrid_{9}$}
    \label{fig:W-hybrid-9}
    \end{center}
\end{figure}

The functionality of $W$ did not change due to the changes made across hybrids $\hybrid_8$ and $\hybrid_9$, and hence by $\io$ guarantees, the indistinguishability between $\hybrid_8$ and $\hybrid_9$ holds.

\noindent $\hybrid_{10}$: \\
\begin{itemize}
\item $\alice$ sends two same-length message pairs $(m^\bob_0,m^\bob_1, m^\charlie_0,m^\charlie_1)$.
\item $\ch$ samples $k\gets \prf.\gen(1^{\secparam})$.
\item $\ch$ samples $b\xleftarrow{\$} \{0,1\}$. 
\item $\ch$ samples $x^\bob,x^\charlie\xleftarrow{\$}\{0,1\}^n$.
\item $\ch$ generates $k_{x^\bob,x^\charlie}\gets \prf.\puncture(k,\{x^\bob,x^\charlie\})$.
\item $\ch$ generates the circuit $\tilde{C}\gets \io(C)$ \sout{where $C$ is constructed depending on the bit $b$ as follows. If $b=0$ (respectively, $b=1$), $C$ has $k$ (respectively, $k_{x^\bob,x^\charlie}$) hardcoded and on input $r\gets \{0,1\}^{\frac{n}{2}}$ (the input space of $\prg$) and a message $m\in \{0,1\}^n$, outputs $(\prg(r)$, $\prf.\eval(k,\prg(r))\oplus m)$ (respectively, $(\prg(r),\allowbreak \prf.\eval(k_{x^\bob,x^\charlie},\prg(r))\oplus m)$).} \cblue{where $C$ has $k$
hardcoded and on input $r\gets \{0,1\}^{\frac{n}{2}}$
 (the input space of $\prg$) and a message $m\in \{0,1\}^n$, outputs $(\prg(r),\allowbreak \prf.\eval(k,\prg(r))\oplus m)$.}
\item If $b=0$, $\ch$ generates $\rho_\sk\gets \UPO.\obf(1^\secparam,\tilde{F})$ where $\tilde{F}\gets \io(\prf.\eval(k,\cdot))$, else, if $b=1$, generates the circuits $\mu_{k,m^\bob_0\oplus m^\bob_1}$ and $\mu_{k,m^\charlie_0\oplus m^\charlie_1}$ which on any input $x$ output $\prf.\eval(k,x)\oplus m^\bob_0\oplus m^\bob_1$ and $\prf.\eval(k,x)\oplus m^\charlie_0\oplus m^\charlie_1$ respectively, and also generates  $\rho_\sk\gets \UPO.\obf(1^\secparam,\tilde{W})$, where $\tilde{W}\gets \io(W)$ and $W$ is as depicted in \Cref{fig:W-hybrid-9}  and sends $(\rho_\sk,\tilde{C})$ to $\alice$.
\item \sout{If $b=1$, $\ch$ generates $u^\bob\gets \prf.\eval(k,x^\bob), u^\charlie\gets \prf.\eval(k,x^\charlie)$, else if $b=0$, $\ch$ generates $y^\bob\gets \prf.\eval(k,x^\bob)$, $y^\charlie\gets \prf.\eval(k,x^\charlie)$.} \cblue{$\ch$ generates $u^\bob\gets \prf.\eval(k,x^\bob), u^\charlie\gets \prf.\eval(k,x^\charlie)$.}
\item $\alice$ produces a bipartite state $\sigma_{\bob,\charlie}$.
\item \sout{$\ch$ computes $\ct^\bob=(x^\bob,z^\bob)$ and $\ct^\charlie=(x^\charlie,z^\charlie)$ where $z^\bob=y^\bob\oplus m^\bob_0$ and $z^\charlie=y^\charlie\oplus m^\charlie_0$ if $b=0$, and $z^\bob=u^\bob\oplus m^\bob_0$ and $z^\charlie=u^\charlie\oplus m^\charlie_0$ if $b=1$.} \cblue{$\ch$ computes $\ct^\bob=(x^\bob,z^\bob)$ and $\ct^\charlie=(x^\charlie,z^\charlie)$ where $z^\bob=u^\bob\oplus m^\bob_0$ and $z^\charlie=u^\charlie\oplus m^\charlie_0$.}
\item Apply $(\bob(\ct^{\bob},\cdot) \otimes \charlie(\ct^{\charlie},\cdot))(\sigma_{\bob,\charlie})$ to obtain $(b^{\bob},b^{\charlie})$. 
\item Output $1$ if $b^\bob=b^\charlie=b$.
\end{itemize}

Note that $y^\bob$ and $y^\charlie$ are defined only in the $b=0$ case, and $u^\bob$ and $u^\charlie$ are defined only in the $b=1$ case in $\hybrid_9$. However, replacing $y^\bob$, $y^\charlie$ in the $b=0$ by $u^\bob$, $u^\charlie$ (as defined in $b=1$ case) does not change the global distribution of the experiment in $b=0$ case. 
 Therefore, replacing $y^\bob,y^\charlie$ in $b=0$ with $u^\bob,u^\charlie$ (as defined in the $b=1$ case) in $\hybrid_9$, does not change the security experiment and hence, $\hybrid_9$ and $\hybrid_{10}$ have the same success probability. 


Finally, we give a reduction from $\hybrid_{10}$ to the generalized $\upo$ security experiment (see \cref{fig:genupo:expt}) of $\UPO'$ for  $\cktclass=\{\cktclass_\secparam\}$, where $\cktclass_\secparam=\{\prf.\eval(k,\cdot)\}_{k\in {\sf Supp}(\prf.\gen(1^\secparam))}$ with respect to the puncture algorithm $\genpuncture$ defined at the begining of the proof.

Let $(\alice,\bob,\charlie)$ be an adversary in $\hybrid_{10}$ above. Consider the following non-local adversary $(\reduc_\alice,\reduc_\bob,\reduc_\charlie)$:
\begin{itemize}
    \item $\reduc_\alice$ gets a pair of messages $m^\bob_0,m^\bob_1,m^\charlie_0,m^\charlie_1\gets\alice(1^\secparam)$ and samples a key $k\gets \prf.\gen(1^\secparam)$ and constructs the  circuits $\mu_{k,m^\bob_0\oplus m^\bob_1}$ and $\mu_{k,m^\charlie_0\oplus m^\charlie_1}$ which on any input $x$ outputs $\prf.\eval(k,x)\oplus m^\bob_0\oplus m^\bob_1$ and $\prf.\eval(k,x)\oplus m^\charlie_0\oplus m^\charlie_1$ respectively, and sends $k,\mu_\bob,\mu_\charlie$ to $\ch$ where $\mu_\bob=\mu_{k,m^\bob_0\oplus m^\bob_1}$ and $\mu_\charlie=\mu_{k,m^\charlie_0\oplus m^\charlie_1}$.
    \item $\reduc_\alice$ also constructs the circuit $\tilde{C}\gets \io(C)$ where $C$ has $k$ hardcoded and on input $r\gets \{0,1\}^{\frac{n}{2}}$ (the input space of $\prg$) and a message $m\in \{0,1\}^n$, outputs $(\prg(r),\prf.\eval(k,\prg(r))\oplus m)$.
    \item On getting $\rho$ from $\ch$, $\reduc_\alice $ feeds $\rho,\tilde{C}$ to $\alice$ and gets back a state $\sigma_{\bob,\charlie}$. $\reduc_\alice$ then sends the respective registers of $\sigma_{\bob,\charlie}$ to $\reduc_\bob$ and $\reduc_\alice$, along with the key $k$.
    \item $\reduc_\bob$ (respectively, $\reduc_\charlie$) on receiving $(\sigma_\bob,k)$ (respectively, $(\sigma_\charlie,k)$) from $\reduc_\alice$ and $x^\bob$ (respectively, $x^\charlie$) from $\ch$ computes $y^\bob\gets\prf.\eval(k,x^\bob)$ (respectively, $y^\charlie\gets \prf.\eval(k,x^\charlie)$) and $\ct^\bob=(x^\bob,y^\bob\oplus m^\bob_0)$ (respectively, $\ct^\charlie=(x^\charlie,y^\charlie\oplus m^\charlie_0)$) and runs $\bob$ on $\ct^\bob$ (respectively, $\charlie$ on $\ct^\charlie$) to get a bit $b^\bob$ (respectively, $b^\charlie$), and outputs $b^\bob$ (respectively, $b^\charlie$).
\end{itemize}

\end{proof}

\begin{remark}\label{rem:selective-cpa-from-id-upo}
    If we change the $\UPO$ security guarantee of the underlying $\UPO$ scheme from  $\distrprod$-generalized $\UPO$ security to $\distrid$-generalized $\UPO$ security (see \Cref{subsec:upo-definition}), then using the same proof as in \cref{prop:cpa-style-anti-piracy-sde} upto minor corrections, we achieve $\distriden$-selective $\cpa$ anti-piracy instead of $\distrcor$-selective $\cpa$ anti-piracy as in \Cref{prop:cpa-style-anti-piracy-sde} for the $\SDE$ scheme given in \Cref{fig:SDE-construction}.
\end{remark}

\begin{theorem}[$\SDE$ lifting theorem]\label{thm:sde-lift}
    Assuming post-quantum indistinguishability obfuscation for classical circuits and length-doubling injective pseudorandom generators, there is a generic lift that takes a $\distrcor$-selective $\cpa$ secure $\SDE$ scheme and outputs a new $\SDE$ that is full-blown $\distrcor$-$\cpa$ secure (see \Cref{subsec:sde}).
\end{theorem}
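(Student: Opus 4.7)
The plan is to use a key-encapsulation-style wrapper around $\SDE$ built from $\iO$ and an injective length-doubling $\prg$, so that the payload is conveyed by an outer $\iO$-obfuscated ``envelope'' while the inner $\SDE$ ciphertext carries only an ephemeral random tag. This decoupling allows the adaptively chosen messages to be programmed into the envelope \emph{after} the bipartite splitting, which is what blocks a direct reduction from adaptive to selective CPA security.

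Given $\SDE=(\gen,\qkeygen,\enc,\dec)$ and an injective $\prg:\{0,1\}^\lambda\to\{0,1\}^{2\lambda}$, define $\SDE'=(\gen',\qkeygen',\enc',\dec')$ with $\gen'=\gen$, $\qkeygen'=\qkeygen$, and
\[\enc'(\pk,m;(r,s))=\bigl(\enc(\pk,r;s),\;\iO(C_{m,\prg(r)})\bigr),\]
where $C_{m,v}$ on input $r'$ outputs $m$ if $\prg(r')=v$ and $\bot$ otherwise. Decryption computes $r\gets\dec(\rho_\sk,c_1)$ and returns $c_2(r)$. The correctness of $\SDE'$ inherits directly from that of $\SDE$ via injectivity of $\prg$.

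The reduction $\reduc_\alice$ to $\distrcor$-selective $\cpa$ security of $\SDE$ samples two fresh preimages $r_0^\bob,r_1^\bob$ (resp.\ $r_0^\charlie,r_1^\charlie$) upfront and commits the pair $(r_0^\bob,r_1^\bob)$ (resp.\ $(r_0^\charlie,r_1^\charlie)$) as $\bob$'s (resp.\ $\charlie$'s) selective message pair. It runs $\alice$ on the received $(\rho_\sk,\pk)$ to obtain a bipartite state, collects the adaptively chosen payloads $(m_0^\bob,m_1^\bob),(m_0^\charlie,m_1^\charlie)$ from $\bob,\charlie$, and builds $c_2^\bob=\iO(C')$ where $C'(r')$ outputs $m_0^\bob$ if $\prg(r')=\prg(r_0^\bob)$, $m_1^\bob$ if $\prg(r')=\prg(r_1^\bob)$, and $\bot$ otherwise (analogously for $c_2^\charlie$). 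The challenger's $c_1^{\bob,*},c_1^{\charlie,*}$ encrypt $r_b^\bob,r_b^\charlie$, so after $\bob$ and $\charlie$ decrypt they recover $r_b^\bob,r_b^\charlie$ and the $\iO$ envelopes output $m_b^\bob,m_b^\charlie$, exactly matching the adaptive distribution.

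Indistinguishability from the real adaptive game is established by three hybrid transitions applied to each challenge envelope: (i) by $\iO$ security and the negligible density of $\prg$'s image in $\{0,1\}^{2\lambda}$, insert an auxiliary uniform $\tilde v\xleftarrow{\$}\{0,1\}^{2\lambda}$ as a second accepting value (functionally inert with overwhelming probability); (ii) by $\prg$ pseudorandomness, replace $\tilde v$ by $\prg(\tilde r)$ for fresh random $\tilde r$; (iii) reprogram the output of the envelope at the ``extra'' preimage $\tilde r$ so that the resulting envelope matches the one produced by $\reduc_\alice$. The main obstacle is step (iii): plain $\iO$ does not natively support functional changes, so the payload at accepting inputs is first rewritten as $m\oplus \prf.\eval(K,\prg(r'))\oplus\prf.\eval(K,\prg(r'))$ using a puncturable PRF $K$---a functionally invisible rewrite---and then $K$ is punctured at $\prg(\tilde r)$ with the desired output hardcoded, which is justified by $\iO$ plus the one-wayness of $\prg$ on the hidden preimage $\tilde r$. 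Carrying this puncturing step through while preserving the non-cloning structure of the bipartite quantum state held by $(\bob,\charlie)$ is the technical heart of the reduction.
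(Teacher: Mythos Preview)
Your construction and reduction are essentially the paper's: encode the payload via an $\iO$-obfuscated point-function envelope keyed to a random tag $r$, have the inner $\SDE$ carry $r$, and in the reduction commit fresh tags $(r_0^\bob,r_1^\bob),(r_0^\charlie,r_1^\charlie)$ as the selective messages while building a two-point envelope afterward. Steps (i) and (ii) are exactly the paper's $\hybrid_0\to\hybrid_1\to\hybrid_2$. (The paper's envelope uses direct equality $r'=r$ rather than $\prg(r')=\prg(r)$ and therefore needs one extra $\iO$ step at the end to convert the $\prg$-check back to an equality check via injectivity; your choice to bake $\prg$ into the envelope from the start makes even that unnecessary.)

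The problem is step (iii). First, a correction to your reading of the game: in the paper's $\distrcor$-$\cpa$ experiment the challenge messages are supplied by $\alice$ \emph{before} the bipartite split, not by $\bob,\charlie$ afterward. So at the moment the envelope is built in step (i), both $m_b^\bob$ and $m_{1-b}^\bob$ are already in hand. The paper simply programs the inert branch in step (i) to output $m_{1-b}^\bob$: since with overwhelming probability $\tilde v$ lies outside the image of $\prg$, the branch is dead code and $\iO$ applies; after step (ii) the envelope is already the two-point circuit that the reduction produces. No further transition is required, and nothing about the bipartite quantum state enters the argument.

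Second, your proposed step (iii) does not work as written. The rewrite $m\oplus\prf.\eval(K,\prg(r'))\oplus\prf.\eval(K,\prg(r'))$ is identically $m$, so puncturing $K$ gives you no handle to alter the output at $\tilde r$. What remains after your rewrite is a genuine functionality change at a single hard-to-find input, which is a differing-inputs-obfuscation-style step, not plain $\iO$; that is both stronger than the theorem's hypotheses and, as above, unnecessary once step (i) is set up to output $m_{1-b}$ on the dead branch.
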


\begin{proof}
    Let $(\gen,\qkeygen,\enc,\dec)$ be a selectively $\cpa$ secure $\SDE$, and let $\io$ be an indistinguishability obfuscation. Consider the $\SDE$ scheme $(\gen',\qkeygen',\enc',\dec')$ given in \Cref{fig:CPA-SDE-construction}.
    \begin{figure}[!htb]
   \begin{center} 
   \begin{tabular}{|p{12cm}|}
    \hline 
\noindent\textbf{Assumes:} $\SDE$ scheme $(\gen,\qkeygen,\enc,\dec)$, post-quantum indistinguishability obfuscation $\io$.\\
\ \\
\noindent$\gen'(1^\secparam)$: Same as $\gen()$.\\
\ \\
\noindent$\qkeygen'(\sk)$: Same as $\qkeygen()$.\\
\ \\

\noindent$\enc'(\pk,m)$: 
\begin{compactenum}
    \item Sample $r\xleftarrow{\$}\{0,1\}^n$.
    \item Generate $c=\enc(\pk,r)$.
    \item Output $\ct=(\tilde{C},c)$, where $\tilde{C}\gets \io(C)$ and $C$ is the circuit that on input $r$ outputs $m$ and outputs $\bot$ on all other inputs.
\end{compactenum}
\ \\
\noindent$\dec'(\rho_\sk,\ct)$
\begin{compactenum}
    \item Interprete $\ct=\tilde{C},c$.
    \item Run $r\gets \dec(\rho_\sk,c)$.
    \item Output $m=\tilde{C}(r)$.
\end{compactenum}

\ \\ 
\hline
\end{tabular}
    \caption{A construction of $\cpa$-secure $\sde$ from a selectively $\cpa$-secure $\sde$.}
    \label{fig:CPA-SDE-construction}
    \end{center}
\end{figure}

\noindent The correctness of $(\gen',\qkeygen',\enc',\dec')$ follows directly from the correctness of $(\gen,\qkeygen,\allowbreak \enc,\allowbreak \dec)$.

\paragraph{$\cpa$ anti-piracy of $(\gen',\qkeygen',\enc',\dec')$ from selective security of $(\gen,\qkeygen,\enc,\dec)$.}

Let $(\alice,\bob,\charlie)$ be an adversary against the full-blown $\cpa$ security experiment for the $\cpasdeexpt^{\left(\alice,\bob,\charlie \right)}\left( 1^{\secparam} \right)$ (see \Cref{fig:correlated-sde-full-blown-cpa-style-anti-piracy}). We will do a sequence of hybrids starting from the original anti-piracy experiment $\cpasdeexpt^{\left(\alice,\bob,\charlie \right)}\left( 1^{\secparam} \right)$ for the $\sde$ scheme given in \Cref{fig:CPA-SDE-construction}, and then conclude with a reduction from the final to.
The changes are marked in \cblue{blue}.

\noindent $\hybrid_0$: \\
Same as $\cpasdeexpt^{\left(\alice,\bob,\charlie \right)}\left( 1^{\secparam} \right)$ given in~\Cref{fig:correlated-sde-full-blown-cpa-style-anti-piracy} for the $\sde$ scheme in \Cref{fig:CPA-SDE-construction}.
 \begin{itemize} 
\item $\ch$ samples $\sk,\pk \gets \gen(1^{\secparam})$ and generates $\rho_\sk\gets \qkeygen(\sk)$ and sends $(\rho_\sk,\pk)$ to $\alice$.
\item $\alice$ sends two same-length message pairs $(m^\bob_0,m^\bob_1, m^\charlie_0,m^\charlie_1)$.
\item $\alice$ produces a bipartite state $\sigma_{\bob,\charlie}$.
\item $\ch$ samples $b\xleftarrow{\$} \{0,1\}$ and generates $c^\bob\gets \enc(\pk,m^\bob_b)$ and $c^\charlie\gets \enc(\pk,m^\charlie_b)$.
\item $\alice$ samples $r^\bob_0,r^\bob_1,r^\charlie_0,r^\charlie_1\xleftarrow{\$}\{0,1\}^n$ and computes $\ct^\bob=(\io(C^\bob),c^\bob)$ and $\ct^\charlie=(\io(C^\charlie),c^\charlie)$, where $C^\bob$ and $C^\charlie$ are the circuits that on input $r^\bob_b$ and $r^\charlie_b$ respectively, outputs $m^\bob_b$ and $m^\charlie_b$, respectively. $C^\bob$ and $C^\charlie$ on all inputs except $r^\bob_b$ and $r^\charlie_b$ respectively,  outputs $\bot$.
\item Apply $(\bob(\ct^{\bob},\cdot) \otimes \charlie(\ct^{\charlie},\cdot))(\sigma_{\bob,\charlie})$ to obtain $(b^{\bob},b^{\charlie})$. 
\item Output $1$ if $b^\bob=b^\charlie=b$.
 \end{itemize}  

\noindent $\hybrid_1$: \\
\begin{itemize} 
\item $\ch$ samples $\sk,\pk \gets \gen(1^{\secparam})$ and generates $\rho_\sk\gets \qkeygen(\sk)$ and sends $(\rho_\sk,\pk)$ to $\alice$.
\item $\alice$ sends two same-length message pairs $(m^\bob_0,m^\bob_1, m^\charlie_0,m^\charlie_1)$.
\item $\alice$ produces a bipartite state $\sigma_{\bob,\charlie}$.
\item $\ch$ samples $b\xleftarrow{\$} \{0,1\}$ and generates $c^\bob\gets \enc(\pk,m^\bob_b)$ and $c^\charlie\gets \enc(\pk,m^\charlie_b)$.
\item $\alice$ samples $r^\bob_0,r^\bob_1,r^\charlie_0,r^\charlie_1\xleftarrow{\$}\{0,1\}^n$, \cblue{$y^\bob,y^\charlie\xleftarrow{\$}\{0,1\}^n$,} and computes $\ct^\bob=(\io(C^\bob),c^\bob)$ and $\ct^\charlie=(\io(C^\charlie),c^\charlie)$, where $C^\bob$ and $C^\charlie$ are the circuits \sout{that on input $r^\bob$ and $r^\charlie$ respectively, outputs $m^\bob$ and $m^\charlie$, respectively. $C^\bob$ and $C^\charlie$ on all inputs except $r^\bob$ and $r^\charlie$ respectively,  outputs $\bot$.} \cblue{are as depicted in \Cref{fig:C-bob-hybrid-1-sde,fig:C-charlie-hybrid-1-sde}, respectively.}
\item Apply $(\bob(\ct^{\bob},\cdot) \otimes \charlie(\ct^{\charlie},\cdot))(\sigma_{\bob,\charlie})$ to obtain $(b^{\bob},b^{\charlie})$. 
\item Output $1$ if $b^\bob=b^\charlie=b$.
\end{itemize}   

The indistinguishability between hybrids $\hybrid_0$ and $\hybrid_1$ holds because of the following. Since the $\prg$ is a length-doubling, except with negligible probability, the functionality of circuits $C^\bob$ and $C^\charlie$ did not change across the hybrids $\hybrid_0$ and $\hybrid_1$. Therefore the computational indistinguishability between $\hybrid_0$ and $\hybrid_1$ follows from the security guarantees of $\io$.

\begin{figure}[!htb]
   \begin{center} 
   \begin{tabular}{|p{12cm}|}
    \hline 
\begin{center}
\underline{$C^\bob$}: 
\end{center}
Hardcoded keys $r^\bob_b,m^\bob_b$\cblue{$,m^\bob_{1-b},y^\bob$}.
On input: $r$.
\begin{itemize}
\item If $r=r^\bob_b$, output $m^\bob_b$.
\item \cblue{If $\prg(r)=y^\bob$, output $m^\bob_{1-b}$.}
\item Otherwise, output $\bot$.
\end{itemize}
\ \\ 
\hline
\end{tabular}
    \caption{Circuit $C^\bob$ in $\hybrid_{1}$}
    \label{fig:C-bob-hybrid-1-sde}
    \end{center}
\end{figure}

\begin{figure}[!htb]
   \begin{center} 
   \begin{tabular}{|p{12cm}|}
    \hline 
\begin{center}
\underline{$C^\charlie$}: 
\end{center}
Hardcoded keys $r^\charlie_b,m^\charlie_b$\cblue{$,m^\bob_{1-b},y^\bob$}.
On input: $r$.
\begin{itemize}
\item If $r=r^\charlie_b$, output $m^\charlie_b$.
\item \cblue{If $\prg(r)=y^\charlie$, output $m^\charlie_{1-b}$.}
\item Otherwise, output $\bot$.
\end{itemize}
\ \\ 
\hline
\end{tabular}
    \caption{Circuit $C^\charlie$ in $\hybrid_{1}$}
    \label{fig:C-charlie-hybrid-1-sde}
    \end{center}
\end{figure}

\noindent $\hybrid_2$: \\
\begin{itemize} 
\item $\ch$ samples $\sk,\pk \gets \gen(1^{\secparam})$ and generates $\rho_\sk\gets \qkeygen(\sk)$ and sends $(\rho_\sk,\pk)$ to $\alice$.
\item $\alice$ sends two same-length message pairs $(m^\bob_0,m^\bob_1, m^\charlie_0,m^\charlie_1)$.
\item $\alice$ produces a bipartite state $\sigma_{\bob,\charlie}$.
\item $\ch$ samples $b\xleftarrow{\$} \{0,1\}$ and generates $c^\bob\gets \enc(\pk,m^\bob_b)$ and $c^\charlie\gets \enc(\pk,m^\charlie_b)$.
\item $\alice$ samples $r^\bob,r^\charlie\xleftarrow{\$}\{0,1\}^n$, \sout{$y^\bob,y^\charlie\xleftarrow{\$}\{0,1\}^n$,} \cblue{$y^\bob\gets \prg(r^\bob_{1-b})$, $y^\charlie\gets \prg(r^\charlie{1-b})$} and computes $\ct^\bob=(\io(C^\bob),c^\bob)$ and $\ct^\charlie=(\io(C^\charlie),c^\charlie)$, where $C^\bob$ and $C^\charlie$ are the circuits are as depicted in \Cref{fig:C-bob-hybrid-1-sde,fig:C-charlie-hybrid-1-sde}, respectively.
\item Apply $(\bob(\ct^{\bob},\cdot) \otimes \charlie(\ct^{\charlie},\cdot))(\sigma_{\bob,\charlie})$ to obtain $(b^{\bob},b^{\charlie})$. 
\item Output $1$ if $b^\bob=b^\charlie=b$.
\end{itemize}   

The indistinguishability between $\hybrid_1$ and $\hybrid_2$ holds due to pseudorandomness of $\prg$.

\noindent $\hybrid_3$: \\
\begin{itemize} 
\item $\ch$ samples $\sk,\pk \gets \gen(1^{\secparam})$ and generates $\rho_\sk\gets \qkeygen(\sk)$ and sends $(\rho_\sk,\pk)$ to $\alice$.
\item $\alice$ sends two same-length message pairs $(m^\bob_0,m^\bob_1, m^\charlie_0,m^\charlie_1)$.
\item $\alice$ produces a bipartite state $\sigma_{\bob,\charlie}$.
\item $\ch$ samples $b\xleftarrow{\$} \{0,1\}$ and generates $c^\bob\gets \enc(\pk,m^\bob_b)$ and $c^\charlie\gets \enc(\pk,m^\charlie_b)$.
\item $\alice$ samples $r^\bob,r^\charlie\xleftarrow{\$}\{0,1\}^n$, \sout{$y^\bob\gets \prg(r^\bob_{1-b})$, $y^\charlie\gets \prg(r^\charlie{1-b})$} and computes $\ct^\bob=(\io(C^\bob),c^\bob)$ and $\ct^\charlie=(\io(C^\charlie),c^\charlie)$, where $C^\bob$ and $C^\charlie$ are the circuits are as depicted in \cblue{\Cref{fig:C-bob-hybrid-3-sde,fig:C-charlie-hybrid-3-sde}, respectively}.
\item Apply $(\bob(\ct^{\bob},\cdot) \otimes \charlie(\ct^{\charlie},\cdot))(\sigma_{\bob,\charlie})$ to obtain $(b^{\bob},b^{\charlie})$. 
\item Output $1$ if $b^\bob=b^\charlie=b$.
\end{itemize}   
The indistinguishability between $\hybrid_2$ and $\hybrid_3$ holds immediately by the $\io$ guarantees since we did not change the functionality of $C^\bob$ and $C^\charlie$ across the hybrids $\hybrid_2$ and $\hybrid_3$.

\begin{figure}[!htb]
   \begin{center} 
   \begin{tabular}{|p{12cm}|}
    \hline 
\begin{center}
\underline{$C^\bob$}: 
\end{center}
Hardcoded keys $r^\bob_b,m^\bob_0,m^\bob_1$,\sout{$y^\bob$},\cblue{$r^\bob_{1-b}$}.
On input: $r$.
\begin{itemize}
\item If $r=r^\bob_b$, output $m^\bob_b$.
\item \sout{If $\prg(r)=y^\bob$, output $m^\bob_{1-b}$.} \cblue{If $r=r^\bob_{1-b}$, output $m^\bob_{1-b}$.}
\item Otherwise, output $\bot$.
\end{itemize}
\ \\ 
\hline
\end{tabular}
    \caption{Circuit $C^\bob$ in $\hybrid_{3}$}
    \label{fig:C-bob-hybrid-3-sde}
    \end{center}
\end{figure}

\begin{figure}[!htb]
   \begin{center} 
   \begin{tabular}{|p{12cm}|}
    \hline 
\begin{center}
\underline{$C^\charlie$}: 
\end{center}
Hardcoded keys $r^\charlie_b,m^\bob,m^\charlie$,\sout{$y^\charlie$}\cblue{$r^\charlie_{1-b}$}.
On input: $r$.
\begin{itemize}
\item If $r=r^\charlie_b$, output $m^\charlie_b$.
\item \sout{If $\prg(r)=y^\charlie$, output $m^\charlie_{1-b}$.} \cblue{If $r=r^\charlie_{1-b}$, output $m^\charlie_{1-b}$.}
\item Otherwise, output $\bot$.
\end{itemize}
\ \\ 
\hline
\end{tabular}
    \caption{Circuit $C^\charlie$ in $\hybrid_{3}$}
    \label{fig:C-charlie-hybrid-3-sde}
    \end{center}
\end{figure}

\noindent Finally we give a reduction from $\hybrid_{3}$ to the selective-$\cpa$ anti-piracy game for $(\gen,\qkeygen,\allowbreak \enc,\allowbreak \dec)$ given in \Cref{fig:correlated-sde-cpa-style-anti-piracy}. Let $(\alice,\bob,\charlie)$ be an adversary in $\hybrid_{3}$ above. Consider the following non-local adversary $(\reduc_\alice,\reduc_\bob,\reduc_\charlie)$:

\begin{itemize}
    \item $\reduc_\alice$ samples $r^\bob_0,r^\bob_1,r^\charlie_0,r^\charlie_1\xleftarrow{\$}\{0,1\}^n$, and sends $(r^\bob_0,r^\bob_1)$  and $(r^\charlie_0,r^\charlie_1)$ as the challenge messages to $\ch$, the challenger for the selective-$\cpa$ anti-piracy game for $(\gen,\qkeygen,\enc,\dec)$ given in \Cref{fig:correlated-sde-cpa-style-anti-piracy}.
    \item $\reduc_\alice$ on receiving the decryptor and the public key $(\rho,\pk)$ from $\ch$ runs $\alice$ on $(\rho,\pk)$ to gets back the output, two pairs of messages $(m^\bob_0,m^\bob_1)$ and $(m^\charlie_0,m^\charlie_1)$ and a bipartite state $\sigma_{\bob,\charlie}$. 
    \item $\reduc_\alice$ constructs the circuit $\io(C^\bob)$ and $\io(C^\charlie)$ where  $C^\bob$ and $C^\charlie$ are the circuits are as depicted in \Cref{fig:C-bob-hybrid-1-sde,fig:C-charlie-hybrid-1-sde}, respectively.
    \item $\reduc_\alice$ sends $\io(C^\bob),\sigma_\bob$ to $\reduc_\bob$ and $\io(C^\charlie),\sigma_\charlie$ to $\reduc_\charlie$.
    \item $\reduc_\bob$ on receiving $c^\bob$ from $\ch$ and $(\io(C^\bob),\sigma_\bob)$ from $\reduc_\alice$, runs $b^\bob\gets\bob(\sigma_\bob,(\io(C^\bob),c^\bob))$ and outputs $b^\bob$.
    \item $\reduc_\charlie$ on receiving $c^\charlie$ from $\ch$ and $(\io(C^\charlie),\sigma_\charlie)$ from $\reduc_\alice$, runs $b^\charlie\gets\charlie(\sigma_\charlie,(\io(C^\charlie),c^\charlie))$ and outputs $b^\charlie$.
\end{itemize}
\begin{figure}[!htb]
   \begin{center} 
   \begin{tabular}{|p{12cm}|}
    \hline 
\begin{center}
\underline{$C^\bob$}: 
\end{center}
Hardcoded keys $r^\bob_b,m^\bob_0,m^\bob_1$,\sout{$y^\bob$},\cblue{$r^\bob_{1-b}$}.
On input: $r$.
\begin{itemize}
\item If $r=r^\bob_b$, output $m^\bob_b$.
\item If $r=r^\bob_{1-b}$, output $m^\bob_{1-b}$.
\item Otherwise, output $\bot$.
\end{itemize}
\ \\ 
\hline
\end{tabular}
    \caption{Circuit $C^\bob$}
    \label{fig:C-bob-reduc-sde}
    \end{center}
\end{figure}

\begin{figure}[!htb]
   \begin{center} 
   \begin{tabular}{|p{12cm}|}
    \hline 
\begin{center}
\underline{$C^\charlie$}: 
\end{center}
Hardcoded keys $r^\charlie_b,m^\bob,m^\charlie$,\sout{$y^\charlie$}\cblue{$r^\charlie_{1-b}$}.
On input: $r$.
\begin{itemize}
\item If $r=r^\charlie_b$, output $m^\charlie_b$.
\item If $r=r^\charlie_{1-b}$, output $m^\charlie_{1-b}$.
\item Otherwise, output $\bot$.
\end{itemize}
\ \\ 
\hline
\end{tabular}
    \caption{Circuit $C^\charlie$}
    \label{fig:C-charlie-reduc-sde}
    \end{center}
\end{figure}

\end{proof}

\begin{remark}\label{rem:sde-lift-distriden}
The proof of \cref{thm:sde-lift} can be adapted to prove the same construction lifts a $\SDE$ with $\distriden$-selective $\cpa$ anti-piracy to $\distriden$-$\cpa$ anti-piracy.
\end{remark}

\Cref{rem:selective-cpa-from-id-upo,rem:sde-lift-distriden} together gives us the following corollary.
\begin{corollary}\label{cor:UEnc-from-UPO}
Assuming an indistinguishability obfuscation scheme $\io$ for $\ppoly$, a puncturable pseudorandom function family $\prf=(\gen,\eval,\puncture)$, and a $\distrid$-generalized $\UPO$ scheme for any generalized puncturable keyed circuit class in $\ppoly$ 
 (see \Cref{subsec:upo-definition} for the formal definition of $\distrid$), there exists a secure public-key unclonable encryption for multiple bits (see \Cref{subsec:uenc} for the definition).
\end{corollary}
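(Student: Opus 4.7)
The plan is to chain together three results already established in the paper. First, by \Cref{rem:selective-cpa-from-id-upo}, starting from post-quantum $\io$, a puncturable $\prf$, and a $\distrid$-generalized $\UPO$ for any generalized puncturable keyed circuit class in $\ppoly$, the construction in \Cref{fig:SDE-construction} yields a public-key single-decryptor encryption scheme satisfying $\distriden$-selective $\cpa$ anti-piracy. The proof of this step is essentially identical to the proof of \Cref{prop:cpa-style-anti-piracy-sde}, except that every invocation of the $\distrprod$-generalized $\UPO$ security game is replaced with its $\distrid$ counterpart, and the challenge ciphertexts handed to $\bob$ and $\charlie$ are generated from a single shared randomness rather than two independent random tapes. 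No new cryptographic idea is needed.

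Second, I would apply \Cref{thm:sde-lift} in its identical-challenge variant, as noted in \Cref{rem:sde-lift-distriden}: the generic lifting compiler (which adds an $\io$-obfuscated point circuit $C$ that decodes a random tag $r$ into the actual message $m$, together with an $\SDE$ encryption of $r$) transforms the $\distriden$-selective $\cpa$ secure $\SDE$ from the previous step into a scheme satisfying full-blown $\distriden$-$\cpa$ anti-piracy. The correctness and the sequence of hybrids in \Cref{thm:sde-lift} only use properties of $\io$ and length-doubling injective PRGs, and are insensitive to whether the challenge distribution is independent or identical, so the same argument carries over verbatim once one rewrites the challenger's sampling of $b$, $c^\bob$, $c^\charlie$ in the identical-challenge form.

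Third, I would invoke the transformation of Georgiou and Zhandry~\cite{GZ20}: a public-key single-decryptor encryption scheme secure under the identical challenge distribution (exactly the $\distriden$-$\cpa$ anti-piracy notion produced in the previous step) already implies a public-key unclonable encryption scheme for single bits, and by parallel repetition on independently generated keys and ciphertexts (or, equivalently, by encrypting an $m$-bit message one bit at a time with fresh randomness and fresh decryptor copies), one obtains public-key unclonable encryption for multiple bits. Alternatively, since our $\SDE$ natively encrypts $n$-bit messages with a single quantum decryption key, one directly gets multi-bit unclonable encryption by invoking the GZ reduction on the multi-bit scheme: the reduction treats the decryption key as an unclonable token and exploits the fact that in the identical-challenge setting $\bob$ and $\charlie$ see the same ciphertext, which is exactly the unclonable encryption attack model.

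The only conceptually delicate point is step one: one must carefully check that the hybrid argument in \Cref{prop:cpa-style-anti-piracy-sde} goes through when the two challenge ciphertexts $\ct^\bob,\ct^\charlie$ are derived from the same random coset point $x$ rather than from independently sampled $x^\bob,x^\charlie$. The two places where independence was implicitly used are (i) the $\prg$-pseudorandomness step, which still applies because we only need $x$ to be uniform in $\{0,1\}^n$ (and hence outside the image of $\prg$ with overwhelming probability), and (ii) the final reduction to the generalized $\UPO$ game, where the challenge distribution handed to the external $\UPO$ challenger is now $\distrid$ rather than $\distrprod$; this matches our assumption that the underlying $\UPO$ is $\distrid$-secure. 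Once these two checks are verified, the remainder of the proof is purely syntactic, and the corollary follows.
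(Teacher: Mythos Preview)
Your first step matches the paper, but steps two and three diverge from the paper's argument and introduce a gap.

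The paper does \emph{not} invoke the lifting theorem (\Cref{thm:sde-lift}) at all. Instead, it applies the Georgiou--Zhandry~\cite{GZ20} reduction directly to the $\distriden$-\emph{selective} $\cpa$ secure $\SDE$ from \Cref{rem:selective-cpa-from-id-upo}; selective security already suffices for that reduction. So your step two is an unnecessary detour, though not incorrect in itself.

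The real gap is in your step three. The \cite{GZ20} transformation takes a (public-key) $\SDE$ with identical-challenge security and produces a \emph{private-key} (one-time) unclonable encryption scheme, not a public-key one. The reason is structural: in the \cite{GZ20} reduction the quantum $\SDE$ decryption key $\rho_{\sk}$ plays the role of the unclonable UE ciphertext, and the classical $\SDE$ ciphertext plays the role of the UE key revealed to $\bob$ and $\charlie$. But generating $\rho_{\sk}$ requires the $\SDE$ secret key $\sk$, so UE encryption cannot be performed from a public key alone. Your claim that \cite{GZ20} ``already implies a public-key unclonable encryption scheme'' is therefore unjustified, and the parallel-repetition remark does not address this.

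The paper closes this gap by invoking the compiler of~\cite{AK21}, which generically upgrades one-time unclonable encryption to public-key unclonable encryption, assuming post-quantum public-key encryption; the latter is instantiated from $\io$ and puncturable PRFs via~\cite{SW14}. You need to insert this step (and the accompanying assumption chain $\io + \prf \Rightarrow$ post-quantum PKE) in place of your current step three.
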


\begin{proof}
By \cite{GZ20}, a $\SDE$ scheme for multiple-bit messages satisfying $\distriden$-selective $\cpa$ anti-piracy, implies private-key unclonable encryption for multiple bits. Then the result of~\cite{AK21} shows that there exists a transformation from one-time unclonable encryption to public-key unclonable encryption assuming post-quantum secure public-key encryption, which in turn can be instantiated using $\iO$ and puncturable pseudorandom functions~\cite{SW14}.
\end{proof}

Combining \Cref{cor:UEnc-from-UPO} with \Cref{thm:strong-CLLZ-cp-prf_puncturable-CP-f-uniform-id}, we get the following feasibility result for unclonable encryptions from concrete assumptions.
\begin{corollary}\label{cor:UEnc-from-concrete assumption}
    Assuming \Cref{conj:goldreich-levin-identical},  the existence of post-quantum sub-exponentially secure $\io$ and one-way functions, and the quantum hardness of Learning-with-errors problem (LWE), there exists a secure public-key unclonable encryption for multiple bits (see \Cref{subsec:uenc} for the definition).
\end{corollary}

Similarly, combining \Cref{prop:cpa-style-anti-piracy-sde,thm:sde-lift,lemma:correctness-SDE-construction}, with \Cref{thm:strong-CLLZ-cp-prf_puncturable-CP-f-uniform}, we get the following feasibility result for $\sde$ from concrete assumptions.
\begin{corollary}\label{cor:sde-distrprod-from-concrete assumption}
    Assuming \Cref{conj:goldreich-levin-correlated},  the existence of post-quantum sub-exponentially secure $\io$ and one-way functions, and the quantum hardness of Learning-with-errors problem (LWE), there exists a $\distrcor$-$\cpa$ secure $\sde$ encryption scheme (see \Cref{subsec:sde} for the definition).
\end{corollary}

\subsection{Unclonable Encryption}
\label{sec:simple-construct-uenc-upo}
We next present a direct construction of unclonable secret key encryption for bits from $\UPO$. 
Let $\vec{0}_\secparam$ be the circuit denoting the all-zero function on input length $\secparam$. Similarly, for $x\in \{0,1\}^\secparam$, let $\vec{1_x}$ be the circuit implementing a point function with point $x$ and input length $\secparam$.
\begin{figure}[!htb]
   \begin{center} 
   \begin{tabular}{|p{12cm}|}
    \hline 
\noindent\textbf{Assumes:} $\UPO$, a $\upo$ for the $\distrid$-generalized puncturable keyed circuit class $\{\{\vec{0}_\secparam\}\}_\secparam$, with the trivial $\genpuncture$ algorithm and keyspace $\{\{0_\secparam\}\}_\secparam$ (since there is only one key or one circuit for a fixed input length).\\
\ \\
\noindent$\keygen(1^\secparam)$: Sample $k\xleftarrow{\$}\{0,1\}^\secparam$, and output $k$.
\noindent$\enc(k,b)$: 
\begin{compactenum}
    \item If $b=0$, construct the all-zero circuit $C=\vec{0}$ and if $b=1$, construct the circuit $C\gets \genpuncture({0},k,k,\vec{1},\vec{1})$. 
    \item Output $\rho\gets \UPO.\obf(C)$.
\end{compactenum}
\ \\
\noindent$\dec(k,\rho)$: Output $b'\gets \UPO.\eval(\rho,k)$.
\ \\ 
\hline
\end{tabular}
    \caption{A direct construction of unclonable encryption from $\UPO$s.}
    \label{fig:direct-construction-UEnc-UPO}
    \end{center}
\end{figure}

\begin{theorem}
    The unclonable encryption scheme in \Cref{fig:direct-construction-UEnc-UPO} satisfies correctness and unclonable indistinguishability security.
\end{theorem}
\begin{proof}
    The proof of correctness follows directly from the correctness of the underlying $\upo$, $\UPO$.
    For unclonable indistinguishable security, let $(\alice,\bob,\charlie)$ be an adversary in the unclonable indistinguishability security game. Next, we give the following reduction $(\reduc_\alice,\reduc_\bob,\reduc_\charlie)$ to $\distrid$-generalized $\upo$ security of $\UPO$, to complete the proof of security.
    \begin{enumerate}
        \item $\reduc_\alice$ sends the key $0$ and circuits $\vec{1},\vec{1}$ to challenger.
        \item $\reduc_\alice$ on receiving $\rho$ from $\ch$, runs $\alice$ on $\rho$, and gets as output a bipartite state $\sigma_{\bob,\charlie}$.
        \item $\reduc_\bob$ and $\reduc_\charlie$ are the same as $\bob$ and $\charlie$ respetively.
    \end{enumerate}
    Clearly, for every $b\in \{0,1\}$, the view of $(\alice,\bob,\charlie)$ when the challenge message is $b$ is the same as the view of $(\reduc_\alice,\reduc_\bob,\reduc_\charlie)$ when the challenge bit is $b$. Therefore, $(\reduc_\alice,\reduc_\bob,\reduc_\charlie)$ and $(\alice,\bob,\charlie)$ have the same advantage of winning in their respective indistinguishability game.
\end{proof}
\newcommand{\evaspuncture}{\mathsf{Evasive}\text{-}\genpuncture}
\anote{Notation about circuit at the end, to be done }
\subsection{Copy-Protection for Evasive Functions} 
\label{sec:qcp:evasive}


We start by recalling the definition of evasive function classes.
\begin{definition}\label{definition:evasive}
     A class of keyed boolean-valued functions with input-length $n=n(\secparam)$ $\fclass=\{\fclass_\secparam\}_{\secparam \in \NN}$ is evasive with respect to an efficiently samplable distribution $\distr_\fclass$ on $\fclass$, if for every fixed input point $x$, there exists a negligible function $\negl()$ such that 
    \[\prob[f\gets \distr_\fclass(1^\secparam): f(x)=1]=\negl(\secparam).\]
\end{definition}
\paragraph{Challenges in constructing copy-protection for evasive functions:}
The copy-protection of evasive functions though similar in many ways have key syntactic differences with the $\UPO$ security experiment $\genupoexpt$ (see \Cref{fig:genupo:expt}). In particular, the objective of the adversary $(\alice,\bob,\charlie)$ in $\genupoexpt$ is to guess whether the obfuscated circuit given to $\alice$ is punctured or not at a point $x$ revealed later to $\bob$ and $\charlie$, which is the opposite of the syntax in the copy-protection experiment, where $\alice$ always gets the same copy-protected circuit for the function and $\bob$ and $\charlie$ get a challenge input $x$ and they need to guess the boolean output of the function on $x$. In order to construct copy-protection of evasive functions, we need to bridge this gap, for which we consider the following subclass of evasive functions.

\begin{definition}[{$\presamp$ evasive functions}]
\label{def:preimage:sampleable:evasive}
   
    An evasive function class $\fclass=\{\fclass_\secparam\}_{\secparam \in \NN}$ equipped with a distribution $\distr_\fclass$ on $\keyspace$ is a $\presamp$ evasive function class if 
    \begin{enumerate}
        \item There exists a keyed circuit implementation $(\distr,\cktclassf)$ of $(\distr_\fclass,\fclass)$  where $\cktclassf=\{C^\fclass_{k}\}_{k\in \keyspace}$.
        \item There exists an auxiliary generalized puncturable keyed circuit class $\cktclass=\{{C}_{k'}\}_{k'\in \keyspace'}$ with $\evaspuncture$ as the generalized puncturing algorithm, (see \Cref{sec:upo:gensecurity}), and equipped with an efficiently samplable distribution $\distr'$ on its keyspace $\keyspace'$, such that 
        \begin{equation}\label{eq:presamp-f-g}
        \{{C^\fclass}_{k},x\}_{k\gets \distr(1^\secparam), x \xleftarrow{\$} {C^\fclass_k}^{-1}(1)}\approx_c \{{C}_{k',y,\vec{1}},y\}_{{C}_{k',y,\vec{1}} \gets\evaspuncture(k',y,y,\vec{1},\vec{1}), k'\gets \distr'(1^\secparam), y \xleftarrow{\$}\{0,1\}^n},
        \end{equation}
        where $\vec{1}$ is the constant-$1$ function, and ${C}_{k',y,\vec{1}}$ is the same as the circuit ${C}_{k',y,y,\vec{1},\vec{1}}$.
    \end{enumerate}

    In short, we call $(\distr_\fclass,\fclass)$ $\presamp$ evasive if $\fclass$ equipped with a distribution $\distr_\fclass$ on $\keyspace$ is a $\presamp$ evasive function class.
    \label{def:presamp-evasive}
\end{definition}
\paragraph{Explanation and usefulness of \Cref{def:presamp-evasive}:} The $\presamp$ condition for an evasive function class $\fclass$ broadly means that there is an auxiliary circuit class $\cktclass$ such that sampling a uniformly random function $f$ from $\fclass$ represented as a circuit implementing it, along with a uniformly random preimage of $1$ under $f$ is indistinguishable from $(C_x,x)$ where $x$ is sampled uniformly at random and $C_x$ is generated by first sampling a uniformly random circuit from $\cktclass$ and then puncturing it at $x$. We will see in \Cref{thm:evasive-functions-from-upo} that the $\presamp$ condition  allows us to rewrite the copy-protection experiment of an evasive function family, as an unclonable experiment concerning the auxiliary circuit class but with a flipped syntax, which makes this new unclonable experiment compatible with the syntax of $\genupoexpt$, thus making it possible to construct copy-protection for $\presamp$ evasive functions.
\paragraph{Instantiations:} 
We show that a large class of single-bit output evasive function classes that includes point functions are $\presamp$ evasive. In particular, assuming post-quantum $\io$, we show that $\fclass^r$, the boolean-output function class consisting of functions with exactly $r$ preimages of $1$ are $\presamp$ evasive, where the auxiliary circuit class consists of the obfuscation of circuits that implement the function class $\fclass^{r-1}$ defined analogously. Formally, we show the following.
\begin{theorem}\label{thm:evasive-class-instantiations}
    For every $t\in [2^n]$, let $\fclass^t=\{\fclass^t_\secparam\}$ defined as $\fclass^t_\secparam=\{f:\{0,1\}^n \mapsto \{0,1\} \mid |f^{-1}(1)|=t\}$, i.e, the set of all functions $f$ on $n$-bit input and $1$-bit output with exactly $t$ preimages of $1$. Suppose for $r = \poly(\secparam)$, the following holds:
    \begin{enumerate}
        \item $\fclass^r$ is evasive with respect to $\distrunr$, the uniform distribution.
        \item For every $t \in \{r-1,r\}$\footnote{This requirement might look odd. The reason we need it is that we want to use the $\presamp$ condition (see \Cref{eq:presamp-f-g}) on $\cktclass^{r}$ with $\cktclass^{t-1}$ as the auxiliary circuit class.}, there exists a keyed circuit implementation $(\distr^t,\cktclass^t)$ for $(\mathcal{U}_{\fclass^t},\fclass^t)$. 
    \end{enumerate}
    Then, assuming post-quantum indistinguishability obfuscation, 
     $(\distrunr,\fclass^r)$ is $\presamp$ evasive.
\end{theorem}

\begin{proof}
Let $r\in o(2^n)$ as given in the theorem. Fix the circuit descriptions $\cktclass^r$ and $\cktclass^{r-1}$ for $\fclass^{r}$ and $\fclass^{r-1}$ respectively, as mentioned in the theorem.

Note that for every circuit $k\in \keyspace^{r-1}_\secparam$ and set of inputs $\{x_1,x_2\}$ and circuits $\{\mu_1,\mu_2\}$, there is an efficient procedure to construct the circuit $C_{k,x_1,x_2,\mu_1,\mu_2}$ which on any input $x'$ first checks if $x'=x_i$ for some $i\in [2]$ in which case it outputs $\mu_i(x_i)$, otherwise it outputs $C^{r-1}_k(x)$. We call this procedure $\genpuncture$. For $x_1=x_2=y$ and $\mu_1=\mu_2=\mu$, we will use $C_{k,y,\mu}$ as a shorthand notation for $C_{k,x_1,x_2,\mu_1,\mu_2}$.

We assume that for every $\secparam \in \NN$, and for every $k\in \keyspace^r_\secparam$, and for every $k'\in \keyspace^{r-1}_\secparam$, and $x_1,x_2 \in \{0,1\}^n$,  circuit $C^{r}_k\in \cktclass^r$, $C^{r-1}_{k'}\in \cktclass^{r-1}$, and a punctured circuit $C_{k',x_1,x_2,\mu_1,\mu_2}\gets \genpuncture(k',x_1,x_2,\allowbreak \mu_1,\allowbreak \mu_2)$ have the same size. 
These conditions can be achieved by padding sufficiently many zeroes to smaller circuits.

    Let $\io$ be a post-quantum indistinguishability obfuscation.

 Next, we make the following claim
\begin{claim}\label{eq:main}
    \[\{\io(C^{r}_k),x\}_{k\gets \distr^r(1^\secparam), x \xleftarrow{\$} {C^r_k}^{-1}(1)}\approx_c \{\io(C_{k',y,\vec{1}}),y\}_{k'\gets\distr^{r-1}(1^\secparam), y \xleftarrow{\$}\{0,1\}^n}.\]
\end{claim}
We first prove the theorem assuming \Cref{eq:main} as follows. Let $a(\secparam)$ be the amount of randomness $\io$ uses to obfuscate the circuits in $\cktclass^r$ and the punctured circuits obtained by puncturing circuits in $\cktclass^{r-1}$ using the $\genpuncture$ algorithm. 

Fix a security parameter $\secparam$ arbitrarily.  
    
Let $\widetilde{\cktclass}^r=\{\{\io(C^r_k;t)\}_{k\in \keyspace^r_\secparam,t\in \{0,1\}^{a(\secparam)}}\}_\secparam$ be a keyed circuit class with keyspace $\keyspace^r\times \{0,1\}^a$. Note that by the correctness of $\io$,  for every $k\in \keyspace^r_\secparam$, the circuit $\io(C^r_k;t)$ has the same functionality as $C^r_k$ for every $t\in \{0,1\}^{a(\secparam)}$, i.e, $S_\secparam(\io(C^r_k;t))=S_\secparam(C^r_k)$ where $S_\secparam$ is the canonical circuit-to-functionality map. Therefore, since $\cktclass^r$ is a keyed implementation $\fclass^r$, so is $\widetilde{\cktclass}^r$ (see \Cref{subsec:notation-applications} for the definition of keyed implementation).
Moreover, since $S_\secparam(\io(C^r_k;t))=S_\secparam(C^r_k)$,
it holds that 
\[\{S_\secparam(C^r_k)\}_{k\gets \distr^r(1^\secparam)}=\{S_\secparam(\io(C^r_k;t))\}_{k\gets \distr^r(1^\secparam), t\xleftarrow{\$}\{0,1\}^{a(\secparam)}}.\]
Therefore, since $(\distr^r,\cktclass^r)$ is a keyed implementation of $(\distrunr,\fclass^r)$, so is $(\distr,\widetilde{\cktclass}^r)$ where $\distr$ is defined as $(k,t)\gets \distr(1^\secparam)\equiv k\gets \distr^r(1^\secparam), t\xleftarrow{\$}\{0,1\}^{a(\secparam)}$ (see \Cref{subsec:notation-applications} for the definition of keyed implementation).

Similarly, $(\distr',\widetilde{\cktclass}^{r-1})$ is  a generalized circuit implementation of $(\distrunrminus,\fclass^{r-1})$ where $\distr'$ is defined as $(k,t)\gets \distr'(1^\secparam)\equiv k\gets \distr^{r-1}(1^\secparam), t\xleftarrow{\$}\{0,1\}^{a(\secparam)}$ and $\widetilde{\cktclass}^{r-1}=\{\{\io(C^{r-1}_k;t)\}_{k\in \keyspace^{r-1}_\secparam,t\in \{0,1\}^{a(\secparam)}}\}_\secparam$.

Let $\evaspuncture$ be an efficient algorithm that on input $k'\in \keyspace^{r-1}_\secparam$, $t'\in \{0,1\}^a$, a set of points $y_1,y_2$ 
and circuits $\mu_1,\mu_2$, generates $C_{k',y_1,y_2,\mu_1,\mu_2}$ and outputs the circuit $\io(C_{k',y_1,y_2,\mu_1,\mu_2};t')$.

Note that by definition of $\distr$,
\[\{\io(C^r_k;t),x\}_{(k,t)\gets\distr(1^\secparam) x \xleftarrow{\$} \{C^r_k\}^{-1}(1)}=\{\io(C^{r}_k),x\}_{k\gets \distr^r(1^\secparam), x \xleftarrow{\$} \{C^r_k\}^{-1}(1)},\]
which is the LHS of \Cref{eq:main},
and,
\begin{align*}
&\{\tilde{C}_{k',t',y',\vec{1}},y\}_{\tilde{C}_{k',t',y',\vec{1}} \gets\evaspuncture((k',t'),y,y,\vec{1},\vec{1}), (k',t')\gets\distr'(1^\secparam), y \xleftarrow{\$}\{0,1\}^n}\\
&=\{\io(C_{k',y,\vec{1}};t'),y\}_{C_{k',y,\vec{1}} \gets\genpuncture(k',y,y,\vec{1},\vec{1}), (k',t')\gets\distr'(1^\secparam), y \xleftarrow{\$}\{0,1\}^n} &\text{By definition of $\evaspuncture$}\\
&=\{\io(C_{k',y,\vec{1}}),y\}_{k'\gets\distr^{r-1}(1^\secparam), y \xleftarrow{\$}\{0,1\}^n}, &\text{By definition of $\distr'$}
\end{align*}
which is the RHS of \Cref{eq:main}.
Hence by \Cref{eq:main}, we conclude that,
\begin{align*}
&\{\io(C^r_k;t),x\}_{k,t\gets\distr(1^\secparam)), x \xleftarrow{\$} \{C^r_k\}^{-1}(1)}\\
&\approx_c \{\tilde{C}_{k',t',y',\vec{1}},y\}_{\tilde{C}_{k',t',y',\vec{1}} \gets\evaspuncture(k',y,y,\vec{1},\vec{1}), k'\gets\distr'(1^\secparam), t'\xleftarrow{\$}\{0,1\}^a, y \xleftarrow{\$}\{0,1\}^n},
\end{align*}
which is exactly the $\presamp$ condition for $\distrunr,\fclass^r$ with the keyed circuit implementation, $(\distr,\widetilde{\cktclass}^r)$,  
 the auxiliary generalized puncturable keyed circuit class $\widetilde{\cktclass}^{r-1}$ equipped with $\evaspuncture$, and $\distr'$ as the corresponding distribution on the keyspace of $\widetilde{\cktclass}^{r-1}$. 
 
 Next, we give a proof of \Cref{eq:main} to complete the proof.
\paragraph*{Proof of \Cref{eq:main}}

    Fix $\secparam$ arbitrarily.
    Since $\fclass^r$ is evasive, so is $\fclass^{r-1}$. Hence, $k'\gets\distr^{r-1}(1^\secparam)$, $y\xleftarrow{\$}\{0,1\}^n\approx_s y\xleftarrow{\$}{\left\{C^{r-1}_{k'}\right\}}^{-1}(0)$ and hence, 
    \[\{\io(C_{k',y,\vec{1}}),y\}_{k'\gets\distr^{r-1}(1^\secparam), y \xleftarrow{\$}\{0,1\}^n}\approx_s \{\io(C_{k,y,\vec{1}}),y\}_{k\xleftarrow{\$} \keyspace^{r-1}_\secparam, y \xleftarrow{\$}{\left\{C^{r-1}_{k'}\right\}}^{-1}(0)}.\] 
   Hence it is enough to show that
    \[\{\io(C^r_k),x\}_{k\gets\distr^r(1^\secparam), x \xleftarrow{\$} {C^{r}_{k}}^{-1}(1)}\approx_c \{\io(C_{k',y,\vec{1}}),y\}_{k'\gets\distr^{r-1}(1^\secparam), y \xleftarrow{\$}{\left\{C^{r-1}_{k'}\right\}}^{-1}(0)}.\]
Recall the circuit-to-functionality map $S_\secparam$. 
Let $\inductdistr$ and $\inductdistrdash$ be the distribution that $\distr^r$ and $\distr^{r-1}$ respectively induces on $\fclass^r_\secparam$ and $\fclass^{r-1}_\secparam$ under $S_\secparam$. Since $(\distr^r,\cktclass^r)$ and $(\distr^{r-1},\cktclass^{r-1})$ are keyed implementation of $(\distrunr,\fclass^r)$ and $(\distrunrminus,\fclass^{r-1})$ respectively, it holds that,
\begin{equation}\label{eq:indisntinguishability-keyed-imp}
    \inductdistr\approx_s \distrunr,\text{ and similarly, } \inductdistrdash\approx_s \distrunrminus
\end{equation}

Since $\widetilde{\cktclass}^{r}$ and $\widetilde{\cktclass}^{r-1}$ are keyed implementations of $\fclass^r$ and $\fclass^{r-1}$ respectively, for every $f\in \fclass^r$ and $\gclass^{r-1}$ $\distr^r$ and $\distr^{r-1}$ induce distributions $\distrsf$ and $\distrsg$, on the class of circuits $S_\secparam^{-1}(f)$ and $S_\secparam^{-1}(g)$, respectively. For every $f\in \fclass^r, g\in \fclass^{r-1}$, let $k_f$ and ${k'}_g$ be the lexicographically first key in $\keyspace^r$ and $\keyspace^{r-1}$ such that $C^r_{k_f}\in S_\secparam^{-1}(f)$ and  $C^{r-1}_{{k'}_g}\in S_\secparam^{-1}(g)$.

Note that by the security of $\io$, for every $f\in \fclass^r$, and $C^r_k\in S_\secparam^{-1}(f)$
\[\{\io(C^r_{k};t)\}_{t\xleftarrow{\$}\{0,1\}^a}  \approx_c  \{\io(C^r_{k_f};t)\}_{t\xleftarrow{\$}\{0,1\}^a}.\]
Therefore it holds that,  for every $f\in \fclass^r$,

\begin{equation}\label{eq:io-sec}
\{\io(C^r_{k})\}_{k\gets \distrsf}=\{\io(C^r_{k};t)\}_{k\gets \distrsf, t\xleftarrow{\$}\{0,1\}^a}  \approx_c \{\io(C^r_{k_f};t)\}_{t\xleftarrow{\$}\{0,1\}^a}=\{\io(C^r_{k_f})\}.
\end{equation}

Next note that, 
\[\{\io(C^r_k),x\}_{k\gets\distr^r(1^\secparam), x \xleftarrow{\$} \{C^r_k\}^{-1}(1)}=\{\io(C^r_k),x\}_{k\gets\distrsf(1^\secparam), f\gets \inductdistr x \xleftarrow{\$} {C^r_{k}}^{-1}(1)}.\]
Therefore,
\begin{align*}
&\{\io(C^r_k),x\}_{k\gets\distr^r(1^\secparam), x \xleftarrow{\$} \{C^r_k\}^{-1}(1)}\\
&=\{\io(C^r_k),x\}_{k\gets\distrsf(1^\secparam), f\gets \inductdistr, x \xleftarrow{\$} \{\io(C^r_{k})\}^{-1}(1)}\\
&\approx_s \{\io(C^r_k),x\}_{k\gets\distrsf(1^\secparam), f\gets \distrunr, x \xleftarrow{\$} {C^r_{k}}^{-1}(1)}&\text{By \Cref{eq:indisntinguishability-keyed-imp}}\\
&\approx_c\{\io(C^r_{k_f};t),x\}_{t\xleftarrow{\$}\{0,1\}^a, f\gets \distrunr, x \xleftarrow{\$} \{\io(C^r_{k_f})\}^{-1}(1)}&\text{By \Cref{eq:io-sec}}\\
&=\{\io(C^r_{k_f};t),x\}_{t\xleftarrow{\$}\{0,1\}^a, f\gets \distrunr, x \xleftarrow{\$} f^{-1}(1)}.
\end{align*}
Similarly, it can be shown that
\[\{\io(C_{k',y,\vec{1}}),y\}_{k'\gets\distr^{r-1}(1^\secparam), y \xleftarrow{\$}{\left\{C^{r-1}_{k'}\right\}}^{-1}(0)} \approx_c \{\io(C_{{k'}_g,y,\vec{1}};t),x\}_{t\xleftarrow{\$}\{0,1\}^a, g\gets \distrunrminus, y \xleftarrow{\$} g^{-1}(0)}.\]
 Therefore to conclude \Cref{eq:main}, it is enough to prove that
\[\{\io(C^r_{k_f};t),x\}_{t\xleftarrow{\$}\{0,1\}^a, f\gets \distrunr, x \xleftarrow{\$} f^{-1}(1)} \approx_c \{\io(C_{{k'}_g,y,\vec{1}};t),x\}_{t\xleftarrow{\$}\{0,1\}^a, g\gets \distrunrminus, y \xleftarrow{\$} g^{-1}(0)}.\]
 
    This is the same as proving the following claim:
\begin{claim}\label{eq:interplay-r_r-1}    
    \[\{\io(C^r_{k_f}),x\}_{(f,x)\xleftarrow{\$} \mathrm{F}^{0,r}_\secparam}\approx_c \{\io(C_{{k'}_g,y,\vec{1}}),y\}_{(g,y)\xleftarrow{\$} \mathrm{F}^{1,r-1}_\secparam},\]
    where $\mathrm{F}^{v,b}_\secparam=\{(f,z)\mid f\in \fclass^v_\secparam, f(z)=b\}$, for every $v\in \NN$, $b\in \{0,1\}$, $s\in \keyspace^t_\secparam$.
\end{claim}
\paragraph*{Proof of \Cref{eq:interplay-r_r-1}}    
    Note that for every fixed pair $(f^*,x^*)\in \mathrm{F}^{r,b}_\secparam$, there exists a unique $(\tilde{g},\tilde{y})\in  \mathrm{F}^{r-1,0}_\secparam$, and vice versa, such that $C_{{k'}_{\tilde{g}},\tilde{y},\vec{1}}$ has the same functionality as $C^r_{k_f}$ and $\tilde{y}=x^*$. In other words, there  is a bijection $\mathcal{B}: \mathrm{F}^{r,1}_\secparam\mapsto\mathrm{F}^{r-1,0}_\secparam$ mapping $(f^*,x^*)$ to $(\tilde{g},\tilde{y})$ such that  $C_{{k'}_{\tilde{g}},\tilde{y},\vec{1}}$ has the same functionality as $C^r_{f^*}$ and $\tilde{y}=x^*$. In particular, $\tilde{y}=x^*$ and $\tilde{g}$ is the unique function that satisfies $\tilde{g}(x^*)=1$ and $\tilde{g}(x)=f^*(x)$ for every $x\neq x^*$. 
    
    By $\io$ guarantees, this implies that  for every fixed pair $(f^*,x^*)\in \mathrm{F}^{r,b}_\secparam$, the image under the bijection $\mathcal{B}$, $(\tilde{g},\tilde{y})\in  \mathrm{F}^{r-1,0}_\secparam$, satisfies
    \[\io(C^r_{k_{f^*}}),x^*\approx_c \io(C_{{k'}_{\tilde{g}},\tilde{y},\vec{1}}),y.\]
    Therefore,
    \[\{\io(C^r_{k_f}),x\}_{(k,x)\xleftarrow{\$} \mathrm{F}^{r,1}_\secparam}\approx_c \{\io(C_{{k'}_g,y,\vec{1}}),y\}_{(k',y)=\mathcal{B}(h,z), (h,z)\xleftarrow{\$} \mathrm{F}^{0,k}_\secparam}=\{\io(C_{{k'}_g,y,\vec{1}}),y\}_{(k',y)\xleftarrow{\$} \mathrm{F}^{r-1,0}_\secparam},\]
    where the last equality holds because $\mathcal{B}$ is a bijection.
\end{proof}
\begin{corollary}
    In particular,  assuming post-quantum indistinguishability obfuscation, point functions form a $\presamp$ evasive function class with respect to the uniform distribution, i.e., $(\mathcal{U}_{\fclass^1},\fclass^1)$ is $\presamp$ evasive.
\end{corollary}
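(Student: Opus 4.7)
The plan is to derive this corollary as a direct application of \Cref{thm:evasive-class-instantiations} with the parameter choice $r=1$. That theorem requires three ingredients: (i) that $\fclass^r$ is evasive under the uniform distribution $\distrunr$; (ii) existence of a keyed circuit implementation $(\distr^r, \cktclass^r)$ of $(\distrunr, \fclass^r)$; and (iii) existence of a keyed circuit implementation $(\distr^{r-1}, \cktclass^{r-1})$ of $(\distrunrminus, \fclass^{r-1})$. I will verify each of these for $r=1$.

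For (i), $\fclass^1_\secparam$ is the class of point functions on $n(\secparam)$-bit inputs. For any fixed input $x \in \{0,1\}^n$, if $f$ is sampled uniformly from $\fclass^1_\secparam$ (equivalently, an accepting point $y$ is sampled uniformly from $\{0,1\}^n$), then $\prob[f(x)=1] = \prob[y=x] = 2^{-n}$, which is negligible in $\secparam$ since $n=n(\secparam)$ is polynomial. Hence $\fclass^1$ is evasive with respect to $\distrunr[1]=\mathcal{U}_{\fclass^1}$ per \Cref{definition:evasive}.

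For (ii), the natural keyed implementation uses keyspace $\keyspace^1_\secparam = \{0,1\}^{n(\secparam)}$, and for each $k \in \keyspace^1_\secparam$ defines $C^1_k$ as the circuit that on input $x$ outputs $1$ iff $x=k$. The distribution $\distr^1$ on $\keyspace^1_\secparam$ is just the uniform distribution, and the induced distribution on $\fclass^1_\secparam$ under the canonical map $S_\secparam$ is exactly $\distrunr[1]$. For (iii), $\fclass^0_\secparam$ contains only the all-zero function; the trivial keyed implementation is $\cktclass^0_\secparam = \{C^0\}$ where $C^0$ is the constant-$0$ circuit (padded to the appropriate size to match $\cktclass^1$), with the distribution $\distr^0$ placing all mass on the single key. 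Trivially, the induced distribution agrees with $\distrunr[0]$.

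Having verified conditions (i)--(iii), \Cref{thm:evasive-class-instantiations} applied with $r=1$ yields that, assuming post-quantum indistinguishability obfuscation, $(\distrunr[1],\fclass^1) = (\mathcal{U}_{\fclass^1}, \fclass^1)$ is $\presamp$ evasive, which is exactly the claim. The entire argument is routine verification of hypotheses; there is no substantive obstacle, since the heavy lifting---showing that obfuscated circuits together with random accepting points are indistinguishable from obfuscated punctured circuits with random punctured points---has already been carried out inside the proof of \Cref{thm:evasive-class-instantiations}.
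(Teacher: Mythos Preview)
Your proposal is correct and takes essentially the same approach as the paper: the corollary is stated without proof immediately after \Cref{thm:evasive-class-instantiations}, and is meant to be read as the $r=1$ instantiation of that theorem, which is exactly what you verify. Your explicit check of the three hypotheses (evasiveness of point functions, and the obvious keyed implementations of $\fclass^1$ and the singleton class $\fclass^0$) is a faithful unpacking of what the paper leaves implicit.
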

\begin{theorem}\label{thm:evasive-functions-from-upo}
Let $\fclass=\{\fclass_\secparam\}_{\secparam\in \NN}$ equipped with a distribution $\distr_\fclass$ be a $\presamp$ evasive function class (see \Cref{def:presamp-evasive}) with input-length $n=n(\secparam)$, and $(\distr,\cktclassf)$ as the  corresponding keyed circuit implementation for the $\presamp$ condition (see \Cref{def:presamp-evasive}).

Assuming a $\distrid$-generalized unclonable puncturable obfuscation $\UPO$ for any generalized puncturable keyed circuit class in $\ppoly$ (see \Cref{subsec:upo-definition}), there 
 is a copy-protection scheme for $\fclass$ that satisfies $(\distr_\fclass,\distriden)$-anti-piracy (see \Cref{sec:def:copyprotection})  with respect to $\cktclassf$ as the keyed circuit implementation of $\fclass$, and $(\distr,\cktclassf)$  as the keyed circuit implementation of $(\distr_\fclass,\fclass)$, 
  where $\copyprotect()$ is the same as $\UPO.\obf()$, and the distribution $\distriden$ on pairs of inputs is as follows:
\begin{itemize}
    \item With probability $\frac{1}{2}$, output $(x^\bob_0,x^\charlie_0)=(x,x)$, where $x\xleftarrow{\$} \{0,1\}^n$.
    \item With probability $\frac{1}{2}$, output $(x^\bob_1,x^\charlie_1)=(x,x)$, where $x\xleftarrow{\$} {C^\fclass_k}^{-1}(1)$, and $C^\fclass_k\in \cktclassf$ is the circuit that is copy-protected.
\end{itemize}
\end{theorem}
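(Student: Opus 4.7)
The construction sets $\copyprotect = \UPO.\obf$ and $\eval = \UPO.\eval$, so correctness is inherited directly from $\UPO$. The substantive task is reducing $(\distr_\fclass, \distriden)$-anti-piracy to the $\distrid$-generalized UPO security. The structural tension noted in the technical overview is that in UPO the obfuscated circuit depends on the challenge bit while the challenge distribution is fixed, whereas in the CP experiment for evasive functions it is the challenge distribution that depends on the bit $b$; the preimage-sampleability condition is precisely the tool to bridge these two structures.

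The proof proceeds through three hybrids. In $\hybrid_0$, the real CP experiment, Alice receives $\UPO.\obf(C^\fclass_k)$ for $k \gets \distr$ and the identical challenge $(y,y)$ is uniform if $b=0$ and a uniformly sampled accepting point of $C^\fclass_k$ if $b=1$. In $\hybrid_1$ I invoke the joint $\presamp$ condition on the $b=1$ branch to replace $(C^\fclass_k, y_{\mathrm{acc}})$ by $(C_{k', y, \vec{1}}, y)$ with $k' \gets \distr'$ and $y$ uniform; Alice now receives $\UPO.\obf(C_{k', y, \vec{1}})$ and the challenge equals the puncture point. In $\hybrid_2$ I use the marginal of the same condition to rewrite the $b=0$ branch: the distribution of $C^\fclass_k$ is computationally indistinguishable from that of $C_{k', \tilde{y}, \vec{1}}$ for a hidden uniform $\tilde{y}$ sampled independently of the challenge, so I switch Alice's state to $\UPO.\obf(C_{k', \tilde{y}, \vec{1}})$ while keeping the challenge $y$ as a fresh independent uniform value.

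Now both branches of $\hybrid_2$ have the same marginal distribution for Alice's state and for the challenge, and they differ only in whether the puncture point of the circuit coincides with the challenge. I would then build a reduction $(\reduct_\alice, \reduct_\bob, \reduct_\charlie)$ to the $\distrid$-generalized UPO game by having $\reduct_\alice$ sample $k' \gets \distr'$, forward $(k', \vec{1}, \vec{1})$ to the UPO challenger, pass the returned state $\rho_b$ to Alice, and let $\reduct_\bob, \reduct_\charlie$ relay the UPO challenge to Bob and Charlie and output their outputs. The UPO $b=1$ case reproduces $\hybrid_2$'s $b=1$ branch exactly. To bridge the UPO $b=0$ case (where Alice receives $\UPO.\obf(C_{k'})$) with $\hybrid_2$'s $b=0$ branch (where Alice receives $\UPO.\obf(C_{k', \tilde{y}, \vec{1}})$), I would insert an intermediate step that first rewrites to $\UPO.\obf(C_{k', \tilde{y}, \vec{0}})$, which is functionally equivalent to $C_{k'}$ with overwhelming probability by evasiveness and hence indistinguishable via the iO layer composed into $\UPO$ through Corollary~\ref{cor:UPO-io}; the $\vec{0} \leftrightarrow \vec{1}$ swap at the hidden puncture point is then absorbed by a direct application of generalized UPO security.

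The hardest step is precisely this $b=0$ bridge. The UPO game forces the challenge that Bob and Charlie see to coincide with the UPO puncture point, whereas the $\tilde{y}$ that $\presamp$ injects into Alice's obfuscation in $\hybrid_2$ is independent of that challenge. Sequencing the hybrids carefully---first coupling Alice's state to the eventual challenge via $\presamp$ in the $b=1$ branch, then treating the $b=0$ branch with an iO-based no-op-puncture argument before invoking generalized UPO---is what prevents a naive reduction from failing on exactly the UPO/CP structural mismatch that $\presamp$ is designed to resolve.
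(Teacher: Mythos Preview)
Your overall structure---use $\presamp$ to put both branches over the auxiliary class $\cktclass$, then reduce to $\distrid$-generalized UPO by sending $(k',\vec{1},\vec{1})$---matches the paper's. The final reduction you describe is exactly the paper's. But your $b=0$ bridge has a gap, and you also skip one small but necessary step.

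\textbf{The $b=0$ bridge.} You want to connect $\UPO.\obf(C_{k',\tilde{y},\vec{1}})$ (your $\hybrid_2$, $b=0$) with $\UPO.\obf(C_{k'})$ (UPO game, $b=0$), and you route this through $\UPO.\obf(C_{k',\tilde{y},\vec{0}})$ using ``the iO layer composed into $\UPO$ through Corollary~\ref{cor:UPO-io}''. This does not work as stated: the theorem assumes only a generalized UPO, not iO, and Corollary~\ref{cor:UPO-io} says that $\UPO\circ\io$ is again a UPO, not that $\UPO$ itself treats functionally-equivalent circuits indistinguishably. Since the construction fixes $\copyprotect=\UPO.\obf$, you cannot silently switch to $\UPO'$. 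The paper's argument is both simpler and correctly licensed: generalized UPO security already forces $\UPO.\obf(C_{k'}) \approx_c \UPO.\obf(C_{k',\tilde{y},\vec{1}})$ \emph{from Alice's view alone}, because any Alice-side distinguisher could be forwarded to Bob and Charlie to win the UPO game with advantage bounded away from $1/2$. No $\vec{0}$-intermediate, no iO, no evasiveness of $C_{k'}$ is needed for this step. (Your ``$\vec{0}\leftrightarrow\vec{1}$ swap by a direct application of UPO'' is also not direct: UPO compares punctured versus \emph{un}punctured, so a swap between two different $\mu$'s would anyway have to transit through $\UPO.\obf(C_{k'})$---at which point the whole detour collapses to the paper's one-step argument.)

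\textbf{The winning-condition alignment.} You relay Bob's and Charlie's outputs straight to the UPO challenger, but the CP winning condition is $b_\bob=b_\charlie=C^\fclass_k(x_b)$ whereas the UPO condition is $b_\bob=b_\charlie=b$. The paper's first hybrid step (its $\hybrid_0\to\hybrid_1$) replaces $C^\fclass_k(x_b)$ by $b$, using evasiveness in the $b=0$ case. You use evasiveness elsewhere but do not make this alignment explicit; without it the reduction is not sound as written.
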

\begin{proof}[Proof of \Cref{thm:evasive-functions-from-upo}]
The correctness of the copy-protection scheme follows directly from the correctness of the $\UPO$.



We fix the keyed circuit representation of $(\distr_\fclass,\fclass)$ to be $(\distr,\cktclassf$). Let the keyspace of $\cktclassf$ be $\keyspacef$, i.e., $\cktclassf=\{\{{C^\fclass}_k\}_{k\in \keyspacef_\secparam}\}_{\secparam \in \NN}$. 

Let $\cktclass=\{\{{C}_k\}_{k\in \keyspace_\secparam}\}_{\secparam \in \NN}$ be the auxiliary generalized puncturable keyed circuit class and $\distr'$ be the corresponding distribution on $\keyspace$ with respect to which the $\presamp$ condition (see \Cref{def:presamp-evasive}) holds for $(\distr_\fclass,\fclass)$ equipped with the keyed circuit description $(\distr,\cktclassf)$. Let $\evaspuncture$ be the generalized puncturing algorithm associated with $\cktclass$.

We give a reduction from the copy-protection security experiment to the generalized unclonable puncturable obfuscation security experiment of $\UPO$ for the generalized puncturable keyed circuit class $\cktclass$ (see \Cref{fig:genupo:expt}).
Let $(\alice,\bob,\charlie)$ be an adversary in the copy-protection security experiment.
We mark the changes in blue.\\

\noindent \underline{$\hybrid_{0}$}: \\
This is the same as the original copy-protection security experiment for the scheme $(\obf,\eval)$.
\begin{itemize}
    \item $\ch$ samples a bit $b\xleftarrow{\$}\{0,1\}$. 
    \item $\ch$ samples $k\gets \distr(1^\secparam)$ $\rho_{k}\gets \UPO.\obf(1^\secparam,{C^\fclass}_{k})$ and sends it to $\alice$.
    \item $\alice$ produces a bipartite state $\sigma_{\bob,\charlie}$.
    \item $\ch$ samples $x_0\xleftarrow{\$} \{0,1\}^n$ and $x_1\xleftarrow{\$} {C^\fclass_k}^{-1}(1)$.
    \item Apply $(\bob(x_{b},\cdot) \otimes \charlie(x_{b},\cdot))(\sigma_{\bob,\charlie})$ to obtain $(b_{\bob},b_{\charlie})$.
    \item Output $1$ if ${C^\fclass_k}(x_b)=b_\bob=b_\charlie$.
\end{itemize}

\noindent \underline{$\hybrid_{1}$}: 
\begin{itemize}
    \item $\ch$ samples a bit $b\xleftarrow{\$}\{0,1\}$. 
    \item $\ch$ samples $k\gets \distr(1^\secparam)$ $\rho_{k}\gets \UPO.\obf(1^\secparam,{C^\fclass}_{k})$ and sends it to $\alice$.
    \item $\alice$ produces a bipartite state $\sigma_{\bob,\charlie}$.
    \item $\ch$ samples $x_0\xleftarrow{\$} \{0,1\}^n$ and $x_1\xleftarrow{\$} \{C^\fclass_{k}\}^{-1}(1)$.
    \item Apply $(\bob(x_{b},\cdot) \otimes \charlie(x_{b},\cdot))(\sigma_{\bob,\charlie})$ to obtain $(b_{\bob},b_{\charlie})$.
    \item Output $1$ if \sout{${C^\fclass_k}(x_b)=b_\bob=b_\charlie$} \cblue{$b=b_\bob=b_\charlie$}.
\end{itemize}
Since $\fclass$ is evasive with respect to $\distr$, with overwhelming probability ${C^\fclass_k}(x_0)=0$. Hence, in the $b=0$ case outputting $1$ if ${C^\fclass_k}(x_0)=b_\bob=b_\charlie$ is indistinguishable from $0=b_\bob=b_\charlie$. Clearly, since $x_1\in {C^\fclass_k}^{-1}(1)$, in the $b=1$ case, ${C^\fclass_k}(x_1)=b_\bob=b_\charlie$ is the same as $1=b_\bob=b_\charlie$. 
Hence, the indistinguishability between $\hybrid_0$ and $\hybrid_1$ holds. \\

\noindent \underline{$\hybrid_{2}$}: 
\begin{itemize}
    \item $\ch$ samples a bit $b\xleftarrow{\$}\{0,1\}$. 
    \item $\ch$ samples \sout{$k\gets \distr(1^\secparam)$} \cblue{$k'\gets \distr'(1^\secparam), y\xleftarrow{\$}\{0,1\}^n$} and generates \sout{$\rho_{k}\gets\UPO.\obf(1^\secparam,{C}_{k})$} \cblue{$\rho_{k',y}\gets \UPO.\obf(1^\secparam,{C}_{k',y})$, where ${C}_{k',y}\gets \evaspuncture(k',y,y,\vec{1},\vec{1})$,} and sends it to $\alice$.
    \item $\alice$ produces a bipartite state $\sigma_{\bob,\charlie}$.
    \item $\ch$ samples $x_0\xleftarrow{\$} \{0,1\}^n$ and \sout{$x_1\xleftarrow{\$} {C^\fclass_k}^{-1}(1)$} \cblue{set $x_1=y$}.
    \item Apply $(\bob(x_{b},\cdot) \otimes \charlie(x_{b},\cdot))(\sigma_{\bob,\charlie})$ to obtain $(b_{\bob},b_{\charlie})$.
    \item Output $1$ if $b=b_\bob=b_\charlie$.
\end{itemize}
The indistinguishability between $\hybrid_1$ and $\hybrid_2$ holds by the $\presamp$ relation (in particular, \Cref{eq:presamp-f-g} for the $b=1$ and $b=0$ cases)  between $\fclass,\distr$ and $\gclass,\distr'$.\\

\noindent \underline{$\hybrid_{3}$}: 
\begin{itemize}
    \item $\ch$ samples a bit $b\xleftarrow{\$}\{0,1\}$. 
    \item $\ch$ samples $k'\gets \distr'(1^\secparam),y\xleftarrow{\$}\{0,1\}^n$ and \sout{generates $\rho_{k',y}\gets \UPO.\obf(1^\secparam,{C}_{k',y})$, where ${C}_{k',y}\gets$ $ \evaspuncture(k',y,y,\vec{1},\vec{1})$,} \cblue{if $b=0$ generates $\rho_{k'}\gets \obf(1^\secparam,{C}_{k'})$ else if $b=1$ generates $\rho_{k',y}\gets \UPO.\obf(1^\secparam,{C}_{k',y})$, where ${C}_{k',y}\gets \evaspuncture(k',y,y,\vec{1},\vec{1})$,} and sends it to $\alice$.
    \item $\alice$ produces a bipartite state $\sigma_{\bob,\charlie}$.
    \item $\ch$ samples $x_0\xleftarrow{\$} \{0,1\}^n$ and set $x_1=y$.
    \item Apply $(\bob(x_{b},\cdot) \otimes \charlie(x_{b},\cdot))(\sigma_{\bob,\charlie})$ to obtain $(b_{\bob},b_{\charlie})$.
    \item Output $1$ if $b=b_\bob=b_\charlie$.
\end{itemize}

The indistinguishability between $\hybrid_2$ and $\hybrid_3$ holds as follows. In the $b=0$ case of $\hybrid_2$, the view of $(\alice,\bob,\charlie)$ only depends on $\UPO.\obf(1^\secparam,{C}_{k',y}),x_0$, but in the $b=0$ case of $\hybrid_3$, the view depends on $\UPO.\obf(1^\secparam,{C}_{k'}),x_0$ where $x_0\xleftarrow{\$}\{0,1\}^n$ is sampled independent of $k'$ and $y$. Hence it is enough to show that 
\begin{equation}\label{eq:upo-implication}
\{\UPO.\obf(1^\secparam,{C}_{k',y})\}_{k'\gets\distr'(1^\secparam),y\xleftarrow{\$}\{0,1\}^n }\approx_c\{\UPO.\obf(1^\secparam,{C}_{k'})\}_{k'\gets\distr'(1^\secparam)},
\end{equation} which is a necessary condition for the generalized $\upo$ security of $\UPO$ (otherwise $\alice$ can itself distinguish between $b=0$ and $b=1$ case in the generalized $\upo$ security experiment given in \Cref{def:newcpsecurity} for the keyed circuitclass $\cktclass$). Therefore, \Cref{eq:upo-implication} holds by the generalized UPO security of $\UPO$ for the circuit class $\cktclass$.

\ \\
\noindent \underline{$\hybrid_{4}$}: 
\begin{itemize}
    \item $\ch$ samples a bit $b\xleftarrow{\$}\{0,1\}$. 
    \item $\ch$ samples $k'\gets \distr'(1^\secparam),y\xleftarrow{\$}\{0,1\}^n$ and if $b=0$ generates $\rho_{k'}\gets \obf(1^\secparam,{C}_{k'})$ else if $b=1$ generates $\rho_{k',y}\gets \UPO.\obf(1^\secparam,{C}_{k',y})$, where ${C}_{k',y}\gets \evaspuncture(k',y,y,\vec{1},\vec{1})$, and sends it to $\alice$.
    \item $\alice$ produces a bipartite state $\sigma_{\bob,\charlie}$.
    \item \sout{$\ch$ samples $x_0\xleftarrow{\$} \{0,1\}^n$ and set $x_1=y$.}
    \item Apply $(\bob($\sout{$x_b$}\cblue{$y$}$,\cdot) \otimes \charlie($\sout{$x_b$}\cblue{$y$}$,\cdot))(\sigma_{\bob,\charlie})$ to obtain $(b_{\bob},b_{\charlie})$.
    \item Output $1$ if $b=b_\bob=b_\charlie$.
\end{itemize}
The only change from $\hybrid_3$ to $\hybrid_4$ is replacing $x_0$ with $y$ in the $b=0$ case and $x_1$ with $y$ in the $b=1$ case.
The indistinguishability between $\hybrid_3$ and $\hybrid_4$ holds as follows. Note that replacing $x_1$ with $y$ in $\hybrid_3$ does not change anything since $x_1$ was set to $y$ in $\hybrid_3$. Next, in the $b=1$ case, the view of $(\alice,\bob,\charlie)$ only depends on $\UPO.\obf(1^\secparam,{C}_{k'}),x_0$ where $x_0\xleftarrow{\$}\{0,1\}^n$ is sampled independent of $k'$. Since $y\xleftarrow{\$}\{0,1\}^n$ is also sampled independent of $k'$,
\[\{{C}_{k'},x_0\}_{k'\gets\distr'(1^\secparam), x_0\xleftarrow{\$}\{0,1\}^n}= \{{C}_{k'},y\}_{k'\gets\distr'(1^\secparam),y\xleftarrow{\$}\{0,1\}^n}.\]
Hence, 
\begin{align*}
&\{\UPO.\obf(1^\secparam,{C}_{k'}),x_0\}_{k'\gets\distr'(1^\secparam),x_0\xleftarrow{\$}\{0,1\}^n}\\
&= \{\UPO.\obf(1^\secparam,{C}_{k'}),y\}_{k'\gets\distr'(1^\secparam),y\xleftarrow{\$}\{0,1\}^n}.
\end{align*}
Therefore, replacing $\UPO.\obf(1^\secparam,{C}_{k'}),x_0$ with $\UPO.\obf(1^\secparam,{C}_{k'}),y$ is indistinguishable and hence, $\hybrid_3$ and $\hybrid_4$ are indistinguishable with respect to the adversary.

We next give a reduction $(\reduc_\alice,\reduc_\bob,\reduc_\charlie)$ from $\hybrid_4$ to the $\distrid$-generalized UPO security experiment of  
$\UPO$ (\Cref{def:newcpsecurity}) for the generalized puncturable keyed circuitclass $\cktclass=\{\{{C}_{k'}\}_{k'\in \keyspace_\secparam}\}_\secparam$ equipped with $\evaspuncture$ as the generalized puncturing algorithm (see \Cref{subsec:sde}).

\begin{itemize}
    \item $\reduc_\alice$ samples $k'\gets \distr'(1^\secparam)$, and sends $k'$ along with $\mu_\bob=\mu_\charlie=\vec{1}$, the constant $1$ function.
    \item On receiving $\rho$ from $\ch$, the challenger for the generalized $\upo$ experiment, $\reduc_\alice$ runs $\alice(\rho)$ to get a bipartite state $\sigma_{\bob,\charlie}$, and sends $\sigma_{\bob},\sigma_{\charlie}$ to $\reduc_\bob$ and $\reduc_\charlie$ respectively.
    \item $\reduc_\bob$ (respectively, $\reduc_\charlie$) runs $\bob(x_\bob,\sigma_\bob)$ (respectively, $\charlie(x_\charlie,\sigma_\charlie)$) on receiving $x_\bob$ and $\sigma_\bob$ (respectively $x_\charlie$ and $\sigma_\charlie$) from $\ch$ and $\reduc_\alice$, respectively, and output the outcome.
\end{itemize}

Clearly, the view of $(\alice,\bob,\charlie)$ in the experiment (\Cref{fig:genupo:expt}) 
$\genupoexpt^{\left(\reduc_\alice,\reduc_\bob,\reduc_\charlie \right),\distrid,\cktclass}\left( 1^{\secparam},0 \right)$

(respectively, $\genupoexpt^{\left(\reduc_\alice,\reduc_\bob,\reduc_\charlie \right),\distrid,\cktclass}\left( 1^{\secparam},1 \right)$) is exactly the same as that in the $b=0$ (respectively, $b=1$) case in $\hybrid_4$, where $\distrid$ is as defined in \Cref{subsec:upo-definition}. This completes the reduction from the copy-protection security experiment to the generalized $\upo$ security experiment (\Cref{fig:genupo:expt}).
\end{proof}

\begin{corollary}\label{cor:copy-protect-fixed-length-functions-from-UPO+io}
Suppose $r$ is such that the following holds:
    \begin{enumerate}
        \item $\fclass^r$ is evasive with respect to $\distrunr$, the uniform distribution.
        \item There exists a keyed circuit implementation $(\distr^r,\cktclass^r)$ for $(\distrunr,\fclass^r)$, and similarly keyed circuit implementation $(\distr^{r-1},\cktclass^{r-1})$ for $(\distrunrminus,\fclass^{r-1})$. 
    \end{enumerate}
Then, assuming post-quantum indistinguishability obfuscation, a $\distrid$-generalized unclonable puncturable obfuscation $\UPO$ for any generalized puncturable keyed circuit class in $\ppoly$ (see \Cref{subsec:upo-definition}), there 
 is a copy-protection scheme for $\fclass^r$ that satisfies $(\distrunr,\distriden)$-anti-piracy (see \Cref{sec:def:copyprotection}) with respect to some keyed circuit implementation $(\distr,\cktclass)$ of $(\distrunr,\fclass)$, 
  where $\copyprotect()$ is the same as $\UPO.\obf()$, and the distribution $\distriden$ on pairs of inputs is as follows:
\begin{itemize}
    \item With probability $\frac{1}{2}$, output $(x^\bob_0,x^\charlie_0)=(x,x)$, where $x\xleftarrow{\$} \{0,1\}^n$.
    \item With probability $\frac{1}{2}$, output $(x^\bob_1,x^\charlie_1)=(x,x)$, where $x\xleftarrow{\$} {C_k}^{-1}(1)$, and $C_k\in \cktclass$ is the circuit that is copy-protected.
\end{itemize}
    

In particular, there exists a copy-protection for point functions that satisfies $(\cal{U},\distriden)$-anti-piracy, under the assumptions made above.
\end{corollary}

Combined with \Cref{thm:strong-CLLZ-cp-prf_puncturable-CP-f-uniform-id}, \Cref{cor:copy-protect-fixed-length-functions-from-UPO+io} gives us the following feasibility result for a generalization of point functions, namely, single bit output evasive function classes that consist of functions with a fixed number of preimages of $1$ (see the formal definition in \Cref{thm:evasive-class-instantiations}).

\begin{corollary}\label{cor:copy-protect-fixed-length-functions-from-concrete-assumptions}
    Suppose $r$ is such that the following holds:
    \begin{enumerate}
        \item $\fclass^r$ is evasive with respect to $\distrunr$, the uniform distribution.
        \item There exists a keyed circuit implementation $(\distr^r,\cktclass^r)$ for $(\distrunr,\fclass^r)$, and similarly keyed circuit implementation $(\distr^{r-1},\cktclass^{r-1})$ for $(\distrunrminus,\fclass^{r-1})$. 
    \end{enumerate}

Then, assuming \Cref{conj:goldreich-levin-identical},  the existence of post-quantum sub-exponentially secure $\io$ and one-way functions, and the quantum hardness of Learning-with-errors problem (LWE), there 
 is a copy-protection scheme for $\fclass^r$ that satisfies $(\distrunr,\distriden)$-anti-piracy (see \Cref{sec:def:copyprotection}) with respect to some keyed circuit implementation $(\distr,\cktclass)$ of $(\distrunr,\fclass)$, 
  where $\copyprotect()$ is the same as $\UPO.\obf()$, and the distribution $\distriden$ on pairs of inputs is as defined in \Cref{cor:copy-protect-fixed-length-functions-from-UPO+io}.  
In particular, there exists a copy-protection for point functions that satisfies $(\cal{U},\distriden)$-anti-piracy, under the assumptions made above.
\end{corollary}
\fi

\ifllncs

\else 
    \subsection*{Acknowledgements}
    \vspace{-0.5em}
    \BeforeBeginEnvironment{wrapfigure}{\setlength{\intextsep}{0pt}}
            \begin{wrapfigure}{r}{100px}
                \includegraphics[width=100px]{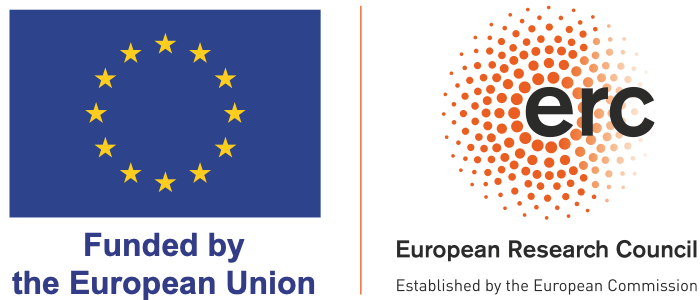}
            \end{wrapfigure}
            A.B. has received funding from the European Union (ERC-2022-COG, ACQUA, 101087742). Views and opinions expressed are, however, those of the author(s) only and do not necessarily reflect those of the European Union or the European Research Council Executive Agency. Neither the European Union nor the granting authority can be held responsible for them.
            \par P.A. is supported in part by the National Science Foundation under Grant No. 2329938 and Grant No. 2341004.
    \par We thank Supartha Podder for discussions during the early stages of the project.
\fi

\renewcommand{\emph}[1]{\textit{#1}}
\printbibliography
\newpage 
\ifllncs
{\Huge Supplementary Material} 

\else 
\appendix 
\fi 
\ifllncs
\section{Concurrent and Subsequent Works} 

\paragraph{Concurrent Work.} Concurrent to our work is a recent work by Coladangelo and Gunn~\cite{CG23} who also showed the feasibility of copy-protecting puncturable functionalities and point functions albeit using a completely different approach. At a high level, the themes of the two papers are quite different. Our goal is to identify a central primitive in unclonable cryptography whereas their work focuses on exploring applications of quantum state indistinguishability obfuscation, a notion of indistinguishability obfuscation for quantum computations, to unclonable cryptography. 
\par We discuss the other differences below. 
\begin{itemize}
    \item Unlike our work, which only focuses on \textit{search} puncturing security, their work considers both \textit{search} and \textit{decision} puncturing security. 
    \item The two notions of obfuscation considered in both works seem to be incomparable. While the problem of obfuscating quantum computations has been notoriously challenging, their work considers the (weaker) problem of obfuscating a subclass of quantum computations that are implementations of classical functionalities.
    \item They demonstrate the feasibility of quantum state indistinguishability obfuscation in the quantum oracle model. We demonstrate the feasibility of UPO based on well-studied cryptographic assumptions and a new conjecture. 
\end{itemize}

\paragraph{Subsequent Work.} Subsequent to~\cite{CG23}, we were able to show that the notion of quantum state iO introduced by Colandangelo and Gunn implies UPO, assuming a strong form of unclonable encryption, referred to as leakage-resilient unclonable encryption. We discuss this in~\Cref{sec:cons:qsio}. 
\par Subsequent to~\cite{CG23}, Bartusek, Brakerski and Vaikuntanathan~\cite{BVV24} obtained a construction of quantum state iO in the classical oracle model. 
\fi

\newcommand{\pirateexp}[4]{{\sf PirExp}^{#1,#2}_{#3,#4}}
\section{Unclonable Cryptography: Definitions}
\subsection{Quantum Copy-Protection}
\label{sec:def:copyprotection}
   Consider a function class $\fclass$ with keyed circuit implementation $\cktclass=\{\cktclass_\secparam\}_{\secparam \in \mathbb{N}}$, where $\fclass_{\secparam}$ (respectively, $\cktclass_\secparam$) consists of functions (respectively, circuits) with input length $n(\secparam)$ and output length $m(\secparam)$. A copy-protection scheme is a pair of QPT algorithms $(\copyprotect,\eval)$ defined as follows: 
    \begin{itemize}
    \item $\copyprotect(1^{\secparam},C)$: on input a security parameter $\secparam$ and a circuit $C \in \cktclass_{\secparam}$, it outputs a quantum state $\rho_C$. 
    \item $\eval(\rho_k,x)$: on input a quantum state $\rho_C$ and an input $x \in \inpclass_{\secparam}$, it outputs $(\rho'_C,y)$.
    \end{itemize}
    
    \paragraph{Correctness.} A copy-protection scheme  $(\copyprotect,\eval)$ for a function class $\fclass$ with keyed circuit implementation $\cktclass=\{\cktclass_{\secparam}\}_{\secparam \in \mathbb{N}}$ is $\delta$-correct, if for every $C \in \cktclass_{\secparam}$, for every $x\in \{0,1\}^{n(\secparam)}$, there exists a negligible function $\delta(\secparam)$ such that: 
$$ \Pr \left[ C(x)=y\mid \substack{\rho_C \leftarrow \copyprotect(1^{\secparam},C)\\ \ \\ (\rho'_C,y) 
 \gets \eval(\rho_C,x)} \right] \geq 1-\delta(\secparam)$$
   
\begin{savenotes}
\begin{figure}[!htb]
\begin{center} 
\begin{tabular}{|p{12cm}|}
    \hline 
\begin{center}
\underline{$\cpexpt^{\left(\alice,\bob,\charlie \right),\distr_\keyspace,\distr_{{\cal X}}}\left( 1^{\secparam} \right)$}: 
\end{center}
\begin{itemize}
\item $\ch$ samples $k \gets \distr_\keyspace(1^{\secparam})$ and generates $\rho_k \gets \copyprotect(1^{\secparam},C_k)$ and sends $\rho_k$ to $\alice$.
\item $\alice$ produces a bipartite state $\sigma_{\bob,\charlie}$.
\item $\ch$ samples $(x^\bob,x^\charlie) \leftarrow \distr_{{\cal X}}$\footnote{$\distr_{{\cal X}}$ may potentially depend on the circuit $C_k$.}.
\item Apply $(\bob(x^{\bob},\cdot) \otimes \charlie(x^{\charlie},\cdot))(\sigma_{\bob,\charlie})$ to obtain $(y^{\bob},y^{\charlie})$. 
\item Output $1$ if $y^\bob=C(x^{\bob})$ and $y^\charlie=C_k(x^{\charlie})$, else $0$. 
\end{itemize}
\ \\ 
\hline
\end{tabular}
    \caption{$(\distr_\keyspace,\distr_{{\cal X}})$-anti-piracy experiment of copy-protection.}
    \label{fig:product-uniform-search-anti-piracy-puncturable-functions}
    \end{center}
\end{figure}
\end{savenotes}
\paragraph{$(\distr_\keyspace,\distr_{{\cal X}})$-anti-piracy.} Consider the experiment in~\Cref{fig:product-uniform-search-anti-piracy-puncturable-functions}. We define $p_{{\sf triv}}=\max\{p_{{\sf B}},p_{{\sf C}}\}$, where $p_{{\sf B}}$ is the maximum probability that the experiment outputs 1 when $\alice$ gives $\rho_C$ to $\bob$ and $\charlie$ outputs its best guess and $p_{{\sf \charlie}}$ is defined symmetrically. We refer to~\cite{AKL23} for a formal definition of trivial success probability. 

\par Suppose  $\distr_{{\cal X}}$ is a distribution on $\{0,1\}^{n(\secparam)} \times \{0,1\}^{n(\secparam)}$, and $\distr_\fclass$ is a distribution on $\fclass$. 

We say that a copy-protection scheme  $(\copyprotect,\eval)$ for $\fclass$ satisfies $(\distr_{\fclass},\distr_{{\cal X}})$-anti-piracy if there exists a keyed circuit implementation (see \Cref{subsec:notation-applications}) of the form $(\distr_\keyspace,\cktclass)$\footnote{It is crucial that $\cktclass$ is the same circuit class as the keyed implementation of $\fclass$ that we fixed before for correctness.} for $(\distr_\fclass,\fclass)$ such that
for every tuple of QPT adversaries $(\alice,\bob,\charlie)$ there exists a negligible function $\negl(\secparam)$ such that: 
\[ \prob\left[ 1 \leftarrow \cpexpt^{\left(\alice,\bob,\charlie \right),\distr_\keyspace,\distr_{{\cal X}}}\left( 1^{\secparam} \right)\ \right] \leq p_{{\sf triv}} + \negl(\secparam)\]
If $\distr_{{\cal X}}$ is a uniform distribution on $\{0,1\}^{n(\secparam)} \times \{0,1\}^{n(\secparam)}$ then we simply refer to this definition as $\distr_\keyspace$-anti-piracy.

\subsection{Public-Key Single-Decryptor Encryption}
\label{sec:pksde:def}
\noindent We adopt the following definition of public-key single-decryptor encryption from~\cite{CLLZ21}. \anote{Should we add GZ20, SW22 as well?}
 \par A public-key single-decryptor encryption scheme with message length $n(\secparam)$ and ciphertext length $c(\secparam)$ consists of the QPT algorithms $\SDE=(\gen,\qkeygen,\enc,\dec)$ defined below:
    \begin{itemize}
    \item $(\sk,\pk)\gets \gen(1^\lambda):$ on input a security parameter $1^\lambda$, returns a classical secret key $\sk$ and a classical public key $\pk$.
    \item $\rho_{\sk}\gets \qkeygen(\sk):$ takes a classical secret key $\sk$ and outputs a quantum decryptor key $\rho_\sk$. 
    \item $\ct\gets \enc(\pk,m)$ takes a classical public key $\pk$, a message $m\in \{0,1\}^n$ and outputs a classical ciphertext $\ct$.
    \item $m\gets \dec(\rho_\sk,\ct):$ takes a quantum decryptor key $\rho_\sk$ and a ciphertext $\ct$, and outputs a message $m\in \{0,1\}^n$. 
\end{itemize}

\paragraph*{Correctness} For every message $m\in\{0,1\}^{n(\secparam)}$, there exists a negligible function $\delta(\secparam)$ such that:
\[\prob\left[ \dec(\rho_\sk,\ct)=m\ \left| \ \substack{(\sk,\pk) \gets\gen(\secparam)\\ \ \\ \rho_\sk\gets \qkeygen(\sk)\\ \ \\ \ct\gets \enc(\pk,m)} \right. \right]\geq 1-\delta(\secparam).\]

\paragraph{Search Anti-Piracy}

\begin{figure}[!htb]
\begin{center} 
\begin{tabular}{|p{12cm}|}
    \hline 
\begin{center}
\underline{$\srchsdeexpt^{\left(\alice,\bob,\charlie \right),\distr}\left( 1^{\secparam} \right)$}: 
\end{center}
\begin{itemize}
\item $\ch$ samples $(\sk,\pk)  \gets \gen(1^{\secparam})$. It then generates $\rho_\sk\gets \qkeygen(\sk)$ and sends $(\rho_\sk,\pk)$ to $\alice$.
\item $\alice$ produces a bipartite state $\sigma_{\bob,\charlie}$.
\item $\ch$ samples $(m^\bob,m^\charlie)\gets \distr(1^\secparam)$ and generates $\ct^\bob\gets \enc(\pk,m^\bob)$ and $\ct^\charlie\gets \enc(\pk,m^\charlie)$.
\item Apply $(\bob(\ct^{\bob},\cdot) \otimes \charlie(\ct^{\charlie},\cdot))(\sigma_{\bob,\charlie})$ to obtain $(y^{\bob},y^{\charlie})$. 
\item Output $1$ if $y^\bob=m^\bob$ and $y^\charlie=m^\charlie$.
\end{itemize}
\ \\ 
\hline
\end{tabular}
    \caption{Search anti-piracy.}
    \label{fig:product-uniform-sde-random-challenge-anti-piracy}
    \end{center}
\end{figure}

We say that a single-decryptor encryption scheme $\SDE$ satisfies $\distr$-search anti-piracy if for every QPT  adversary $(\alice,\bob,\charlie)$ in~\Cref{fig:product-uniform-sde-random-challenge-anti-piracy} if there exists a negligible function $\negl$ such that:
\[ \prob\left[ 1 \leftarrow \srchsdeexpt^{\left(\alice,\bob,\charlie \right)}\left( 1^{\secparam} \right)\ \right] \leq \negl(\secparam).\]

The two instantiations of $\distr$ are $\distrprod$ and $\distrid$, as defined in \cref{subsec:upo-definition}.



\paragraph{Indistinguishability from random Anti-Piracy}

\begin{figure}[!htb]
\begin{center} 
\begin{tabular}{|p{12cm}|}
    \hline 
\begin{center}
\underline{$\indrsdeexpt^{\left(\alice,\bob,\charlie \right),\distr_\ct}\left( 1^{\secparam} \right)$}: 
\end{center}
\begin{itemize}
\item $\ch$ samples $(\sk,\pk)  \gets \gen(1^{\secparam})$. It then generates $\rho_\sk\gets \qkeygen(\sk)$ and sends $(\rho_\sk,\pk)$ to $\alice$.
\item $\alice$ produces a bipartite state $\sigma_{\bob,\charlie}$.
\item $\ch$ samples $b\xleftarrow{\$}\{0,1\}$, and generates $(\ct^\bob_b,\ct^\charlie_b)\gets \distr_\ct(1^\secparam,b,\pk)$. 
\item Apply $(\bob(\ct^{\bob}_b,\cdot) \otimes \charlie(\ct^{\charlie}_b,\cdot))(\sigma_{\bob,\charlie})$ to obtain $(b^{\bob},b^{\charlie})$. 
\item Output $1$ if $b^\bob=b^\charlie=b$.
\end{itemize}
\ \\ 
\hline
\end{tabular}
    \caption{Indistinguishability from random anti-piracy.}
    \label{fig:indistinguishability_from_random_-sde-anti-piracy}
    \end{center}
\end{figure}

We say that a single-decryptor encryption scheme $\SDE$ satisfies $\distr_\ct$-indistinguishability from random anti-piracy if for every QPT  adversary $(\alice,\bob,\charlie)$ in~\Cref{fig:product-uniform-sde-random-challenge-anti-piracy} if there exists a negligible function $\negl$ such that:
\[ \prob\left[ 1 \leftarrow \indrsdeexpt^{\left(\alice,\bob,\charlie \right),\distr_\ct}\left( 1^{\secparam} \right)\ \right] \leq \negl(\secparam).\]

The two instantiations of $\distr_\ct$ are as follows:
\begin{enumerate}
    \item $\distrcipherind(1^\secparam,b,\pk)$: 
    \begin{enumerate}
        \item  Sample $m^\bob,m^\charlie\xleftarrow{\$}\{0,1\}^q$, where $q(\secparam)$ is the message length.
        \item Generate $\ct^\bob_b\gets \enc(\pk,m^\bob_b)$ and $\ct^\charlie_b\gets \enc(\pk,m^\charlie_b)$, where $m^\bob_0=m^\charlie_0=0$, $m^\bob_1=m^\bob$ and $m^\charlie_1=m^\charlie$.
        \item Output $\ct^\bob_b,\ct^\charlie_b$.
    \end{enumerate}
    \item $\distrcipherid(1^\secparam,b,\pk)$:
    \begin{enumerate}
        \item Sample $m\xleftarrow{\$}\{0,1\}^q$, where $q(\secparam)$ is the message length.
        \item Generate $\ct_b\gets \enc(\pk,m_b)$ where $m_0=0$, and $m_1=m$.
        \item Set $\ct^\bob_b=\ct^\charlie_b=\ct_b$.
        \item Output $\ct^\bob_b,\ct^\charlie_b$.
    \end{enumerate}
\end{enumerate}

\paragraph{Selective CPA Anti-piracy}

\begin{figure}[!htb]
\begin{center} 
\begin{tabular}{|p{12cm}|}
    \hline 
\begin{center}
\underline{$\selcpasdeexpt^{\left(\alice,\bob,\charlie \right),\distr}\left( 1^{\secparam} \right)$}: 
\end{center}
\begin{enumerate}
\item $\adversary(\rho_k)$ outputs $(m_0^\bob,m_1^\bob,m_0^\charlie,m_1^\charlie)$, such that $|m_0^{\bob}|=|m_{1}^{\bob}|$ and $|m_0^{\charlie}|=|m_1^{\charlie}|$.
\item $\ch$ samples $(\rho_k,\pk) \gets \keygen(1^\secparam)$ and sends $(\rho_{k},\pk)$ to $\alice$. 
\item $\adversary(\rho_k)$ outputs a bipartite state $\sigma_{\bob,\charlie}$.
\item $\ch$ samples 
 $b\xleftarrow{\$} \{0,1\}$.
\item Let $\ct^\bob,\ct^\charlie\gets \distr(1^\secparam,b,\pk)$. 

\item Apply $(\bob(\ct^\bob,\cdot) \otimes \charlie(\ct^\charlie,\cdot))(\sigma_{\bob,\charlie})$ to obtain $(b_{\bfB},b_{\bfC})$. 
\item Output $1$ if $b_{\bfB}=b_\bfC=b$.
\end{enumerate}
\ \\ 
\hline
\end{tabular}
    \caption{Selective $\distr$-CPA anti-piracy.}
    \label{fig:correlated-sde-cpa-style-anti-piracy}
    \end{center}
\end{figure}

We say that a single-decryptor encryption scheme $\SDE$ satisfies $\distr$-selective $\cpa$ anti-piracy, for a distribution $\distr$ on $\{0,1\}^n \times \{0,1\}^n$, if for every QPT adversary $(\alice,\bob,\charlie)$ in~\Cref{fig:correlated-sde-cpa-style-anti-piracy}, there exists a negligible function $\negl$ such that:
\[ \prob\left[ 1 \leftarrow \selcpasdeexpt^{\left(\alice,\bob,\charlie \right),\distr}\left( 1^{\secparam} \right)\ \right] \leq \frac{1}{2}+\negl(\secparam).\]

The two instantiations of $\distr$ are:
\begin{enumerate}
    \item $\distrcor(1^\secparam,b,\pk)$: outputs $(\ct^\bob,\ct^\charlie)$ where $\ct^\bob\gets \enc(\pk,m^\bob_b)$ and $\ct^\charlie\gets \enc(\pk,m^\charlie_b)$.
    \item $\distriden(1^\secparam,b,\pk)$ outputs $(\ct,\ct)$ where $\ct\gets \enc(\pk,m^\bob_b)$\footnote{Ideally, in the identical challenge setting, there should be just two challenge messages $m_0,m_1$ and not $m^\bob_0,m^\bob_1,m^\charlie_0,m^\charlie_1$, but we chose to have this redundancy in order to unify the syntax for the identical and correlated challenge settings.}.
\end{enumerate}



\noindent This notion of selective $\distriden$-$\cpa$ security is equivalent to the selective $\cpa$-security in~\cite{GZ20}.
\paragraph{CPA anti-piracy}

\begin{figure}[!htb]
\begin{center} 
\begin{tabular}{|p{12cm}|}
    \hline 
\begin{center}
\underline{$\cpasdeexpt^{\left(\alice,\bob,\charlie \right),\distr}\left( 1^{\secparam} \right)$}: 
\end{center}
\begin{itemize}
\item $\ch$ samples $(\sk,\pk) \gets \gen(1^{\secparam})$ and generates $\rho_\sk\gets \qkeygen(\sk)$ and sends $(\rho_\sk,\pk)$ to $\alice$.
\item $\alice$ sends two pairs of same-length messages $((m^\bob_0,m^\bob_1),(m^\charlie_0,m^\charlie_1))$.
\item $\alice$ produces a bipartite state $\sigma_{\bob,\charlie}$.
\item $\ch$ samples 
 $b\xleftarrow{\$} \{0,1\}$.
\item Let $\ct^\bob,\ct^\charlie\gets \distr(1^\secparam,b,\pk)$. 
\item Apply $(\bob(\ct^{\bob},\cdot) \otimes \charlie(\ct^{\charlie},\cdot))(\sigma_{\bob,\charlie})$ to obtain $(b^{\bob},b^{\charlie})$. 
\item Output $1$ if $b^\bob=b_0$ and $b^\charlie=b_1$.
\end{itemize}
\ \\ 
\hline
\end{tabular}
    \caption{$\distr$-CPA anti-piracy}
    \label{fig:correlated-sde-full-blown-cpa-style-anti-piracy}
    \end{center}
\end{figure}

We say that a single-decryptor encryption scheme $\SDE$ satisfies $\cpa$ $\distr$-anti-piracy if for every QPT adversary $(\alice,\bob,\charlie)$ in Experiment~\ref{fig:correlated-sde-full-blown-cpa-style-anti-piracy}, there exists a negligible function $\negl$ such that
\[ \prob\left[ 1 \leftarrow \cpasdeexpt^{\left(\alice,\bob,\charlie \right),\distr}\left( 1^{\secparam} \right)\ \right] \leq \frac{1}{2}+\negl(\secparam).\]

The two instantiations of $\distr$ are $\distrcor$ and $\distriden$, defined in the selective $\cpa$ anti-piracy definition in the previous paragraph.




\noindent The definition of $\distrcor$-$\cpa$ anti-piracy is the same as the correlated version of the $1$-$2$ variant of ${\sf UD}-\cpa$ anti-piracy defined in~\cite{SW22} and the definition $\distrid$-$\cpa$ anti-piracy  is the same as the secret-key $\cpa$ secure defined in~\cite{GZ20}.
\label{subsec:sde}
\subsection{Unclonable Encryption}  \label{subsec:uenc}
An unclonable encryption scheme is a triple of QPT algorithms $\UE=(\gen,\enc,\dec)$ given below:
\begin{itemize}
    \item $\gen(1^\lambda):\sk$ on input a security parameter $1^\lambda$, returns a classical key $\sk$.
    \item $\enc(\sk,m):\rho_{ct}$ takes the key $\sk$, a message $m \in \{0,1\}^{n(\secparam)}$ and outputs a quantum ciphertext $\rho_{ct}$.
    \item $\dec(\sk,\rho_{ct}):\rho_m$ takes a secret key $\sk$, a quantum ciphertext $\rho_{ct}$ and outputs a message $m'$.
\end{itemize}

\paragraph{Correctness.} The following must hold for the encryption scheme. For every $m \in \{0,1\}^{n(\secparam)}$, the following holds: 
$$\prob\left[ m \leftarrow \dec(\sk,\rho_{ct})\ \left| \ \substack{\sk \leftarrow \gen(1^{\secparam})\\ 
\ \\ \rho_{ct} \leftarrow \enc(\sk,m)} \right. \right] \geq 1 - \negl(\secparam) $$

\paragraph{CPA security.} We say that an unclonable encryption scheme $\UE$ satisfies CPA security if for every QPT adversary $(\alice,\bob,\charlie)$, there exists a negligible function $\negl$ such that
\[ \prob\left[ 1 \leftarrow \ueexpt^{\left(\alice,\bob,\charlie \right)}\left( 1^{\secparam} \right)\ \right] \leq \frac{1}{2}+\negl(\secparam).\]

\begin{figure}[!htb]
\begin{center} 
\begin{tabular}{|p{12cm}|}
    \hline 
\begin{center}
\underline{$\ueexpt^{\left(\alice,\bob,\charlie \right)}\left( 1^{\secparam} \right)$}: 
\end{center}
\begin{itemize}
\item $\ch$ samples $\sk \gets \gen(1^{\secparam})$.
\item $\alice$ sends a pair of messages $(m_0,m_1)$. 
\item $\ch$ picks a bit $b$ uniformly at random. $\ch$ generates $\rho_{ct} \leftarrow \enc(\sk,m_b)$. 
\item $\alice$ produces a bipartite state $\sigma_{\bob,\charlie}$.
\item Apply $(\bob(\sk,\cdot) \otimes \charlie(\sk,\cdot))(\sigma_{\bob,\charlie})$ to obtain $(b^{\bob},b^{\charlie})$. 
\item Output $1$ if $b^\bob=b^\charlie=b$.
\end{itemize}
\ \\ 
\hline
\end{tabular}
    \caption{CPA security}
    \label{fig:uesecurity}
    \end{center}
\end{figure}

\section{Related Work}
\label{sec:relatedwork}
Unclonable cryptography is an emerging area in quantum cryptography. The origins of this area date back to 1980s when Weisner~\cite{Wiesner83} first conceived the idea of quantum money which leverages the no-cloning principle to design money states that cannot be counterfeited. Designing quantum money has been an active and an important research direction~\cite{Aar09,AC12,Zha17,Shm22,LMZ23,Zha23}. Since the inception of quantum money, there have been numerous unclonable primitives proposed and studied. We briefly discuss the most relevant ones to our work below. 

\paragraph{Copy-Protection.} Aaronson~\cite{Aar09} conceived the notion of quantum copy-protection. Roughly speaking, in a quantum copy-protection scheme, a quantum state is associated with functionality such that given this state, we can still evaluate the functionality while on the other hand, it should be hard to replicate this state and send this to many parties. Understanding the feasibility of copy-protection for unlearnable functionalities has been an intriguing direction. Copy-protecting arbitrary unlearnable functions is known to be impossible in the plain model~\cite{AL20} assuming cryptographic assumptions. Even in the quantum random oracle model, the existence of a restricted class of copy-protection schemes have been ruled out~\cite{AK22}. This was complemented by~\cite{ALLZZ20} who showed that any class of unlearnable functions can be copy-protected in the presence of a classical oracle. The breakthrough work of~\cite{CLLZ21} showed for the first time that copy-protection for interesting classes of unlearnable functions exists in the plain model. This was followed by the work of~\cite{LLQ+22} who identified some watermarkable functions that can be copy-protected. Notably, both~\cite{CLLZ21} and~\cite{LLQ+22} only focus on copy-protecting specific functionalities whereas we identify a broader class of functionalities that can be copy-protected. Finally, a recent work~\cite{CHV23} shows how to copy-protect point functions in the plain model. The same work also shows how to de-quantize communication in copy-protection schemes. 

\paragraph{Unclonable and Single-Decryptor Encryption.} Associating encryption schemes with unclonability properties were studied in the works of~\cite{BL20,BI20,GZ20}. In an encryption scheme, either we can protect the decryption key or the ciphertext from being cloned, resulting in two different notions. 
\par In an unclonable encryption scheme, introduced by~\cite{BL20}, given one copy of a ciphertext, it should be infeasible to produce many copies of the ciphertext. There are two ways to formalize the security of an unclonable encryption scheme. Roughly speaking, search security is defined as follows: if the adversary can produce two copies from one copy then it should be infeasible for two non-communicating adversaries $\bob$ and $\charlie$, who receive a copy each, to simultaneously recover the entire message. Specifically, the security notion does not prevent both $\bob$ and $\charlie$ from learning a few bits of the message. On the other hand, indistinguishability security is a stronger notion that disallows $\bob$ and $\charlie$ to simultaneously determine which of $m_0$ or $m_1$, for two adversarially chosen messages $(m_0,m_1)$, were encrypted.~\cite{BL20} showed that unclonable encryption with search security for long messages exists. Achieving indistinguishability security in the plain model has been left as an important open problem. A couple of recent works~\cite{AKLLZ22,AKL23} shows how to achieve indistinguishability security in the quantum random oracle model. Both~\cite{AKLLZ22,AKL23} achieve unclonable encryption in the one-time secret-key setting and this can be upgraded to a public-key scheme using the compiler of~\cite{AK21}. 
\par In a single-decryptor encryption scheme, introduced by~\cite{GZ20}, the decryption key is associated with a quantum state such that given this quantum state, we can still perform decryption but on the other hand, it should be infeasible for an adversary who receives one copy of the state to produce two states, each given to $\bob$ and $\charlie$, such that $\bob$ and $\charlie$ independently have the ability to decrypt. As before, we can consider both search and indistinguishability security; for the rest of the discussion, we focus on indistinguishability security.~\cite{CLLZ21} first constructed single-decryptor encryption in the public-key setting assuming indistinguishability obfuscation (iO) and learning with errors. Recent works~\cite{AKL23} and~\cite{KN23} present information-theoretic constructions and constructions based on learning with errors in the one-time setting. The challenge distribution in the security of single-decryptor encryption is an important parameter to consider. In the security experiment, $\bob$ and $\charlie$ each respectively receive ciphertexts $\ct_{\bob}$ and $\ct_{\charlie}$, where $(\ct_{\bob},\ct_{\charlie})$ is drawn from a distribution referred to as challenge distribution. Most of the existing results focus on the setting when the challenge distribution is a product distribution, referred to as independent challenge distribution. Typically, achieving independent challenge distribution is easier than achieving identical distribution, which corresponds to the case when both $\bob$ and $\charlie$ receive as input the same ciphertext. Indeed, there is a reason for this: single-decryptor encryption with security against identical challenge distribution implies unclonable encryption. In this work, we show how to achieve public-key single-decryptor encryption under identical challenge distribution. 



\section{Additional Preliminaries}

\subsection{Indistinguishability Obfuscation (IO)}
\label{sec:pre:iO}
An obfuscation scheme associated with a class of circuit ${\cal C} = \{{\cal C}_\secparam\}_{\secparam \in \mathbb{N}}$ consists of two probabilistic polynomial-time algorithms $\iO = (\obf, \eval)$ defined below. 
\begin{itemize}
\item {\bf Obfuscate}, $C' \leftarrow \obf(1^\secparam, C)$: takes as input security parameter $\secparam$, a circuit $C \in {\cal C}_\secparam$ and outputs an obfuscation of $C$, $C'$. 
\item {\bf Evaluation}, $y \leftarrow \eval(C', x)$: a deterministic algorithm that takes as input an obfuscated circuit $C'$, an input $x \in \bitspace^\secparam$ and outputs $y$. 
\end{itemize}

\begin{definition}[\cite{BGIRSVY01}]
 An obfuscation scheme $\iO = (\obf, \eval)$ is a {\bf post-quantum secure indistinguishability obfuscator} for a class of circuits $\mathcal{C} =
\{\mathcal{C}_{\lambda}\}_{\lambda\in \mathbb{N}}$, with every $C\in \mathcal{C}_{\lambda}$ has size $poly(\lambda)$, if it satisfies the following properties:
\begin{itemize}
    \item  {\bf Perfect correctness:} 
     For every $C : \{0, 1\}^{\lambda} \to \{0, 1\} \in \mathcal{C}_{\lambda}$, $x \in \{0, 1\}^{\lambda}$ it holds that:
     $$
     \Pr\big[\eval\big(\obf(1^{\lambda}, C), x\big) = C(x)\big] = 1 \ .
     $$

     \item {\bf Polynomial Slowdown:}  For every $C : \{0, 1\}^{\lambda} \to \{0, 1\} \in \mathcal{C}_{\lambda}$, we have the running time of $\obf$ on input $(1^{\lambda}, C)$ to be $poly(|C|, \lambda)$. Similarly, we have the running time of $\eval$ on input $(C', x)$ is ${\sf poly}(|C'|, \lambda)$
          \item {\bf Security:}  For 
     every QPT adversary ${\cal A}$, there exists a negligible function $\mu(\cdot)$, such that  for  every sufficiently large $\lambda \in \mathbb{N}$, for every $C_0, C_1 \in \mathcal{C}_{\lambda}$ with $C_0(x) = C_1(x)$ for every $x \in \{0, 1\}^{\lambda}$ and $|C_0| = |C_1|$, we have:
     $$
     \left|\Pr\left[{\cal A}\big(\obf(1^{\lambda},C_0), C_0, C_1 \big)=1\right] - \Pr\left[{\cal A}\big(\obf(1^{\lambda},C_1), C_0, C_1 \big)=1\right]\right| \leq \mu(\lambda) \ .
     $$
\end{itemize}

\end{definition}

\ifllncs
    
\else

\fi

\ifllncs 
    
    \section{Proof of security and correctness for the direct construction}

\else 

\fi

\ifllncs
\section{Applications}

\else

\fi

\end{document}